\documentclass[aps,prx,twocolumn,superscriptaddress,floatfix]{revtex4-2}
\usepackage{microtype}

\usepackage{graphicx}
\usepackage{amsmath,amssymb,amsthm,mathrsfs,amsfonts,amscd,keyval}
\usepackage{mathtools}
\usepackage{nicematrix}
\IfFileExists{dsfont.sty}{\usepackage{dsfont}}{}
\IfFileExists{yfonts.sty}{\usepackage{yfonts}}{}
  
\usepackage{subfigure,epsfig}
\usepackage{braket}
\usepackage{bm}
\usepackage{enumerate}
\usepackage{enumitem}
\usepackage{multirow}
\usepackage{booktabs}
\usepackage{array,tabularx}
\usepackage[dvipsnames]{xcolor}
\usepackage{tikz}
\usetikzlibrary{calc, 3d}
\usepackage{tikz-cd}
\IfFileExists{quiver.sty}{\usepackage{quiver}}{}
\usepackage[linesnumbered,ruled,vlined]{algorithm2e}
\usepackage{hyperref}

\definecolor{DarkLinkBlue}{RGB}{0,70,140}
\hypersetup{
    colorlinks=true,
    linkcolor=DarkLinkBlue,
    citecolor=DarkLinkBlue,
    urlcolor=DarkLinkBlue
}

\newcommand{\mc}{\mathcal}

\newcommand{\mbb}{\mathbb}
\newcommand{\mr}{\mathrm}

\newcommand{\LP}{\mathrm{LP}}
\DeclareMathOperator{\coker}{coker}
\DeclareMathOperator{\supp}{Supp}

\DeclareMathOperator{\rs}{rs}
\DeclareMathOperator{\IM}{im}

\renewcommand\vec{\mathbf}

\newcommand{\tr}{\operatorname{tr}}
\renewcommand\vec{\mathbf}

\newcommand{\Lket}[1]{\left\lvert #1 \right\rangle}

\newcommand{\Bmat}{\mathbb{B}}

\newtheorem{m-theorem}{Theorem}

\theoremstyle{definition}
\newtheorem{definition}{Definition}[section]
\newtheorem*{example}{Example}
\newtheorem{counterexample}[definition]{Counter-Example}

\newtheorem{proposition}[definition]{Proposition}

\theoremstyle{plain}
\newtheorem{theorem}{Theorem}

\newtheorem{lemma}[definition]{Lemma}
\newtheorem{corollary}{Corollary}

\theoremstyle{remark}
\newtheorem*{remark}{Remark}

\providecommand{\ket}[1]{$\left|#1\right\rangle$}

\SetKwInput{KwIn}{Input}
\SetKwInput{KwOut}{Output}
\SetKwInput{KwResources}{Resources}
\SetKwInput{KwPre}{Preconditions}
\SetKwInput{KwPost}{Postconditions}
\SetKwInput{KwGoal}{Goal}

\SetAlgoNlRelativeSize{-1}      
\SetAlFnt{\small}               
\SetKwComment{tcc}{\% }{}       
\DontPrintSemicolon              

\SetKw{Prepare}{prepare}
\SetKw{Measure}{measure}
\SetKw{Apply}{apply}
\SetKw{Record}{record}
\SetKw{Reset}{reset}
\SetKw{Wait}{wait}
\SetKw{Barrier}{barrier}
\SetKw{And}{and}
\SetKw{Or}{or}

\SetKwBlock{Parallel}{\textbf{parallel}}{}     
\SetKwRepeat{Repeat}{repeat}{until}

\begin{document}

\title{Logical computation with canonical lifted product codes}

\author{Han Zheng}
\email{hanzuchicago@gmail.com}
\affiliation{Department of Computer Science, The University of Chicago, Chicago, IL 60637, USA}
\affiliation{Pritzker School of Molecular Engineering, The University of Chicago, Chicago, IL 60637, USA}

\author{Guo Zheng}
\email{guozheng@uchicago.com}
\affiliation{Pritzker School of Molecular Engineering, The University of Chicago, Chicago, IL 60637, USA}

\author{Liang Jiang}
\email{liangjiang@uchicago.edu}
\affiliation{Pritzker School of Molecular Engineering, The University of Chicago, Chicago, IL 60637, USA}

\author{Qian Xu}
\email{qxu@oratomic.com}
\affiliation{Institute for Quantum Information and Matter, Caltech, Pasadena, CA 91125, USA}
\affiliation{Walter Burke Institute for Theoretical Physics, Caltech, Pasadena, CA 91125, USA}
\affiliation{Oratomic, Pasadena, CA 91125, USA}

\date{\today}

\begin{abstract}
High-rate quantum low-density parity-check (qLDPC) codes encode many logical qubits with low physical-qubit overhead, but realizing efficient fault-tolerant computation on such dense encodings remains a major challenge. Generic, code-agnostic techniques such as code surgery and gate teleportation apply broadly, but are difficult to make modular, low-overhead, and fully certifiable on complex high-rate codes whose structure is left unexploited. Here we overcome these obstacles by co-designing the code together with its logical instruction set for a broad family of \emph{canonical} lifted-product (LP) codes with cyclic symmetry.
We show that these codes admit a \emph{canonical logical basis}, in which conjugate logical operators are organized into rows and columns of cyclic orbits inherited directly from the underlying classical codes,  analogous to the structure that makes hypergraph-product codes so tractable. 
This canonical basis unlocks a complete logical instruction set, including constant-depth automorphism and fold-transversal Clifford gates, modular graph code surgeries built from a constant number of reusable seed surgery gadgets or a compact canonical extractor, highly parallel logical Pauli-product measurements, and parallel magic-state injection. 
For example, a $[[1122,148,\leq\!20]]$ (resp. $[[4350,1224,\leq\!20]]$) LP code requires only two (resp. four) seed surgery gadgets, while arbitrary high-weight logical measurements can be implemented using a full extractor smaller than half of the data code block. 
These results advance the frontier of fault-tolerant quantum computation on ultra-high-rate quantum architectures.
\end{abstract}

\maketitle

\section{Introduction}\label{sec: main}

Quantum error correction (QEC)~\cite{gottesman1997stabilizercodesquantumerror, Shor1995ReducingDecoherence, Kitaev1997QuantumComputations, BravyiKitaev1998LatticeBoundary, KnillLaflamme1997TheoryQECC, Calderbank1998GF4, Steane1996ErrorCorrectingCodes} is the foundation of large-scale fault-tolerant quantum computation. Recently, high-rate encodings such as quantum low-density parity-check (qLDPC) codes~\cite{gottesman2013fault, breuckmann2021quantum, panteleev2022asymptoticallygoodquantumlocally, leverrier2022quantumtannercodes, Panteleev_2021} have emerged as a compelling alternative to local topological codes, offering much higher encoding rates and substantially lower physical-qubit overhead by leveraging non-local connectivity~\cite{BravyiCrossGambettaMaslovRallYoder2024HighThreshold, xu2024constant}. As quantum memories, qLDPC codes are now increasingly well understood and coming within experimental reach~\cite{BravyiCrossGambettaMaslovRallYoder2024HighThreshold, xu2024constant, yoder2025tour, tham2026breakeven, wang2026demonstration}. The central challenge is therefore shifting from storing logical information to computing with it: how can one build an efficient fault-tolerant quantum processor based on high-rate encodings that executes complex logic, rather than merely preserving quantum states?

Code-agnostic methods for fault-tolerant logic on general qLDPC codes have recently been developed, such as code surgery~\cite{CohenKimBartlettBrown2022LongRangeConnectivity, cross2025improvedqldpcsurgerylogical, WilliamsonYoder2024GaugingLogicalOperators, he2025extractorsqldpcarchitecturesefficient, ide2025fault, cowtan2025parallel}, gate teleportation~\cite{gottesman2013fault, nguyen2024quantum}, and batched logical operations~\cite{Xu2025BatchedHighRate}. 
Their generality is appealing, but when applied to complex high-rate codes without exploiting code-specific structure, these methods can still incur substantial overhead and implementation complexity. In code surgery, for example, one introduces an ancilla system whose local measurements realize a target logical operator with space overhead scaling only near-linearly (up to polylogarithmic factors) in the Hamming weight of the operator~\cite{williamson2024lowoverheadfaulttolerantquantumcomputation, cross2025improvedqldpcsurgerylogical, sayginel2025faulttolerantlogicalcliffordgates}. 
Yet if applied naively to large codes with arbitrary high-weight logical operators, such constructions may not yield low-overhead, modular, or rigorously certifiable gadgets. 
This suggests a more promising route: co-designing the code and the logical toolkit so that the code’s structure and symmetries can be directly exploited, yielding modular gadgets, fault tolerance by construction, and a logical gate set that can be certified efficiently.

Hypergraph-product (HGP) codes~\cite{TillichZemor2009HypergraphProduct, bravyi2013homologicalproductcodes} are a representative example of this co-design philosophy. As tensor products of two classical codes, they admit a \emph{canonical} logical basis inherited from classical codewords, realized as conjugate logical operators supported on rows and columns of a 2D grid and intersecting on a distinguished subgrid that can be viewed, conceptually, as representing the encoded logical qubits~\cite{Quintavalle_2023, xu2025fast}. This explicit structure has enabled a rich logical toolkit, including automorphism gates~\cite{berthusen2025automorphismgadgetshomologicalproduct, malcolm2026computing, xu2025fast}, homomorphic CNOTs and Steane gadgets~\cite{xu2025fast}, fold-transversal Clifford gates~\cite{Quintavalle_2023}, fast code switching~\cite{hong2024single, tan2025singleshotuniversalityquantumldpc},  fast code surgery~\cite{chang2026constant}, and efficient extractors~\cite{blue2026full}. However, HGP codes have limited code parameters, particularly in their distance, restricting the practical payoff of this toolkit. This motivates extending these techniques to more modern lifted- and balanced-product codes~\cite{Breuckmann2021BalancedProductQuantumCodes, Panteleev_2021, Panteleev_2022}, which can achieve much better parameters, including linear distances~\cite{Breuckmann2021BalancedProductQuantumCodes, panteleev2022asymptoticallygoodquantumlocally}. 
A balanced-product code may be viewed as a hypergraph product followed by a quotient by a shared group symmetry $G$, but that quotient obscures the tensor-product structure of logical operators. It is therefore unclear whether these codes still admit a canonical logical basis inherited from the classical codes, and hence the structured gate set that such a basis enables~\cite{eberhardt2024logicaloperatorsfoldtransversalgates}.

In this work, we resolve this question for a broad and practically important class of quantum codes. We identify a family of \emph{canonical} lifted-product (LP) codes over the cyclic group ring $R = \mbb{F}_2[x]/(x^l + 1)$, and show that, provided the underlying protograph $A$ contains more than one element and the lift satisfies mild algebraic conditions, they admit a \emph{canonical logical basis} inherited directly from the classical codewords of the base matrices.
Specifically, each logical qubit is associated with a conjugate pair of logical operators derived directly from codewords of the (lifted) classical base code $A$, with the two operators overlapping on exactly one physical qubit. The logical qubits are further organized into logical fibres -- orbits of $l$ qubits whose logical operators are related by the cyclic group action -- and the resulting basis exhibits a row- and column-parallel structure, so that logical operators have an identical form across the rows and columns of the physical fibre grid (see Fig.~\ref{fig:main-overview}(a)).
In effect, this canonical logical basis recovers the HGP structure on a code family with far better parameters.

The symmetry of the canonical basis unlocks a complete native logical instruction set (Fig.~\ref{fig:main-overview}, Table~\ref{tab: main-instruction-set}). \emph{Automorphism} gates cyclically shift every logical fibre in parallel by permuting the physical qubits, while \emph{fold-transversal} Clifford gates, arising from the $ZX$-duality of symmetric LP codes, implement global Hadamard, \(S\), and \(CZ\) gates, recovering the logical instruction set of symmetric HGP codes~\cite{Quintavalle_2023}. Beyond these global operations, we obtain fully addressable logic through graph-surgery gadgets tailored to the canonical basis. The key co-design feature is that many logical operators are equivalent up to physical permutations (Fig.~\ref{fig: canonical-basis}), allowing any low-logical-weight Pauli-product measurement (PPM) to be assembled by bridging a small set of reusable ``seed'' gadgets (Fig.~\ref{fig: seed-surgery}(a)). The number of seed surgery gadgets is only \(r_A\), independent of the lift size \(l\), so arbitrary low-weight logical measurements reduce to constructing and certifying just a handful of small gadgets. For example, the large \(\LP^{3\times7}_{75}=[[4350,1224,\leq\!20]]\) code~\cite{cain2026shorsalgorithmpossible10000} requires only four seed surgery gadgets (Table~\ref{tab: seed-surgery-gadget-resources}). We further construct a fixed \emph{canonical extractor} for arbitrary high-weight logical Pauli products by exploiting the same row/column structure and $ZX$-duality (Fig.~\ref{fig: seed-surgery}(b)). Remarkably, these extractors are themselves smaller than the data block -- for example, the extractor for \(\LP^{3\times5}_{33}=[[1122,148,\leq\!20]]\) is below half the size of the code (Table~\ref{tab: LP-canonical-extractor-resources}) -- while retaining the same cyclic symmetry as the data code, making them far more structured than generic extractor gadgets~\cite{he2025extractorsqldpcarchitecturesefficient}.

Finally, we introduce highly parallel logical primitives for LP codes that enable high-throughput computation. Exploiting the row/column structure of the canonical logical basis, we first develop parallel code surgery that measures an arbitrary logically disjoint collection of $X$- or $Z$-type PPMs supported on a column of logical fibres (Fig.~\ref{fig: parallel-surgery}(a)). We further generalize the homomorphic-measurement technique for HGP codes~\cite{xu2025fast} to implement parallel inter-fibre measurements (Fig.~\ref{fig: parallel-surgery}(b)). 
For example, on the $[[1122,148,\leq\!20]]$ LP code, arbitrary parallel single-body (respectively, two-body) PPMs across $66$ logical qubits require ancillas only about $2\times$ (respectively, $3\times$) the size of the data block. 
Beyond Clifford operations, we also develop a parallel magic-state injection protocol that transfers magic states from small-distance surface codes into an entire LP code block in parallel. 
For instance, we present an explicit protocol that injects $132$ $\ket{T}$ states into the $\LP^{3\times5}_{33}=[[1122,148,\leq\!20]]$ code from distance-$7$ surface codes in two batches using primarily a helper (transistor) code comparable in size to the data block, while maintaining a provable distance-$\geq 7$ fault tolerance throughout the injection process. Combined with a transversal distillation factory~\cite{xu2025fast,cain2026shorsalgorithmpossible10000} based solely on transversal CNOTs, these primitives enable high-throughput, high-fidelity magic-state generation with low amortized space--time overhead.

The remainder of this paper is organized as follows. Section~\ref{sec:summary} summarizes our main results; Section~\ref{sec: main-LP-basis} constructs the canonical logical basis and discusses its symmetries; 
Sections~\ref{sec: main-graph-surgery} and~\ref{sec: main-LP-parallel-logic} then develop the tailored logical instruction set for the canonical LP codes. 
Self-contained background and full technical proofs are deferred to the appendices.

\section{Summary of main results \label{sec:summary}}
\begin{figure*}[t]
    \centering
    \includegraphics[width=\textwidth]{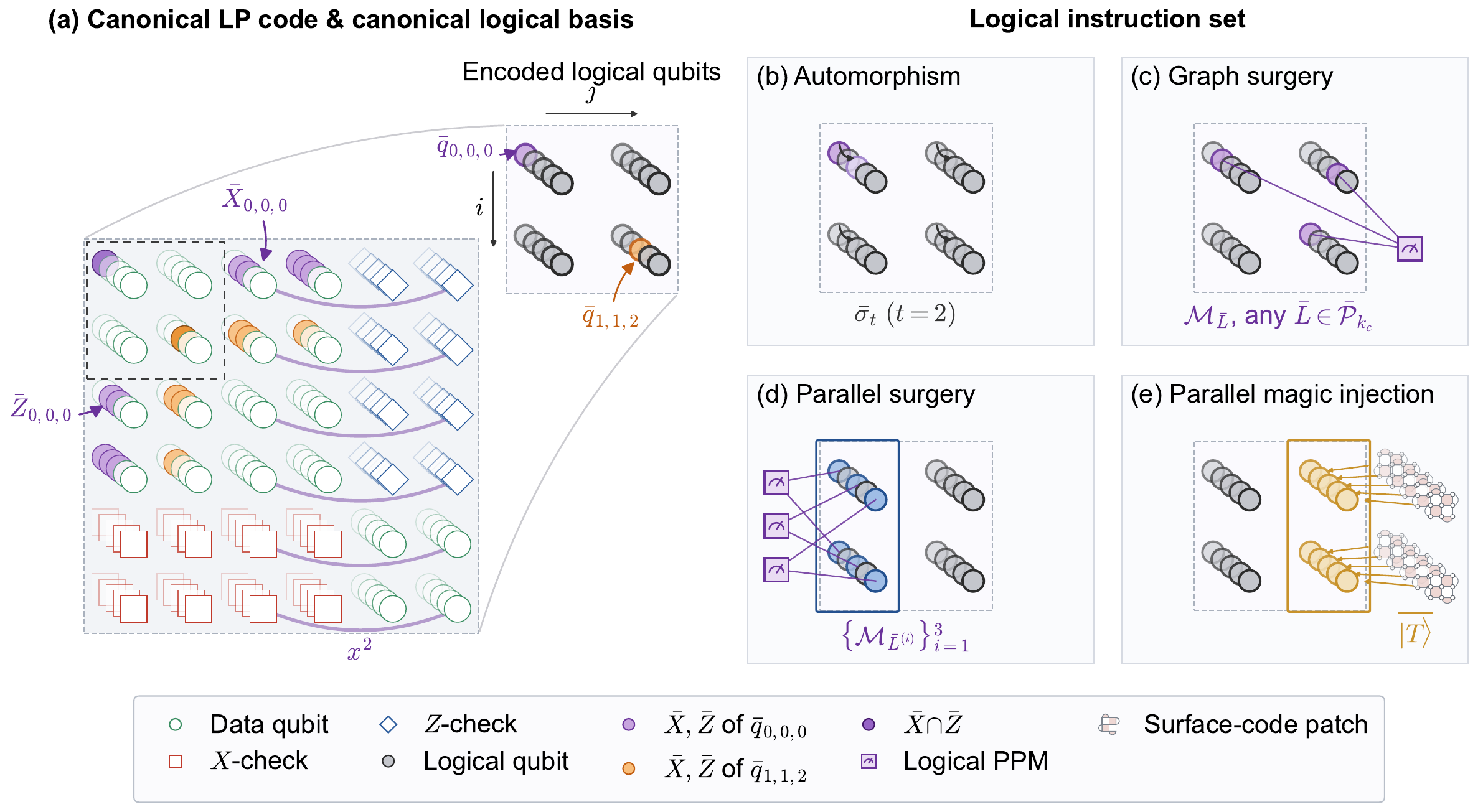}
    \caption{\textbf{Canonical LP codes, their logical basis, and logical instruction set. }
    \textbf{(a)} Illustration of a canonical LP code $\LP_l(A,A^{*})$, drawn for the toy code $\LP^{2\times4}_5=[[100,26,4]]$ with $r_A = 2$: data qubits (green circles) and $X$-/$Z$-checks (red squares/blue diamonds) each form a stack of $l$ copies of the lift, which we refer to as a \emph{physical fibre}. Each purple arc is an $R$-linear edge representing one element of the check matrices; only a subset of the horizontal edges is drawn, all carrying the same element $A_{2,3} = x^{2}$ of Eq.~\eqref{eq:toy_code}, which illustrates the product structure of the code. Gray circles with thick outlines (inset) are the encoded logical qubits, forming an $r_A \times r_A$ grid of \emph{logical fibres} of $l$ logical qubits each: the logical qubit $\bar{q}_{i,j,m}$ is defined by a conjugate pair of canonical logical operators $\bar{X}_{i,j,m}$ and $\bar{Z}_{i,j,m}$, supported on the $i$-th row and the $j$-th column of physical fibres, respectively. Two such pairs are highlighted in one color per logical qubit ($\bar q_{0,0,0}$ purple, $\bar q_{1,1,2}$ orange); each pair overlaps on a single darker qubit inside the dark dashed box that marks the $r_A \times r_A$ information sub-grid of physical fibres on which all conjugate pairs intersect.
    \textbf{(b)--(e) Logical instruction set} (Table~\ref{tab: main-instruction-set}): one primitive per panel acting on the logical qubits of (a), with each purple box denoting a logical Pauli-product measurement (PPM) supported on the logical qubits it is wired to --- \textbf{(b)} the cyclic-shift automorphism $\bar\sigma_t$, drawn for $t=2$; \textbf{(c)} graph surgery of an arbitrary logical Pauli product $\mathcal{M}_{\bar L}$, $\bar{L}\in\bar{\mathcal{P}}_{k_c}$; \textbf{(d)} parallel intra-column surgery, here three logically disjoint two-body $Z$-type PPMs $\{\mathcal{M}_{\bar{L}^{(i)}}\}_{i=1}^{3}$ on the boxed column $j=0$; and \textbf{(e)} parallel magic-state injection into the boxed column $j=1$ (gold) from stacks of rotated-surface-code patches.}
    \label{fig:main-overview}
\end{figure*}

\begin{table*}[!t]
\centering
\renewcommand{\arraystretch}{1.25}
\setlength{\tabcolsep}{4pt}
\newcommand{\isI}[1]{\parbox[t]{0.155\textwidth}{\raggedright #1\strut}}
\newcommand{\isA}[1]{\parbox[t]{0.295\textwidth}{\raggedright #1\strut}}
\newcommand{\isS}[1]{\parbox[t]{0.325\textwidth}{\raggedright #1\strut}}
\begin{tabular}{l l l c}
\toprule
\textbf{Instruction} & \textbf{Logical action} & \textbf{Space (ancilla) overhead} & \textbf{Time (QEC cycles)} \\
\midrule
\isI{Cyclic-shift automorphism $\overline{\sigma}_t$} & \isA{cyclic shifts $\bar{q}_{i,j,m}\!\to\!\bar{q}_{i,j,(m+t)\bmod l}$ on all $r_A^2$ logical fibres in parallel} & \isS{none (physical-qubit permutation)} & $1$  \\
\addlinespace[3pt]
\isI{Fold-transversal $\overline{\mathrm{H\text{-}SWAP}}$} & \isA{transversal $H$ $+$ fold permutation: $\bar{H}^{\otimes k_c}$ followed by $\bar{q}_{i, j, m} \leftrightarrow \bar{q}_{j, i, m}$} & \isS{none (depth-$1$ physical circuit)} & $1$ \\
\addlinespace[3pt]
\isI{Fold-transversal $\overline{\mathrm{S\text{-}CZ}}$} & \isA{diagonal $S$ $+$ folded $CZ$: $\bigotimes_{i, m}\bar{S}_{\bar{q}_{i, i, m}} \bigotimes_{i\neq j, m} \overline{CZ}_{\bar{q}_{i, j, m}, \bar{q}_{j, i, m}}$} & \isS{none (depth-$1$ physical circuit)} & $1$ \\
\addlinespace[3pt]
\isI{Graph surgery by bridging $r_A$ seed surgery gadgets} & \isA{An arbitrary logical PPM: $\mathcal{M}_{\bar{L}}$, where $\bar{L} \in \bar{\mc{P}}_{k_c}$} & \isS{$\mathrm{b}(\bar{L})\cdot\tilde{O}(\omega)$, where $b(\bar{L})$ denotes the logical weight (body) of $\bar{L}$ and $\omega$ is the physical weight of the seed logicals} & $O(d)$ \\
\addlinespace[3pt]
\isI{Graph surgery with canonical extractor} & \isA{An arbitrary logical PPM: $\mathcal{M}_{\bar{L}}$, where $\bar{L} \in \bar{\mc{P}}_{k_c}$} & \isS{$\tilde{O}(n_A l)$} & $O(d)$ \\
\addlinespace[3pt]
\isI{Parallel intra-column surgery} & \isA{parallel $\{\mathcal{M}_{\bar{L}^{(t)}}\}_{t}$, where $\{\bar{L}^{(t)}\}_t$ is any collection of logically disjoint $Z$- (resp. $X$-) operators on a column (resp. row) of logical fibres} & \isS{$O(n_A l d)$} & $O(d)$ \\
\addlinespace[3pt]
\isI{Parallel inter-column surgery} & \isA{parallel $\{\mathcal{M}_{\bar{Z}_{i, j, m} \bar{Z}_{i, j^{\prime}, m}}\}_{i,m}$ (resp. $\{\mathcal{M}_{\bar{X}_{i, j, m} \bar{X}_{i^{\prime}, j, m}}\}_{j,m}$) between two columns (resp. rows) of logical $Z$- (resp. $X$-) operators} & \isS{$O(n_A l)$} & $O(d)$ \\
\addlinespace[3pt]
\isI{Parallel magic injection} & \isA{inject $|\bar{T}\rangle^{\otimes k_c}$ from copies of distance-$d_s$ surface codes in parallel} & \isS{$O(d_s n_A l)$} & $O(r_A d_s)$ \\
\bottomrule
\end{tabular}
\caption{\textbf{Logical instruction set for canonical LP codes:} (Section~\ref{sub: main-FT-logic}). 
This set applies generally to any $[[n, k, d]]$ symmetric canonical LP code $\mr{LP}_l(A, A^{*})$ with a base matrix $A \in R^{(n_A - r_A)\times n_A}$, which encodes $k_c = r_A^2 l \leq k$ logical qubits in the canonical logical basis, $\{\bar{q}_{i, j, m}\}_{i \in [r_A], j \in [r_A], m \in [l]}$ (Section~\ref{sub: main-canonical-LP-codes}). 
Note that when $r_A$ is fixed to be a constant, $n = O(l)$ scales linearly with the lift size $l$.
For each instruction, we list its logical action on the canonical logical qubits, the extra space overhead required (asymptotic), and the time cost in units of QEC cycles.
Here, we hide polylog factors using the $\tilde{O}$ notation.
When we describe the logical actions, we use the geometrical picture of the logical qubits in Fig.~\ref{fig:main-overview}(a), where the $k_c$ canonical logical qubits are arranged on a $r_A \times r_A$ grid of logical fibres (indexed by $(i, j)$), each consisting of $l$ logical qubits (indexed by $m$). 
}
\label{tab: main-instruction-set}
\end{table*}

In this section, we summarize our main results, which are illustrated in Fig.~\ref{fig:main-overview}. We first introduce a family of \emph{canonical} lifted-product (LP) codes whose logical operators admit an explicit and highly structured \emph{canonical logical basis} (Section~\ref{sub: main-canonical-LP-codes}). We then show that this basis unlocks a complete logical instruction set (Section~\ref{sub: main-FT-logic}), including automorphism and fold-transversal Clifford gates, modular low-overhead graph-based code surgery, and highly parallel logical primitives such as parallel hypergraph-based code surgery and parallel magic-state injection. Together, these primitives form a powerful and universal framework for logical computation. The full instruction set, together with its logical action and space--time cost, is summarized in Table~\ref{tab: main-instruction-set}.

\subsection{Canonical LP codes and logical basis}\label{sub: main-canonical-LP-codes}

Lifted-product (LP) codes are defined over Abelian rings, as a sub-family of general balanced-product (BP) codes~\cite{Breuckmann_2024}. Let $R:= \mathbb{F}_2[x]/(x^l+1) $ be a polynomial ring. Equivalently, $R=\mathbb{F}_2[G]$ for a cyclic group $G$ of order $l$ generated by cyclic shift $x$. Given two classical base matrices \(A \in R^{m_A\times n_A}\) and \(B \in R^{m_B\times n_B}\), an LP code $\LP_l(A, B)$ has parity-check matrices: 
\begin{equation}\label{eq: main-LP-parity-checks}
\begin{aligned}
H^T_Z &= \left(I_{n_A}\otimes_R B,\; A \otimes_R I_{n_B}\right), \\
H_X &= \left(A\otimes_R I_{m_B},\; I_{m_A}\otimes_R B\right).
\end{aligned}
\end{equation}
As illustrated in Fig.~\ref{fig:main-overview}(a), the physical qubits can be organized into two blocks of tiles: an \(n_A\times m_B\) left block ($L$) and an $m_A\times n_B$ right block ($R$), where each tile contains \(l\) physical qubits. We refer to each such tile of \(l\) qubits as a \emph{physical fibre}. Accordingly, we denote by \(q_{L/R}(i,j,m)\) the \(m\)-th qubit in the \((i,j)\)-th physical fibre of the left/right block.

As will be discussed in Section~\ref{sec: main-LP-basis}, under mild algebraic conditions on \(A\), \(B\), and \(l\), we obtain a family of \emph{canonical} LP codes
that admit the following canonical logical basis (Definition~\ref{def: main-canonical-basis-LP}):
\begin{equation}
    \{\left( \bar{Z}_{i, j, m}, \bar{X}_{i, j, m}\right)\}_{i \in [r_A], j\in [r_B^*], m \in [l]},
\end{equation}
encoding $k_c = r_A\, r_B^*\, l$ logical qubits exclusively on the left block (L) with $r_A:= n_A - m_A$ and $r_B^* := m_B - n_B$, indexed by triplets $(i,j,m)$, which we arrange on $r_A \times r_B^*$ \emph{logical fibres} each of size $l$ (Fig.~\ref{fig:main-overview}(a)). 
Accordingly, we denote by $\bar{q}_{i, j, m}$ the $m$-th logical qubit in the $(i, j)$-th logical fibre.
The basis satisfies the following properties (Proposition~\ref{prop: main-property-canonical-basis}):

\begin{enumerate}[label=(\roman*)]
    \item \textbf{Inheriting classical codewords.} The supports of $\bar{Z}_{i, j, m}$ and $\bar{X}_{i, j, m}$ are given, respectively, by the classical codewords of $A$ and $B^*$, so the physical weights of the logical operators are controlled by the (typically low) weights of the base classical codewords.
    
    \item \textbf{Conjugate pairs.} Each pair of logical operators $\left(\bar{Z}_{i,j,m}, \bar{X}_{i,j,m}\right)$ forms a conjugate pair that intersects at a single physical qubit, $q_L(i,j,m)$. Moreover, $\bar{Z}_{i,j,m}$ is disjoint from every non-conjugate $X$ logical operator $\bar{X}_{i',j',m'}$ with $(i',j',m') \neq (i,j,m)$. Consequently, each conjugate pair defines an independent logical qubit $\bar{q}_{i,j,m}$.
    
    \item \textbf{Cyclic symmetry.} 
    Fix a grid coordinate $(i,j)$. The logical operators $\{\bar{X}_{i,j,m}\}_{m=0}^{l-1}$ (and likewise $\{\bar{Z}_{i,j,m}\}_{m=0}^{l-1}$) form a cyclic orbit under the cyclic group action. 
    Specifically, the physical permutation
    \begin{align}\label{expr: main-cyclic-shift-physical}
        \sigma_t:\ q_{L/R}(i,j,m)\mapsto q_{L/R}(i,j,(m+t)\bmod l),
    \end{align}
    acts on the supports of the canonical basis operators: 
    \begin{align}
        \begin{aligned}
            \sigma_t:\quad
        \supp \bar{Z}_{i,j,m}&\mapsto \supp \bar{Z}_{i,j,(m+t)\bmod l}, \\
        \supp \bar{X}_{i,j,m}&\mapsto \supp \bar{X}_{i,j,(m+t)\bmod l}.
        \end{aligned}
    \end{align}
    
    \item \textbf{Row/column parallel structure.} The $Z$-type (resp. $X$-type) logical operators take the same form across logical qubits on different columns (resp. rows) of the logical fibres. Concretely, let $\sigma_C$ be any column permutation of $[r^*_B]$ and $P_C$ be the corresponding physical permutation such that 
    \begin{align}\label{expr: main-column-permutation-physical}
        P_C: q_L(i, j, m) &\mapsto q_L(i, \sigma_C(j), m).
    \end{align}
    Similarly $P_R$ for $\sigma_R$ any row permutation of $[r_A]$. These physical permutations act on the supports of the canonical basis operators: $\supp \bar{Z}_{i,\sigma_C(j),m} = P_C (\supp \bar{Z}_{i,j,m})$ and $\supp \bar{X}_{\sigma_R(i),j,m} = P_R(\supp\bar{X}_{i,j,m})$.

    \item \textbf{$ZX$-duality.} When the code is symmetric, i.e. $B = A^*$, where $A^*$ is the transpose of $A$ with each entry $f(x)$ replaced by $f(x^{-1})$ ($x^{-1} = x^{l-1}$ since $x^l = 1$; see Appendix~\ref{app-sec: premliminaries}), the $Z$- and $X$-type logical operators also take the same form. Concretely, there exists a physical permutation $\tau$, in the form of an involution, 
    \begin{align}\label{expr: main-XZ-duality-involution-physical}
        \tau \quad: q_L(i, j, m) \leftrightarrow q_L(j, i, m); \\
        \quad q_R(i, j, m) \leftrightarrow q_R(j, i, m),
    \end{align}
    such that $\supp \bar{X}_{i, j, m} = \tau (\supp \bar{Z}_{j, i, m})$ for all $i, j, m$.

\end{enumerate}
See Fig.~\ref{fig:main-overview}(a) and Fig.~\ref{fig: canonical-basis} for illustrations of these properties.
As we show in Section~\ref{sec: main-LP-basis}, 
in particular, a typical randomized code with odd lift size $l$ and \(r_A, r_B^* \ge 1\) is a canonical LP code. For a symmetric LP code with base matrix \(A \in R^{(n_A-r_A)\times n_A}\), we use the shorthand \(\mr{LP}_l^{(n_A-r_A)\times n_A}\). Our main examples of canonical LP codes are \(\LP^{3\times5}_{33} = [[1122,148,\leq\!20]]\) (with \(r_A=2\)) and \(\LP^{3\times7}_{75} = [[4350,1224,\leq\!20]]\) (with \(r_A=4\)), both used in Ref.~\cite{cain2026shorsalgorithmpossible10000}. These codes encode hundreds to thousands of logical qubits at high encoding rates and distances---precisely the regime that has remained difficult to access with HGP code families.

\subsection{Fault-tolerant logic}\label{sub: main-FT-logic}
The canonical logical basis unlocks a complete logical instruction set, summarized in Table~\ref{tab: main-instruction-set} and illustrated in Fig.~\ref{fig:main-overview}(b)--(e). Throughout, one \emph{logical cycle} denotes $\Theta(d)$ rounds of syndrome extraction for a distance-$d$ code.

\subsubsection{Automorphism and fold-transversal gates}
The symmetry of the canonical LP codes directly enables symmetry-derived logical gates, such as the automorphism and fold-transversal gates~\cite{Breuckmann_2024}. 
Here, utilizing the structure of the canonical logical basis, we can easily characterize their logical actions (Proposition~\ref{prop: main-fold-transversal-canonical-LP}). 

A physical cyclic shift, $\sigma_t: q_{L/R}(i, j, m) \rightarrow q_{L/R}(i, j, (m + t)\mod l)$, implements a logical cyclic shift $\bar{\sigma}_t$ acting on all logical fibres in parallel,
\begin{equation}\label{expr: main-logical-cyclic-shift}
\begin{aligned}
    \bar{\sigma}_t:\ &\bar{X}_{i, j, m} \mapsto \bar{X}_{i, j, (m + t)\bmod l}, \\
    &\bar{Z}_{i, j, m} \mapsto \bar{Z}_{i, j, (m + t)\bmod l},
\end{aligned}
\end{equation}
or equivalently, $\bar{q}_{i, j, m} \mapsto \bar{q}_{i, j, (m + t)\mod l} $, with no ancilla overhead. 

Exploiting the fold $ZX$-duality of symmetric LP codes \(\LP_l(A,A^*)\), under which the \(Z\)- and \(X\)-checks, as well as the canonical \(Z\)- and \(X\)-logical operators, are exchanged by the physical involution
$\tau:\; q_{L/R}(i,j,m)\leftrightarrow q_{L/R}(j,i,m)$,
corresponding to reflection across the diagonal of the physical grid (Fig.~\ref{fig:main-overview}(a)), we obtain fold-transversal Clifford gates per Ref.~\cite{Breuckmann_2024}. 
In particular, the fold-transversal \(\overline{\mathrm{H\text{-}SWAP}}\), consisting of physical transversal Hadamards followed by the fold permutation \(\tau\), implements global logical Hadamards followed by the same fold permutation on the logical qubits--- 
$\bar{X}_{i,j,m} \leftrightarrow \bar{Z}_{j,i,m}$.
Similarly, the fold-transversal \(\overline{\mathrm{S\text{-}CZ}}\), consisting of transversal \(S\) gates on the left-block fold-invariant qubits \(\{q_{L}(i,i,m)\}_{i,m}\), transversal \(S^{\dagger}\) gates on the right-block fold-invariant qubits \(\{q_{R}(i,i,m)\}_{i,m}\), and transversal \(CZ\) gates on each folded qubit pair \(\{(q_{L/R}(i,j,m),q_{L/R}(j,i,m))\}_{i\neq j,m}\), implements logical \(\bar{S}\) gates on the fold-invariant logical fibres and logical \(\overline{CZ}\) gates between folded logical pairs.

Note that these fold-transversal gates are in exact analogy with those for symmetric HGP codes~\cite{Quintavalle_2023, xu2025fast}. 

\subsubsection{Graph code surgery}
The rich symmetries of the canonical logical basis enable a highly modular graph-surgery toolkit for implementing an arbitrary logical Pauli-product measurement
\begin{equation}
    \mathcal{M}_{\bar{L}}, \qquad \bar{L}\in \bar{\mathcal{P}}_{k_c}.
\end{equation}
When \(\bar{L}\) has low \emph{logical weight} \(\mathrm{b}(\bar{L})\), i.e., the number of logical Pauli factors (\emph{bodies}) in the product, we realize \(\mathcal{M}_{\bar{L}}\) by bridging \(\mathrm{b}(\bar{L})\) copies drawn from a small set of ``seed'' surgery gadgets. 
Specifically, for a symmetric $\mr{LP}_l(A,A^{*})$ code with $A \in R^{(n_A-r_A)\times n_A}$, there exists a set $\mathfrak{S}_{\mathrm{seed}}$ of only $r_A$ seed surgery gadgets, independent of the lift size $l$, each of size $\tilde{O}(\omega)$, where $\omega$ is the maximum \emph{physical} weight of a canonical logical operator. Any $\mathcal{M}_{\bar{L}}$ can then be implemented by bridging $\mathrm{b}(\bar{L})$ such gadgets via the universal adapters of Ref.~\cite{he2025extractorsqldpcarchitecturesefficient}.
This reduction to only \(r_A\) distinct seed surgery gadgets relies crucially on the symmetry of the canonical basis: the same gadget that measures one representative operator \(\bar{Z}_{i,j,m}\) can, by cyclic shifts, the row/column parallel structure,  and the $ZX$-duality, be reused to measure any \(\bar{Z}_{i,j',m'}\) or \(\bar{X}_{j',i,m'}\) (Fig.~\ref{fig: seed-surgery}(a) and Fig.~\ref{fig: canonical-basis}). 
As a result, arbitrary low-weight logical measurements reduce to duplicating, shifting, and bridging a constant set of small gadgets, rather than designing a new surgery gadget for each logical operator. For instance, the \(\LP^{3\times7}_{75}=[[4350,1224,\leq 20]]\) code requires only \(4\) seed surgery gadgets, each using roughly \(200\) ancilla qubits (excluding checks; Table~\ref{tab: seed-surgery-gadget-resources}).

For high-logical-weight operators \(\bar{L}\), bridging seed surgery gadgets incurs a space overhead that grows linearly with \(\mathrm{b}(\bar{L})\). In this regime, we instead construct a single \emph{canonical extractor}: a fixed ancilla graph \(\mathcal{X}\) that measures arbitrary \(\bar{L}\) with ancilla size \(|\mathcal{X}|=\tilde{O}(n_A l)\), independent of \(\mathrm{b}(\bar{L})\)~\cite{he2025extractorsqldpcarchitecturesefficient}. While generic qLDPC extractors are typically much larger than the data block, the structure of the canonical basis lets us build a substantially smaller extractor for LP codes. The construction begins with a small extractor coupled to one column of physical fibres for measuring \(Z\)-type logical operators on a column of logical fibres, and then extends it to arbitrary logical measurements by exploiting, again, the symmetries of the canonical logical basis (Fig.~\ref{fig: seed-surgery}(b)). For instance, for the \(\LP^{3\times5}_{33}=[[1122,148,\leq 20]]\) code, the resulting canonical extractor is below half the size of the data block (Table~\ref{tab: LP-canonical-extractor-resources}). A key consequence of these symmetry-based reductions is that the full surgery toolkit in this work can be constructed explicitly, fully certified, and proved fault-tolerant.

\subsubsection{Parallel code surgery}

The row/column parallel structure of the canonical logical basis also enables highly parallel logical measurements. First, we develop \emph{parallel intra-column} (and, by symmetry, intra-row) code surgery, which simultaneously implements an arbitrary collection
\[
    \{\mathcal{M}_{\bar{L}^{(t)}}\}_t,
\]
of logically disjoint \(Z\)-type (respectively, \(X\)-type) PPMs supported on a single column (respectively, row) of logical fibres (Fig.~\ref{fig: parallel-surgery}(a)). Since these logical operators are localized to one column (resp. row) of physical fibres, the surgery reduces essentially to a surgery protocol for the underlying classical code \(A\). We derive algebraic conditions that substantially reduce the required hypergraph thickening while preserving fault tolerance, leading to practical, low-overhead protocols with full addressability. For example, arbitrary single- and two-body \(Z\)-PPMs on the \(66\) logical qubits of a \(\mr{LP}^{3\times5}_{33}\) code require ancillas only \(2\!-\!4\times\) the size of the corresponding data column (Table~\ref{tab: high-rate-column-arbitrary-addressable}).

We further introduce a highly efficient \emph{parallel inter-column} surgery that simultaneously measures
\[
    \{\mathcal{M}_{\bar{Z}_{i,j_1,m}\bar{Z}_{i,j_2,m}}\}_{i,m},
\]
between two columns of logical fibres (Fig.~\ref{fig: parallel-surgery}(b)). Unlike intra-column surgery, this protocol requires no hypergraph thickening: the merged code remains an LP code, allowing distance preservation to be certified directly. Consequently, the ancilla consists only of a single auxiliary column of physical fibres, yielding an extremely low-overhead protocol that generalizes the homomorphic measurement gadget for HGP codes~\cite{xu2025fast}.

\subsubsection{Parallel magic}
Finally, we introduce a highly parallel protocol for injecting all \(k_c=r_A r_B^*l\) logical magic states of a canonical $\mr{LP}_l(A, B)$ code. Motivated by transversal qLDPC distillation factories~\cite{xu2025fast,cain2026shorsalgorithmpossible10000}, which require an entire block of input magic states, our protocol injects \(r_A l\) logical \(\ket{T}\) states in parallel from \(r_A l\) distance-\(d_s\) surface codes prepared, for example, by magic-state cultivation~\cite{gidney2024magicstatecultivationgrowing}. The key ingredient is a \emph{transistor code}, another LP code that mediates the transfer from the surface codes to one column of logical fibres of $\mr{LP}_l(A, B)$ through two rounds of the parallel inter-column surgery introduced above (Fig.~\ref{fig: magic-state-injection}). Repeating this procedure for each logical-fibre column injects all \(k_c\) magic states in $r_B^*$ batches.

The protocol has extra space overhead dominated by the transistor code, while the additional surgery ancillas are comparatively small. Its runtime is \(O(r_B^* d_s)\), and its phenomenological fault distance is at least \(\min(d, d_s)\). For example, the \(\mr{LP}^{3\times5}_{33}\) code can be loaded with \(132\) logical \(\ket{T}\) states using distance-7 surface codes in two batches, with a transistor code approximately \(1.5\times\) the size of the data block.

\section{Canonical LP codes and logical basis\label{sec: main-LP-basis}}
Before defining canonical LP codes and their canonical logical basis in Definition~\ref{def: main-canonical-basis-LP}, we introduce the minimal notation and background needed for general (quasi-cyclic) LP codes. In particular, we derive a general formula (necessary and sufficient condition) for an algebraic logical basis of an arbitrary LP code that specializes to the canonical LP basis under mild algebraic conditions on the underlying classical codes.

Logical-basis constructions are known for several LP subfamilies, including bivariate bicycle (BB) codes~\cite{eberhardt2024logicaloperatorsfoldtransversalgates} and cluster cyclic (CC) codes~\cite{gu2026qgpuparallellogicquantum}. 
Here, we present a single general formula that unifies the previous constructions and applies to generic LP codes over arbitrary Abelian groups, by combining the K\"unneth formula~\cite{bott-tu-1982} with the algebraic structure of semisimple rings~\cite{roman1997field}~\footnote{It also gives an explicit, constructive version of the remarks in Appendix~B of Ref.~\cite{Panteleev_2022}}. In this work, we represent a CSS stabilizer code $\mathcal{Q}$ with $X$ (resp. $Z$) parity-check matrix $H_X$ (resp. $H_Z$) as a three-term (co)chain complex
\begin{equation}\label{eq: main-2-chain-complex}
\begin{tikzcd}
	{\mathcal{Q}:} & {\mathcal{Q}_2} & {\mathcal{Q}_1} & {\mathcal{Q}_0}
	\arrow["{\partial^Q_2}", from=1-2, to=1-3]
	\arrow["{\partial^Q_1}", from=1-3, to=1-4]
\end{tikzcd},
\end{equation}
where $\mc{Q}_2$, $\mc{Q}_1$, and $\mc{Q}_0$ are $\mbb{F}_2$-vector spaces whose bases are associated with the $Z$-checks, the qubits, and the $X$-checks of $\mc{Q}$, respectively. The boundary maps are given by $\partial^Q_2 = H_Z^T$ and $\partial^Q_1 = H_X$.
Then, the logical $Z$- and $X$-type operators are given by the first homology group $H_1(\mathcal{Q}) := \ker{\partial_1^Q}/\IM{\partial_2^Q}$ and the first cohomology group $H^1(\mathcal{Q}) := \ker{(\partial_2^Q)^T}/ \IM{(\partial_1^Q)^T}$, respectively.

To define a lifted-product code, we first consider chain complexes over a group-algebra ring $R = \mbb{F}_2[G]$ for an Abelian group $G$. 
In this work, we further specialize $G$ to the cyclic group $C_l$ of order $l$, where $R$ becomes isomorphic to the univariate quotient polynomial ring $R\simeq\mbb{F}_2[x]/(x^l + 1)$. 
Note that our formalism and techniques could also be generalized to other groups, including non-Abelian groups and more general balanced-product codes (see upcoming works~\cite{Bhardwaj2026MittenCodes, Hong2026RateOneFifthLDPC}).
An $R$-valued chain complex $\mc{C}$ is defined as a collection of free $R$-modules $\{\mc{C}_i = R^{m_i}\}_i$ and $R$-linear boundary maps $\partial_i \in R^{m_{i-1}\times m_i}$.
\begin{definition}[2D lifted-product code~\cite{Panteleev_2022, Panteleev_2021}]\label{def: main-2d-LPAB}
    Let $R$ be a finite univariate polynomial ring, and let $\mathcal{A}_{\bullet}: \mathcal{A}_1 \xrightarrow{A} \mathcal{A}_0$ and $\mathcal{B}_{\bullet}: \mathcal{B}_1 \xrightarrow{B} \mathcal{B}_0$ be two $1$-chains over $R$, with $A \in R^{m_A \times n_A}$ and $B \in R^{m_B \times n_B}$.
    The 2D lifted-product code, denoted $\LP_l(A, B)$, is the total complex~\cite{breuckmann2021quantum} $\mc{M}:
    \begin{tikzcd} 
    {\mathcal{M}_2} & {\mathcal{M}_1} & {\mathcal{M}_0}
	\arrow["{\partial^M_2}", from=1-1, to=1-2]
	\arrow["{\partial^M_1}", from=1-2, to=1-3]
\end{tikzcd}$, 
    of the product complex $\mathcal{A}_{\bullet} \otimes_R \mathcal{B}_{\bullet}$,
    where $\mc{M}_2 = \mc{A}_1 \otimes_R \mc{B}_1$, $\mc{M}_1 = \mc{A}_1 \otimes_R \mc{B}_0 \oplus \mc{A}_0 \otimes_R \mc{B}_1$, $\mc{M}_0 = \mc{A}_0 \otimes_R \mc{B}_0$, and the boundary maps are given by
\begin{equation}\label{eq: main-LP-boundary-maps-ring}
\begin{aligned}
(\partial_2^M)^T & = \left(I_{n_A} \otimes_R B^*, A^* \otimes_R I_{n_B}\right), \\
\partial_1^M & = \left(A \otimes_R I_{m_B}, I_{m_A} \otimes_R B\right), 
\end{aligned}
\end{equation}
where $A^*$ is the transpose of $A \in R^{m_A \times n_A}$ with each of its polynomial entry  $f(x)$ replaced by $f(x^{l-1})$. 
\end{definition}
To obtain a valid $\mbb{F}_2$-chain complex $\mc{Q}$ corresponding to a qubit code, we take the \emph{binarization} of the $R$-chain complex $\mc{M}$, defined as follows.
First, utilizing the fact that $R$ is also an $l$-dimensional vector space with basis $\{x^i\}_{i = 0, 1, \cdots, l - 1}$, we can define an isomorphism $\mbb{B}: R^m \rightarrow \mbb{F}_2^{ml}$.
This induces a faithful binary representation of $R$-linear maps  $\mbb{B}: R^{m\times n} \rightarrow \mbb{F}_2^{ml \times nl}$.
Furthermore, for any $A \in R^{m \times n}$ and $u \in R^n$, we have 
$\mbb{B}(Au) = \mbb{B}(A) \mbb{B}(u)$.
Utilizing $\mbb{B}$, we can define the binarization of $\mc{M}$ as
\begin{equation}
    \mbb{B}(\mc{M}): \begin{tikzcd} 
    {\mbb{B}(\mathcal{M}_2)} & {\mbb{B}(\mathcal{M}_1)} & {\mbb{B}(\mathcal{M}_0})
	\arrow["{\mbb{B}(\partial^M_2)}", from=1-1, to=1-2]
	\arrow["{\mbb{B}(\partial^M_1)}", from=1-2, to=1-3]
    \end{tikzcd}.
\end{equation}
Taking $\mbb{B}(\mc{M})$ as $\mc{Q}$ then produces the standard definition of a 2D LP code. The interplay between binary chain and chain defined over a ring $R$ requires differentiating sets of notations. Throughout the work, we define row span ($\rs_{\mathbb{F}_2}$ vs. $\rs_{R}$), kernel ($\ker_{\mathbb{F}_2}$ vs. $\ker_{R}$), cokernel ($\coker_{\mathbb{F}_2}$ vs. $\coker_{R}$), tensor product ($\otimes_{\mathbb{F}_2}$ vs. $\otimes_{R}$), etc., with subscripts indicating over which the arithmetic are defined respectively.

Here, we are interested in understanding the algebraic properties, e.g., the homology and cohomology groups, of $\mc{Q}$ through $\mc{M}$. 
As such, it is useful to establish some formal connection between the homologies of $\mc{Q}$ and $\mc{M}$.
Since we can view $\mbb{B}$ as a bijective chain map $\mbb{B}: \mc{M} \rightarrow \mc{Q}$, it also induces an isomorphism $\mbb{B}: H_1(\mc{M}) \rightarrow H_1(\mc{Q})$.
Since $H_1(\mc{Q})$ is hard to calculate directly, we will first calculate $H_1(\mc{M})$ using algebraic tools on the ring $R$, and then obtain $H_1(\mc{Q})$ via $\mbb{B}(H_1(\mc{M}))$. 
Observe that $H_1(\mc{Q})$ is a quotient vector space, which is isomorphic to an $\mbb{F}_2$-subspace of $\mbb{F}_2^n$. This means that we can find a generator matrix $L_Z \in \mbb{F}_2^{k \times n}$ for the logical $Z$ operators such that $H_1(\mc{Q}) = \rs_{\mbb{F}_2}{L_Z}$, where $k$ denotes the dimension of 
$H_1(\mc{Q})$.
Then, according to the isomorphism, we know that there exists an $L^M_Z \in R^{k \times n_M}$ such that $H_1(\mc{M}) = \rs_{\mbb{F}_2}{L^M_Z}$. 
Now, we show how to obtain $L^M_Z$ through the K\"unneth formula on $\mc{M}$:

To save notation, unless stated otherwise, all vectors and matrices are $R$-valued. For odd $l$, every $R$-module is projective; see Definition~\ref{def: free-projective-module} and Proposition~\ref{proposition: semi-simple-iff-projective}. For example, $u \in \mathcal{A}_1 = R^{n_A}$ denotes a vector with entries $u_i \in R$ for $i \in [n_A]$.
\begin{theorem}[Informal: Basis characterization of lifted-product over a semi-simple ring]\label{thm: main-LP-basis-characterization}
    Let $R = \mathbb{F}_2[x]/(x^l + 1)$ be the finite, univariate polynomial ring, with odd $l$. Let the 2D lifted-product code $\LP_l(A, B)$ be given in Definition~\ref{def: main-2d-LPAB}. Then the $Z$-type logical operators corresponding to $H_1(\mc{M})$ have supports generated by
    \begin{align}\label{eq: main-LZ}
        L_{Z}^M =   G_{A} \otimes_R E_{B} \bigoplus E_{A} \otimes_R G_{B},
    \end{align}
    where $G_C$ and $E_C$ satisfy $\rs_R(G_C) = \ker_R C$ and $\rs_{\mbb{F}_2}(E_C) = \coker_R C$, respectively, for $C \in \{A, B\}$.
    Similarly, the $X$-type logical operators corresponding to $H^1(\mc{M})$ have supports generated by
    \begin{align}\label{eq: main-LX}
        L_{X}^M =  E_{A^*} \otimes_R G_{B^*} \bigoplus G_{A^*} \otimes_R E_{B^*},
    \end{align}
    where we take $\rs_{\mbb{F}_2} G_{C^*} = \ker_R C^*$ and $\rs_R E_{C^*} = \coker_R C^*$, respectively, for $C \in \{A, B\}$. Binarizing the nonzero rows of $L_Z^M$ and $L_X^M$ gives the physical supports of the $L_Z$ and $L_X$ logical bases, respectively.
\end{theorem}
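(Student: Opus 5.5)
We work entirely over $R=\mathbb{F}_2[x]/(x^l+1)$ and transfer to $\mathbb{F}_2$ only at the end, using the bijective chain map $\mathbb{B}$, which induces $H_1(\mc{M})\cong H_1(\mc{Q})$.

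\textbf{Step 1: splitting.} For odd $l$, $x^l+1$ is separable over $\mathbb{F}_2$. Hence $R\cong\bigoplus_k \mathbb{F}_2[x]/(f_k)$ is a product of finite fields and is semisimple. By Proposition~\ref{proposition: semi-simple-iff-projective}, every $R$-module is projective, so every submodule is a direct summand. For $C\in\{A,B\}$ this gives direct-sum decompositions
\[
\mc{C}_1=\ker_R C\oplus K_C^\perp,\qquad \mc{C}_0=\IM_R C\oplus W_C,
\]
where $C$ maps $K_C^\perp$ isomorphically onto $\IM_R C$ and $W_C\cong\coker_R C$. Each 1-chain $\mc{C}_\bullet$ therefore splits as a direct sum of three complexes:
\begin{itemize}
\item $\ker_R C\to 0$ in degree 1,
\item $0\to W_C$ in degree 0,
\item an acyclic isomorphism piece $K_C^\perp\xrightarrow{\sim}\IM_R C$.
\end{itemize}
We choose $G_C$ with $\rs_R G_C=\ker_R C$, and $E_C$ whose rows span a complement $W_C$ representing $\coker_R C$.

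\textbf{Step 2: Künneth.} The tensor product $\otimes_R$ distributes over direct sums of complexes. Any tensor factor involving an acyclic isomorphism piece is again contractible, because tensoring with a projective module preserves the contraction. The degree-1 homology of $\mc{A}_\bullet\otimes_R\mc{B}_\bullet$ therefore comes only from
\[
(\ker_R A\otimes_R W_B)\oplus(W_A\otimes_R\ker_R B).
\]
This is exactly the Künneth formula with vanishing Tor, since all modules are projective. Both summands sit in the correct blocks of $\mc{M}_1=\mc{A}_1\otimes\mc{B}_0\oplus\mc{A}_0\otimes\mc{B}_1$. Each is a genuine cycle, because the differential vanishes on $\ker\otimes W$ and on $W\otimes\ker$. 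The inclusion of these summands gives a splitting of $\ker\partial_1^M\to H_1(\mc{M})$. Its images are therefore a complement of $\IM\partial_2^M$ inside $\ker\partial_1^M$, which yields Eq.~\eqref{eq: main-LZ}.

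One care point: a generating set of $\ker_R A\otimes_R W_B$ is given by the pairs $g\otimes e$ over rows $g$ of $G_A$ and rows $e$ of $E_B$, with $R$-span. The $\mathbb{F}_2$-span is obtained by including the $x^m$-multiples. This is why the statement mixes $\rs_R$ and $\rs_{\mathbb{F}_2}$, and the proof will state the convention explicitly.

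\textbf{Step 3: cohomology.} $H^1$ is handled by applying the same argument to the dual complex. Its maps are the transposes $(\partial^M)^T$, which over $R$ correspond to the conjugate-transposes $A^*$ and $B^*$ under the standard identification $\mathbb{B}(A)^T=\mathbb{B}(A^*)$. The dual complex is the lifted product of the dual 1-chains $\mc{A}_0\xrightarrow{A^*}\mc{A}_1$ and $\mc{B}_0\xrightarrow{B^*}\mc{B}_1$. Repeating Steps 1–2 produces Eq.~\eqref{eq: main-LX}, with the roles of $\ker$ and $\coker$ exchanged according to the degree shift.

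Finally, apply $\mathbb{B}$ rowwise. Because $\mathbb{B}(Au)=\mathbb{B}(A)\mathbb{B}(u)$, cycles map to cycles and boundaries to boundaries, so the binarized nonzero rows generate $H_1(\mc{Q})$ and $H^1(\mc{Q})$.

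\textbf{Main obstacle.} I expect the hardest step to be the bookkeeping in Step 2. One must show that the chosen representatives are independent modulo boundaries, not merely that they generate. The cleanest route is to exhibit an explicit contracting homotopy on each acyclic isomorphism piece, using the inverse of $C|_{K_C^\perp}$. Identifying $\coker_R C$ with an actual complement $W_C$ rather than a quotient requires semisimplicity, and this is where the odd-$l$ hypothesis is essential.
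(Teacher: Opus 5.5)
Your overall route is sound and is, in substance, the paper's. Semisimplicity of $R$ for odd $l$ supplies the splittings. The degree-$1$ homology is carried by $\ker_R A\otimes_R W_B\oplus W_A\otimes_R\ker_R B$. The cokernel is realized as a genuine $R$-submodule complement $W_C$ of $\IM_R C$, as in Lemma~\ref{lemma: section-maps-coker-R}.

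The paper simply cites the semisimple K\"unneth theorem (Lemma~\ref{thm: kunneth-theorem-semi-simple}) inside Lemma~\ref{lemma: first-LP-basis-characterization}. Your three-piece decomposition of each $1$-chain proves that step directly. It has the advantage of exhibiting the chosen subspace as an explicit complement of $\IM\partial_2^M$ in $\ker\partial_1^M$, rather than only an abstract isomorphism. A small correction: the pieces containing $K_C^\perp\xrightarrow{\sim}\IM_R C$ stay contractible after tensoring with any complex, so projectivity is needed only for the splittings themselves.

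\textbf{The gap} is in passing from $R$-spans to $\mathbb{F}_2$-spans, and this is exactly the step the paper's proof is about. The theorem asserts that the binarized nonzero rows of $L_Z^M$ generate $H_1(\mc{Q})$. Each row binarizes to a single binary vector, with no cyclic shifts added. You write that the $\mathbb{F}_2$-span ``is obtained by including the $x^m$-multiples''. That enlarges the generating set beyond the rows of $L_Z^M$, so it does not prove the stated claim, and your final ``apply $\mathbb{B}$ rowwise'' step no longer follows.

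The missing observation is that the hypothesis $\rs_{\mathbb{F}_2}(E_B)=W_B$ already makes this $\mathbb{F}_2$-span an $R$-submodule. You can therefore move the $R$-coefficients from the $G$ factor onto the $E$ factor. For $m=\sum_i u_i g^{(i)}\in\ker_R A$ and $n\in W_B$,
\[
m\otimes_R n=\sum_i g^{(i)}\otimes_R (u_i n),\qquad u_i n\in W_B=\rs_{\mathbb{F}_2}(E_B),
\]
hence $\rs_{\mathbb{F}_2}(G_A\otimes_R E_B)=\ker_R A\otimes_R W_B$ with no shifts added, and likewise for $E_A\otimes_R G_B$.

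This is the real reason for the asymmetric hypotheses ($\rs_R$ on $G_C$, $\rs_{\mathbb{F}_2}$ on $E_C$). If $E_C$ were merely an $R$-generating set, the claim would fail. For example, in the paper's generalized-bicycle example, taking $E_B=(c)$ instead of $(c,cx,\dots,cx^4)^T$ gives one binary row per sector where five are needed.

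In Step 3 the asymmetry flips: $\rs_{\mathbb{F}_2}(G_{C^*})=\ker_R C^*$ is now the $R$-closed side, so there you must move the coefficients onto the $G_{C^*}$ factor.

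Finally, do not aim to prove the representatives ``independent modulo boundaries'' row by row. Your decomposition already shows that the \emph{subspace} meets $\IM\partial_2^M$ trivially, which is all that ``generated by'' requires. The individual rows can be $\mathbb{F}_2$-dependent even for minimal $G_A,E_B$ (Counter-Example~\ref{counterexample: first-LP-basis-minimality}).
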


The problem thus reduces to analyzing the classical base codes \(A\) and \(B\), specifically by computing their kernels and cokernels over the ring \(R\). Since carrying out these computations directly over \(R\) is generally difficult, we instead decompose \(R\) via the Chinese remainder theorem and perform the analysis over the resulting finite fields via algebraic root theory~\cite{roman1997field}. Let $R = \mathbb{F}_2[x]/(x^l + 1)$ be the finite, univariate polynomial ring with odd $l$. Then there exist unique irreducible polynomials $\{b^{(i)} \}^t_{i=1}$ such that $\prod^t_{i=1} b^{(i)} = x^l+1$ and $R_{(i)} := \mathbb{F}_2[x]/(b^{(i)}) \simeq \mbb{F}_{2^{\mr{deg}(b^{(i)})}}$, which is a binary extension field of degree $\deg(b^{(i)})$. Over these finite-field extensions, we can perform Gaussian elimination on any matrix $G$, which always has pivot columns. We denote these pivot column indices by the \emph{information set} (See Definition~\ref{def: information-set} and Definition~\ref{def: standard-form}). In short, up to column permutation, we can always row-reduce $G$ to $\begin{pmatrix}
    I &G_0
\end{pmatrix}$, where the column indices of the leftmost identity block define an information set. We refer to $b^{(i)}|x^l+1$ as an irreducible polynomial factor, and we obtain the following:
\begin{theorem}[Informal: Evaluations of $\ker_R A$ and $\coker_R A$]\label{thm: main-computation-kernel-cokernel} 
     Let $R = \mathbb{F}_2[x]/(x^l + 1)$ be the finite, univariate polynomial ring with odd $l$, and let $\{b^{(i)} \}^t_{i=1}$ be the set of irreducible polynomial factors. For each factor $b^{(i)}$, let $c_{b^{(i)}} \in R$ be a central idempotent polynomial such that $R = \bigoplus_{i} c_{b^{(i)}} R$. Under the identification of $c_{b^{(i)}}R$ with $R_{(i)}:=\mbb{F}_2[x]/(b^{(i)})$, we have $A_{(i)} \cong c_{b^{(i)}}A$ and
         \begin{align}
             \ker_R A = \bigoplus^t_{i=1} c_{b^{(i)}}\ker_{R_{(i)}} A_{(i)},
         \end{align}
         where each component is lifted back to $c_{b^{(i)}}R\subseteq R$ through this identification.
         Similarly, we have
         \begin{align}
             \coker_R A  =  \bigoplus^t_{i=1}c_{b^{(i)}}\coker_{R_{(i)}} A_{(i)}.
         \end{align}
         Let $G^{(i)}_{A} \in R^{(\dim_{R_{(i)}} \ker_{R_{(i)}} A_{(i)}) \times n_A}$ be the lifted matrix whose rows form an $R_{(i)}$-basis for $\ker_{R_{(i)}} A_{(i)}$, and let $I^{(i)}_A$ be its information set. Similarly, let $E^{(i)}_{A} \in R^{(\dim_{\mathbb{F}_2} \coker_{R_{(i)}} A_{(i)}) \times m_A}$ be the lifted matrix whose rows form an $\mbb{F}_2$-basis for $\coker_{R_{(i)}} A_{(i)}$, and let $I^{(i)}_{A^*}$ be its information set. Then, we can take $G_{A} = \mathrm{vstack}(G^{(1)}_{A}, \cdots, G^{(t)}_{A})$ and $E_{A} = \mathrm{vstack}(E^{(1)}_{A}, \cdots, E^{(t)}_{A})$, vertically stacking these submatrices. 
\end{theorem}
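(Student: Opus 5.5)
The approach is to use the Chinese remainder decomposition of $R$ into a product of fields and to check that kernels and cokernels are compatible with it. Since $l$ is odd, $x^l+1$ is separable over $\mathbb{F}_2$, because its derivative $x^{l-1}$ is coprime to it. So $x^l+1=\prod_{i=1}^t b^{(i)}$ with pairwise coprime irreducible factors, and the Chinese remainder theorem gives a ring isomorphism
\[
\phi: R\to \prod_{i=1}^t R_{(i)},\qquad R_{(i)}=\mathbb{F}_2[x]/(b^{(i)}).
\]
I would construct the central idempotents $c_{b^{(i)}}$ explicitly. Take $c_{b^{(i)}}\equiv 1 \bmod b^{(i)}$ and $c_{b^{(i)}}\equiv 0 \bmod b^{(j)}$ for $j\neq i$; these exist by Bezout applied to $b^{(i)}$ and $(x^l+1)/b^{(i)}$. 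Then the $c_{b^{(i)}}$ are orthogonal idempotents summing to $1$, so $R=\bigoplus_i c_{b^{(i)}}R$. Each summand $c_{b^{(i)}}R$ is a ring with identity $c_{b^{(i)}}$, and it is isomorphic to $R_{(i)}$ via reduction mod $b^{(i)}$. This is the identification used in the statement.

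The key step is functoriality. For any $R$-matrix $A$ and any $u\in R^{n_A}$, write $u=\sum_i c_{b^{(i)}}u$. Because the $c_{b^{(i)}}$ are central, $Au=\sum_i c_{b^{(i)}}Au$. Since the summands $c_{b^{(i)}}R^{m_A}$ form a direct sum, $Au=0$ holds if and only if $c_{b^{(i)}}A\,(c_{b^{(i)}}u)=0$ for every $i$. Under the isomorphism $c_{b^{(i)}}R\cong R_{(i)}$, this condition reads $A_{(i)}u_{(i)}=0$, where $A_{(i)}=A \bmod b^{(i)}$. This gives
\[
\ker_R A=\bigoplus_i c_{b^{(i)}}\ker_{R_{(i)}}A_{(i)}.
\]
For the cokernel I would run the same argument. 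The image decomposes as $\mathrm{im}_R A=\bigoplus_i c_{b^{(i)}}\mathrm{im}_{R_{(i)}}A_{(i)}$, and quotients commute with finite direct sums. Hence $R^{m_A}/\mathrm{im}A\cong\bigoplus_i c_{b^{(i)}}\big(R_{(i)}^{m_A}/\mathrm{im}A_{(i)}\big)$. If $\coker$ is realized as a complement subspace, as the $E$ matrices in Theorem~\ref{thm: main-LP-basis-characterization} suggest, I would pick a complement in each field component and take their direct sum.

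For the generator matrices, in each field $R_{(i)}$ I would apply Gaussian elimination to $A_{(i)}$. This yields an $R_{(i)}$-basis of $\ker A_{(i)}$ in standard form, whose pivot columns form the information set $I^{(i)}_A$. Lifting each entry to a polynomial representative and multiplying by $c_{b^{(i)}}$ gives rows in $c_{b^{(i)}}R^{n_A}$. The $R$-span of these rows equals $c_{b^{(i)}}\ker_{R_{(i)}}A_{(i)}$, because $R$ acts on $c_{b^{(i)}}R$ through $R_{(i)}$. By the decomposition above, vertically stacking over $i$ gives $G_A$ with $\mathrm{rs}_R(G_A)=\ker_R A$.

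For $E_A$ the requirement is an $\mathbb{F}_2$-row span, so within each component I would take an $\mathbb{F}_2$-basis. One way is to take the $R_{(i)}$-basis vectors of the complement and multiply them by $x^k$ for $0\le k<\deg b^{(i)}$. Stacking these gives $\mathrm{rs}_{\mathbb{F}_2}(E_A)=\coker_R A$.

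The main obstacle I expect is bookkeeping rather than substance: making precise what $\coker_R A$ means as a subspace of $R^{m_A}$ rather than a quotient, and checking that multiplying by $c_{b^{(i)}}$ does not lose rank. The latter holds because the map $c_{b^{(i)}}R\cong R_{(i)}$ is injective, so independence over $R_{(i)}$ transfers. I would also check that the decomposition of a quotient into a direct sum of complements gives a genuine complement of $\mathrm{im}_R A$ in $R^{m_A}$, using the direct-sum decomposition of both $R^{m_A}$ and $\mathrm{im}_R A$.
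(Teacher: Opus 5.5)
Your proposal is correct and follows essentially the same route as the paper (Lemmas~\ref{lemma: kernel-A-odd} and~\ref{lemma: cokernel-A-odd}): decompose with the central idempotents, do Gaussian elimination componentwise over the fields $R_{(i)}$, and lift back by multiplying with $c_{b^{(i)}}$. The differences are only in choices, not in substance: the paper identifies $c_{b^{(i)}}R$ with $R_{(i)}$ by evaluation at a root $\beta_i$ (Theorem~\ref{thm: isometry-lifted-product}) rather than by reduction mod $b^{(i)}$; it takes the cokernel complement to be spanned by unit vectors on an information set of $\ker A_{(i)}^T$ (Lemma~\ref{lemma: coker-basis-ker-transpose}); and it uses dual-basis multipliers $h^{(i)}_m$ instead of your $x^k$ to obtain an $\mathbb{F}_2$-basis. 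That last choice is irrelevant for this statement, but it is what later makes the $Z$- and $X$-type bases conjugate in Theorem~\ref{thm: formal-LP-basis-characterization}.
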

\begin{remark}
    Constructed in this way, $G_C$ and $E_C$ (for $C \in \{A, B\}$) satisfy the conditions of Theorem~\ref{thm: main-LP-basis-characterization}, so that the nonzero rows of the product matrices give an $\mbb{F}_2$-linearly independent set of nontrivial logical operators of $\LP_l(A, B)$, although $G_C$ (resp. $E_C$) need not be a minimal $R$-linearly independent generating set of $\ker_R C$ (resp. $\coker_R C$). Equivalently, the balanced-product construction gives a contextually simpler but less explicit description, which is stated in Theorem~\ref{thm: kunneth-theorem-semi-simple-with-balanced} and the example below in Section~\ref{subsec: general-formulation-homological-algebra} and see the definition of the balanced product in Ref.~\cite{breuckmann2021quantum} and Section~\ref{app-sec: product-chains-complexes}. 
\end{remark}

In the following, we provide several code examples to illustrate Theorem~\ref{thm: main-LP-basis-characterization} and Theorem~\ref{thm: main-computation-kernel-cokernel}. Detailed statements and proofs are given in Theorem~\ref{thm: formal-LP-basis-characterization}, Theorem~\ref{thm: formal-computation-kernel-cokernel}, and Section~\ref{app-sec: lp-basis-characterization} in the Appendices.
Throughout the examples below, we write $\chi := \sum^{l-1}_{i=0} x^i$ for the central idempotent associated with the factor $b = 1 + x$; this identity holds for odd $l$.
\begin{example}[Twisted toric code]
    We consider $\LP_l(A, A)$ with $ A = \left( \begin{array}{cc}
    1 & 1\\
    1 & x
\end{array}\right)$ over $R = \mathbb{F}_2[x]/(x^l+1)$ (odd $l$). Let $x^l + 1 = \prod_{i = 1}^t b^{(i)}$ with irreducible $b^{(i)}$.
In this case, we have $\ker_{R_{(i)}} A = \coker_{R_{(i)}} A = 0$ for all $i$ except for $i=1$ with $b^{(1)} = 1 + x$ and $c_{b^{(1)}} = \chi$, where
\begin{equation}
\begin{aligned}
    \ker_{R_{(1)}} A & = \ker_{\mbb{F}_2[x]/(x + 1)} A = \rs_{\mbb{F}_2}((1, 1)), \\
    \coker_{R_{(1)}} A & = \coker_{\mbb{F}_2[x]/(x + 1)} A = \rs_{\mbb{F}_2}((1, 0)), 
\end{aligned}
\end{equation}

Here $\IM_{R_{(1)}} A_{(1)}=\rs_{\mbb{F}_2}((1,1))$, so $(1,0)$ represents the nonzero cokernel class, whereas $A(\chi,0)^T=(\chi,\chi)^T$. Therefore, we can choose
\begin{align}
    G_A=\left(\chi,\chi\right), \qquad E_A=\left(\chi,0\right).
\end{align}
The two direct summands $\ker_R A\otimes_R\coker_R A$ and $\coker_R A\otimes_R\ker_R A$ each contribute one logical operator, supported on the first and second sectors, respectively, yielding
\begin{align}
  L^M_Z
=\begin{pmatrix}
\chi & 0 & \chi & 0 & 0 & 0 & 0 & 0 \\
0 & 0 & 0 & 0 & \chi & \chi & 0 & 0
\end{pmatrix}.
\end{align}
\end{example}
\begin{example}[Cluster cyclic code Ref.~\cite{gu2026qgpuparallellogicquantum}]
    We next consider the cluster cyclic code of Ref.~\cite{gu2026qgpuparallellogicquantum}, defined by $\LP_5(A, B)$ with $l = 5$ and code parameters $[[40, 8, 5]]$, where
\begin{align}
    A = \left( \begin{array}{cc}
    x + x^3 & x + x^4\\
    x + x^4 & x + x^4
\end{array}\right); \quad B = \left( \begin{array}{cc}
     1 + x& x^3 + x^4\\
    1 + x^4 & 1 + x^4
\end{array}\right).
\end{align}
By Definition IV.1 of Ref.~\cite{gu2026qgpuparallellogicquantum}, the determinants of $A$ and $B$ over $R$ are nonzero: $ \mathrm{det}_R(A) =x + \chi  ; \quad \mathrm{det}_R(B) = x + x^2$, where $ \chi = 1 + x + x^2 + x^3 + x^4$. In this case, 
\begin{align}
    R  = c_{b^{(1)}}R \oplus c_{b^{(2)}} R \cong R/(b^{(1)}) \oplus R/(b^{(2)}),
\end{align}
where $b^{(1)} = 1+x$ and $b^{(2)} = 1 + x + x^2 + x^3 + x^4$, with corresponding central idempotents $c_{b^{(1)}} = \chi$ and $c_{b^{(2)}} = x + x^2 + x^3 + x^4 = \chi+1$. Note that $\chi^2=\chi$. Over the fields $R_{(i)}$ for $i=1, 2$,
\begin{align}
    \begin{aligned}
         &\mathrm{det}_{R_{(1)}}(A) = 0; \quad  \mathrm{det}_{R_{(2)}}(A) = x\\
          &\mathrm{det}_{R_{(1)}}(B) = 0; \quad  \mathrm{det}_{R_{(2)}}(B) = x + x^2.
    \end{aligned}
\end{align}
and every $1 \times 1$ submatrix of $A$ and $B$ has zero determinant over $R_{(1)}$ since its entry is a binomial. Hence, 
\begin{align}
    \ker_{R_{(1)}} A  = \coker_{R_{(1)}} A  = \rs_{\mathbb{F}_2}((1,0), (0, 1)).
\end{align}
The same statements hold for $B$. Hence, we can take
\begin{align}
    G_{A} = G_{B} =E_{A} =E_{B}  = \left( \begin{array}{cc}
         \chi &0  \\
         0& \chi 
    \end{array}\right).
\end{align}
 In this case, a generator matrix for the homology group according to Theorem~\ref{thm: main-LP-basis-characterization} is
\begin{align}
   L^M_Z = \left( \begin{array}{cc}
       \chi I_4 & 0 \\
       0 & \chi I_4 
   \end{array} \right),
\end{align}
which reproduces the key feature in Ref.~\cite{gu2026qgpuparallellogicquantum} that the eight logical qubits have disjoint logicals, each supported on a ``cluster" of $l$ qubits. Interestingly, $L^M_X = L^M_Z$.
 
\end{example}

\begin{figure}[!t]
    \centering
    \includegraphics[width=0.5\textwidth]{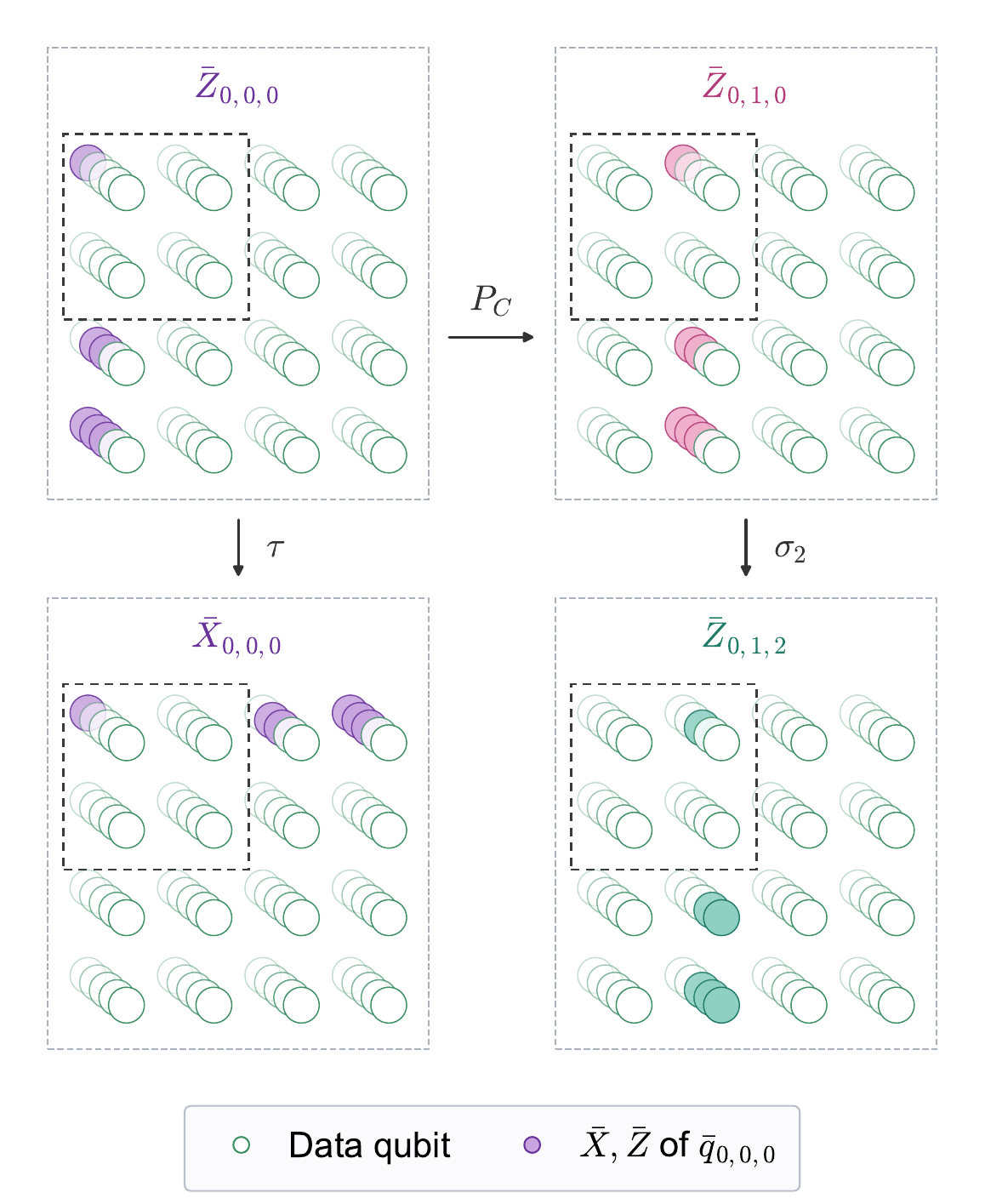}
   \caption{\textbf{Symmetries of the canonical logical basis:} (Proposition~\ref{prop: main-property-canonical-basis}), on the toy code of Fig.~\ref{fig:main-overview}. Each panel shows the left-block support of one canonical logical basis operator (one color per logical qubit; dark dashed boxes: the information sub-grid where the logicals intersect).
   The arrows generate all other logicals that are permutation-equivalent to $\bar Z_{0,0,0}$: the column permutation $P_C$ translates it to another column ($\bar Z_{0,j,0}$, drawn $j=1$); the cyclic shift $\sigma_t$ shifts by $t$ within each logical fibre (drawn $t=2$); and, for $B=A^{*}$, the folding involution $\tau$ transposes it into $\bar X_{0,0,0}$ of the \emph{same} logical qubit.}
    \label{fig: canonical-basis}
\end{figure}

Using the general formalism developed in Theorems~\ref{thm: main-LP-basis-characterization} and~\ref{thm: main-computation-kernel-cokernel}, we now define a family of canonical LP codes equipped with a canonical logical basis analogous to that of HGP codes~\cite{Quintavalle_2023, xu2025fast}. Recall that an HGP code with full-rank classical base check matrices admits a canonical logical basis \(\{(\bar{X}_{i,j},\bar{Z}_{i,j})\}\), in which each conjugate pair labels a logical qubit: up to qubit permutations, \(\bar{Z}_{i,j}\) and \(\bar{X}_{i,j}\) are supported on the \(j\)-th column and \(i\)-th row of the physical-qubit grid, respectively, and intersect uniquely on the \((i,j)\)-th physical qubit (see Fig.~1 of Ref.~\cite{xu2025fast}). Intuitively, the canonical logical basis for LP codes recovers the same HGP structure, except that each \((i,j)\) now labels a \emph{logical fibre} of \(l\) logical qubits that are equivalent up to cyclic permutation.

\begin{definition}[Canonical LP basis and canonical LP code] \label{def: main-canonical-basis-LP}
    For an $\mathrm{LP}_l(A, B)$ code, we say that its logical bases $L^M_Z$ in Eq.~\eqref{eq: main-LZ} and $L^M_X$ in Eq.~\eqref{eq: main-LX} are canonical if, up to column permutations over $R$, they contain a subset of rows of the following form, where $r_A := n_A - m_A$ and $r_B^* := m_B - n_B$:
    \begin{equation}
    \begin{aligned}
        \tilde{L}_Z^M & = \mathrm{vstack}\{x^m \left( I_{r_A}, G_A^0\right)\otimes_R \left(I_{r_B^*}, 0\right) \oplus 0\}_{m = 0}^{l - 1}; \\
        \tilde{L}_X^M & = \mathrm{vstack}\{x^{m^{\prime}} \left( I_{r_A}, 0\right)\otimes_R \left(I_{r_B^*}, G_B^{*0}\right) \oplus 0\}_{m^{\prime} = 0}^{l - 1},
    \end{aligned}
    \label{eq:canonical_basis}
    \end{equation}
    where $G_A^0 \in R^{r_A \times m_A}$ and $G_B^{*0} \in R^{r_B^* \times n_B}$. We can also rewrite these matrices in terms of their indexed rows:
    \begin{equation}
    \begin{aligned}
        \tilde{L}_Z^M & = \mathrm{vstack}\{\supp \bar{Z}_{i, j, m}\}_{i \in [r_A], j \in [r_B^*], m \in [l]}; \\
        \tilde{L}_X^M & = \mathrm{vstack}\{\supp \bar{X}_{i, j, m}\}_{i \in [r_A], j \in [r_B^*], m \in [l]},
    \end{aligned}
    \label{eq:indexed_basis}
    \end{equation}
    where $\supp \bar{Z}_{i, j, m}$ and $\supp \bar{X}_{i, j, m}$ denote the following support vectors over $R$, respectively:
    \begin{align}\label{eq: main-canonical-basis-support-vectors}
       &\supp \bar{Z}_{i, j, m} :=  x^m u^{(i)}_{\mathcal{A}_1} \otimes_R e^{(j)}_{\mathcal{B}_0}, \\
       &\supp \bar{X}_{i, j, m} := x^m e^{(i)}_{\mathcal{A}_1} \otimes_R v^{(j)}_{\mathcal{B}_0}
    \end{align}
    where $i \in [r_A]$ and $j \in [r_B^*]$. The vectors $u^{(i)}_{\mathcal{A}_1}$ and $e^{(i)}_{\mathcal{A}_1}$ are the $i$-th rows of $\left( I_{n_A - m_A}, G_A^0\right)$ and $\left( I_{n_A - m_A}, 0\right)$, respectively. The vectors $e^{(j)}_{\mathcal{B}_0}$ and $v^{(j)}_{\mathcal{B}_0}$ are the $j$-th rows of $\left(I_{m_B - n_B}, 0\right)$ and $\left(I_{m_B - n_B}, G_B^{*0}\right)$, respectively.
     We say that an $\mathrm{LP}_l(A, B)$ code is canonical if it admits a canonical logical basis in the above form.
   
\end{definition}

\begin{proposition}[Properties of canonical LP basis]\label{prop: main-property-canonical-basis}
For a canonical $\LP_l(A, B)$ code, the two logical-basis families $\{\bar{Z}_{i, j, m}\}$ and $\{\bar{X}_{i, j, m}\}$ satisfy the following properties (see Fig.~\ref{fig:main-overview}(a) and Fig.~\ref{fig: canonical-basis}):
   \begin{enumerate}[label=(\roman*)]
       \item (Conjugate pairs).
       The supports of $\bar{Z}_{i,j,m}$ and $\bar{X}_{i',j',m'}$ intersect if and only if $i=i'$, $j=j'$, and $m=m'$. In that case, their supports intersect on exactly one physical qubit $q_L(i, j, m)$, the $m$-th physical qubit of the $(i,j)$-th physical fibre of the left block. Thus, we have  $[\bar{Z}_{i,j,m}, \bar{X}_{i',j',m'}] = 2\bar{Z}_{i,j,m} \bar{X}_{i',j',m'}\delta_{i, i^{\prime}}\delta_{j, j^{\prime}}\delta_{m, m^{\prime}}$.
       \item (Cyclic symmetry). The $l$ $Z$-type (resp. $X$-type) logical operators sharing the same $(i, j)$ index are equivalent up to cyclic shifts:
       \begin{equation}
       \begin{aligned}
           \sigma_t:\quad \supp \bar{Z}_{i,j,m} &\mapsto \supp \bar{Z}_{i,j,(m+t)\bmod l}, \\
          \supp  \bar{X}_{i,j,m} &\mapsto \supp \bar{X}_{i,j,(m+t)\bmod l},
       \end{aligned}
       \end{equation} 
       where the physical cyclic shift $\sigma_t$ is induced from multiplication of $x^t$ over the LP code space $\mathcal{M}_1$. 
       \item (Row/column parallel structure). For any column permutation $\sigma_C$ of $[r_B^*]$ and any row permutation $\sigma_R$ of $[r_A]$, there exist corresponding physical permutations $P_C$ and $P_R$~\eqref{expr: main-column-permutation-physical}, acting on the supports of the canonical basis operators such that $\supp \bar{Z}_{i,\sigma_C(j),m} = P_C(\supp \bar{Z}_{i,j,m})$ and $\supp \bar{X}_{\sigma_R(i),j,m} = P_R(\supp \bar{X}_{i,j,m})$.
       
       \item ($ZX$-duality). When the code is symmetric, i.e. $B = A^*$, the $Z$- and $X$-type logical operators also take the same form. Concretely, there exists a physical permutation $\tau$, in the form of an involution, such that $\supp \bar{X}_{i, j, m} = \tau (\supp \bar{Z}_{j, i, m})$ for all $i, j, m$. Its action on the physical supports of canonical logical basis operators is given by
       \begin{align}\label{expr: main-tensor-swap-involution}
        \tau: \supp \bar{Z}_{i, j, m} \mapsto \supp \bar{X}_{j, i, m}
       \end{align}
       These symmetries are illustrated in Fig.~\ref{fig: canonical-basis}. 
   \end{enumerate}
\end{proposition}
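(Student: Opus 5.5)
The plan is to work directly with the explicit $R$-support vectors of Definition~\ref{def: main-canonical-basis-LP}. We binarize them via $\mathbb{B}$, so that a vector $x^m w$ with $w\in R^{N}$ corresponds to the physical qubits indexed by (fibre, $m+\text{exponent}$). We fix the column permutation over $R$ under which the canonical form holds. Within the left sector $\mc{A}_1\otimes_R\mc{B}_0$, the fibre $(a,b)$ then has $a\in[n_A]$ split as $[r_A]\sqcup$(rest) and $b\in[m_B]$ split as $[r_B^*]\sqcup$(rest). Both canonical families vanish on the right sector by construction, so every statement reduces to the left block.

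\textbf{(i) Conjugate pairs.} First I compute the fibre supports. $\supp\bar Z_{i,j,m}=x^m u^{(i)}\otimes e^{(j)}$ lives on fibres $(a,j)$ with $a$ in the support of $u^{(i)}=(e_i, G_A^{0,(i)})$. Thus it sits on column $j$, in row $i$ of the information block, plus possibly rows outside $[r_A]$. Similarly, $\supp\bar X_{i',j',m'}$ lives on fibres $(i',b)$ with $b\in\{j'\}\cup([m_B]\setminus[r_B^*])$. The two sets of fibres meet only at $(a,b)$ with $a=i'$ and $b=j$. Since $i'\in[r_A]$, the entry $u^{(i)}_{i'}=\delta_{ii'}$. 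Since $j\in[r_B^*]$, the entry $v^{(j')}_j=\delta_{jj'}$. So a common fibre exists iff $i=i'$ and $j=j'$, and it is the fibre $(i,j)$. There the two entries are the monomials $x^m$ and $x^{m'}$, whose binarizations are the single qubits $m$ and $m'$. The supports therefore intersect exactly in $q_L(i,j,m)$ when $m=m'$, and are disjoint otherwise. Overlap parity then gives the stated (anti)commutation relation. I would note that the operators are valid logicals because Theorem~\ref{thm: main-LP-basis-characterization} guarantees the rows lie in $L^M_Z$ and $L^M_X$. Nontriviality and independence follow from the symplectic-dual pairing just established.

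\textbf{(ii) Cyclic symmetry.} Multiplication by $x^t$ is an $R$-module automorphism of $\mc{M}_1$. After binarization it is the permutation $\sigma_t$ acting fibre-wise by $m\mapsto m+t \pmod l$. Because the boundary maps are $R$-linear, $x^t$ commutes with $\partial_1^M$ and $\partial_2^M$, so $\sigma_t$ preserves the stabilizer group. It maps $x^m u\otimes e$ to $x^{m+t}u\otimes e$, and since $x^l=1$ this is exactly $\supp\bar Z_{i,j,(m+t)\bmod l}$. The same argument applies to $\bar X$.

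\textbf{(iii) Row/column parallel structure.} Define $P_C$ to permute the fibres $(a,b)$ by $b\mapsto\sigma_C(b)$ for $b\in[r_B^*]$ and fix all other $b$. This permutes the $e^{(j)}$ as $e^{(j)}\mapsto e^{(\sigma_C(j))}$ and leaves $u^{(i)}$ untouched, which gives the $Z$ claim immediately. $P_R$ is defined analogously, permuting $a\in[r_A]$ in the first tensor factor, and it sends $e^{(i)}\otimes v^{(j)}$ to $e^{(\sigma_R(i))}\otimes v^{(j)}$.

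\textbf{(iv) $ZX$-duality.} This is the step where I expect the real work. For $B=A^*$ we have $m_B=n_A$ and $n_B=m_A$, so $r_B^*=r_A$. The plan is to show that $G_B^{*0}$ can be chosen equal to $G_A^0$, since $B^*=A^{**}=A$. The same Gaussian-elimination/information-set choice in Theorem~\ref{thm: main-computation-kernel-cokernel} applied to $\ker_R A$ then gives $v^{(j)}=u^{(j)}$ (and $e^{(i)}_{\mc{A}_1}=e^{(i)}_{\mc{B}_0}$) under the matching column permutation. The tensor swap $\tau: w\otimes w'\mapsto w'\otimes w$ identifies $\mc{A}_1\otimes_R\mc{B}_0=R^{n_A}\otimes R^{n_A}$ with itself. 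It acts on the right sector $\mc{A}_0\otimes\mc{B}_1=R^{m_A}\otimes R^{m_A}$ likewise, and as an $R$-bilinear swap it keeps the fibre index $m$. It therefore sends $x^m u^{(i)}\otimes e^{(j)}$ to $x^m e^{(j)}\otimes u^{(i)}=\supp\bar X_{j,i,m}$. The obstacle is ensuring that the chosen column permutations of the two factors coincide, so that $\tau$ is a genuine involution of the form $q_L(i,j,m)\leftrightarrow q_L(j,i,m)$. I would handle this by fixing a single permutation of $[n_A]$ derived from the information set of $G_A$ and using it on both tensor factors.
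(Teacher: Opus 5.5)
Your proposal is correct and follows the route the paper implicitly takes. The paper gives no separate proof of this proposition and treats (i)--(iv) as immediate from the explicit support vectors of Definition~\ref{def: main-canonical-basis-LP}; your fibre-by-fibre verification is exactly that reading, with the unit entries on $[r_A]$ and $[r_B^*]$ forcing $(i,j)$ to be the only common fibre, where both entries are monomials. Your remark that (iv) needs the symmetric choice $v^{(j)}=u^{(j)}$, with one common column permutation on both tensor factors, makes explicit an assumption the paper leaves unstated; this choice is legitimate because $u^{(j)}\in\ker_R A=\ker_R B^*$ already has the required unit restriction $e^{(j)}$ on $[r_A]=[r_B^*]$.
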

For odd $l$, the following gives a necessary and sufficient condition for an LP code being canonical:
\begin{theorem}[Informal: Canonical LP codes]\label{thm: main-canonical-LP-basis}
    Let $\LP_l(A, B)$ be a lifted-product (LP) code with odd $l$. For $C \in \{A, B^*\}$, let $I^{(i)}_{C}$ denote the information set of $G^{(i)}_C$, defined as in Theorem~\ref{thm: main-computation-kernel-cokernel} for each component $R_{(i)}$. Then $\LP_l(A, B)$ admits a canonical LP basis if and only if these information sets can be chosen to align for $i=1, \ldots, t$, meaning that there exists column permutations on $A$ and $B^*$ such that $[r_A] \subseteq I_A^{(i)}$ and $[r^*_B] \subseteq I_{B^*}^{(i)}$. The resulting canonical sector contains $k_c:=r_A r_B^*l\leq k$ logical qubits; any additional logical qubits are residual.
\end{theorem}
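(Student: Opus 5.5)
The plan is to reduce the statement, via the Chinese remainder decomposition of Theorem~\ref{thm: main-computation-kernel-cokernel}, to a componentwise statement over the fields $R_{(i)}$, and then reassemble the result over $R$. We treat the $Z$-sector $G_A\otimes_R E_B$ of Theorem~\ref{thm: main-LP-basis-characterization}; the $X$-sector $E_{A^*}\otimes_R G_{B^*}$ is handled by the identical argument with $A$ replaced by $B^*$. The key observation is that the canonical rows $x^m(I_{r_A},G_A^0)\otimes_R(I_{r_B^*},0)$ require a matrix $G_A^{\mathrm{can}}:=(I_{r_A},G_A^0)$ whose rows lie in $\ker_R A$. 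Similarly, $(I_{r_B^*},0)$ must represent independent cokernel classes of $B$, and dually $(I_{r_B^*},G_B^{*0})$ must lie in $\ker_R B^*$. So canonicity amounts to the existence of an $R$-free submodule of $\ker_R A$ of rank $r_A$ in systematic form on the first $r_A$ columns, together with the analogous statement for $B^*$.

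\textbf{Sufficiency.} Assume that, after one common column permutation, $[r_A]\subseteq I_A^{(i)}$ for every $i$. Over each field $R_{(i)}$, row-reduce $G_A^{(i)}$ to standard form. Because $[r_A]$ lies in the information set, one can select $r_A$ rows equal to $(I_{r_A}, G_{A,(i)}^0)$; if $\dim\ker A_{(i)}<r_A$, this is impossible, so the inclusion implicitly forces the dimension to be at least $r_A=n_A-m_A$, which holds automatically. Lift each block through the idempotent and set
\[
G_A^{\mathrm{can}}:=\sum_i c_{b^{(i)}}\,(I_{r_A},G^0_{A,(i)}) = (I_{r_A},G_A^0),
\]
using $\sum_i c_{b^{(i)}}=1$. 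By Theorem~\ref{thm: main-computation-kernel-cokernel}, each row of $G_A^{\mathrm{can}}$ lies in $\ker_R A$, and its $R$-span is free of rank $r_A$ because of the identity block. Since the information sets align, the identity block is the same in every component, so $G_A^{\mathrm{can}}$ can be adjoined to a valid $G_A$ generating set, with the remaining generators completing each component.

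For the cokernel factor, note that a systematic basis $(I_{r_B^*},G_B^{*0})$ of $\ker B^*$ on columns $[r_B^*]$ pairs with the unit vectors $(I_{r_B^*},0)$ with the identity pairing. Under the nondegenerate pairing between $\coker B$ and $\ker B^*$ (which holds since $R$ is semisimple), this means the unit vectors $e^{(j)}$ represent independent cokernel classes. Their $x^m$-multiples therefore span a free summand of $\coker_R B$, and we may place them into $E_B$. Multiplying by $x^m$ and taking tensor products yields exactly the rows of $\tilde L_Z^M$. The count follows as $r_A\cdot r_B^*\cdot l=k_c$, and $k_c\le k$ because these rows are $\mathbb{F}_2$-independent rows of a valid logical basis, as established in the remark after Theorem~\ref{thm: main-computation-kernel-cokernel}.

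\textbf{Necessity.} Suppose the rows $(I_{r_A},G_A^0)\otimes(I,0)$ appear in $L_Z^M$. Projecting onto a single Künneth factor, which is possible because the tensor decomposition over $R$ is a direct sum, the rows $(I_{r_A},G_A^0)$ lie in $\ker_R A$. Multiplying by $c_{b^{(i)}}$ gives rows of $\ker A_{(i)}$ whose restriction to columns $[r_A]$ is the identity over $R_{(i)}$. Hence those columns are independent coordinates of the code $\ker A_{(i)}$, and Gaussian elimination can be arranged so that $[r_A]\subseteq I_A^{(i)}$, with the same permutation for every $i$. The argument for $B^*$ is identical.

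\textbf{Main obstacle.} The delicate step is the cokernel side: showing that the unit vectors $(I_{r_B^*},0)$ give independent classes in $\coker_R B$ if and only if $\ker B^*$ has a systematic basis on those same columns. I would prove this componentwise, using over each field the standard duality $(\IM B)^\perp=\ker B^*$ with respect to the pairing $\langle u,v\rangle=\sum u_i v_i^*$. The remaining work is to verify that the $x\mapsto x^{-1}$ involution permutes the factors $b^{(i)}$ compatibly, so that component-wise alignment is preserved.
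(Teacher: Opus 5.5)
Your proposal is correct and follows essentially the paper's route. You glue the aligned systematic rows across components with the central idempotents, so that $\sum_i c_{b^{(i)}}=1$ produces the identity block. You then use the systematic basis of $\ker_R B^*$ on $[r_B^*]$ to see that the unit vectors there are independent cokernel representatives, which is the content of Lemma~\ref{lemma: coker-basis-ker-transpose}. Finally, you recover the converse by multiplying the canonical rows by each $c_{b^{(i)}}$, a direction the paper only calls straightforward.

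One small repair is needed for $k_c\le k$. The remark you cite concerns the componentwise rows of Theorem~\ref{thm: main-computation-kernel-cokernel}, not your glued ones, and adjoining generators to $G_A$ can break independence anyway. Instead, take $u^{(p)}$ and $v^{(q)}$ to be the rows of $(I_{r_A},G_A^0)$ and $(I_{r_B^*},G_B^{*0})$, and use the identity pairing $\langle x^m u^{(p)}\otimes_R e^{(q)},\, x^{m'}e^{(p')}\otimes_R v^{(q')}\rangle_R=\delta_{pp'}\delta_{qq'}\delta_{mm'}$. This shows the $k_c$ canonical classes are independent.

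Your stated ``main obstacle'' also does not arise. The $R$-valued pairing $\sum_k u_k\overline{v_k}$ already annihilates $\IM_R B$ against $\ker_R B^*$ over $R$ directly, so you need neither semisimplicity nor any $b\mapsto b^\vee$ bookkeeping.
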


See the formal statement and proof in Theorem~\ref{thm: formal-condition-canonical-LP}.

\begin{example}[A small canonical LP code $\mr{LP}_5^{2\times 4}$]
    Take $l = 5$, for which $x^5 + 1 = b^{(1)} b^{(2)}$ with irreducible factors $b^{(1)} = 1 + x$ and $b^{(2)} = 1 + x + x^2 + x^3 + x^4$
    (so $R_{(1)} \cong \mbb{F}_2$ and $R_{(2)} \cong \mbb{F}_{16}$), and the monomial base matrices
    \begin{align}
    \label{eq:toy_code}
        A = \left( \begin{array}{cccc}
             1 & 1 & 1 & 1\\
             1 & x & x^2 & x^4
        \end{array} \right),
    \end{align}
    with $r_A = n_A - m_A = 2$. This defines a canonical LP code $\LP^{2 \times 4}_{5} = \LP_5(A, A^*)$ with parameters $[[100, 26, 4]]$. Applying, over $R_{(2)} \cong \mbb{F}_{16}$, Theorem~\ref{thm: main-LP-basis-characterization}, Theorem~\ref{thm: main-computation-kernel-cokernel}, and Theorem~\ref{thm: main-canonical-LP-basis}, the canonical basis is given by 
    \begin{align}
\begin{aligned}
\widetilde L_Z^M
 &=\operatorname{vstack}
   \left\{x^m(G_A\otimes_R E_{A^*})\oplus0\right\}_{m=0}^{4},\\
\widetilde L_X^M
 &=\operatorname{vstack}
   \left\{x^m(E_{A^*}\otimes_R G_A)\oplus0\right\}_{m=0}^{4}.
\end{aligned}
\end{align}
where
\begin{align}
G_A &=\begin{pmatrix}
1&0&1+x^3&x^3\\
0&1&x^2&1+x^2
\end{pmatrix},
&
E_{A^*} &= \begin{pmatrix}
    1 & 0 & 0 & 0 \\
    0 & 1 & 0 & 0
\end{pmatrix}
\end{align}
    This defines $20$ canonical logical qubits, and there are 6 \emph{residual} logical qubits arising from the $b^{(1)} = 1+x$ component.
    This toy code has been used across the illustration figures~\ref{fig:main-overview}-~\ref{fig: parallel-surgery}.
    Note that the number of residual logical qubits for a general canonical $\mr{LP}_l(A, B)$ code does not scale with the lifting size $l$. 
\end{example}

The properties of the canonical LP basis in Proposition~\ref{prop: main-property-canonical-basis} directly yield automorphism gadgets~\cite{Breuckmann_2024, sayginel2025faulttolerantlogicalcliffordgates} and fold-transversal gadgets~\cite{Breuckmann_2024, Quintavalle_2023} that exploit the code symmetries.

\begin{proposition}[Automorphism and fold-transversal gadgets]\label{prop: main-fold-transversal-canonical-LP}
    Let $\mathcal{Q}$ be a canonical LP code and suppose further that $B = A^*$ so that $\mathcal{Q} = \LP_l(A, A^*)$. Then the following depth-$1$ circuits implement the corresponding logical gadgets:
    \begin{enumerate}[label=(\roman*)]
        \item (Cyclic-shift automorphism). The physical permutation $ \sigma_t:q_{L/R}(i,j,m)\mapsto q_{L/R}(i,j,(m+t)\bmod l)$ (Eq.~\eqref{expr: main-cyclic-shift-physical}) is implemented on each physical fibre in parallel
        and it induces the logical action
        \begin{align}
           \begin{aligned}
             \bar{\sigma}_t:\quad \bar{Z}_{i,j,m}&\mapsto\bar{Z}_{i,j,(m+t)\bmod l}, \\ 
            \bar{X}_{i,j,m}&\mapsto\bar{X}_{i,j,(m+t)\bmod l},
           \end{aligned}
        \end{align}
        which corresponds to a logical cyclic shift across different logical fibres in parallel: $\bar{\sigma}_t: \bar{q}_{i, j, m} \mapsto \bar{q}_{i, j, (m+t)\mod l}$. Note that this automorphism gate also applies to non-symmetrical canonical LP codes $\mr{LP}_l(A, B)$.
        \item (Fold-transversal $\overline{\mathrm{H\text{-}SWAP}}$ gate). The logical Hadamard-fold $\overline{\mathrm{H\text{-}SWAP}}$ is implemented by a fold-transversal circuit,
        \begin{align}
            \tau \circ H^{\otimes n} , 
        \end{align}
        where $\tau:q_{L/R}(i,j,m)\mapsto q_{L/R}(j,i,m)$ transposes the indices on both physical blocks, as in Eq.~\eqref{expr: main-tensor-swap-involution}. The logical action is given by 
        \begin{align}
            \bar{Z}_{i, j, m} \mapsto \bar{X}_{j, i, m}; \quad \bar{X}_{i, j, m} \mapsto \bar{Z}_{j, i, m},
        \end{align}
        which corresponds to applying a Hadamard gate to every logical qubit, followed by a fold permutation.
        
        \item (Fold-transversal $\overline{\mathrm{S\text{-}CZ}}$ gadget). The following depth-$1$ fold-transversal circuit,
        \begin{align}
            \bigotimes_{q \in D_L} S_q \cdot \bigotimes_{q \in D_R} S^{\dagger}_q \cdot \bigotimes_{q<\tau(q)} CZ_{q,\tau(q)},
        \end{align}
        where $D_L = \{q_L(i,i,m)\}_{i,m}$ and $D_R = \{q_R(i,i,m)\}_{i,m}$ are the fold-invariant qubits of the left and right block, and $<$ is any fixed ordering of the physical qubits, so each two-cycle of $\tau$ appears once, implements a $\overline{\mathrm{S\text{-}CZ}}$ gadget with induced logical action
        \begin{align}
            \begin{aligned}
            \bar{Z}_{i, j, m} &\mapsto \bar{Z}_{i, j, m} \\
                \bar{X}_{i, j, m} &\mapsto \bar{X}_{i,j,m}\bar{Z}_{ j,i, m}.
            \end{aligned}
        \end{align}
        For $i = j$, the map is understood with the phase made explicit: $\bar{X}_{i,i,m} \mapsto i\,\bar{X}_{i,i,m}\bar{Z}_{i,i,m} = \bar{Y}_{i,i,m}$, i.e., the logical $\bar{S}$ gate on the fold-invariant logical fibres; for $i \neq j$, it implements a logical CZ gate between the paired logical qubits.
    \end{enumerate}
\end{proposition}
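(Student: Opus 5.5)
We prove the three parts one at a time, using the properties of the canonical basis in Proposition~\ref{prop: main-property-canonical-basis}. The general template is the same for each circuit $U$. First, check that $U$ preserves the stabilizer group, so it is a logical operation. Then compute how $U$ acts on each canonical representative $\bar Z_{i,j,m}$, $\bar X_{i,j,m}$. Because these operators form a symplectic basis of the canonical sector (by the conjugate-pair property (i)), their images determine the logical action. The residual logicals are only claimed to be preserved as a sector; we handle them by noting that the physical maps commute with the $R$-module structure, so they map the canonical sector to itself. Where needed, we state the action only on the canonical sector.

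\textbf{Part (i).} The shift $\sigma_t$ is multiplication by $x^t$ on $\mathcal{M}_1$, and similarly on $\mathcal{M}_0$ and $\mathcal{M}_2$. Since $\partial^M_1$ and $\partial^M_2$ are $R$-linear, they commute with multiplication by $x^t$. Hence $\sigma_t$ permutes the $X$-checks among themselves and the $Z$-checks among themselves, and is therefore an automorphism. This needs no assumption on $B$. The logical action then follows directly from the cyclic-symmetry property (ii): $\sigma_t(x^m u\otimes e)=x^{m+t}u\otimes e$.

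\textbf{Part (ii).} The main algebraic step is to show that $\tau$ maps the $Z$-stabilizers onto the $X$-stabilizers. Identify $\mathcal{M}_1=R^{n_A}\otimes R^{n_A}\oplus R^{m_A}\otimes R^{m_A}$ when $B=A^*$, so that $m_B=n_A$ and $n_B=m_A$. Define $\tau$ as the tensor swap $a\otimes b\mapsto b\otimes a$ on each block. Expanding the boundary maps with $B=A^*$ gives
\[
(\partial_2^M)^T=(I\otimes A,\;A^*\otimes I),\qquad \partial_1^M=(A\otimes I,\;I\otimes A^*).
\]
Conjugating by the swap turns $I\otimes A$ into $A\otimes I$ and $A^*\otimes I$ into $I\otimes A^*$. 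This identifies the row space of $H_Z$, up to the swap on check indices, with that of $H_X$.

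The step I expect to be the main obstacle is the bookkeeping of row space versus column convention and of the involution $*$. One must check whether the $R$-row span of $(\partial^M_2)^T$ or its binarized transpose is the relevant $Z$-stabilizer space. One must also check that binarizing $A^*$ produces the transpose of $\mathbb{B}(A)$, so that $\mathbb{B}(A^*)=\mathbb{B}(A)^T$. If needed, I would compose $\tau$ with the fibre reflection $m\mapsto -m$. However, the statement keeps $m$ fixed, so I expect the identity $\mathbb{B}(A^*)=\mathbb{B}(A)^T$ to make the plain swap work.

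Given the swap property, $\tau\circ H^{\otimes n}$ maps the $Z$-stabilizers to $X$-stabilizers and vice versa, so it is a logical Clifford. The action $\bar Z_{i,j,m}\mapsto \bar X_{j,i,m}$ is exactly the $ZX$-duality property (iv).

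\textbf{Part (iii).} We use the standard fold-transversal argument. Since the circuit is diagonal, it fixes all $Z$-type operators. It sends a Pauli $X(v)$ to a product of $X(v)$, $Z(\tau v)$, and a phase, where the phase is determined by $v$ restricted to the fixed points $D_L$ and $D_R$. For an $X$-stabilizer $X(v)$, the vector $\tau v$ lies in the $Z$-stabilizer space by part (ii), so the image is again a stabilizer up to phase. The phase is $i^{|v\cap D_L|-|v\cap D_R|}$ times the contribution from the $CZ$ pairs. We must show this sign is $+1$. Following Breuckmann--Eberhardt, this reduces to a parity identity on the diagonal of $A\otimes I$ versus $I\otimes A^*$, which I would verify entrywise after binarization. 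This verification is the reason for using $S$ on $D_L$ and $S^\dagger$ on $D_R$.

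For the canonical logicals, $\bar X_{i,j,m}$ is mapped to $\bar X_{i,j,m}\,Z(\tau\,\supp\bar X_{i,j,m})=\bar X_{i,j,m}\bar Z_{j,i,m}$, using property (iv). The phase factor comes from $\supp\bar X_{i,j,m}\cap D_L$. For $i=j$, property (i) shows this intersection is exactly the single qubit $q_L(i,i,m)$, which gives $\bar Y$. For $i\ne j$, the intersection contributes no net phase, which gives logical $CZ$.
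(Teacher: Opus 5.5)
Your proposal is correct and follows the same route as the paper, which likewise reads the logical actions off the cyclic symmetry and $ZX$-duality of Proposition~\ref{prop: main-property-canonical-basis} and defers stabilizer preservation of the fold-transversal circuits to Ref.~\cite{Breuckmann_2024}. In (ii) no reflection $m\mapsto -m$ is needed, since $H_X=(A\otimes_R I,\ I\otimes_R A^*)$ and $H_Z=(I\otimes_R A,\ A^*\otimes_R I)$ are already related by the tensor swap as $R$-matrices, and hence also after binarization. The sign check you leave open in (iii) closes directly: the $X$-check in fibre $(c,a)$ of $\mathcal{A}_0\otimes_R\mathcal{B}_0$ lies in the single column $\{(p,a)\}_p$ of the left block and the single row $\{(c,s)\}_s$ of the right block, so it contains no $2$-cycle of $\tau$ and the $CZ$ gates contribute no sign, and it meets $D_L$ and $D_R$ in exactly $|A_{ca}|$ qubits each, so the $S$ and $S^\dagger$ phases cancel (a mere parity identity would only fix the sign up to $\pm1$).
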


\section{Graph surgery}\label{sec: main-graph-surgery}
\begin{figure*}[t]
    \centering
    \includegraphics[width=\textwidth]{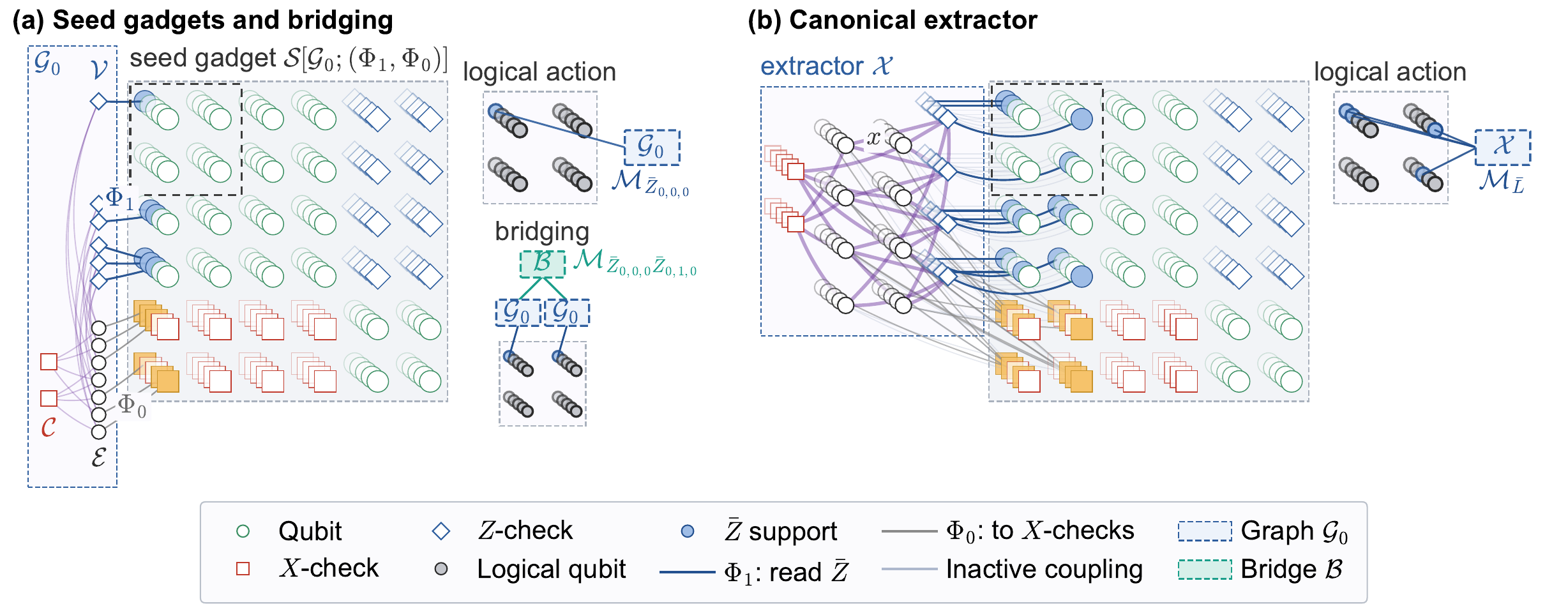}
    \caption{\textbf{Seed surgery gadgets and the canonical extractor}, drawn on the
    toy code $\LP^{2\times4}_{5}$ (same visual conventions).
    \textbf{(a) Seed surgery gadgets and bridging.} The seed surgery gadget
    $\mathcal{S}[\mathcal{G}_0;(\Phi_1,\Phi_0)]$
    (Theorem~\ref{thm: main-seed-logical-canonical-LPAA}) measures the seed logical
    $\bar Z_{0,0,0}$ (Definition~\ref{def: main-seed-logical-operators-gadget}). Its
    ancilla system is the graph $\mathcal{G}_0$ (blue dashed box), with vertex
    $Z$-checks $\mathcal{V}$ (diamonds), edge ancilla qubits $\mathcal{E}$ (circles),
    and cycle $X$-checks $\mathcal{C}$ (squares): the vertex checks read the
    operator's support (blue data qubits) transversally ($\Phi_1$, blue lines), while
    the ancilla qubits attach to the data $X$-checks ($\Phi_0$, gray lines; coupled
    checks in gold). 
    Re-wired by the basis symmetries of
    Fig.~\ref{fig: canonical-basis}, the same gadget measures many other canonical
    logicals; the upper inset shows the induced logical action, with the gadget
    abstracted to a box $\mathcal{G}_0$. Logical Pauli products are then measured by
    \emph{bridging} copies of the seed surgery gadgets (lower subpanel): a bridge system
    $\mathcal{B}$ couples two copies of $\mathcal{G}_0$ into the joint measurement
    $\mathcal{M}_{\bar Z_{0,0,0}\bar Z_{0,1,0}}$.
    \textbf{(b) Canonical extractor.}
    A single \emph{fixed} ancilla graph $\mathcal{X}$ measures an \emph{arbitrary}
    logical Pauli product $\bar L$, as in the inset's $\mathcal{M}_{\bar L}$: the
    full extractor couples to both the first two columns and the first two rows of
    physical fibres, addressing logicals of any type ($X$, $Z$, or mixed, e.g.\ $Y$);
    for simplicity, only the column couplings are drawn. Its
    ancilla graph is a cyclic (voltage) lift of a small base graph and inherits the
    code's symmetry: the vertices (ancilla $Z$-checks), edges (ancilla qubits), and
    cycles (ancilla $X$-checks) all come in cyclic fibres of $l$ copies, drawn as
    stacks, and each thickened edge stands for a whole fibre of couplings, weighted by
    a monomial (one example labeled: $x$). The faint lines show the full switchable
    connectivity to the information columns, of which only a target-dependent subset
    is activated --- shown for a four-body logical $Z$-PPM supported on $12$ physical qubits spanning both
    columns.}
    \label{fig: seed-surgery}
\end{figure*}

The symmetry-based gates of the previous section, such as automorphism and fold-transversal gates, incur essentially no additional space--time overhead but offer limited addressability, as they cannot selectively target a subset of logical qubits. Code-surgery techniques~\cite{CohenKimBartlettBrown2022LongRangeConnectivity, cowtan2025fastfaulttolerantlogicalmeasurements, WilliamsonYoder2024GaugingLogicalOperators, he2025extractorsqldpcarchitecturesefficient, cross2025improvedqldpcsurgerylogical}, generalized from lattice surgery for topological codes~\cite{horsman2012surface}, overcome this limitation by coupling the code to an ancilla system and realizing targeted logical Pauli-product measurements (PPMs). A particularly simple recent approach restricts the ancilla to a graph, yielding a fault-tolerant measurement of a logical operator \(\bar{L}\in\bar{\mc P}_k\) in \(O(d)\) QEC cycles. Using randomized graph-theoretic constructions~\cite{WilliamsonYoder2024GaugingLogicalOperators, yuan2026parsimonious}, one can construct ancilla graphs of size \(O(|\bar{L}|\,\mathrm{polylog}(|\bar{L}|))\) 

Despite this near-linear asymptotic scaling, three practical challenges remain for large qLDPC codes: 
(i) Potential large space costs in practice. The polylogarithmic factors and hidden constants could be substantial, requiring extensive graph augmentation and thickening to guarantee fault tolerance;
(ii) Lack of modularity. Different PPMs generally require distinct ancilla graphs. As logical circuits grow, the number of surgery gadgets that must be designed and constructed can become prohibitively large;
(iii) Difficult certification. The combination of large overheads and many distinct gadgets makes it challenging to obtain a logical instruction set whose surgery operations are comprehensively benchmarked and rigorously proven fault-tolerant.
    
In this section, we show that the symmetries of canonical LP codes help us overcome these obstacles. Rather than constructing a new ancilla graph for each logical measurement, arbitrary PPMs can be assembled from a small collection of reusable graph-surgery primitives, yielding a modular, low-overhead, and fully certifiable logical toolkit.

We first fix a set of notations and provide a graph-surgery framework that is slightly more general than those in the literature. 
We refer the readers to Ref.~\cite{he2025extractorsqldpcarchitecturesefficient} for a more comprehensive review of the previous surgery frameworks.  
Let $\bar{L}$ be the logical operator that we wish to measure and $M_L$ be the corresponding restriction map to its physical qubit support. Then the surgery can be described as follows. 
Let $\mathcal{G}(\mathcal{V}, \mathcal{E}, \mathcal{C})$ be a graph with a cycle basis $\mathcal{C}$ as ancilla cycle checks, edges $\mathcal{E}$ as ancilla qubits, and vertices $\mathcal{V}$ as ancilla vertex checks. 
If we specify further that $\mathcal{G}(\mathcal{V}, \mathcal{E}, \mathcal{C})$ is a connected graph, then this defines a merged (deformed) code which is formed by attaching $\mc{G}$ to a data code $\mathcal{Q}$:

\begin{definition}[Graph-surgery gadget]\label{def: main-graph-surgery-gadget}
Let $\mathcal{G}(\mathcal{V}, \mathcal{E}, \mathcal{C})$ be a sparse graph with a sparse cycle basis $\mathcal{C}$, edges $\mathcal{E}$, and vertices $\mathcal{V}$, forming an exact $2$-chain (graph chain Definition~\ref{def: graph-chains}): $\mathcal{G}: \mathbb{F}_2[\mathcal{V}] \xrightarrow{\partial_1} \mathbb{F}_2[\mathcal{E}]\xrightarrow{\partial_0} \mathbb{F}_2[\mathcal{C}]$ where $\partial_1$ and $\partial_0$ are, respectively, denote the edge-vertex and cycle-edge incidence matrices. A surgery gadget, denoted $\mathcal{S}[\mathcal{G}; (\Phi_1, \Phi_0)]$, on a CSS \emph{data code}  $\mathcal{Q}: \mathcal{Q}_2 \xrightarrow{H^T_Z} \mathcal{Q}_1 \xrightarrow{H_X} \mathcal{Q}_0$ is constructed by connecting $\mathcal{V}$ to qubits $\mathcal{Q}_1$ via $\Phi_1 = \begin{pmatrix}
        \Phi_1^{(X)} \\ \Phi_1^{(Z)}
    \end{pmatrix}$ and connecting $\mathcal{E}$ to the $X$-checks $\mathcal{Q}_0$ and the $Z$-checks $\mathcal{Q}_2$ via, respectively, $\Phi_0^{(Z)}$ and $\Phi_0^{(X)}$, where we denote $\Phi_0 = \begin{pmatrix}
        \Phi_0^{(Z)} \\ \Phi_0^{(X)}
    \end{pmatrix}$. The \emph{merged code} is defined as the check matrix

    \begin{equation}\label{eq: main-merged-parity-check}
       H_{\text{merged}}=\quad
    \begin{NiceArray}{l@{\quad}cccc}[baseline=line-4]
    & \Block{1-2}{\mathcal{Q}_1} & & \Block{1-2}{\mathcal{E}} & \\
    \mathcal{Q}_0 & H_X & 0 & \Phi_0^{(Z)} & 0 \\
    \mathcal{Q}_2 & 0 & H_Z & \Phi_0^{(X)} &  0\\
    \mathcal{V}
    & (\Phi_1^{(X)})^T
    & (\Phi_1^{(Z)})^T
    & 0 & \partial_1^T \\
    \mathcal{C} & 0 & 0 & \partial_0 & 0
    \CodeAfter
    \SubMatrix({2-2}{5-5})[hlines=2,vlines=2]
    \end{NiceArray}\quad ,
    \end{equation}
where each row
$r=\left(r_X^{\mc{Q}_1},\, r_Z^{\mc{Q}_1}\mid r_X^{\mc{E}},\, r_Z^{\mc{E}}\right)$
represents a set of checks associated with the label on the left (e.g., the \(Z\)-type data checks \(\mathcal{V}\)). The corresponding Pauli \(X\) (resp. \(Z\)) operators are supported on the data qubits \(\mc{Q}_1\) according to \(r_X^{\mc{Q}_1}\) (respectively, \(r_Z^{\mc{Q}_1}\)), and on the ancilla qubits \(\mc{E}\) according to \(r_X^{\mc{E}}\) (respectively, \(r_Z^{\mc{E}}\)).
    In addition to $H_X H_Z^T = 0$, the commutativity of the merged-code stabilizers requires that
    \begin{equation}
    \label{eq: main-commutation_requirement}
    \begin{aligned}
        (\Phi_1^{(X)})^T \Phi_1^{(Z)} & = (\Phi_1^{(Z)})^T \Phi_1^{(X)}, \\
        H_Z \Phi_1^{(X)} + \Phi_0^{(X)} \partial_1 & = 0, \\
        H_X \Phi_1^{(Z)} + \Phi_0^{(Z)} \partial_1 & = 0.
    \end{aligned}
    \end{equation}
    We denote a surgery gadget (for a given code) with ancilla graph $\mathcal{G}$ and the \emph{connectivity map} $(\Phi_1, \Phi_0)$ by $\mathcal{S}[\mathcal{G}; (\Phi_1, \Phi_0)]$, and the measured logical operators are given by $\Phi_1(\ker \partial_1)$. See Fig.~\ref{fig: seed-surgery}(a) for an illustration. 
\end{definition}

    The notation  $\Phi_1 = \begin{pmatrix}
        \Phi_1^{(X)} \\ \Phi_1^{(Z)}
    \end{pmatrix}$ refers to the decomposition of the connectivity map into its $X$-type and $Z$-type components, where the upper-half denotes the $X$-type component and the lower-half denotes the $Z$-type component of the Pauli operator on $\mc{Q}_1$, connected to the same set of vertices in $\mc{V}$. Note that, contrary to $\Phi_1$, the components of $\Phi_0 = \begin{pmatrix}
        \Phi_0^{(X)} \\ \Phi_0^{(Z)}
    \end{pmatrix}$ are of the same $X$-type, which deforms the original $H_X$ and $H_Z$, respectively. 
Hence, this should not be understood in the symplectic notation. As a result, the super-scripts on the components of the connectivity map refer to the type ($X$ or $Z$) of the logical operators to be measured. 
The row blocks of Eq.~\eqref{eq: main-merged-parity-check} are the (deformed) $X$-checks and $Z$-checks (upper rows) and the added vertex and cycle checks (lower rows); the column blocks are, respectively, the $X$- and $Z$-components on the data qubits $\mathcal{Q}_1$, followed by the $X$- and $Z$-components on the ancilla qubits $\mathcal{E}$: the vertex checks act on the ancilla qubits by $Z$, while the deformations $\Phi_0$ and the cycle checks act by $X$. The commutation relation of the vertex checks requires that $(\Phi_1^{(X)})^T\Phi_1^{(Z)} = (\Phi_1^{(Z)})^T\Phi_1^{(X)}$, or, equivalently, $(\Phi_1^{(X)})^T\Phi_1^{(Z)} $ is symmetric over $\mathbb{F}_2$. 

For a matrix $H \in \mathbb{F}^{m \times n}_2$, we denote $\omega_{\mathrm
col}(H)$ and $\omega_{\mathrm
row}(H)$ to be, respectively, the maximum column weight and maximum row weight of $H$. We denote the \emph{degree} of a quantum code to be the maximum of its check (stabilizer generator) weights and, for each qubit, the number of $X$-type or $Z$-type checks acting on it (the qubit $X$-degree and qubit $Z$-degree). We refer to the \emph{added degree} of the surgery gadget as the increase in code degree of the merged code relative to the data code $\mathcal{Q}$. Note that this definition does not necessarily restrict us to CSS merged codes. 
To ensure fault tolerance, it is necessary to certify that the distance of the merged code is at least that of the data code $\mathcal{Q}$, which we denote by $d$ throughout this work. A typical sufficient condition is to certify that the (relative) Cheeger constant of the edge-vertex incidence matrix is at least $1$~\cite{Ide_2025, he2025extractorsqldpcarchitecturesefficient, WilliamsonYoder2024GaugingLogicalOperators}, which works for connected graphs. The Cheeger constant can be understood as a form of (co)boundary expansion, and as a particular case of a more general quantity that we refer to as \emph{small-set soundness}:

\begin{definition}[Small-set soundness]\label{def: main-small-set-soundness}
    We say that $\partial_1: \mathbb{F}^{n}_2 \rightarrow \mathbb{F}^m_2$ has a $(t, \rho_t)$-soundness if for all $u \in \mathbb{F}^{n}_2$ such that $|\partial_1 u| < t$,  
    \begin{align}
        |\partial_1 u| \geq \rho_t \, \mathrm{dist}(u, \ker \partial_1),
    \end{align}
    where $\mathrm{dist}(u, \ker \partial_1) = \min_{v \in \ker \partial_1} |u + v|$.
\end{definition}

If $t = m+1$ and $\partial_1$ specializes to the vertex-edge indicent matrix of a graph, we recover the standard notion of Cheeger constant or, in a more generic name, soundness beyond graph incidence matrix. It turns out that, in many applications such as high-dimensional expanders~\cite{dinur2025expansionhigherdimensionalcubicalcomplexes}, the strict notion of soundness is not necessary for (proving) distance lower bounds. To this end, the small-set soundness (later we will simply say soundness) is a strict generalization of the Cheeger constant, as evidenced by Table~\ref{tab: LP-canonical-extractor-resources}, where, particularly for soundness larger than or equal to $2$ with $\omega_{\mathrm{col}}(\Phi_1) = 2$, the Cheeger constant is strictly less than $2$. Note that the soundness condition is also studied in the context of high-rate surgery~\cite{zheng2025highratesurgeryconstantoverheadlogical, CowtanHeWilliamsonYoder2025ParallelCodeSurgery}. For simplicity, we say that a surgery gadget $\mathcal{S}[\mathcal{G}; (\Phi_1, \Phi_0)]$ is $(d, \rho_d)$-sound if its graph edge-vertex incidence matrix $\partial_1$ is $(d, \rho_d)$-sound; such a gadget preserves the distance provided $\rho_d \geq \omega_{\mathrm{col}}(\Phi^{(X)}_1)$ and $\rho_d \geq \omega_{\mathrm{col}}(\Phi^{(Z)}_1)$ (see Lemma~\ref{lemma: distance-preserving-soundness}).

A logical measurement cycle is completed by transversally measuring ancilla qubits $\mathcal{E}$ along the $Z$-basis and performing residual corrections to the deformed checks by $\Phi_0$~\cite{WilliamsonYoder2024GaugingLogicalOperators}, known as \emph{splitting}. We need to ensure the splitting can be designed appropriately so that fault-tolerant analysis of the logical cycle is reduced to that of the merged code. The detailed protocols and statements are summarized in Appendix~\ref{sec: code-surgery-general}. 
In this work, we simply assume that each surgery gadget takes $\Theta(d)$ rounds to ensure the reliable logical measurement, despite the recent progress on faster surgeries~\cite{baspin2025fastsurgeryquantumldpc, cowtan2025fastfaulttolerantlogicalmeasurements, chang2026constant}, which we leave to explore for the LP codes in future work.

The ability of a surgery protocol to fault-tolerantly measure any given logical operator $\bar{L}$ can be summarized as follows: 

\begin{definition}[Surgery gadget desiderata]\label{def: main-surgery-gadgets-desiderata} 
    Let $ \mathcal{S}^{(s)}[\mathcal{G}; (\Phi_1, \Phi_0)]$ be a graph-surgery gadget for measuring a logical operator $\bar{L}$ of a data code $\mathcal{Q}$ with distance $d$.  We say that the graph-surgery gadget $ \mathcal{S}[\mathcal{G}; (\Phi_1, \Phi_0)]$ satisfies graph-surgery gadget desiderata with added degree $\delta$ and (small-set) soundness $\rho_d$ if:
    \begin{enumerate}[label=(\roman*)]
        \item Their connectivity map satisfies Eq.~\eqref{eq: main-commutation_requirement}, and $\Phi_1(\ker \partial_1) = \bar{L}$ ($\bar{L}$ is in symplectic form). 

        \item The graph edge-vertex incidence matrix $\partial_1$ of the surgery graph $\mc{G}$ is $(d, \rho_d)$-sound, with $\rho_d \geq \max(\omega_{\mathrm{col}}(\Phi_1^{(X)}), \omega_{\mathrm{col}}(\Phi_1^{(Z)}))$. 
       
        \item The degree of the merged code in Eq.~\eqref{eq: main-merged-parity-check} exceeds that of the data code by $\delta$. 
    \end{enumerate}
\end{definition}

As shown in Section~\ref{sec: code-surgery-general}, these desiderata guarantee that the gadget in Definition~\ref{def: main-surgery-gadgets-desiderata}  is distance-preserving, and hence that the corresponding logical measurements are fault-tolerant.
Note that satisfying large small-set soundness is strictly easier than having a large Cheeger constant.
In addition, certifying (a lower bound of) the small-set soundness can be much more computationally efficient (seeSection~\ref{sec: spectral-voltage-graph-theory}). 

Under the general graph-surgery framework described above, we are ready to present the co-designed surgery toolkit for the canonical LP codes.

\begin{table*}[!t]
\centering
\footnotesize
\setlength{\tabcolsep}{2.5pt}
\begin{tabular}{@{}l ccc cc c ccc@{}}
\toprule
 & \multicolumn{6}{c}{Surgery graph}
 & \multicolumn{3}{c}{Merged degree} \\
\cmidrule(lr){2-7} \cmidrule(l){8-10}
 & \multicolumn{3}{c}{Graph size} & \multicolumn{2}{c}{Cycles}
 & Soundness
 & \multirow{2}{*}{\shortstack{Stabilizer\\weight}}
 & \multirow{2}{*}{\shortstack{Qubit\\$X$-degree}}
 & \multirow{2}{*}{\shortstack{Qubit\\$Z$-degree}} \\
\cmidrule(lr){2-4} \cmidrule(lr){5-6} \cmidrule(lr){7-7}
Seed surgery gadget
 & $|\mathcal{V}|$ & $|\mathcal{E}|$ & $|\mathcal{C}|$
 & $\omega_{\mathrm{row}}(\mathcal{C})$ & $\omega_{\mathrm{col}}(\mathcal{C})$
 & $\rho_d$ & & & \\
\midrule
\multicolumn{10}{@{}l}{\emph{$\mathrm{LP}^{3\times5}_{33} = [[1122, 148, \le 20]]$}} \\
\addlinespace[2pt]
\quad $\bar{Z}_{(0,0,0)}$
 & $36$ & $77$ & $42$ & $5$ & $5$ & $\geq 1$
 & $9$\,{\scriptsize$(+1)$} & $5$\,{\scriptsize$(+0)$} & $5$\,{\scriptsize$(+0)$} \\
\addlinespace
\quad $\bar{Z}_{(1,0,0)}$
 & $44$ & $92$ & $49$ & $6$ & $6$ & $\geq 1$
 & $9$\,{\scriptsize$(+1)$} & $6$\,{\scriptsize$(+1)$} & $5$\,{\scriptsize$(+0)$} \\
\midrule
\multicolumn{10}{@{}l}{\emph{$\mathrm{LP}^{3\times7}_{75} = [[4350, 1224, \le 20]]$}} \\
\addlinespace[2pt]
\quad $\bar{Z}_{(0,0,0)}$
 & $104$ & $234$ & $131$ & $8$ & $7$ & $\geq 1$
 & $11$\,{\scriptsize$(+1)$} & $8$\,{\scriptsize$(+1)$} & $7$\,{\scriptsize$(+0)$} \\
\addlinespace
\quad $\bar{Z}_{(1,0,0)}$
 & $100$ & $231$ & $132$ & $7$ & $7$ & $\geq 1$
 & $12$\,{\scriptsize$(+2)$} & $8$\,{\scriptsize$(+1)$} & $7$\,{\scriptsize$(+0)$} \\
\addlinespace
\quad $\bar{Z}_{(2,0,0)}$
 & $92$ & $210$ & $119$ & $7$ & $7$ & $\geq 1$
 & $11$\,{\scriptsize$(+1)$} & $7$\,{\scriptsize$(+0)$} & $7$\,{\scriptsize$(+0)$} \\
\addlinespace
\quad $\bar{Z}_{(3,0,0)}$
 & $102$ & $234$ & $133$ & $8$ & $7$ & $\geq 1$
 & $12$\,{\scriptsize$(+2)$} & $8$\,{\scriptsize$(+1)$} & $7$\,{\scriptsize$(+0)$} \\
\bottomrule
\end{tabular}
\caption{\textbf{Seed surgery gadgets.} Explicit constructions of seed surgery gadgets for the $\LP^{3 \times 5}_{33}$ and $\LP^{3 \times 7}_{75}$ canonical LP codes, respectively.
According to Theorem~\ref{thm: main-seed-logical-canonical-LPAA}, we require only $2$ and $4$ seed logicals, respectively, whose ($1$-body) graph-surgery gadgets form the seed gadgets.
For each gadget, we report the numbers of vertices $\mc{V}$, edges $\mc{E}$, and cycles $\mc{C}$ in its surgery graph, and certify that its (small-set) soundness is at least $1$, guaranteeing distance preservation.
In addition, each surgery gadget results in a merged code with degree increase $\delta=1$ for $\LP^{3 \times 5}_{33}$ and $\delta \leq 2$ for $\LP^{3 \times 7}_{75}$, respectively.
We refer to Appendix~\ref{sec: LP-canonical-seed-surgery}, particularly Table~\ref{tab: full-seed-gadgets}, for further details on the surgery numerics, including the bridged surgery gadgets for measuring $2$-body logical operators.}
\label{tab: seed-surgery-gadget-resources}
\end{table*}

\subsection{Bridging seed surgery gadgets}\label{subsec: main-LP-seed-logical}
Here, we present the first scheme for measuring an arbitrary logical operator $\bar{L}$ by bridging from copies of $O(1)$ number of small ``seed surgery gadgets". 
Bridging two surgery gadgets by gluing their ancilla graphs through a universal adapter~\cite{swaroop2024universal} implements a measurement of the product of the logical operators measured by the individual gadgets.
We refer to Refs.~\cite{swaroop2024universal, he2025extractorsqldpcarchitecturesefficient} for its general theory and the construction details. 

The idea of the seed surgery gadgets is the following: we partition a logical basis into several distinct \emph{orbits} by grouping basis operators related by \emph{transforming maps}. The \emph{seed} logical (basis) operators are representatives of the distinct orbits~\cite{webster2025explicitconstructionlowoverheadgadgets}.
Then, any surgery gadget $\mc{S}[\mc{G};(\Phi_1,\Phi_0)]$ that measures a seed logical operator and satisfies the desiderata of Def.~\ref{def: main-surgery-gadgets-desiderata} can be \emph{rewired} to measure any other logical operator in the same orbit: the graph $\mc{G}$ is kept fixed, while its connection to the data code $\Phi_1$ and $\Phi_0$ are composed with suitable transformations.

\begin{definition}[Informal: Seed logicals and seed surgery gadgets]\label{def: main-seed-logical-operators-gadget}
   Let $\mathcal{Q}$ be an $[[n,k,d]]$ quantum CSS code represented by the chain complex
   $\mathcal{Q}_2 \xrightarrow{H_Z^T} \mathcal{Q}_1 \xrightarrow{H_X} \mathcal{Q}_0$, and let $\mathfrak{S}^{(\mathrm{Z})} = \{\bar{Z}_i\}_{i \in [k]}$ and $\mathfrak{S}^{(\mathrm{X})} = \{\bar{X}_i\}_{i \in [k]}$ denote, respectively, its $Z$- and $X$-type logical basis sets~\footnote{Here, we require $\mathfrak{S}^{(\mathrm{Z})}$ and $\mathfrak{S}^{(\mathrm{X})}$ to form $k$ conjugate pairs of logical operators, thereby specifying the $k$ logical qubits. This requirement could be relaxed by requiring only that $\mathfrak{S}^{(\mathrm{Z})}$ (resp. $\mathfrak{S}^{(\mathrm{X})}$) be a minimal generating set for all logical $Z$-type (resp. $X$-type) operators}. Let
   $\mathfrak{S}_{\mathrm{seed}} \subseteq \mathfrak{S}^{(\mathrm{X})} \cup \mathfrak{S}^{(\mathrm{Z})}$
be a subset of these basis logical operators, and let
$
\mathfrak{X}_{\mathrm{seed}}
=
\left\{
\mathcal{S}^{(s)}
\bigl[
\mathcal{G}^{(s)};
(\Phi_1^{(s)},\Phi_0^{(s)})
\bigr]
\right\}_{\bar{L}^{(s)} \in \mathfrak{S}_{\mathrm{seed}}}
$
be a corresponding set of surgery gadgets, one for measuring each logical operator in $\mathfrak{S}_{\mathrm{seed}}$, such that every gadget satisfies Definition~\ref{def: main-surgery-gadgets-desiderata}. Here, we identify each logical Pauli operator $\bar{L}$ with its symplectic vector representation in $\mbb{F}_2^{2n}$. We say that $\mathfrak{S}_{\mathrm{seed}}$ is a complete set of \emph{seed logicals}, and that $\mathfrak{X}_{\mathrm{seed}}$ is the corresponding set of \emph{seed surgery gadgets}, if:
   for any basis logical $\bar{L}^{(i)} \in \mathfrak{S}^{(\mathrm{X})} \cup \mathfrak{S}^{(\mathrm{Z})}$, 
   \begin{enumerate}[label=(\roman*)]
       \item There exists some seed logical $\bar{L}^{(s)}$ and a transforming map $T^{(i, s)}= (T_1^{(i, s)}, T_0^{(i,s)})$, where $T_1^{(i, s)} \in \mbb{F}_2^{2n \times 2n}$ acts on the symplectic space of $n$-qubit Pauli operators, whereas $T_0^{(i, s)}$
       acts on the total check space, i.e., the direct sum of the $X$-check and $Z$-check spaces, whose logical action satisfies $\bar{L}^{(i)} = T_1^{(i, s)} \bar{L}^{(s)}$. 
       \item The transformation map $T^{(i,s)}$ is chosen so that the rewired surgery gadget $\mathcal{S}^{(i)}[\mathcal{G}^{(s)};(T_1^{(i,s)}\Phi_1^{(s)},T_0^{(i,s)}\Phi_0^{(s)})]$ fault-tolerantly measures $\bar{L}^{(i)}$.
   \end{enumerate}
 \end{definition}

The motivation for constructing these seed surgery gadgets is that, with at most $O(k)$ (one for each basis logical $Z$ or $X$ operator) seed surgery gadgets, one can measure an exponential number of complex Pauli operators, thus generating the $\exp(\tilde{\Theta}(k^2))$-size logical Clifford group, by simply duplicating, bridging different seed surgery gadgets and connecting to the code, adaptively.

\begin{definition}[Body (logical weight) of a logical operator]\label{def: main-logical-weight-body}
    Let $\mathcal{Q}$ be a $[[n,k,d]]$ CSS quantum code equipped with logical basis sets
    $\mathfrak{S}^{(\mathrm{Z})}=\{\bar{Z}_i\}_{i\in[k]}$ and
    $\mathfrak{S}^{(\mathrm{X})}=\{\bar{X}_i\}_{i\in[k]}$.
    Every logical Pauli operator can be written, up to an overall phase, as
    \begin{align}\label{eq: generic-logical-operator}
        \bar{L}
        =
        \prod_{i\in J^{(Z)}} \bar{Z}_i
        \prod_{j\in J^{(X)}} \bar{X}_j,
    \end{align}
    for some index sets $J^{(Z)},J^{(X)}\subseteq[k]$.
    We define the \emph{body}, or \emph{logical weight}, of $\bar{L}$ with respect to this logical basis by
    \begin{align}
        b(\bar{L})
        :=
        |J^{(Z)}|+|J^{(X)}|.
    \end{align}
\end{definition}

\begin{proposition}[Informal: bridged surgery gadgets]\label{prop: bridge_get_full_Clifford}
     Let $\mathcal{Q}$ be a (CSS) code with distance $d$, with the set of seed logical operators in Eq.~\eqref{eq: main-seed-logical-LPAA-canonical}, and let $\mathfrak{X}_{\mathrm{seed}}$ be a set of seed surgery gadgets each satisfying the surgery gadget desiderata Definition~\ref{def: formal-surgery-gadgets-desiderata}, with (small-set) soundness $\rho_d$ and added degree $\delta$. Let $|\mathcal{V}|$, $|\mathcal{E}|$, and $|\mathcal{C}|$ denote the maximum numbers of vertices, edges, and cycles, respectively, among all seed surgery gadgets, and let $|\mathcal{E}_b|$ and $|\mathcal{C}_b|$ denote the maximum numbers of edges and cycles, respectively, over all bridges between pairs of seed gadgets.
    Then any logical operator $\bar{L} \in \bar{\mc{P}}_k$ of $\mc{Q}$ can be fault-tolerantly measured using at most $b(\bar{L})$ copies of the seed surgery gadgets, which are bridged together using at most $b(\bar{L}) - 1$ bridges, with the total size of the ancilla system being at most $b(\bar{L})\left( |\mathcal{V}| + |\mathcal{E}| + |\mathcal{C}|\right) + (b(\bar{L})-1)(|\mathcal{E}_b| + |\mathcal{C}_b|)$. Moreover, the resulted surgery gadget satisfies the surgery gadgets desiderata Definition~\ref{def: formal-surgery-gadgets-desiderata} with \emph{preserved} $(d, \rho_d)$-soundness.
\end{proposition}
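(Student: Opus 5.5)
Our plan is to build the measurement gadget for $\bar L$ in three stages: decompose $\bar L$ into basis factors, rewire one seed gadget per factor, and glue the copies with bridges.

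\textbf{Decomposition and rewiring.} First we write $\bar L=\prod_{i\in J^{(Z)}}\bar Z_i\prod_{j\in J^{(X)}}\bar X_j$ as in Definition~\ref{def: main-logical-weight-body}, so that there are $b(\bar L)$ basis factors $\bar L^{(1)},\dots,\bar L^{(b)}$. For each factor $\bar L^{(a)}$, completeness of the seed set (Definition~\ref{def: main-seed-logical-operators-gadget}) gives a seed $\bar L^{(s_a)}$ and a transforming map $T^{(a,s_a)}$. We then form the rewired gadget $\mathcal{S}^{(a)}[\mathcal{G}^{(s_a)};(T_1\Phi_1^{(s_a)},T_0\Phi_0^{(s_a)})]$. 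Because the transforming maps in our setting are the physical permutations $\sigma_t,P_C,P_R,\tau$ of Proposition~\ref{prop: main-property-canonical-basis}, composed with the corresponding check permutations, rewiring changes neither $\partial_1$ nor the column and row weights of $\Phi_1,\Phi_0$. Consequently each rewired gadget inherits soundness $\rho_d$, added degree $\delta$, the commutation relations \eqref{eq: main-commutation_requirement}, and $\Phi_1(\ker\partial_1)=\bar L^{(a)}$. The one point to verify is that the permutations are code automorphisms: $\tau$ conjugates $H_X\leftrightarrow H_Z$, and this is harmless for a symmetric code.

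\textbf{Bridging.} Next we order the factors along a path, or any spanning tree, and connect consecutive gadgets $\mathcal{G}^{(a)},\mathcal{G}^{(a+1)}$ with a bridge system from the universal adapter of Refs.~\cite{swaroop2024universal,he2025extractorsqldpcarchitecturesefficient}. This uses $b-1$ bridges, each adding at most $|\mathcal{E}_b|$ edges and $|\mathcal{C}_b|$ cycle checks, which yields the stated size bound by direct counting. The merged graph is connected, and its vertex space is the union of the individual vertex sets. Hence $\ker\partial_1$ of the merged graph is spanned by the all-ones vector, and $\Phi_1(\mathbf 1)=\sum_a\bar L^{(a)}=\bar L$. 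The Pauli types may be mixed when $\bar L$ has both $X$ and $Z$ factors. To show that the vertex checks still commute, we use that distinct basis factors act on different logical qubits or are commuting $X/Z$ partners of the same qubit. Because of this, the symplectic form of $(\Phi_1^{(X)})^T\Phi_1^{(Z)}$ restricted to cross-gadget blocks is symmetric. We will choose the connectivity in the cross blocks (a vertex set on each side) so that overlaps of $X$- and $Z$-supports come in symmetric pairs, possibly by adding a $\Phi_0$ correction as in Ref.~\cite{he2025extractorsqldpcarchitecturesefficient}. We also need the bridges to make the cycle space exact, which is what the adapter's added cycle checks $\mathcal{C}_b$ are for.

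\textbf{Main obstacle: preserving soundness.} The hard step is proving that the bridged graph is still $(d,\rho_d)$-sound. Take $u$ on the merged vertex set with $|\partial_1 u|<d$, and write $u=(u_1,\dots,u_b)$. We plan to argue per gadget: $|\partial_1 u|\ge\sum_a|\partial_1^{(a)}u_a|$ plus the bridge edges, and each term satisfies $|\partial_1^{(a)}u_a|\ge\rho_d\,\mathrm{dist}(u_a,\{0,\mathbf 1\})$. The difficulty is that the local choices of $0$ or $\mathbf 1$ may differ between gadgets, and a global coset representative needs them to agree. We would handle this with the bridge edges. If adjacent gadgets $a,a+1$ choose differently, the bridge between them has an edge cut of size at least $\min(|\text{adapter matching}|,\cdot)$, and the adapter construction guarantees this is $\ge d$ or charges enough boundary. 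That would force consistency whenever $|\partial_1 u|<d$, giving $|\partial_1 u|\ge\rho_d\,\mathrm{dist}(u,\ker\partial_1)$. If the cut bound cannot be made $\ge d$ directly, our fallback is the adapter's expansion lemma from Ref.~\cite{he2025extractorsqldpcarchitecturesefficient}, which states that bridges do not decrease the Cheeger/soundness constant. We would then invoke Lemma~\ref{lemma: distance-preserving-soundness} to conclude that the merged code is distance-preserving.
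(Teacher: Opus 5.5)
Your rewiring step and your soundness argument are essentially the paper's: Lemma~\ref{lemma: classical-chain-maps}, then Lemma~\ref{lemma: standard-bridging-techniques} iterated along your tree, with the composite staying connected and sound. To close the ``charging'' step, fix each bridge to be a matching of $|\mathcal{E}_b|\ge d$ edges. Then a disagreement of local kernel choices between adjacent gadgets $a,a+1$ forces bridge syndrome at least $|\mathcal{E}_b|-w_a-w_{a+1}$, with $w_a:=\mathrm{dist}(u_a,\{0,\mathbf{1}\})$. The soundness terms $\rho_d(w_a+w_{a+1})$ cover this because $\rho_d\ge 1$. Also, $P_C,P_R$ are not code automorphisms; only the one-sided intertwining $H_XP_C=P_C'H_X$ is needed.

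The genuine gap is the commutation of the vertex checks when $\bar L$ has factors of both types. Vertex checks act on the ancilla only by $Z$, so whether two of them commute is decided by $\Phi_1$ alone, and no $\Phi_0$ correction can help. Commutation of the logical factors $\bar L^{(a)},\bar L^{(b)}$ also does not make the cross block of $(\Phi_1^{(X)})^{T}\Phi_1^{(Z)}$ symmetric: two commuting operators may overlap on an even number of qubits attached to unrelated vertices. What the distinct-index case actually needs is Proposition~\ref{prop: main-property-canonical-basis}(i). Opposite-type canonical operators with distinct indices have disjoint supports, and the rewired seed gadgets attach exactly to those supports, so the cross block is zero.

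For a same-index pair your premise is false. $\bar X_\mu$ and $\bar Z_\mu$ anticommute and overlap on exactly one qubit $q_L(i,j,m)$, so the cross block has a single off-diagonal entry and the two vertex checks attached to that qubit anticommute. Symmetrizing by adding attachments changes the measured operator $\Phi_1(\mathbf{1})$; for example, mirroring the attachments on the same qubit cancels the $Y$ there.

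The missing idea is the relabeling in Lemma~\ref{lemma: graph-isomorphism-Y-measurement-more-degree}. Replace $\Phi_1^{(Z)}$ by $\Phi_1^{(Z)}P^{\mathcal{V}}_{\sigma}$, with $\sigma$ swapping the two vertices attached to the overlap. Then a single vertex check measures $Y$ there and the block becomes diagonal. Adding the $\sigma$-invariant bridge edge joining those two vertices gives $P^{\mathcal{E}}_{\sigma}\partial_1=\partial_1P^{\mathcal{V}}_{\sigma}$. So $\Phi_0^{(Z)}P^{\mathcal{E}}_{\sigma}$ restores the chain-map identity while staying sparse, at the cost of a small increase in added degree. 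Without this step your construction does not give a commuting merged stabilizer group for any $\bar L$ containing a $\bar Y$ factor, so the claim for arbitrary $\bar L\in\bar{\mathcal{P}}_k$ is not yet established.
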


The formal statement is stated in Proposition~\ref{prop: bridge_get_full_Clifford} in Section~\ref{sec: FT-graph-surgery-gadgets} in Appendix. 
Here, we utilize three classes of transforming maps for a $[[n, k, d]]$ symmetrical canonical LP code $\mr{LP}_l(A, A^*)$, induced by the symmetries of the canonical logical basis (Fig.~\ref{fig: canonical-basis}), to obtain a \emph{constant-size} seed gadget set: 
\begin{enumerate}[label=(\roman*)]
    \item \label{main-orbit-cyclicity} \textbf{Cyclic shifts.} 
    The cyclic-shift automorphisms (Eq.~\eqref{expr: main-cyclic-shift-physical}) naturally induce a class of transforming maps as they maps stabilizers to stabilizers and logicals to logicals. Specifically, let $\sigma_t \in \mbb{F}_2^{n\times n}$ be the cyclic permutation matrix on the physical qubits, and $\sigma_t^{\prime}$, where $H_X \sigma_t = \sigma_t^{\prime}H_X$ and $H_Z \sigma_t = \sigma_t^{\prime}H_Z$, the cyclic permutation matrix on the $Z$- and $X$-checks, we take 
    \begin{equation}
        T_1 = \sigma_t \oplus \sigma_t; \quad T_0 = \sigma_t^{\prime} \oplus \sigma_t^{\prime},      
    \end{equation}
    which maps $\bar{Z}_{i, j, m}$ to $\bar{Z}_{i, j, (m + t)\mod l}$.
    Considering all $l$ cyclic shifts then identify each fibre of $Z$ logicals (similarly for $X$ logicals) to be a single orbit, i.e. $\bar{Z}_{i, j, 0} \sim_{\sigma_t} \bar{Z}_{i, j, m}$ for any $m$. 
    \item\label{main-orbit-parallel-structure} \textbf{Column/row permutations.} 
    The column- (resp. row-) permutations induce another class of transforming maps on the $Z$- (resp. $X$-) logicals as they map $X$- (resp. $Z$-) checks to $X$- (resp. $Z$-) checks and $Z$- (resp. $X$-) logicals to $Z$- (resp. $X$-) logicals.
    Specifically, for any column permutation $\sigma_C$ of $[r^*_B]$, the physical permutation $P_C$~\eqref{expr: main-column-permutation-physical} preserves the $X$-checks up to a check permutation $P'_C$, i.e., $H_X P_C = P'_C H_X$. 
    It induces 
    \begin{equation}
        T_1 = P_C \oplus P_C; \quad 
        T_0 = I \oplus P_C^{\prime},       
    \end{equation}
    which maps $\bar{Z}_{i, j, m}$ to $\bar{Z}_{i, \sigma_C(j), m}$. 
    Considering all such column permutations (similarly row permutations) then identify different columns (resp. rows) of $Z$- (resp. $X$-) logicals as a single orbit, i.e. $\bar{Z}_{i, 0, m} \sim \bar{Z}_{i, j, m}$ for any $j$ and $\bar{X}_{0, j, m} \sim \bar{X}_{i, j, m}$ for any $i$.
    Note that, although these column and row permutations are not code automorphisms, they remain valid transformation maps, going beyond the framework of Ref.~\cite{webster2025explicitconstructionlowoverheadgadgets}.
    \item \label{main-orbit-ZX-duality}\textbf{$ZX$-duality.} Finally, the $ZX$-duality $\tau$ (Eq.~\eqref{expr: main-tensor-swap-involution}) induces another class of transforming maps in between $Z$- and $X$-logicals.
    Specifically, let $\Psi_1 = \tau \in \mbb{F}_2^{n \times n}$ be the fold involution on the physical qubits, where $H_X \Psi_1 = \Psi_0 H_Z$, it induces
    \begin{equation}
        T_1 = \begin{pmatrix} 0 & \Psi_1 \\ \Psi_1 & 0 \end{pmatrix}; \quad 
        T_0 = \begin{pmatrix}
        0 & \Psi_0 \\
        \Psi_0 & 0
    \end{pmatrix},      
    \end{equation}
    which maps $\bar{Z}_{i, j, m}$ to $\bar{X}_{j, i, m}$. This identifies each $X$-logical with the $Z$-logical of their fold paired logical qubit as a single orbit, i.e. $\bar{Z}_{i, j, m} \sim \bar{X}_{j,i,m}$ for any $i$, $j$, and $m$. 
\end{enumerate}

\begin{table*}[!ht]
\centering
\footnotesize
\setlength{\tabcolsep}{4pt}
\begin{tabular}{@{}l ccc c ccc ccc@{}}
\toprule
 & \multicolumn{4}{c}{Surgery graph}
 & \multicolumn{3}{c}{Merged size}
 & \multicolumn{3}{c}{Merged degree} \\
\cmidrule(lr){2-5} \cmidrule(lr){6-8} \cmidrule(l){9-11}
 & \multicolumn{3}{c}{Graph size} & Soundness
 & \multirow{2}{*}{Qubits} & \multirow{2}{*}{$X$-checks}
 & \multirow{2}{*}{$Z$-checks}
 & \multirow{2}{*}{\shortstack{Stabilizer\\weight}}
 & \multirow{2}{*}{\shortstack{Qubit\\$X$-degree}}
 & \multirow{2}{*}{\shortstack{Qubit\\$Z$-degree}} \\
\cmidrule(lr){2-4} \cmidrule(lr){5-5}
 & $|\mathcal{V}|$ & $|\mathcal{E}|$ & $|\mathcal{C}|$
 & $\rho_d$ & & & & & & \\
\midrule
$\mathrm{LP}^{3\times5}_{33}$
 & $165$ & $495$ & $363$ & $9/2$
 & $1617$\,{\scriptsize$(+495)$} & $858$\,{\scriptsize$(+363)$}
 & $660$\,{\scriptsize$(+165)$}
 & $11$\,{\scriptsize$(+3)$} & $9$\,{\scriptsize$(+4)$}
 & $5$\,{\scriptsize$(+0)$} \\
\bottomrule
\end{tabular}
\caption{\textbf{LP canonical extractors:} Constructed for $\LP^{3 \times 5}_{33}= [[1122, 148, \leq 20]]$ for fault-tolerantly measuring \emph{arbitrary} logical operators, which satisfy obtains a small-set soundness $(20, \geq 9/2)$--$d =20$, $\rho_d \geq 9/2$--and added degree $3$ in the worst case (certain $Y$-type measurements). The parenthetical entries indicate the maximum increase of the stabilizer weight and the qubit $X$-/$Z$-degrees in the merged codes relative to those of the data codes. The reported merged degrees are in the worst-case; for a single-type ($\bar{Z}$-only or $\bar{X}$-only) measurement they reduce to $(10, 7, 5)$ for $\LP^{3 \times 5}_{33}$, i.e., added degree $2$. The soundness is analytically certified using the spectral criterion in Lemma~\ref{lemma: spectral-criterion-soundness} which ensures sufficient merged code distance at least $d$ ($20$) when connecting to multiple columns. In this regard, the use of (small-set) soundness is a strictly tighter condition than the use of Cheeger constant: both cyclic-lifted graphs constructed have certified Cheeger constant upper bounds less than $2$. Hence, LP canonical extractors stand as highly space-resource-optimal constructions with provable fault tolerance. }
\label{tab: LP-canonical-extractor-resources}
\end{table*}

The precise, general statements are stated and proved in Lemma~\ref{lemma: classical-chain-maps} in Section~\ref{sec: code-surgery-symmetry}, along with their fault-tolerance properties. 
Utilizing the above transforming maps, we can reduce the number of seed surgery gadgets to a surprisingly small set:

\begin{theorem}[Seed logical operators and surgery gadgets for the canonical LP code $\LP_l(A, A^*)$]\label{thm: main-seed-logical-canonical-LPAA}
   Let $\mathcal{Q} = \LP_l(A, A^*)$ be a canonical LP code. Then, a set of seed logical operators (Definition~\ref{def: main-seed-logical-operators-gadget}) consist of 
    \begin{align}\label{eq: main-seed-logical-LPAA-canonical}
        \begin{aligned}
        \mathfrak{S}_{\mathrm{seed}} = \{&\bar{Z}_{i,0,0}\}_{i \in [r_A] }.
        \end{aligned}
    \end{align}
     
\end{theorem}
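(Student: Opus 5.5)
We show that every canonical basis logical $\bar Z_{i,j,m}$ or $\bar X_{i,j,m}$, with $i,j\in[r_A]$ and $m\in[l]$, is obtained from some seed $\bar Z_{i',0,0}$ by composing the three classes of transforming maps listed before the theorem. We then show that each composite map carries a seed gadget satisfying Definition~\ref{def: main-surgery-gadgets-desiderata} to a rewired gadget that still satisfies it. Since $B=A^*$, we have $r_B^*=m_B-n_B=n_A-m_A=r_A$, so both the logical index grid and the physical grid are $r_A\times r_A$. This is what makes the fold $\tau$ well defined on the canonical indices.

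\textbf{Orbit covering.} Take a $Z$-type target $\bar Z_{i,j,m}$. We first apply $P_C$ for the transposition $\sigma_C=(0\ j)$. By Proposition~\ref{prop: main-property-canonical-basis}(iii) this sends $\supp\bar Z_{i,0,0}$ to $\supp\bar Z_{i,j,0}$. We then apply $\sigma_m$, which by Proposition~\ref{prop: main-property-canonical-basis}(ii) gives $\supp\bar Z_{i,j,m}$. Each step is physically a permutation of qubits (and of checks), so $T_1=(\sigma_mP_C)\oplus(\sigma_mP_C)$ carries the symplectic vector of the seed to that of the target.

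Take an $X$-type target $\bar X_{j,i,m}$. By Proposition~\ref{prop: main-property-canonical-basis}(iv), $\tau(\supp\bar Z_{i,j,m})=\supp\bar X_{j,i,m}$. Composing the previous map with the swap-type $T_1$ built from $\Psi_1=\tau$ therefore reaches every $X$ logical. Hence the $r_A$ orbits of the seeds $\bar Z_{i,0,0}$ cover all of $\mathfrak S^{(\mathrm Z)}\cup\mathfrak S^{(\mathrm X)}$. They are also mutually distinct: no map in the three classes changes the row index of a $Z$ logical. $P_C$ fixes $i$, $\sigma_t$ fixes $i$, and $\tau$ only swaps it into the column index of an $X$ logical. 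So no smaller set of seeds of this form suffices, although minimality is not strictly needed.

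\textbf{Rewiring preserves the desiderata.} For each map we must check three conditions of Definition~\ref{def: main-surgery-gadgets-desiderata}.

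(a) The commutation relations Eq.~\eqref{eq: main-commutation_requirement} hold for $(T_1\Phi_1,T_0\Phi_0)$. For a $Z$-type seed we have $\Phi_1^{(X)}=0$ and $\Phi_0^{(X)}=0$, so only $H_X\Phi_1^{(Z)}+\Phi_0^{(Z)}\partial_1=0$ matters. Under $P_C$ this follows from $H_XP_C=P'_CH_X$: multiplying the seed relation on the left by $P'_C$ gives $H_X(P_C\Phi_1^{(Z)})+(P'_C\Phi_0^{(Z)})\partial_1=0$. The cyclic shift is identical, using $H_X\sigma_t=\sigma'_tH_X$. Under $\tau$, the relation $H_X\Psi_1=\Psi_0H_Z$ turns the $X$-check relation into the $Z$-check relation with the $X$ and $Z$ components swapped. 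The first, symmetric condition is trivial because the components are single-type.

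(b) The measured operator is $T_1\Phi_1(\ker\partial_1)=T_1\bar L^{(s)}$, which is the target, since $\mathcal G$ is unchanged.

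(c) Soundness is a property of $\partial_1$ alone, so it is unchanged. Since the $T$'s are permutations, column weights of $\Phi_1$ and added degrees are preserved.

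\textbf{Main obstacle.} The delicate step is the column permutation. $P_C$ is not a code automorphism: it preserves $H_X$ but not $H_Z$. We must therefore verify two things. First, $H_XP_C=P'_CH_X$ holds in the LP structure. $P_C$ permutes the $\mathcal B_0$-index $j\in[r_A]$ of the left block only, and we would need it to act compatibly on the right block (or fix it) so that $\partial_1^M=(A\otimes I,\,I\otimes A^*)$ is preserved. This requires that the permuted columns of $A^*$ in the information set are interchangeable, and that needs checking from Definition~\ref{def: main-canonical-basis-LP}. Second, the merged code built from the rewired gadget still has distance $\ge d$ even though the $Z$-checks are not permuted. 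Here I would argue that for a $Z$-type measurement only the $X$-check deformation and the $Z$-logical structure enter the distance argument of Lemma~\ref{lemma: distance-preserving-soundness}, and these are permutation-covariant. This is the content deferred to Lemma~\ref{lemma: classical-chain-maps}, which I would invoke as stated.
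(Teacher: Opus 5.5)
Your architecture matches the paper's: orbits under cyclic shifts, column permutations and the fold $\tau$, with fault tolerance transferred through Lemma~\ref{lemma: classical-chain-maps}. Your orbit-covering argument is correct. The gap is exactly the step you flag as the main obstacle and leave open, and the route you sketch for closing it would fail.

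For a qubit permutation $P_C$, a \emph{global} identity $H_XP_C=P'_CH_X$ does not hold in general. On the left block $H_X$ acts as $A\otimes_R I$. When the binarized rows of $A$ are distinct, this forces $P'_C$ to be the induced permutation $I\otimes\pi$ of the $\mathcal{B}_0$-index of the $X$-checks. On the right block $H_X$ acts as $I\otimes_R A^*$, so you would then need $\pi$ to be realized by a permutation of right-block qubits: $\pi A^*=A^*\pi'$ for some permutation $\pi'$, or $\pi A^*=A^*$ if $P_C$ fixes that block. This already fails for the toy matrix of Eq.~\eqref{eq:toy_code}, where rows $0$ and $1$ of $A^*$ are $(1,1)$ and $(1,x^{4})$. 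Nothing in Definition~\ref{def: main-canonical-basis-LP} about the information columns repairs it. For the same reason, invoking Lemma~\ref{lemma: classical-chain-maps} ``as stated'', with a chain map on all of $\mathcal{Q}_1\to\mathcal{Q}_0$, does not cover $P_C$.

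The missing idea is locality. For a $Z$-type gadget the only nontrivial condition in Eq.~\eqref{eq: main-commutation_requirement} is $H_X\Phi_1^{(Z)}+\Phi_0^{(Z)}\partial_1=0$, and this sees $H_X$ only on the image of $\Phi_1^{(Z)}$. For the seed $\bar Z_{i,0,0}$ that image lies in the single left-block column $\mathcal{A}_1\otimes_R e^{(0)}_{\mathcal{B}_0}$. There $H_X=A\otimes_R I$ commutes with $I\otimes\pi$, so
\[
H_X(P_C\Phi_1^{(Z)})=P'_CH_X\Phi_1^{(Z)}=(P'_C\Phi_0^{(Z)})\partial_1,
\]
and the right block never enters. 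This is the restricted chain map described in the remark following Lemma~\ref{lemma: classical-chain-maps}, and it is how the paper's statement $H_XP_C=P'_CH_X$ has to be read. Cyclic shifts and $\tau$ are genuine global maps, so your treatment of them stands.

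The distance half of your obstacle likewise needs no relation between the two merged codes. Apply Lemma~\ref{lemma: distance-preserving-soundness} directly to the rewired gadget. Its inputs are unchanged:
\begin{itemize}
\item the graph $\mathcal{G}$, with its soundness and $d(\mathcal{G})$;
\item the column weight $1$ of $P_C\Phi_1$, with $|P_C\Phi_1 y|=|\Phi_1 y|$;
\item the fact that $P_C\Phi_1(\ker\partial_1)$ is spanned by the nontrivial logical $\bar Z_{i,j,0}$.
\end{itemize}
So the bound transfers verbatim.
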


In particular, the number of seed logical operators is $r_A$, independently of the lifting size. 

In short, the cyclic orbit~\ref{main-orbit-cyclicity} removes the degrees-of-freedom along the fibre dimension; the $ZX$-duality~\ref{main-orbit-ZX-duality} relates $Z$-type logical basis operators to $X$-type logical basis operators, and the column parallel structure~\ref{main-orbit-parallel-structure} reduces $r^2_A$ to $r_A$, obtaining Eq.~\eqref{eq: main-seed-logical-LPAA-canonical}.

\begin{example}
We only require $2$ and $4$ seed logical basis operators for $\LP^{3 \times 5}_{33}: [[1122, 148(132), d\leq 20]]$ and $\LP^{3 \times 7}_{75}:[[4350, 1224(1200), d \leq 20]]$, respectively. A corresponding set of seed surgery gadgets for each code are given in Table~\ref{tab: seed-surgery-gadget-resources} (soundness $\rho_d \geq 1$).
These gadgets satisfy the surgery gadgets desiderata Definition~\ref{def: main-surgery-gadgets-desiderata} with added degree $\delta = 1$ for $\LP^{3 \times 5}_{33}$ and $\delta =2$ for $\LP^{3 \times 7}_{75}$.  
Further surgery details for these codes, including bridged $2$-body gadgets for arbitrary logical $2$-body Pauli-product measurements that enable compilation of the full logical Clifford group for each code instance, are summarized in Appendix~\ref{sec: LP-canonical-seed-surgery}, Table~\ref{tab: full-seed-gadgets}.
\end{example}

\subsection{Extractor}\label{subsec: main-LP-canonical-extractor}

Measuring high-weight logical Paulis can be more powerful than measuring low-weight Paulis.
As shown in the previous section, constructing a set of seed surgery gadgets for measuring single-qubit logicals, and then bridging them to measure multi-qubit logicals, is one way to achieve this.
However, such a method comes at the expense of a space overhead scaling linearly with the logical weights (bodies) of the logical Pauli-product measurements. 
For arbitrary logical Pauli-product measurements, especially those of high logical weight, it is desirable to construct another type of surgery gadget, commonly referred to as extractors~\cite{he2025extractorsqldpcarchitecturesefficient}, whose space overhead does not scale with the logical weight of the logical Pauli-product measurements. 
Here, we further prefer an extractor that preserves the cyclic symmetry of the LP codes, meaning that it is itself an $R$-lifted graph.

\begin{definition}[Informal: Cyclic extractor]\label{def: main-extractor-desiderata} 
    Let $\mathcal{G}: \mathbb{F}_2[\mathcal{V}] \xrightarrow{\partial_1} \mathbb{F}_2[\mathcal{E}] \xrightarrow{\partial_0} \mathbb{F}_2[\mathcal{C}]$ be a graph chain and $\mathcal{Q}$ a LP code with distance $d$. 
    Then we say that $\mathcal{X}[\mathcal{G}; (\Phi_1, \Phi_0)]$ is a cyclic extractor with a \emph{static} connectivity map $(\Phi_1, \Phi_0)$ if: $(i)$ For every logical operator $\bar{L} = \bar{L}_X \bar{L}_Z$, there exists a connectivity map $(\Phi^{(L)}_1, \Phi^{(L)}_0) := (M^{(L)}_1 \Phi_1,  \Phi_0M^{(L)}_0)$ where
\begin{align}\label{eq: main-restriction-map-extractor}
    M^{(L)}_1 := \begin{pmatrix}
    M^{(L_X)}_1 & 0 \\
    0 &M^{(L_Z)}_1
\end{pmatrix}, \quad \Phi_0M^{(L)}_0 := \begin{pmatrix}
   \Phi^{(X)}_0 M^{(X)}_0 \\
   \Phi^{(Z)}_0 M^{(Z)}_0 
\end{pmatrix},
\end{align}
such that $M^{(L_X)}_1, M^{(L_Z)}_1$ are diagonal projection (resp. $M^{(L_X)}_0, M^{(L_Z)}_0$ are block-diagonal restriction) matrices. $(ii)$ The resulted surgery gadget $\mathcal{S}[\mathcal{G}; (\Phi^{(L)}_1, \Phi^{(L)}_0)]$ measures $\bar{L}$, i.e., $\bar{L} = \Phi_1^{(L)}\ker(\partial_1)$--in the symplectic notation--and $M^{(L_X)}_1$ and $M^{(L_Z)}_1$ are restrictions to the, respectively, $X$- and $Z$-support components of $\bar{L}$, which satisfies the surgery gadget desiderata according to Definition~\ref{def: main-surgery-gadgets-desiderata} with soundness $\rho_d$ and added degree at most $\delta$. $(iii)$ The edge-vertex and cycle-edge incidence matrices $\partial_1$ and $\partial_0$ are $R$-linear. 

\end{definition}

Intuitively, an extractor utilizes a fixed ancilla graph to measure an arbitrary logical operator by only ``masking" (Eq.~\eqref{eq: main-restriction-map-extractor}) its connections to the data code. 
While directly constructing a full extractor that is attached to an entire qLDPC code according to the generic recipe in Ref.~\cite{he2025extractorsqldpcarchitecturesefficient} is costly, 
one can get more efficient constructions by exploiting the code structures, e.g. for HGP codes~\cite{blue2026full}. 

Here, we construct a cyclic extractor tailored for the canonical LP codes that: (i) preserve the (cyclic) symmetry of the code, (ii) has small size and low degrees by exploiting the row/column parallel structure and the ZX-duality of the canonical logical basis.
The key observation is that the $Z$ (resp. $X$) logical operators of a canonical LP code localize into different columns (resp. rows) and are all codewords of the classical code. Thus, we can start by constructing a small partial extractor for measuring arbitrary logical $Z$ operators within the first column of logical fibres (essentially a copy of the classical base code), which we refer to as \emph{LP column-canonical extractor}.

\begin{definition}[Explicit construction of LP column-canonical extractor]\label{def: main-lp-column-canonical-extractor}
    Let $\mathcal{Q} = \LP_l(A, A^*)$ be a canonical LP code with $A \in R^{(n_A - r_A) \times n_A}$ and distance $d$, and fix a target (small-set) soundness $\rho_d \geq r_A$. For ease of presentation, we assume that all the entries of $A$ are nonzero monomials, but the construction generalizes readily.
    An LP column-canonical extractor $\mathcal{X}[\mathcal{G}; (\Phi_1, \Phi_0)]$ with a \emph{static} connectivity map $(\Phi_1, \Phi_0)$ is constructed by the following protocol.
    \begin{enumerate}
    \item \label{LP-extractor-vertices} (Add vertex checks). Take $\mc{V}$ to be the standard $R$-basis of $\mathcal{A}_1 = R^{n_A}$---one vertex fibre per bit (column) of $A$---so that $\mathbb{F}_2[\mathcal{V}] \cong \Bmat(\mathcal{A}_1)$ is in one-to-one correspondence with the first column of physical fibres of $\mc{Q}$.
    \item \label{LP-extractor-graphify} (Add edge qubits). Add edge fibres with $\mbb{F}_2[\mc{E}] \cong  \Bmat(R^{(n_A-r_A) n_A})$, and take $\partial_1 \in R^{(n_A-r_A) n_A \times n_A}$ to be
    \begin{equation}
        \partial_1 = \mr{vstack}\{(0, \cdots, A_{i, j}, A_{i, j+1}, \cdots, 0)\}_{i \in [n_A - r_A], j \in [n_A]},
        \label{eq: extractor_partial_1}
    \end{equation}
    with the cyclic convention $A_{i, n_A + 1} := A_{i, 1}$; that is, break each row $A_i$ of $A$ into its $n_A$ cyclically adjacent pairs of entries, and add each pair to the graph as an edge, so that every row closes into a cycle through its bits.
    Since every edge carries exactly two monomial entries, $\partial_1$ is $R$-linear and is the edge-vertex incidence matrix of an $R$-lift of a base graph on $n_A$ vertices, with vertex degree at most $2\omega_{\mathrm{col}}(A)$.
    \item \label{LP-extractor-boosting} (Soundness boosting).
    Add further monomial edges to $\mc{E}$ until $\partial_1$ is certified (Lemma~\ref{lemma: spectral-criterion-soundness}) $(d, \rho_d)$-sound. We additionally require the lifted graph $\Bmat(\partial_1)$ to be connected. 
    \item \label{LP-extractor-cycles} (Cycle checks). Choose a set of $R$-valued cycles $\mc{C}$---possibly over-complete, so as to remain $R$-linear---whose cycle-edge incidence matrix $\partial_0$ satisfies $\IM \partial_0^T = \ker \partial_1^T$. This completes the $R$-linear graph chain $\mathcal{G}: \mathbb{F}_2[\mathcal{V}] \xrightarrow{\partial_1} \mathbb{F}_2[\mathcal{E}] \xrightarrow{\partial_0} \mathbb{F}_2[\mathcal{C}]$.
    \item \label{LP-extractor-connectivity-map} (Static connectivity map). Set $\Phi^{(X)}_1 = 0$ and take $\Phi^{(Z)}_1$ to be the transversal identification of the vertex fibres with the first column of physical fibres, so that $\omega_{\mathrm{col}}(\Phi_1) = \omega_{\mathrm{row}}(\Phi_1) = 1$. Set $\Phi^{(X)}_0 = 0$ and take $\Phi_0^{(Z)}$ such that each $X$-check in the first column (Fig.~\ref{fig: seed-surgery}(b)), corresponding to a check of $A$, is connected to all the edges (rows of $\partial_1$) decomposed from the same check of $A$ in Eq.~\eqref{eq: extractor_partial_1}; all other check fibres are unconnected.
    \end{enumerate}
\end{definition}

\begin{example}[Column-canonical LP extractor for the toy code $\mr{LP}_{5}^{2\times 4}$]\label{example: column-canonical-toy-extractor}
We illustrate the protocol on the toy code $\LP^{2 \times 4}_5 = [[100, 26, 4]]$ of Eq.~\eqref{eq:toy_code}, for which $n_A = 4$, $r_A = 2$, and $R_5 = \mathbb{F}_2[x]/(x^5+1)$. Steps~\ref{LP-extractor-vertices} and~\ref{LP-extractor-graphify} give four vertex fibres, $\mathbb{F}_2[\mathcal{V}] \cong \Bmat(R_5^{4})$, and one $4$-cycle per check row of $A$: rows $1$--$4$ and $5$--$8$ of $\partial_1$ below are the cyclically adjacent pairs of $A_1$ and $A_2$, respectively, following Eq.~\eqref{eq: extractor_partial_1}. A single boost orbit (row $9$, Step~\ref{LP-extractor-boosting}) suffices for the spectral criterion Lemma~\ref{lemma: spectral-criterion-soundness} to certify $(4, \rho_4)$-soundness with $\rho_4 \geq 2 = r_A$:
\begin{align}
    \partial_1 = \left( \begin{array}{cccc}
        1 & 1 & 0 & 0 \\
        0 & 1 & 1 & 0 \\
        0 & 0 & 1 & 1 \\
        1 & 0 & 0 & 1 \\
        1 & x & 0 & 0 \\
        0 & x & x^{2} & 0 \\
        0 & 0 & x^{2} & x^{3} \\
        1 & 0 & 0 & x^{3} \\
        0 & 1 & x^{3} & 0
    \end{array}\right),
\end{align}
For Step~\ref{LP-extractor-cycles}, an over-complete $R$-linear cycle basis, preserving the cyclic symmetry, is given by the rows of
\begin{align}
    \partial_0 =
    \begin{pmatrix}
        1 & 1 & 1 & 1 & 0 & 0 & 0 & 0 & 0 \\
        1 & 0 & x^{3} & 0 & 0 & 0 & 0 & 1 & 1 \\
        1 & x & 0 & 0 & 1 & x^{4} & 0 & 0 & 0 \\
        0 & 1 & x & 0 & 0 & x^{4} & x^{3} & 0 & 0 \\
        0 & 1 & 0 & 0 & 0 & x & 0 & 0 & 1 + x^{2} \\
        0 & 0 & 0 & 1 & 1 & 0 & x^{2} & 0 & x
    \end{pmatrix}.
\end{align}
With the adaptive masking described below, this column extractor fault-tolerantly measures an arbitrary product of the $l r_A = 10$ canonical logical basis operators $\{\bar{Z}_{i, 0, m}\}_{i \in [r_A], m \in [l]}$, supported on the first column of the dashed $2 \times 2$ grid of logical fibres in Fig.~\ref{fig: seed-surgery}(b). The merged-code degree is $7$ for a single-body measurement and $8$ for an arbitrary $Z$-type product within the column, in both cases preserving the code distance exactly ($d_X = d_Z = 4$); the certified soundness is in fact $\rho_4 \geq 2.178$.
\end{example}

Having specified the fixed graph $\mc{G}$ and the static connectivity map $(\Phi_1, \Phi_0)$ in Definition~\ref{def: main-lp-column-canonical-extractor}, we now describe how to mask $(\Phi_1, \Phi_0)$ adaptively into a valid surgery gadget for each logical $Z$ operator $\bar{L}_Z \in \langle \bar{Z}_{i, 0, m}\rangle_{i \in [r_A], m \in [l]}$ within the first column of logical fibres.
First, since $\supp \bar{L}_Z$ lies within the first column of physical fibres, the image of $\Phi_1^{(Z)}$, we can take the mask $M_1^{(L_Z)}$ to be the diagonal projection onto $\supp \bar{L}_Z$, so that $M_1^{(L_Z)} \Phi_1^{(Z)}(\ker{\partial_1}) = \bar{L}_Z$, i.e., the masked gadget measures $\bar{L}_Z$ and nothing else. Accordingly, we set
\begin{align}
    \Phi^{(\bar{L}_Z)}_1 = \begin{pmatrix}
        0 & 0 \\
        0 & M^{(L_Z)}_1
    \end{pmatrix} \begin{pmatrix}
        0 \\
        \Phi^{(Z)}_1
    \end{pmatrix}.
\end{align}
Second, since $\bar{L}_Z$ commutes with the stabilizers, every check of $A$ overlaps $\supp \bar{L}_Z$ on an even number of qubits; and since every check closes into a cycle through its bits in Eq.~\eqref{eq: extractor_partial_1}, its restriction to $\supp \bar{L}_Z$ can always be realized by an $R$-weighted alternating subset (a path matching) of its edges on a cycle, using at most $\lfloor n_A/2 \rfloor$ edges per check (Lemma~\ref{lemma: graphify-parity-check}). This path matching defines the mask $M_0^{(L_Z)}$ with
\begin{align}
   \Phi^{(\bar{L}_Z)}_0 = \begin{pmatrix}
        0 \\
        \Phi^{(Z)}_0 M^{(L_Z)}_0
    \end{pmatrix},
\end{align}
under which the merged-code checks commute (Eq.~\eqref{eq: main-commutation_requirement}). The induced surgery gadget $\mc{S}[\mc{G}; (\Phi_1^{(\bar{L}_Z)}, \Phi_0^{(\bar{L}_Z)})]$ thus measures $\bar{L}_Z$, and does so fault-tolerantly by the certified $(d, \rho_d)$-soundness.

We now describe how to upgrade this partial extractor into a full extractor, capable of measuring an arbitrary logical operator $\bar{L}$ on all logical qubits, using the same graph $\mc{G}$ and merely extending the connectivity map $(\Phi_1, \Phi_0)$ to cover a larger subset of the code $\mc{Q}$.
First, as illustrated in Fig.~\ref{fig: seed-surgery}(b), the column extractor of Definition~\ref{def: main-lp-column-canonical-extractor} couples only to the first column of physical fibres. Since all columns share the same structure---each is checked identically by $A$---we can extend $\Phi_1^{(Z)}$ by coupling the vertices $\mc{V}$ transversally to each of the first $r_A$ columns of physical fibres, and extend $\Phi_0^{(Z)}$ by coupling the $X$-checks of each column to $\mc{E}$ exactly as in the first column; in other words, the extractor is rewired to couple to all $r_A$ columns simultaneously and identically. The adaptive masking then proceeds column by column as above. 
This yields an extractor that measures an arbitrary logical $Z$ operator on all $k_c$ logical qubits.
Finally, as detailed in Section~\ref{sec: LP-canonical-extractor} in Appendix, we can further extend the connectivity map to measure logical operators of any type---$X$-type, mixed $XZ$-type, or $Y$-type---by exploiting the $ZX$-duality of the symmetric LP code $\mc{Q}$.
For instance, the logical $X$ operators within a row, together with the checks anti-commuting with them, are again governed by the structure of $A$; hence, attaching the same extractor graph $\mc{G}$ to that row, with the connectivity rewired to $\Phi_1^{(X)}$ and $\Phi_0^{(X)}$ isomorphic---via the involution $\tau$ in Eq.~\eqref{expr: main-tensor-swap-involution}---to the $\Phi_1^{(Z)}$ and $\Phi_0^{(Z)}$ of Definition~\ref{def: main-lp-column-canonical-extractor}, measures any logical $X$ operator within that row. With some additional care---certain vertex checks must measure $Y$ or mixed $ZX$ Paulis on the information qubits where the measured columns and rows intersect---we obtain a full extractor, coupled to both the first $r_A$ columns and the first $r_A$ rows of $\mc{Q}$, that measures an arbitrary logical operator:

\begin{theorem}[Informal: LP canonical extractor]
      Let $\mathcal{Q} = \LP_l(A, A^*)$ be symmetric canonical LP code and $\mathcal{X}[\mathcal{G}; (\Phi_1, \Phi_0)]$ a LP column-canonical extractor constructed according to Definition~\ref{def: main-lp-column-canonical-extractor}, which measures an arbitrary logical $Z$ operator within a column, with added degree at most $\delta$. 
      Then, by extending the connectivity map to both the first $r_A$ columns and the first $r_A$ rows, 
      the resulted full cyclic extractor according to Definition~\ref{def: main-extractor-desiderata}) is capable of measuring an arbitrary $\bar{L} \in \bar{\mc{P}}_{k_c}$, which is distance-preserving with added degree at most $\delta + r_A$ . 
\end{theorem}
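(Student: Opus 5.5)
The proof will be built in three layers: $Z$-type products across all $r_A$ columns, then $X$-type products across the rows via the $ZX$-duality, then mixed operators on the overlapping information sub-grid. In each layer I check the three desiderata of Definition~\ref{def: main-surgery-gadgets-desiderata}: commutation (Eq.~\eqref{eq: main-commutation_requirement}), $\Phi_1^{(L)}(\ker\partial_1)=\bar L$, and soundness at least the column weight of $\Phi_1$. Distance preservation then follows from Lemma~\ref{lemma: distance-preserving-soundness}. Throughout, $\mathcal{G}$, and hence the $(d,\rho_d)$-soundness of $\partial_1$ certified in Step~\ref{LP-extractor-boosting}, is left untouched; only the masked connectivity maps change.

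\textbf{$Z$-type layer.} Write $\bar L_Z=\prod_{j\in[r_A]}\bar L_Z^{(j)}$, where $\bar L_Z^{(j)}$ is supported on the $j$-th column of physical fibres. By the row/column parallel structure of Proposition~\ref{prop: main-property-canonical-basis}(iii), each column is checked identically by $A$, with $P_C$ intertwining $H_X$ as $H_XP_C=P'_CH_X$. So column $j$ is handled exactly as the first column, using the path-matching mask of Lemma~\ref{lemma: graphify-parity-check}. Let $\Phi_1^{(Z)}=\sum_j P_{C,j}\Phi_{1,\mathrm{col}}^{(Z)}$ and let $\Phi_0^{(Z)}$ be the corresponding sum.
\begin{itemize}
\item \emph{Commutation.} The relation $H_X\Phi_1^{(Z)}+\Phi_0^{(Z)}\partial_1=0$ holds column by column, since distinct columns touch disjoint $X$-check fibres.
\item \emph{Measured operator.} The kernel of $\partial_1$ is spanned by the all-ones vector because $\Bmat(\partial_1)$ is connected. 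Its image under the masked map is therefore the sum of the restrictions, namely $\bar L_Z$.
\item \emph{Column weight.} A single vertex now feeds up to $r_A$ data qubits (one per column), so $\omega_{\mathrm{col}}(\Phi_1^{(Z)})\le r_A\le\rho_d$, which is exactly why $\rho_d\ge r_A$ was imposed.
\item \emph{Degree.} Each vertex check gains at most $r_A-1$ extra data qubits. Edge and data-qubit degrees grow by at most the column-extractor $\delta$, since different columns use disjoint check fibres.
\end{itemize}

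\textbf{$X$-type layer.} Conjugating by the transforming map induced by $\tau$ (item~\ref{main-orbit-ZX-duality}, with $H_X\tau=\Psi_0H_Z$) carries the $Z$ construction on columns to an $X$ construction on the first $r_A$ rows. It yields $\Phi_1^{(X)}=\tau\Phi_1^{(Z)}$ and $\Phi_0^{(X)}=\Psi_0\Phi_0^{(Z)}$, and commutation follows by conjugating the $Z$ relations. For a general $\bar L=\bar L_X\bar L_Z$, I apply both masks simultaneously to the single graph. The requirement $\Phi_1^{(L)}(\ker\partial_1)=\bar L$ then holds in symplectic form, because $\ker\partial_1$ is one-dimensional.

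\textbf{Main obstacle: the overlap.} The delicate point is the first relation in Eq.~\eqref{eq: main-commutation_requirement}: $(\Phi_1^{(X)})^T\Phi_1^{(Z)}$ must be symmetric. It can fail only on the information sub-grid, where the first $r_A$ rows and columns intersect, because there one vertex may carry $X$ on one qubit and $Z$ on another.
\begin{itemize}
\item \emph{Route tried first.} Because the canonical logicals are supported on the left block and $q_L(i,j,m)$ is the unique overlap (Proposition~\ref{prop: main-property-canonical-basis}(i)), I would route both the $X$ and $Z$ components of each intersection qubit to the same vertex. That qubit is then measured by a $Y$ or mixed $ZX$ vertex check, and the off-diagonal products cancel pairwise. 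I would check this fibrewise using the transposition symmetry $\tau$, which relates vertex $v$ of the column copy to vertex $v$ of the row copy.
\item \emph{Fallback.} If some pair still fails to commute, I would reassign the offending vertex couplings, adding at most one extra qubit per vertex per row. That accounts for the additional $+r_A$ in the degree bound and gives added degree at most $\delta+r_A$.
\item \emph{Soundness.} Shared vertices raise $\omega_{\mathrm{col}}(\Phi_1^{(X)})$ and $\omega_{\mathrm{col}}(\Phi_1^{(Z)})$ by at most a factor bounded by $r_A$. I must then confirm that the certified soundness still dominates both column weights, invoking the spectral certificate of Lemma~\ref{lemma: spectral-criterion-soundness} if necessary.
\end{itemize}
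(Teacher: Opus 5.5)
Your architecture matches the paper's. You extend the column extractor to the first $r_A$ columns via the column-parallel structure, so that $\omega_{\mathrm{col}}(\Phi_1^{(Z)})=r_A\le\rho_d$, and you rewire to the first $r_A$ rows through $\tau$. You also locate the only obstruction correctly: symmetry of $(\Phi_1^{(X)})^T\Phi_1^{(Z)}$ on the information sub-grid, where the $Z$-coupling of $q_L(i,j,m)$ comes from vertex $(i,m)$ and its $X$-coupling from vertex $(j,m)$. Making both couplings of an overlap qubit share a vertex is the paper's idea too.

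The gap is that you verify only the first relation of Eq.~\eqref{eq: main-commutation_requirement} while keeping $\mathcal{G}$ fixed. Moving the $Z$-coupling of $q_L(i,j,m)$ from $(i,m)$ to $(j,m)$ breaks $H_X\Phi_1^{(Z)}+\Phi_0^{(Z)}\partial_1=0$. The $X$-check $c$ through that qubit was realized by a path matching on its own cycle (Lemma~\ref{lemma: graphify-parity-check}). That cycle visits bit $j$ at the fibre offset fixed by $A_{c,j}$, which is $(j,m)$ only if $A_{c,i}=A_{c,j}$. A replacement row of $\Phi_0^{(Z)}$ does exist in the untouched $\mathcal{G}$, since parities are unchanged (Lemma~\ref{lemma: sparse-Phi0-existence}). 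However, it differs from the old row by an edge set with boundary $\{(i,m),(j,m)\}$, i.e.\ by a path in $\mathcal{G}$ between two information vertices. Its length is governed by the lift voltages, not by $r_A$. So neither sparsity of $\Phi_0$ nor the $\delta+r_A$ bound follows, and your fallback of reassigning vertex couplings does not address this.

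The missing ingredient is the paper's compensation on the ancilla side (Lemmas~\ref{lemma: graph-isomorphism-Y-measurement-more-degree} and~\ref{lemma: add-permutation-bridges-for-parity-checks}). Compose the wiring with a permutation $\sigma$ of the information vertices within each fibre, and add $\sigma$-invariant bridge edges joining those vertices. Then there is an edge map $P^{\mathcal{E}}_\sigma$ with $P^{\mathcal{E}}_\sigma\partial_1=\partial_1P^{\mathcal{V}}_\sigma$ in which, as vectors in $\mathbb{F}_2[\mathcal{E}]$, each moved edge $\sigma(e)$ is $e$ plus bridge edges. Consequently $\Phi_0^{(Z)}P^{\mathcal{E}}_\sigma$ satisfies the chain-map relation with bounded extra row weight, and that is where the added degree is paid. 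This changes $\mathcal{G}$ (bridge edges, plus cycle checks to restore exactness), contrary to your premise. You therefore also need that adding edges to a connected graph leaves $\ker\partial_1$ unchanged and can only increase $|\partial_1u|$, so the certified $(d,\rho_d)$-soundness survives.

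Two smaller corrections:
\begin{enumerate}
\item Rerouting raises column weights additively (a vertex collects the overlap qubits sent to it), not by a factor $r_A$. Any excess over $r_A$ is no longer covered by $\rho_d\ge r_A$, so $\rho_d$ must be checked against the new column weight; the paper's $\LP^{3\times5}_{33}$ extractor has $\omega_{\mathrm{col}}(\Phi_1)=3$ against $\rho_d\ge 9/2$.
\item In the $Z$-layer, disjointness of check fibres across columns controls only data-side degrees. Each ancilla edge can enter one deformed check per attached column (and per attached row in the mixed case), so edge-qubit degrees grow with $r_A$ and must be counted.
\end{enumerate}
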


In theory, for canonical LP codes $\LP_l(A, A^*)$ for any constant-size prototype matrix; that is, $r_A$ is a constant, we can also ensure we have sufficient soundness by applying step~\ref{LP-extractor-boosting} and LDPC condition can be ensured with thickening studied in Ref.~\cite{he2025extractorsqldpcarchitecturesefficient} and cellulation, while preserving the $R$-linearity. We refer to Section~\ref{sec: FT-graph-surgery-gadgets} for a more detailed discussion for the $Y$-measurement case, which presents a ceveat (See Lemma~\ref{lemma: add-permutation-bridges-for-parity-checks}). 

Finally, we report the \emph{full canonical extractor} parameters for the $\LP^{3 \times 5}_{33}$ and $\LP^{3 \times 7}_{75}$ in Table~\ref{tab: LP-canonical-extractor-resources}. Their edge-vertex incidence matrices $\partial_1$ are certified to be $(20, \geq 9/2)$-sound. As a result, in both instances, we can allow $\Phi_1$ whose $X$- and $Z$-components have column weight at most $4$ while preserving the distance. Hence, we conclude that the constructed LP column-canonical extractors by themselves satisfy the canonical extractor desiderata Definition~\ref{def: main-extractor-desiderata} with soundness $\geq4.5$ and added degree $\delta =2, 4$, respectively, for $\LP^{3 \times 5}_{33}$ and $\LP^{3 \times 7}_{75}$.

\begin{figure*}[!t]
    \centering
    \includegraphics[width=\textwidth]{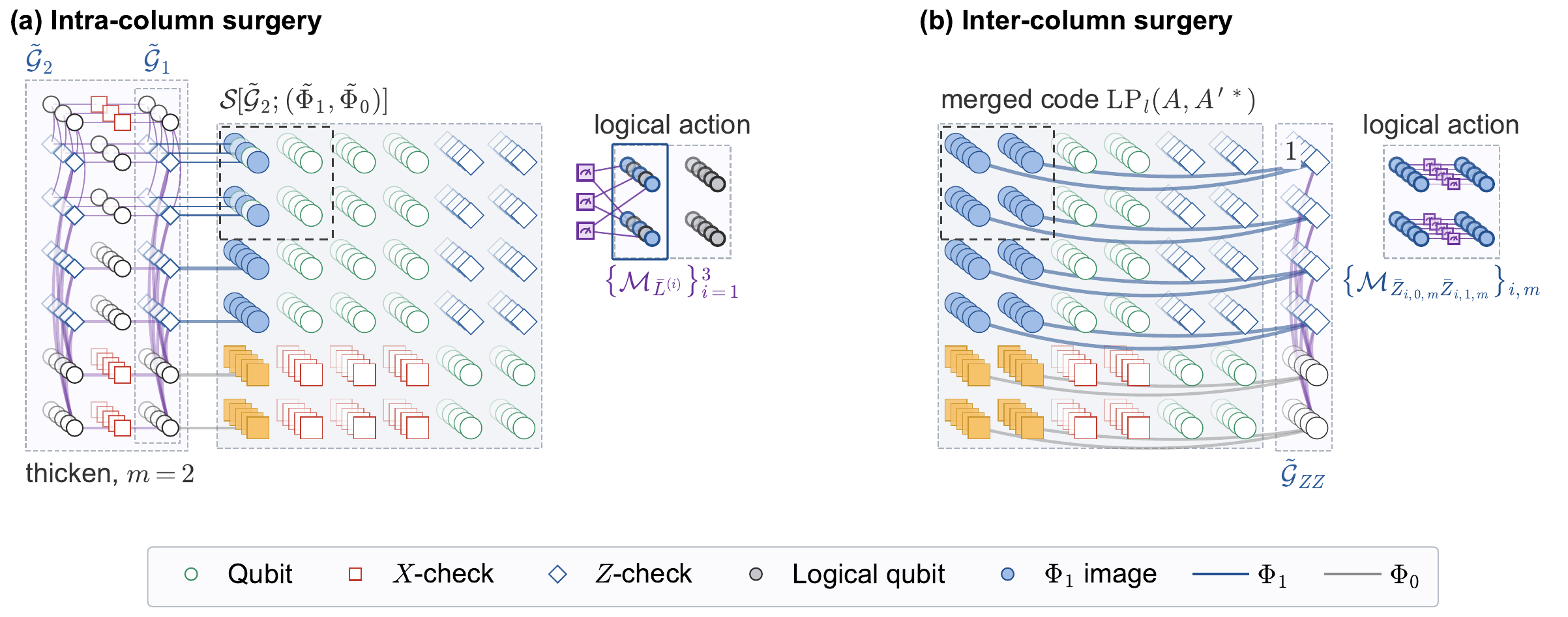}
    \caption{\textbf{Parallel hypergraph surgery on canonical LP codes}, drawn on the toy code $\LP^{2\times4}_{5}$. Visual conventions: data qubits highlighted blue form the image of $\Phi_1$, i.e.\ the qubits read by the ancilla vertex checks; each solid purple box in the insets is one logical PPM, supported on the logical qubits it is wired to; thick coupling edges denote $R$-linear (whole-fibre) couplings, thin ones individual binary couplings.
    \textbf{(a) Intra-column surgery} (Definition~\ref{def: main-LP-intracolumn-hyerpgraph-surgery}), measuring three disjoint two-body PPMs $\{\mathcal{M}_{\bar{L}^{(i)}}\}_{i=1}^{3}$ with $\bar{L}^{(i)}=\bar Z_{0,0,m}\bar Z_{1,0,m}$, $m\in\{0,2,4\}$, on column $j=0$ (inset: one PPM box per pair). The initial skeleton $\tilde{\mathcal{G}}_1$ (inner dashed box, adjacent to the code) reads the physical column $j=0$ via $\Phi_1$, its edge qubits attaching to the $X$-checks via $\Phi_0$. A targeted measurement set is not $R$-linear: the two information vertex fibres are binary-\emph{punctured} to the targeted layers, and one \emph{augmentation} qubit per pair couples (weight $2$, per layer) to the two surviving information vertex checks---removing the single-body operators from the measured set. The gadget is then \emph{thickened} to $\tilde{\mathcal{G}}_m$ (we choose $m=2$ for the visualization here).
    \textbf{(b) Inter-column surgery} (Definition~\ref{def: main-LP-intercolumn-surgery}), measuring the pairs $\{\bar Z_{i,0,m}\bar Z_{i,1,m}\}_{i,m}$ between columns $j=0,1$ in parallel via the gadget $\tilde{\mathcal{G}}_{ZZ}$. The merged code is again an LP code, $\mathrm{LP}_l(A, A'^{\,*})$, where $A'$ is the classical code obtained by adding one extra check to $A$ (see the horizontal rows of (b))---the same mechanism as in Fig.~\ref{fig: magic-state-injection}.}
    \label{fig: parallel-surgery}
\end{figure*}
\section{Parallel logic \label{sec: main-LP-parallel-logic}}

While the previous section focused on reducing the \emph{space} cost and implementation complexity of logical operations, another important objective is to reduce their \emph{time} overhead. 
Here, we explore new routes to increase the logical \emph{throughput}, the number of logical operations executed per logical cycle, thereby enabling highly parallel logical circuits, which is particularly valuable for architectures with relatively slow clock speeds~\cite{Bluvstein2024LogicalProcessorAtomArrays}. 
For Pauli-based computation, where computation is driven by logical Pauli-product measurements (PPMs) and magic-state preparation~\cite{Litinski_2019, BravyiKitaev2005MagicStateDistillation}, throughput can be characterized by how many magic states can be prepared and how many PPMs can be implemented in parallel per logical cycle. 
Achieving such throughput while retaining high addressability---namely, the ability to execute flexible patterns of logical operations simultaneously---is particularly challenging for high-rate qLDPC codes. In this section, we exploit the structure of the canonical logical basis to develop highly parallel logical primitives, including parallel PPMs (Section~\ref{subsec: main-parallel-PPMs}) and parallel magic-state generation (Section~\ref{subsec: main-parallel-magic}), with provably fault-tolerant properties.

\subsection{Parallel and addressable Pauli-product measurements \label{subsec: main-parallel-PPMs}}
Here, we present several logical instructions and techniques for implementing low-weight PPMs in parallel for the canonical LP code family. 
The most straightforward approach is to extend the graph-surgery techniques in the previous section, and
to use a disconnected set of graphs for measuring logical Pauli operators with disjoint/low-overlapping physical support.

\begin{proposition}[Parallel surgery with disconnected graphs]\label{prop: main-parallel-low-rate}
      Let $\mathcal{Q}$ be a quantum code equipped with some set of graph-surgery gadgets $\{ \mathcal{S}^{(i)}[\mathcal{G}^{(i)}; (\Phi^{(i)}_1, \Phi^{(i)}_0)] \}$, fault-tolerantly measuring a set of logical Pauli operators $\mc{L} = \{\bar{L}^{(i)} \}$ correspondingly. 
  Then, we can measure $\mc{L}$ in parallel, using a surgery gadget $\mc{S}[\mc{S}; (\Phi_1, \Phi_0)]$ with a disjoint union of the individual graphs, i.e.,
   \begin{equation}\label{eq: main-union-of-graphs}
  \begin{aligned}
      \mc{G} & = \sqcup_i \mc{G}^{(i)}; \\
      \Phi_1 & = \mathrm{hstack}\{\Phi_1^{(i)}\};\\
      \Phi_0 &= \mathrm{hstack}\{\Phi_0^{(i)}\}.
  \end{aligned} 
  \end{equation}
\end{proposition}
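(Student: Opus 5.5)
Since the disjoint-union gadget is again a graph-surgery gadget in the sense of Definition~\ref{def: main-graph-surgery-gadget}, the plan is to check the items of Definition~\ref{def: main-surgery-gadgets-desiderata} one at a time, using that every ancilla-side matrix becomes block diagonal. Concretely, I would take $\partial_1=\bigoplus_i\partial_1^{(i)}$ and $\partial_0=\bigoplus_i\partial_0^{(i)}$. Then $\mathbb{F}_2[\mathcal{V}]=\bigoplus_i\mathbb{F}_2[\mathcal{V}^{(i)}]$, and likewise for $\mathcal{E}$ and $\mathcal{C}$, while $\Phi_1,\Phi_0$ are the horizontal concatenations in Eq.~\eqref{eq: main-union-of-graphs}. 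Exactness of $\mathcal{G}$ holds blockwise. The connectedness requirement is dropped here, so $\ker\partial_1=\bigoplus_i\ker\partial_1^{(i)}$ is spanned by the all-ones vectors of the components. Hence
\[
\Phi_1(\ker\partial_1)=\mathrm{span}\{\Phi_1^{(i)}(\ker\partial_1^{(i)})\}_i=\mathrm{span}\{\bar L^{(i)}\}_i .
\]
This means the merged code measures every $\bar L^{(i)}$ simultaneously, as opposed to only their product as in a connected bridged gadget.

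For commutativity, Eq.~\eqref{eq: main-commutation_requirement}, the two deformation conditions split columnwise. For example,
\[
H_Z\Phi_1^{(X)}+\Phi_0^{(X)}\partial_1=\mathrm{hstack}\{H_Z\Phi_1^{(X,i)}+\Phi_0^{(X,i)}\partial_1^{(i)}\}=0 .
\]
The symmetric condition is different: $(\Phi_1^{(X)})^T\Phi_1^{(Z)}$ has off-diagonal blocks $(\Phi_1^{(X,i)})^T\Phi_1^{(Z,j)}$ for $i\neq j$. Its symmetry is exactly the statement that vertex checks of different gadgets commute. I will justify this from the hypothesis that the $\bar L^{(i)}$ can be measured together, i.e.\ that they pairwise commute, as the paper's intended setting (disjoint or low-overlapping supports) implies. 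For disjoint supports the cross blocks vanish, and in general I would impose it explicitly as a compatibility assumption.

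The main obstacle is fault tolerance, that is, distance preservation. Soundness is easy: block-diagonal $\partial_1$ gives $|\partial_1u|=\sum_i|\partial_1^{(i)}u_i|$ and $\mathrm{dist}(u,\ker\partial_1)=\sum_i\mathrm{dist}(u_i,\ker\partial_1^{(i)})$. So if $|\partial_1u|<d$, every component has $|\partial_1^{(i)}u_i|<d$, and summing the individual $(d,\rho_d)$ inequalities gives $(d,\min_i\rho_d^{(i)})$-soundness. The difficulty is that the column weights of the concatenated $\Phi_1$ add wherever supports overlap, since a data qubit read by several gadgets has column weight equal to the sum. Lemma~\ref{lemma: distance-preserving-soundness} therefore applies directly only when $\rho_d\ge\omega_{\mathrm{col}}(\Phi_1^{(X)}),\omega_{\mathrm{col}}(\Phi_1^{(Z)})$ for the combined map. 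In the disjoint-support case these weights are unchanged and the result follows immediately. For low overlap, I would first try to rerun the proof of Lemma~\ref{lemma: distance-preserving-soundness} component by component, bounding each cleaned logical's weight gadget by gadget. If that fails, I would state the overlap condition $\rho_d\ge\omega_{\mathrm{col}}$ as a hypothesis. Finally, the added degree is bounded by the maximum over data qubits of the summed per-gadget contributions, which again reduces to $\max_i\delta^{(i)}$ in the disjoint case.
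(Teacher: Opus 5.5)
Your block-diagonal bookkeeping is fine: the kernel, the columnwise splitting of the two deformation identities, and the $(d,\min_i\rho_d^{(i)})$-soundness all go through. You also correctly isolate the symmetry of $(\Phi_1^{(X)})^T\Phi_1^{(Z)}$ as the real content; it is the only point the paper's proof of Lemma~\ref{lemma: distance-preserving-graph-parallel} argues explicitly.

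\textbf{The commutation step fails as justified.} Pairwise commutation of the $\bar L^{(i)}$ does not give the needed symmetry. The requirement $(\Phi_1^{(X,i)})^T\Phi_1^{(Z,j)}=(\Phi_1^{(Z,i)})^T\Phi_1^{(X,j)}$ says that every vertex check of gadget $i$ commutes with every vertex check of gadget $j$. By contrast, $[\bar L^{(i)},\bar L^{(j)}]=0$ only constrains the sum of all entries of that block difference. For example, let gadget $i$ measure an $X$-type operator and gadget $j$ a $Z$-type operator whose supports meet on two qubits. The logicals commute, yet $(\Phi_1^{(Z,i)})^T\Phi_1^{(X,j)}=0$ while $(\Phi_1^{(X,i)})^T\Phi_1^{(Z,j)}\neq 0$, so the vertex checks reading a shared qubit anticommute.

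The paper avoids this by restricting to three structural cases:
\begin{itemize}
\item Same-type gadgets. Here $\Phi_1^{(X)}=0$, so the condition is vacuous even for overlapping supports. You never isolate this case, and it is the most useful one.
\item Symmetric mixed-type gadgets with $\Phi_1^{(X_\mu)}=\Psi_1\Phi_1^{(Z_\mu)}$. Block $(\mu,\nu)$ is then $(\Phi_1^{(Z_\mu)})^T\Psi_1\Phi_1^{(Z_\nu)}$. Since $\Psi_1$ is an involutive permutation, $\Psi_1^T=\Psi_1$, so this block is the transpose of block $(\nu,\mu)$.
\item Disjoint supports.
\end{itemize}
Your fallback of imposing compatibility as a hypothesis is legitimate, since the formal version of the statement needs some such restriction. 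But the hypothesis must be stated vertex-wise, not as commutation of the $\bar L^{(i)}$.

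\textbf{Your distance obstacle comes from a row/column mix-up.} Since $\Phi_1:\mathbb{F}_2[\mathcal{V}]\to\mathcal{Q}_1$, data qubits index rows and $\mathrm{hstack}$ concatenates columns. Each column of $\Phi_1$ is therefore a column of a single $\Phi_1^{(i)}$, and $\omega_{\mathrm{col}}(\Phi_1)=\max_i\omega_{\mathrm{col}}(\Phi_1^{(i)})$ however the supports overlap. Overlap raises only the row weight, which is the $s$ in the paper's added-degree bound $\delta+s$.

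Consequently, the hypothesis $\rho_d\ge\omega_{\mathrm{col}}$ holds for the union exactly when it holds for each gadget. Combined with your plain-soundness computation, this suffices, because the proof of Lemma~\ref{lemma: distance-preserving-soundness} uses only $|\Phi_1x|\le r|x|$. Relative soundness also survives directly: if $|\partial_1u|<d$, then
$|\partial_1u|=\sum_i|\partial_1^{(i)}u_i|\ge\rho\sum_i\mathrm{dist}_{\Phi_1^{(i)}}(u_i,\ker\partial_1^{(i)})\ge\rho\,\mathrm{dist}_{\Phi_1}(u,\ker\partial_1)$
by the triangle inequality. So no extra overlap hypothesis is needed for distance preservation; overlap costs only degree.
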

However, to ensure LDPC property, we need to keep the row weights $\omega_{\mathrm{row}}(\Phi_1)$ and $\omega_{\mathrm{row}}(\Phi_0)$ to be constants. 
This requires the physical supports of the logical operators being measured, as well as those of their corresponding anticommuting checks, to have limited overlap. 
For canonical LP codes, the PPMs listed below are illustrative examples that, owing to the structure of the canonical logical basis, can be measured in parallel without increasing the degree at all.

\begin{itemize}
    \item (Mixed-type). If $\mathcal{L} = \{ \bar{Z}_{i, j, m}, \bar{X}_{i', j', m'} \}$ where either $i \neq i'$, $j \neq j'$, or $m \neq m'$, then both their supports and anticommuting checks are disjoint. 
    \item (Column/row parallel). If $\mathcal{L} =  \{ \bar{Z}_{i, j, m}, \bar{Z}_{i', j', m'} \}$ for $j \neq j'$, or, similarly, $\mathcal{L} =  \{ \bar{X}_{i, j, m}, \bar{X}_{i', j', m'} \}$ for $i \neq i'$, then their supports are disjoint.
\end{itemize}

More generally, see Lemma~\ref{lemma: distance-preserving-graph-parallel} for a systematic treatment. The above graph surgery recipe is a minimal adaptation from Section~\ref{sec: main-graph-surgery}, where one increases the dimension of $\ker \partial_1$ by simply taking the disjoint union of individual graphs. 
Note that when these component graphs and gadgets overlap significantly, one could apply further techniques to ``branch" the code so that these gadgets end up being disjoint before sticking them together, allowing arbitrary (logically disjoint) logicals in principle~\cite{CowtanHeWilliamsonYoder2025ParallelCodeSurgery}.
However, a fundamental limitation of such generalized graph-surgery protocols is that the extra space cost grows with both the number of logical operators to measure in parallel and the logical weights of them, eventually blowing out the space costs for highly parallel and high-weight PPMs.

A complementary and more general approach is to allow hypergraphs as the ancilla system.

\begin{definition}[Hypergraph surgery]\label{def: main-hypergraph-surgery}
Let $\mathcal{Q}$ be a quantum (CSS) code. A hypergraph surgery gadget $\mathcal{S}[\tilde{\mathcal{G}}; (\tilde{\Phi}_1, \tilde{\Phi}_0)]$ to $\mathcal{Q}$ is defined similarly to that in Definition~\ref{def: main-graph-surgery-gadget}, with the following additions/exceptions. 
    \begin{enumerate}[label=(\roman*)]
        \item $\tilde{\mathcal{G}}: \mathbb{F}_2[\mathcal{V}] \xrightarrow{\partial_1} \mathbb{F}_2[\mathcal{E}] \xrightarrow{\partial_0} \mathbb{F}_2[\mathcal{C}]$ is a generic $\mbb{F}_2$-chain, representing a hypergraph with vertices $\mathcal{V}$ (ancilla $Z$-type checks), hyperedges $\mathcal{E}$ (ancilla qubits), and faces/cycles $\mathcal{C}$ (ancilla $X$-type checks) .
        $\mc{G}$ is not necessarily an exact sequence. 
        \item $\dim \Phi_1(\ker \partial_1) \geq 1$ and every nonzero $ \bar{L} \in \Phi_1(\ker \partial_1)$ corresponds to a nontrivial logical operator of $\mathcal{Q}$. 
    \end{enumerate}
\end{definition}

When $\tilde{\mathcal{G}}$ fails to be exact, the ancilla chain supports nontrivial logical operators, which may remain nontrivial in the merged code. 

\begin{lemma}[Informal]\label{lemma: main-merged-code-dimension}
    Let $\mathcal{Q}$ encode $k(\mathcal{Q})$ logical qubits and $\mathcal{G}$ encode $k(\mathcal{G})$ logical qubits. Then the merged code Eq.~\eqref{eq: main-merged-parity-check} encodes at most $k(\mathcal{Q}) - \dim \Phi_1(\ker \partial_1)+ k(\mathcal{G})$ logical qubits. In particular, we can characterize any logical operator $\bar{L}$ in one of the following two types: 
\begin{enumerate}
    \item \label{main-surgery-logical-data} (Bare logical). $\bar{L}$ corresponds to a logical operator of data code which is not being measured or anticommutes with measured logical operators in $\Phi_1(\ker \partial_1)$. 
    \item \label{main-surgery-logical-ancilla} (Gauge logical). $\bar{L}$ whose support on ancilla hyperedges $\mathbb{F}_2[\mathcal{E}]$ corresponds to a nontrivial logical operator of the ancilla hypergraph chain $\tilde{\mathcal{G}}$. 
\end{enumerate}
\end{lemma}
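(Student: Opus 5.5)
The bound will come from a dimension count on the merged code. The classification into bare and gauge logicals will then follow from how its cycles split. I write the merged code as a chain complex $\mathcal{M}: \mathcal{M}_2 \to \mathcal{M}_1 \to \mathcal{M}_0$ with spaces
\[
\mathcal{M}_1 = \mathcal{Q}_1 \oplus \mathbb{F}_2[\mathcal{E}], \qquad
\mathcal{M}_0 = \mathcal{Q}_0 \oplus \mathbb{F}_2[\mathcal{C}], \qquad
\mathcal{M}_2 = \mathcal{Q}_2 \oplus \mathbb{F}_2[\mathcal{V}].
\]
I first restrict to the CSS case $\Phi_1^{(X)}=0$, $\Phi_0^{(X)}=0$, as in the $Z$-type hypergraph gadgets; the mixed case follows by the same counting in the symplectic formulation.

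In this setting the boundary maps are block-triangular. The $X$-check map is $\begin{pmatrix} H_X & \Phi_0^{(Z)} \\ 0 & \partial_0\end{pmatrix}$, and the $Z$-check transpose is $\begin{pmatrix} H_Z^T & \Phi_1^{(Z)} \\ 0 & \partial_1\end{pmatrix}$. The subcomplex $\mathcal{Q} \hookrightarrow \mathcal{M}$ has quotient complex the shifted ancilla chain $\tilde{\mathcal{G}}$. This gives a short exact sequence of complexes $0 \to \mathcal{Q} \to \mathcal{M} \to \tilde{\mathcal{G}} \to 0$. One must check that the connecting homomorphism uses $\Phi_1$ and $\Phi_0$ correctly; this is guaranteed by the commutation conditions in Eq.~\eqref{eq: main-commutation_requirement}, which are exactly the statement that $\mathcal{M}$ is a complex.

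The associated long exact sequence in homology contains the segment
\[
\ker \partial_1 \;\xrightarrow{\ \delta\ }\; H_1(\mathcal{Q}) \;\to\; H_1(\mathcal{M}) \;\to\; H_1(\tilde{\mathcal{G}}) .
\]
Up to the index shift, the first term is the homology of $\tilde{\mathcal{G}}$ at $\mathcal{V}$. The connecting map $\delta$ is induced by $\Phi_1$, so its image is $\Phi_1(\ker\partial_1)$ modulo stabilizers. Exactness then gives
\[
\dim H_1(\mathcal{M}) \le \dim H_1(\mathcal{Q}) - \dim \operatorname{im}\delta + \dim H_1(\tilde{\mathcal{G}}).
\]
Here $\dim H_1(\tilde{\mathcal{G}})$ is $k(\mathcal{G})$. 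Definition~\ref{def: main-hypergraph-surgery}(ii) says every nonzero element of $\Phi_1(\ker\partial_1)$ is a nontrivial logical. Hence $\dim\operatorname{im}\delta = \dim \Phi_1(\ker\partial_1)$, which is the stated bound.

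The classification reads off the same sequence. A class in $H_1(\mathcal{M})$ maps either to zero or to a nonzero element of $H_1(\tilde{\mathcal{G}})$. If it maps to zero, it lies in the image of $H_1(\mathcal{Q})/\operatorname{im}\delta$, so it is a bare logical: a data logical taken modulo the measured ones. On the cohomology side this corresponds to data $X$-logicals that commute with the measured set. If it maps to a nonzero element, its ancilla part is a nontrivial logical of $\tilde{\mathcal{G}}$, so it is a gauge logical.

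The main obstacle is computing the connecting map honestly: its image must be $\Phi_1(\ker\partial_1)$ and not a larger space. When $\tilde{\mathcal{G}}$ is not exact, classes in $\ker\partial_1$ that are themselves boundaries in the ancilla chain must still be handled. I would check this by lifting $v \in \ker\partial_1$ to $(0,v)$ and applying the merged boundary, which gives $(\Phi_1 v, 0)$. This confirms $\delta(v) = [\Phi_1 v]$. Because the statement is only an upper bound, any failure of injectivity only helps.
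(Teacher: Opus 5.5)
Your argument is correct in the CSS setting, and it takes a different route from the paper.

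\textbf{How the two proofs differ.} The paper (Lemma~\ref{lemma: surgery-cone-code-logical-operators}) cleans an explicit representative $(P^{\mathcal{Q}}_XP^{\mathcal{Q}}_Z,E_XE_Z)$ step by step:
it asks whether $e_Z\in\ker\partial_0$ lies in $\IM\partial_1$, and if so removes it with vertex checks;
it then asks whether the data part is a nontrivial logical of $\mathcal{Q}$;
otherwise it pushes everything onto an $X$-operator on $\mathcal{E}$.
The paper then asserts the dimension bound from this case split without counting. Your short exact sequence $0\to\mathcal{Q}\to\mathcal{M}\to\tilde{\mathcal{G}}\to 0$ packages exactly this filtration, and exactness makes the count honest. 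It even gives an equality: the merged code encodes $k(\mathcal{Q})-\dim\Phi_1(\ker\partial_1)+k(\mathcal{G})-\operatorname{rank}\delta'$ qubits, where $\delta'\colon H_1(\tilde{\mathcal{G}})\to\mathcal{Q}_0/\IM H_X$, $[w]\mapsto[\Phi^{(Z)}_0w]$, is the next connecting map. This term is exactly the obstruction in the paper's remark after the lemma, and it vanishes when $H_X$ has full row rank.

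\textbf{What the paper's route buys.} Its element-level argument applies verbatim to mixed connectivity maps; yours does not. Once $\Phi^{(X)}_1\neq0$, the data $Z$-checks acquire $X$-support $\Phi^{(X)}_0$ on $\mathcal{E}$ and the vertex checks become mixed Paulis. The merged code is then not a chain complex, so no long exact sequence is available. Your phrase ``the same counting in the symplectic formulation'' therefore needs an actual argument. A short one, with $S_{\mathcal{Q}}$ the data stabilizer space:
project the merged stabilizer space $S_M$ onto its ancilla $Z$-part, whose image is $\IM\partial_1$;
project the kernel of that map onto the data part, with image $S_{\mathcal{Q}}\oplus\Phi_1(\ker\partial_1)$ (a direct sum by Definition~\ref{def: main-hypergraph-surgery}(ii));
note that what remains still contains the cycle checks.
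This gives $\dim S_M\ge\operatorname{rank}\partial_1+\dim S_{\mathcal{Q}}+\dim\Phi_1(\ker\partial_1)+\operatorname{rank}\partial_0$, which is the bound. The classification in the mixed case still needs the paper's cleaning argument or a symplectic analogue of it.

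\textbf{Two smaller points.} First, on the $X$ side the two types trade places. The dual sequence is $H^1(\tilde{\mathcal{G}})\to H^1(\mathcal{M})\to H^1(\mathcal{Q})\to\mathbb{F}_2[\mathcal{V}]/\IM\partial_1^T$. Gauge $X$-logicals are the classes that die in $H^1(\mathcal{Q})$; these come from pure-ancilla operators $(0,L^{\mathcal{G}}_X)$. Bare ones map into the kernel of $x\mapsto[(\Phi^{(Z)}_1)^Tx]$, i.e.\ data $X$-logicals commuting with the measured set.

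Second, your closing concern is vacuous. Nothing maps into degree~$2$ of $\tilde{\mathcal{G}}$, so $\ker\partial_1$ contains no boundaries to handle. Injectivity of $\delta$ on $\ker\partial_1$ is also irrelevant. What the bound needs is $\dim\operatorname{im}\delta=\dim\Phi_1(\ker\partial_1)$, which you correctly obtain from (ii). The slack in the inequality comes instead from $H_1(\mathcal{M})\to H_1(\tilde{\mathcal{G}})$ failing to be surjective.
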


\begin{table*}[!t]
\centering
\footnotesize
\setlength{\tabcolsep}{2.5pt}
\begin{tabular}{@{}l ccc c ccc@{}}
\toprule
 & \multicolumn{4}{c}{Surgery hypergraph}
 & \multicolumn{3}{c}{Merged degree} \\
\cmidrule(lr){2-5} \cmidrule(l){6-8}
 & \multicolumn{3}{c}{Size} & Expansion
 & \multirow{2}{*}{\shortstack{Stabilizer\\weight}}
 & \multirow{2}{*}{\shortstack{Qubit\\$X$-degree}}
 & \multirow{2}{*}{\shortstack{Qubit\\$Z$-degree}} \\
\cmidrule(lr){2-4} \cmidrule(lr){5-5}
High-rate gadgets
 & $|\mathcal{V}|$ & $|\mathcal{E}|$ & $|\mathcal{C}|$
 & $\lambda(\partial_1)$ & & & \\
\midrule
Arbitrary single-body measurements (analytical bound)
 & $1630$ & $2457$ & $891$ & $2$
 & $9$\,{\scriptsize$(+1)$} & $5$\,{\scriptsize$(+0)$} & $5$\,{\scriptsize$(+0)$} \\
Arbitrary two-body measurements (analytical bound)
 & $1630$ & $3757$ & $2711$ & $2$
 & $9$\,{\scriptsize$(+1)$} & $7$\,{\scriptsize$(+2)$} & $5$\,{\scriptsize$(+0)$} \\
\midrule
Single-body measurements (numerical)
 & $910$ & $1473$ & $594$ & $2$
 & $9$\,{\scriptsize$(+1)$} & $5$\,{\scriptsize$(+0)$} & $5$\,{\scriptsize$(+0)$} \\
\bottomrule
\end{tabular}
\caption{\textbf{LP intra-column hypergraph surgery.}
Arbitrarily addressable, parallel $Z$-type (resp. $X$-type) hypergraph surgery (Definition~\ref{def: main-LP-intracolumn-hyerpgraph-surgery}) on any column (resp. row) of the logical fibres, containing $66$ logical qubits, for the $\LP^{3 \times 5}_{33}=[[1122,148,\leq 20]]$ code.
The first and second rows report the sizes of the surgery hypergraphs, namely, the numbers of vertices $\mc{V}$, hyperedges $\mc{E}$, and cycles $\mc{C}$, as well as the degrees of the merged codes, for performing arbitrary single-body PPMs and logically disjoint two-body PPMs, respectively.
Both constructions use length-$10$ thickened hypergraphs and can be rigorously proved to preserve the merged-code distance for all possible measurements by Theorem~\ref{thm: main-high-rate-surgery-distance-preserving}, because the hypergraphs exhibit expansion $\lambda(\partial_1)=2$ (see Lemma~\ref{lemma: monomial-expansion-compute} in the appendix).
Both rows report the \emph{worst-case} space costs over all possible measurements.
In the third row, we report a gadget for performing the same single-body logical PPMs using a length-$7$ thickened ancillary hypergraph instead, which is numerically estimated to preserve the merged-code distance.
This numerical estimate is obtained from a randomly sampled gadget that addresses a random set of $33$ canonical logical qubits within the first column.}
\label{tab: high-rate-column-arbitrary-addressable}
\end{table*}

The formal statement and proof is given in Lemma~\ref{lemma: cone-code-logical-operators}. Guaranteeing distance preserving for a hypergraph surgery gadget is much more demanding that that for a graph gadget. 
In Ref.~\cite{CohenKimBartlettBrown2022LongRangeConnectivity}, this is achieved by starting from a initial ``skeleton" hypergraph, and then \emph{thickening} it $m \geq d$ layers, which utilizes a hypergraph product with a length-$m$ line graph. 
We refer to the surgery gadget thickened by a length-$m$ line graph as the \emph{$m$-thickened surgery gadget}. 
In practice, however, it is not necessary to take $m \geq d$; rather, we now prove that the thickening length depends on the properties of the initial skeleton hypergraph.

\begin{theorem}[Distance-preserving conditions for hypergraph surgery]\label{thm: main-high-rate-surgery-distance-preserving}
  Let $\mathcal{S}[\tilde{\mathcal{G}}; (\tilde{\Phi}_1, \tilde{\Phi}_0)]$ be a (hypergraph) surgery gadget to a data code $\mathcal{Q}$ with distance $d$ and $\Phi_1$ has column-weight at most $1$, with (hyper)graph chain $\tilde{\mathcal{G}}: \mathbb{F}_2[\mathcal{V}] \xrightarrow{\partial_1} \mathbb{F}_2[\mathcal{E}] \xrightarrow{\partial_0} \mathbb{F}_2[\mathcal{C}]$. Let the length-$m$ thickened hypergraph be $\tilde{\mathcal{G}}_m$, and length-$m$ thickened hypergraph surgery gadget $\mc{S}[\tilde{\mc{G}}_m; (\tilde{\Phi}_1, \tilde{\Phi}_0)]$, measuring $ \Phi_1(\ker \partial_1)$ in parallel. Then the following statements hold. 
    \begin{enumerate}[label=(\roman*)]
        \item The $m$-thickened surgery gadget $\mathcal{S}[\tilde{\mathcal{G}}_m; (\tilde{\Phi}_1, \tilde{\Phi}_0)]$ measures $ \Phi_1(\ker \partial_1)$ in parallel and is distance-preserving at least $d$ whenever $m \geq d$. 
        \item Suppose that there is no logical operator of type~\ref{main-surgery-logical-ancilla} and that $\mathcal{S}[\tilde{\mathcal{G}}; (\tilde{\Phi}_1, \tilde{\Phi}_0)]$ is $(t, \rho_t)$-sound for some $t$, then the $m$-thickened surgery gadget $\mathcal{S}[\tilde{\mathcal{G}}_m; (\tilde{\Phi}_1, \tilde{\Phi}_0)]$  and is distance-preserving whenever $mt \geq d$ and $m\rho_t \geq 1$.  
        
        \item Suppose there are no logical operators of type~\ref{surgery-logical-ancilla}.  
        Then $\mc{S}[\tilde{\mc{G}}_m; (\tilde{\Phi}_1, \tilde{\Phi}_0)]$ is distance-preserving if 
        \begin{align}
             m \lambda(\partial_1) + 1 \geq d
        \end{align}
        where $ \lambda(\partial_1) := \min_{u \notin \ker \partial_1} |\partial_1 u|$.
       
    \end{enumerate}
\end{theorem}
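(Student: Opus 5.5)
The plan is to analyze the merged code of the $m$-thickened gadget layer by layer. First I will write $\tilde{\mathcal{G}}_m$ explicitly as the hypergraph product of $\tilde{\mathcal{G}}$ with the length-$m$ path. This gives vertex layers $\mathcal{V}\times[m]$, hyperedge layers $\mathcal{E}\times[m]$, vertical qubits $\mathcal{V}\times[m-1]$, and check layers $\mathcal{C}\times[m]$ together with vertical checks $\mathcal{E}\times[m-1]$. Only layer $1$ is coupled to $\mathcal{Q}$ through $(\tilde{\Phi}_1,\tilde{\Phi}_0)$.

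By Lemma~\ref{lemma: main-merged-code-dimension}, every nontrivial logical of the merged code is either bare or gauge. Under the hypotheses of (ii) and (iii) only bare logicals occur, so it suffices to lower-bound the weight of bare $X$- and $Z$-type representatives. I treat the two types separately, because they interact with the ancilla differently.

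For $Z$-type bare logicals the cleaning is direct. A $Z$-logical $\bar L=(z_Q,z_{\mathrm{anc}})$ must commute with the cycle checks and the deformed $X$-checks. Multiplying by vertex checks of the top layer and propagating downward lets me push the ancilla part toward layer $1$. Since $\omega_{\mathrm{col}}(\Phi_1)\le 1$, each vertex check pushed onto the data adds at most one data qubit per removed vertical qubit. Hence the weight does not increase, and the result is a data-code logical of weight $\ge d$.

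The $X$-type bare logicals carry the real content. Write $\bar L=(x_Q,x_{\mathcal{E}},x_{\mathrm{vert}})$ and let $u_s\in\mathbb{F}_2[\mathcal{V}]$ be the restriction to the vertical qubits between layers $s$ and $s+1$. Commutation with the vertex checks of layer $s$ gives $\partial_1 u_{s}+\partial_1u_{s-1}$-type relations: the hyperedge part in layer $s$ has syndrome-like weight at least $|\partial_1(u_s+u_{s-1})|$, where $u_0$ is set to the data restriction $\tilde{\Phi}_1^T x_Q$. A nontrivial bare logical must have nonzero image under $\tilde{\Phi}_1$ modulo $\ker\partial_1$, otherwise it could be cleaned into a stabilizer times a data logical.

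For (i): either some layer carries no support, in which case I cut there and push everything below onto $\mathcal{Q}$ using the column-weight-$1$ condition, obtaining a data logical of weight $\ge d$. Or every one of the $m\ge d$ layers carries support, giving weight $\ge m\ge d$.

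For (iii): I sum over layers. Each layer either has $u_s\in\ker\partial_1 + u_{s-1}$, so it is cleanable at no cost, or it contributes at least $\lambda(\partial_1)$. A telescoping argument shows that either the chain of $u_s$ never leaves its coset, so $\bar L$ reduces to a data logical of weight $\ge d$, or all $m$ layers contribute $\lambda$. In the latter case the one remaining unit of data/vertical support gives total weight $\ge m\lambda+1\ge d$.

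For (ii) I apply the same layer decomposition with small-set soundness in place of $\lambda$. If some layer has $|\partial_1 u_s|\ge t$, the $m$ layers give $mt\ge d$. Otherwise soundness bounds $\mathrm{dist}(u_s,\ker\partial_1)\le |\partial_1u_s|/\rho_t$. I then clean $u_s$ to the kernel at a cost absorbed by $m\rho_t\ge 1$: the vertical weight accumulated across $m$ layers pays for the correction pushed onto the data.

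I expect the main obstacle to be this bookkeeping in (ii) and (iii). One must show that the per-layer corrections do not double count, and that pushing the cleaned part through $\tilde{\Phi}_1$ never increases data weight. I would handle this by a minimal-counterexample argument: choose a minimum-weight bare logical, and show that any layer violating the bound allows a weight-nonincreasing modification that reduces its support to fewer layers. Induction on the number of occupied layers then closes the argument.
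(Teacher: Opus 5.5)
You have the two Pauli types reversed, and the step you call ``direct'' is exactly where the hypotheses on $\rho_t$ and $\lambda(\partial_1)$ are needed. The vertex checks of $\tilde{\mathcal G}_m$ are $Z$-type and reach the data through $\tilde\Phi_1$. So an unmeasured data $Z$-logical $l$ has merged-code representatives of the form $(l+\Phi_1y_1,\ \tilde\partial_1 y)$, with $y=\sum_s y_s\otimes e^{(s)}_{\mathcal D_1}$ and coupled layer $1$. The ancilla part has horizontal weight $H=\sum_s|\partial_1y_s|$ and vertical weight $V=\sum_s|y_s+y_{s+1}|$.

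Your claim that pushing $Z$-support down onto the data never increases weight ignores that the vertex check at $(v,s)$ also acts on the hyperedges $\partial_1 v$ of layer $s$. Here is a counterexample. Take $m=1$ and $L=l\,l'$ with $l,l'$ of disjoint supports. Let $\mathcal G$ be a cycle on $\supp L$, ordered so that $\supp l$ is an arc, and let $y$ be the indicator of that arc. Then $\Phi_1y=l$, and $(0,\partial_1y)$ is a representative of $l$ of weight $2$; pushing it back onto the data raises the weight to $|l|$. If your step were valid, no condition on $\partial_1$ would be needed at all.

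Conversely, the $X$-type case has no content. Since $\Phi^{(X)}_0=0$, the data $Z$-checks are undeformed, so an $X$-logical $(x_Q,x_{\mathrm{anc}})$ has $H_Zx_Q=0$. Either $x_Q$ is a nontrivial data logical of weight $\ge d$, or deformed $X$-checks strip it off, leaving an ancilla-only operator that is a stabilizer or an excluded gauge logical. The relation you use there is also not what commutation gives. The vertex checks of layer $s$ impose $\partial_1^Tx^{(s)}_{\mathcal E}=u_{s-1}+u_s$ (with your $u_0$), which is a statement about $\partial_1^T$, not the bound $|x^{(s)}_{\mathcal E}|\ge|\partial_1(u_s+u_{s-1})|$.

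What is missing is the soundness of the thickened map $\tilde\partial_1$ relative to $\tilde\Phi_1$. The paper proves this in Lemmas~\ref{lemma: soundness-boost-thickening} and~\ref{lemma: minimum-weight-syndrome-expansion} and feeds it into Lemma~\ref{lemma: distance-preserving-soundness}. Your layer-counting ideas are the right ones once applied to the $Z$-type deformation above. Note that $V\ge|y_1+y_s|$ for every $s$, and $\Phi_1$ has column weight at most $1$.
\begin{itemize}
\item If some $y_s\in\ker\partial_1$, then $|l+\Phi_1y_1|+V\ge|l+\Phi_1y_s|\ge d$.
\item Otherwise every layer has nonzero horizontal weight, so $H\ge m$, resp.\ $H\ge m\lambda(\partial_1)$. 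This gives (i), and (iii) up to the caveat below.
\item For (ii), your dichotomy is misstated: one layer with $|\partial_1y_s|\ge t$ contributes only $t$, not $mt$. Instead let $s^*$ minimize $|\partial_1 y_s|$. If this minimum is $\ge t$, then $H\ge mt\ge d$. Otherwise soundness gives $k\in\ker\partial_1$ with $|y_{s^*}+k|\le H/(m\rho_t)\le H$. Hence $|\Phi_1(y_1+k)|\le V+H$, and the total weight is at least $|l+\Phi_1k|\ge d$.
\end{itemize}
Finally, the extra ``$+1$'' you claim in (iii) is not automatic. It fails exactly when $y=y_0\otimes\mathbf 1$ with $\Phi_1y_0$ a representative of $l$, where the weight is $m|\partial_1 y_0|$. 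The cycle example with $|l|=d=3$ (e.g.\ $\mathcal Q$ two Steane blocks) satisfies $m\lambda(\partial_1)+1=3\ge d$, yet has a weight-$2$ representative. So you need either $m\lambda(\partial_1)\ge d$, or an extra hypothesis such as injectivity of $\Phi_0$, which forces any such $y_0$ into $\ker\partial_1$.
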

The above is formally stated in Theorem~\ref{thm: formal-surgery-distance-preserving} and proved in Section~\ref{sec: surgery-thickening}. Note that theorem~\ref{thm: main-high-rate-surgery-distance-preserving} provides an explicit, finite-size criterion for preserving the code distance without the need to thicken exactly $d$ layers. 
According to (ii), the thickening length can be reduced below $d$ if $\partial_1$ of the skeleton hypergraph has large $(t, \rho_t)$ soundness; according to (iii), this can also be achieved whenever $\lambda(\partial_1)$ is larger than $1$.  

Based hypergraph surgeries, we first introduce a gadget that measures an arbitrary set of logically disjoint $Z$- (resp. $X$-) PPMs within a column (resp. row) of logical fibres in parallel. 

\begin{definition}[LP intra-column hypergraph surgery]\label{def: main-LP-intracolumn-hyerpgraph-surgery}
    Let $\mathcal{Q} = \LP_l(A, B)$ be a canonical LP code. We say a surgery gadget $\mathcal{S}[\tilde{\mathcal{G}}; (\tilde{\Phi}_1, \tilde{\Phi}_0)]$ is a $Z$-type LP intra-column hypergraph surgery gadget (see Fig.~\ref{fig: parallel-surgery}(a)) if it is constructed as follows: 
    \begin{enumerate}
         \item \label{construction-initial-skeleton}\textbf{Constructing the initial skeleton.} Start with a classical $1$-chain-map condition:
        \begin{equation}\label{diagram: classical-chain-code-modification}
             \begin{tikzcd}
    	{\mbb{F}_2[\mathcal{V}]} & {\mbb{F}_2[\mathcal{E}]} \\
    	{\mathbb{B}(\mathcal{A}_1)} & {\mathbb{B}(\mathcal{A}_0)}
    	\arrow["{\partial_1}", from=1-1, to=1-2]
    	\arrow["{\phi_1}"', from=1-1, to=2-1]
    	\arrow["{\phi_0}", from=1-2, to=2-2]
    	\arrow["{\mathbb{B}(A)}"', from=2-1, to=2-2]
         \end{tikzcd}.
        \end{equation}
        Namely, the above forms a commuting diagram. 
        Construct a skeleton hypergraph $\mathcal{G}: \mathbb{F}_2[\mathcal{V}] \xrightarrow{\partial_1} \mathbb{F}_2[\mathcal{E}] \xrightarrow{\partial_0} \mathbb{F}_2[\mathcal{C}]$, with connectivity map $\{\Phi^{(Z)}_1, \Phi^{(Z)}_0\}$ attaching to $j$-th column for $j \in r^*_B$. Restricting supports of physical qubits on the $j$-th column, then $\Phi^{(Z)}_1 = \phi_1 \otimes e^{(j)}_{\mathcal{B}_1}$ and $\Phi^{(Z)}_0 = \phi_0 \otimes e^{(j)}_{\mathcal{B}_0}$, for unit vectors $e^{(j)}_{\mathcal{B}_1} \in \mc{B}_1$ and $e^{(j)}_{\mathcal{B}_0} \in \mc{B}_0$. We set $\partial_0 = 0$ and assume that there is no logical operator of type~\ref{main-surgery-logical-ancilla} (this holds for canonical LP codes with monomial prototype matrix $A$ or/and $B$, see Lemma~\ref{lemma: monomial-matrix-no-ancilla-logical}). This gives the initial surgery gadget: $\mathcal{S}[\tilde{\mathcal{G}}_1; (\tilde{\Phi}_1, \tilde{\Phi}_0)]$, for the bookkeeping notation $\tilde{\mc{G}}_1 \equiv \tilde{\mc{G}}$, as length-$1$ thickening.   
      
        \item \textbf{Thickening.} We thicken the initial skeleton to a $m$-thickened surgery gadget $\mathcal{S}[\tilde{\mathcal{G}}_m; (\tilde{\Phi}_1, \tilde{\Phi}_0)]$. See, for example, Ref.~\cite{he2025extractorsqldpcarchitecturesefficient}, in Definition~\ref{def: length-m-thickening} and Section~\ref{sec: surgery-thickening}.
    \end{enumerate}
    
     We take the following code modification techniques to construct the initial skeleton~\cite{xu2025fast, zheng2025highratesurgeryconstantoverheadlogical} that measure (any) targeted set of logical Pauli operators that are associated with the classical codewords of $\mathbb{B}(A)$. 
     We first take $\partial_1 = \mathbb{B}(A)$ so that the hypergraph is identical to the classical base code $A$. This, however, would measure all $r_A l$ single-qubit logical operators of $\mathbb{B}(A)$ according to $\phi_1 \ker{\partial_1}$. 
     To perform addressable measurements, we apply the puncturing (removing columns of $\partial_1$) and augmenting (adding weight-$2$ rows to $\partial_1$) techniques to manipulate the kernel of $\partial_1$ so that it gives the target set of logical operators~\cite{xu2025fast, zheng2025highratesurgeryconstantoverheadlogical}. Note that, when the logical PPMs are logically disjoint, the post-augmentation matrix $\partial_1$ is still LDPC.
      Crucially, the code modifications above induce a strictly \emph{transversal} connectivity map; that is, the column (and row) weight for $\phi_1$ and $\phi_0$, and thus the induced $\Phi_1$ and $\Phi_0$, is $1$, satisfying conditioned outlined in Theorem~\ref{thm: main-high-rate-surgery-distance-preserving}. See details of construction of classical code modification (punctures and augmentations) in Section~\ref{app-sec: classical-error-correcting-codes} in Appendix. 
    
\end{definition}
See Fig.~\ref{fig: parallel-surgery}(a) for an illustration. 
We can similarly construct a $X$-type hypergraph surgery gadget that measures any logically disjoint $X$ operators within a row of the logical fibres by attaching the hypergraph in Def.~\ref{def: main-LP-intracolumn-hyerpgraph-surgery} to a row of physical fibres and $Z$ checks instead.

\begin{figure*}[!t]
    \centering
    \includegraphics[width=\textwidth]{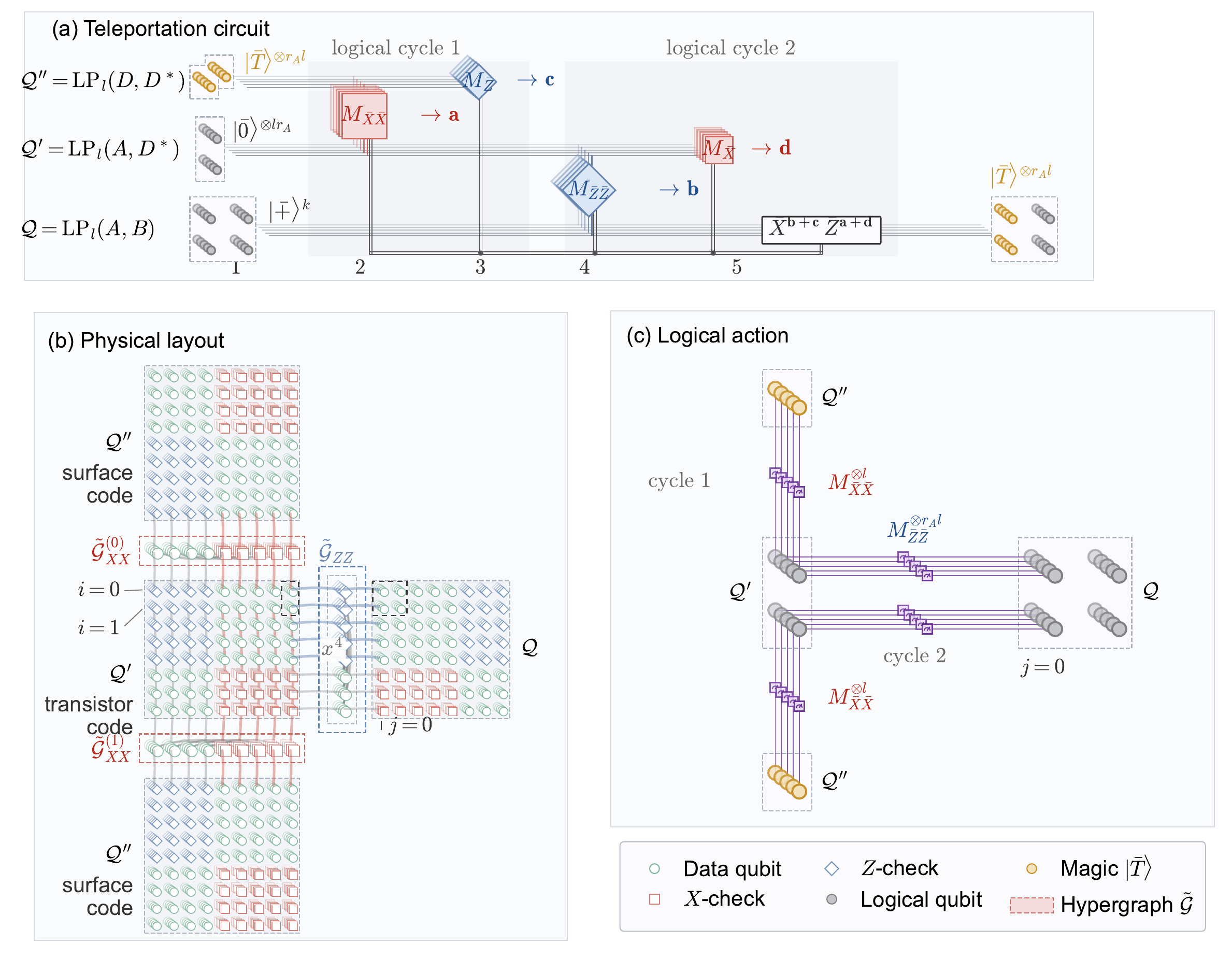}
    \caption{\textbf{Column-parallel magic-state injection}, teleporting $r_A l$ magic states in parallel into one information column of the data code $\mathcal{Q}=\LP_l(A,B)$ (gold), via $r_A$ stacked distance-$d_s$ surface codes $\mathcal{Q}''=\LP_l(D,D^*)$ and the transistor code $\mathcal{Q}'=\LP_l(A,D^*)$ (code parameters in the main text).
    \textbf{(a) Teleportation circuit.} The two-step protocol in Definition~\ref{def: main-magic-state-injection}, with measurement outcomes $\vec{a}$--$\vec{d}$ and the Pauli feedback correction $\bar X^{\vec{b}+\vec{c}}\bar Z^{\vec{a}+\vec{d}}$; the two joint measurements preserve the code distance (Theorem~\ref{thm: main-magic-injection-distance-preserving}), and each takes one logical cycle of $\Theta(d_s)$ syndrome-measurement rounds (shaded bands).
    \textbf{(b) Physical layout.} The two parallel joint measurements are implemented with the LP inter-column hypergraph surgery protocol (Definition~\ref{def: main-LP-intercolumn-surgery}; see also Fig.~\ref{fig: parallel-surgery}(b)), through the gadgets $\tilde{\mathcal{G}}_{XX}^{(i)}$ and $\tilde{\mathcal{G}}_{ZZ}$ (drawn: $l=5$, $d_s=5$).
    \textbf{(c) Logical action.} Each purple box is one joint PPM, transversally coupling the same-layer logical qubits of the two fibres it is wired to (conventions of Fig.~\ref{fig: parallel-surgery}): in cycle 1, $\mathcal{M}^{\otimes l}_{\bar X\bar X}$ per copy of $\mathcal{Q}''$; in cycle 2, $\mathcal{M}^{\otimes r_A l}_{\bar Z\bar Z}$ between $\mathcal{Q}'$ and column $j=0$ of $\mathcal{Q}$.}
    \label{fig: magic-state-injection}
\end{figure*}

While Definition~\ref{def: main-LP-intracolumn-hyerpgraph-surgery} only achieves parallel logical Pauli-product measurements within a column (resp. row) of logical $Z$- (resp. $X$-) operators, we present another gadget that can achieve inter-column (resp. inter-row) $Z$-type (resp. $X$-type) logical Pauli-product measurements, which is analogous to the homomorphic measurement gadgets for homological-product codes in Ref.~\cite{xu2025fast}.

\begin{definition}[Informal: LP inter-column hypergraph surgery]\label{def: main-LP-intercolumn-surgery}
    Let $\mathcal{Q} = \LP_l(A, B)$ be a canonical LP code. We construct a surgery gadget, denoted $\mathcal{S}[\tilde{\mathcal{G}}_{ZZ}; (\tilde{\Phi}_1, \tilde{\Phi}_0)]$, that measures $\{\bar{Z}_{i, j, m}\bar{Z}_{i, j^{\prime}, m}\}_{i \in [r_A], m \in [l]}$, i.e. parallel ZZ operators between two columns of logical fibres, by using a hypergraph identical to $A$, i.e. $\partial_1 = A$; that is, transversaly connected to the $j$-th and $j^{\prime}$-th column of physical fibres and checks (see Fig.~\ref{fig: parallel-surgery}(b)). 
    The merged code of this protocol is $\mr{LP}_l(A, B^{\prime})$, where $B^{\prime}$ is obtained by appending an extra column with two $1$s to $B$.

\end{definition}

We can similarly construct a $X$-type inter-column surgery gadget,  $\mathcal{S}[\tilde{\mathcal{G}}_{XX}; (\Phi_1, \Phi_0)]$, that measures $\{\bar{X}_{i, j, m} \bar{X}_{i^{\prime}, j, m}\}_{j \in [r_B^{*}], m\in[l]}$ in parallel between two rows of logical fibres.

In Section~\ref{sec: surgery-formulation} in the Appendix, we show that Definition~\ref{def: main-LP-intercolumn-surgery} can be seen as a LP code modification, transforming a LP code to its variant. Intuitively, by identifying the inter-column surgery as transforming to another LP code, it inherits the distance protection of that code. In many cases, this ensures that the LP inter-column surgery gadget is

\subsection{Parallel magic \label{subsec: main-parallel-magic}}
\begin{table*}[t]
\centering
\begin{tabular}{lcccccc}
\toprule
 & \multicolumn{3}{c}{Size} & \multicolumn{3}{c}{Maximum degree} \\
\cmidrule(lr){2-4}\cmidrule(lr){5-7}
 & Qubits & $X$-checks & $Z$-checks & Stabilizer weight & Qubit $X$-degree & Qubit $Z$-degree \\
\midrule
Data LP code $\mathcal{Q}$ & 1122 & 495 & 495 & 8 & 5 & 5 \\
Transistor code $\mathcal{Q}'$ & 1749 & 693 & 990 & 7 & 3 & 5 \\
Surface codes $\mathcal{Q}''$ & 2805 & 1386 & 1386 & 4 & 2 & 2 \\
\midrule
LP inter-column $\mathcal{M}_{\bar{X}\bar{X}}$ gadget  & $+396$ & $+462$ & $+0$ & $+0$ & $+1$  & $+0$ \\
LP inter-column $\mathcal{M}_{\bar{Z}\bar{Z}}$ gadget  & $+99$ & $+0$ & $+165$ & $+1$ & $+0$ & $+0$ \\
\bottomrule
\end{tabular}
\caption{\textbf{Column-parallel magic-state injection.}
Space costs of the parallel magic-state injection protocol (Def.~\ref{def: main-magic-state-injection}) from the stacked distance-$7$ surface codes $\LP_{33}(D,D^*)=[[2805,33,7]]$ into the canonical LP code $\LP^{3 \times 5}_{33}=[[1122,148,\leq 20]]$, using a helper transistor code $\mathcal{Q}'=[[1749,68,d_X=7,d_Z=20]]$.
The key primitives are the $X$-type (resp. $Z$-type) LP inter-column surgery gadgets between $\mc{Q}^{\prime\prime}$ and $\mc{Q}^{\prime}$ (resp. between $\mc{Q}^{\prime}$ and $\mc{Q}$); see Fig.~\ref{fig: magic-state-injection}.
The additional space costs of the corresponding surgery hypergraphs and the degree increase in the merged codes are reported in the last two rows.
}
\label{tab: magic-state-injection}
\end{table*}

Here, we present a logical primitive for preparing $\Theta(k)$ magic states in parallel for a canonical LP code $\mc{Q} = \mr{LP}_l(A, B)$ encoding $k$ logical qubits.
First, we introduce a parallel magic-state injection protocol that injects $\Theta(k)$ $\ket{T}$ states into an LP code from $\Theta(k)$ copies of small distance-$d_s$ surface codes, where the $\ket{T}$ states could be prepared in parallel using, e.g. magic-state cultivation protocols~\cite{gidney2024magicstatecultivationgrowing}.
Crucially, by utilizing the inter-column hypergraph surgery protocol (Definition~\ref{def: main-LP-intercolumn-surgery}), we can implement this parallel injection with a (phenomenological) fault distance $\geq \min(d, d_s)$ while incurring small extra space costs in addition to the surface codes.
Then, combining these injected $\ket{T}$ states with the transversal distillation factory~\cite{xu2025fast, cain2026shorsalgorithmpossible10000, BravyiKitaev2005MagicStateDistillation, Litinski_2019}, which primarily uses transversal CNOTs between the LP codes, one could obtain higher-fidelity magic states in parallel. Our contribution here is the first injection step, which is the most challenging step in such a protocol since the 2D LP codes do not produce magic states natively.

The parallel injection is based on a key observation: a stack of $l$ copies of the distance-$d_s$ unrotated surface code can be viewed exactly as a symmetric LP code over $R$, $\mc{Q}'' = \mr{LP}_l(D, D^{*})$ with a base matrix $D \in R^{(d_s - 1)\times d_s}$, where $D$ is the standard repetition-code check matrix except that each element $1$ is now the unit element of $R$.
We refer to such $\mc{Q}^{\prime \prime}$ as a \emph{stacked} surface code.
Then, we can teleport in between (copies of) this stacked surface code with our target canonical LP code utilizing the parallel inter-column hypergraph surgery gadget in Definition~\ref{def: main-LP-intercolumn-surgery}, with the assistance of a ``transistor" code $\mathcal{Q}' = \LP_l(A, D^*)$. 
Specifically,  each stacked surface code encodes a single fibre of $l$ logical qubits, whereas the transistor code $\mc{Q}^{\prime}$ encodes a column of $r_A$ fibres (see Fig.~\ref{fig: magic-state-injection}(c)).
Then, by viewing $\mc{Q}^{\prime \prime}$ and $\mc{Q}^{\prime}$ as a joint code $\mr{LP}_l(A \oplus D, D^*)$ with $r_A + 1$ rows (fibres) of logical qubits, we can apply the $X$-type inter-column surgery protocol (Def.~\ref{def: main-LP-intercolumn-surgery}) for parallel XX measurements, $\mc{M}_{\bar{X}\bar{X}}^{\otimes l}$, between the surface code logical qubits and a particular row of $\mc{Q}^{\prime}$. This can be made further in parallel by having $r_A$ copies of $\mc{Q}^{\prime \prime}$ that are coupled to different rows of the $\mc{Q}^{\prime}$ logical qubits in parallel (see Fig.~\ref{fig: magic-state-injection}(b)(c)). In addition, similar parallel ZZ measurements $\mc{M}_{\bar{Z}\bar{Z}}^{\otimes r_A l}$ can be applied between $\mc{Q}^{\prime}$ and a column of $\mc{Q}$ logical qubits by viewing them as $\mr{LP}_l(A, D^* \oplus B)$ and applying the $Z$-type inter-column surgery protocol (Def.~\ref{def: main-LP-intercolumn-surgery}).

Based on the transistor code and the parallel surgery primitives described above, we now present a scheme that injects $r_A l$ $| \bar{T}\rangle$ states from $\mc{Q}^{\prime \prime \otimes r_A}$ to a column of logical fibres of $\mc{Q}$ in parallel.

\begin{definition}[LP Column-parallel magic-state injection protocol]\label{def: main-magic-state-injection}
    Given a canonical LP code $\mc{Q} = \LP_l(A, B)$, we can inject $| \bar{T}\rangle$ states from $r_A$ copies of the stacked surface code $\mc{Q}^{\prime \prime} = \LP_l(D, D^*)$ using a helper ``transistor" canonical LP code $\mc{Q}^{\prime} = \mr{LP}_l(A, D^*)$ with $k^{\prime}_c = r_A l$, via the following protocol (see Fig.~\ref{fig: magic-state-injection})
\begin{enumerate}
    \item \label{magic-injection-preparation} \textbf{Preparation.} We prepare $r_A$-many copies of $\mathcal{Q}'' = \LP_l(D, D^*)$ with $|\bar{T} \rangle^{\otimes l}_{\mathcal{Q}''}$ (that is, $r_Al$-many disjoint copies of surface code patches, each supporting $| \bar{T}\rangle$). The transistor code $\mathcal{Q}' = \LP_l(A, D^*)$ is initialized at $|\bar{0}\rangle^{\otimes lr_A}_{\mathcal{Q}'}$ using $O(d_s)$ rounds of syndrome extraction.
    Furthermore, we assume that the data canonical LP code has been prepared at $|\bar{+}\rangle^{\otimes k_c}_{\mathcal{Q}}$ before the injection protocol.
  
    \item \label{magic-injection-Mxx} \textbf{Surface-to-transistor ${\mathcal{M}}_{\bar{X}\bar{X}}$}. 
    For each $i \in [r_A]$, we measure joint ${\mathcal{M}}^{\otimes l}_{\bar{X}\bar{X}}$ between a copy of $\mc{Q}^{\prime \prime}$ and the $i$-th row of the logical fibre of $\mc{Q}^{\prime}$ (see Fig.~\ref{fig: magic-state-injection}(b)(c)) using the $X$-type inter-column surgery gadget (Definition~\ref{def: main-LP-intercolumn-surgery}). These measurements are all performed in parallel.
    Record the parity measurement outcome vector $\vec{a}$.
    \item \label{magic-injection-Mzz} \textbf{Transistor-to-data ${\mathcal{M}}_{\bar{Z}\bar{Z}}$.} We measure joint $\mathcal{M}^{\otimes r_Al}_{\bar{Z}\bar{Z}}$ between the transistor code $\mathcal{Q}'$ and the $j$-th column of logical qubits in the data canonical LP code $\mathcal{Q}$ 
    similarly using the $Z$-type LP inter-column surgery gadget ((Def.~\ref{def: main-LP-intercolumn-surgery})). Record the parity measurement outcome vector $\vec{b}$. 
    \item \label{magic-injection-transistor-Mx} \textbf{Pauli feedback correction.} We transversally measure in the $X$-basis on the transistor code $\mathcal{Q}'$ and record its parity measurement outcome $\vec{d}$, and we transversaly measure in the $Z$-basis on $r_A$ copies of $\mathcal{Q}''$ and record their parity measurement outcome $\vec{c}$. Then we apply the Pauli feedback correction $\bar{X}^{\vec{b} + \vec{c}}\bar{Z}^{\vec{a} + \vec{d}}$ on $\mathcal{Q}$.
\end{enumerate}
\end{definition}

The inter-column surgery gadgets for $\bar{X}\bar{X}$- and $\bar{Z}\bar{Z}$-joint measurements in step 2 and 3, respectively, can be shown to preserve the code distances, respectively, for the surface code patches $\mathcal{Q}''$ and the data code $\mathcal{Q}$. 

\begin{theorem}[LP Column-parallel magic-state injection is fault-tolerant]\label{thm: main-magic-injection-distance-preserving}
The followings hold for the above column-parallel magic-state injection protocol Definition~\ref{def: main-magic-state-injection}. Let $\mathcal{Q}$ has distance $d$ and $\mathcal{Q}''$ the stacked surface codes have distance $d_s$. 
\begin{itemize}
    \item The transistor code $\mathcal{Q}'$ has $X$-distance $d_s$ and $Z$-distance $\geq d$. 
    \item Both the $\mc{M}_{\bar{X}\bar{X}}$ and $\mc{M}_{\bar{Z}\bar{Z}}$ measurement LP inter-column surgery gadgets have a merged-code distance at least $d_s$ and $\min(d_s, d)$, respectively.  
\end{itemize}
Consequently, the protocol has a phenomenological distance $\min(d, d_s)$. 
\end{theorem}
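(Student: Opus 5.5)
The plan is to prove the three claims in order: the distances of the transistor code, the distances of the two merged codes, and then the phenomenological fault distance.

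\emph{Transistor code distances.} I treat $\mathcal{Q}'=\LP_l(A,D^*)$ through Theorem~\ref{thm: main-LP-basis-characterization}. Since $D$ is a repetition-code check matrix over $R$ with unit entries, $\ker_R D^*$ and $\coker_R D^*$ are free and spanned by all-ones type vectors. The natural tool is a Künneth/HGP-style distance argument for lifted products with one factor of this form. Because $D^*$ has entries in $\mathbb{F}_2\subset R$, $\mathbb{B}(\mathcal{Q}')$ is, up to permutation, the hypergraph product of $\mathbb{B}(A)$ with the classical repetition code of length $d_s$. Concretely, $\mathbb{B}(I\otimes_R D^*)=\mathbb{B}(I)\otimes \mathbb{B}(D^*)$ and $\mathbb{B}(A\otimes_R I)$ factorizes analogously, since a scalar matrix commutes with the $R$-action. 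The standard HGP distance formula then gives $d_X=\min(d(\ker D^*)\text{-type},\ldots)$, which yields $d_X=d_s$ (repetition codeword length) and $d_Z=\min(d(\mathbb{B}(A)),d(\mathbb{B}(A)^T\text{-dual}))$. The remaining task is to show the latter is $\geq d$. For this I compare with $\mathcal{Q}=\LP_l(A,A^*)$: every canonical $\bar Z$ of $\mathcal{Q}$ is $u\otimes e^{(j)}$ with $u\in\ker\mathbb{B}(A)$, so $d_Z(\mathcal{Q})\le d_{\min}(\mathbb{B}(A))$. Conversely, a minimal nontrivial $Z$-logical of $\mathcal{Q}'$ restricts, by the HGP structure, to a nonzero codeword of $\mathbb{B}(A)$ (the transpose factor contributes nothing when $\mathbb{B}(A)$ has full-rank-type cokernel, or else a codeword of the transpose code, which also appears in $\mathcal{Q}$ through the symmetric factor $A^*$). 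Each such codeword is the support of a $Z$-logical of $\mathcal{Q}$, giving $\ge d$.

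\emph{Merged codes.} By Definition~\ref{def: main-LP-intercolumn-surgery} and its appendix formulation, the $ZZ$ merged code is $\LP_l(A,(D^*\oplus B)')$, where the second factor gains one extra column with two units. In the same way, the $XX$ merged code is $\LP_l((A\oplus D)',D^*)$ with one extra row check. I bound each merged distance by the classical distances of the modified factors together with the factor distances already established. For the $XX$ case, the kept $X$-logicals are built from codewords of $D^*$, of length $d_s$. The $Z$-logicals of the merged code restrict on each block to $Z$-logicals of $\mathcal{Q}''$ or $\mathcal{Q}'$, or else are products that remain nontrivial on one block. This gives $\ge\min(d_s,d_s,d_Z(\mathcal{Q}'))=d_s$. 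For the $ZZ$ case, the appended column of weight two identifies corresponding cokernel classes. Any merged logical must contain a nontrivial logical of $\mathcal{Q}$ or of $\mathcal{Q}'$ on its restriction, because the added column only glues the two complexes along identical copies of $A$. This gives $\ge\min(d, d_Z(\mathcal{Q}'), d_X(\mathcal{Q}'))=\min(d,d_s)$. I would make the restriction argument rigorous with a chain-map/cone exact sequence: the merged complex is a mapping cone, and its homology class restricts to a nonzero class in at least one summand, with weight at least that summand's distance. The only gauge logicals would be of type~\ref{main-surgery-logical-ancilla}, which are excluded by Lemma~\ref{lemma: monomial-matrix-no-ancilla-logical}-type reasoning, since $\partial_1=A$ has full row rank on the relevant component.

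\emph{Fault distance.} This follows the standard argument for surgery protocols. Each step lasts $\Theta(d_s)$ rounds. Every spacetime logical fault is either a spacelike logical of one of the codes present at that time, each of distance $\ge\min(d,d_s)$ by the above, or a timelike fault flipping a measured product. The latter requires flipping $\Theta(d_s)$ repeated measurement outcomes, because the deformed-check products are redundant over the rounds. Combining these gives phenomenological distance $\min(d,d_s)$.

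\emph{Main obstacle.} I expect the hardest step to be the $Z$-distance $\ge d$ of $\mathcal{Q}'$ and of the $ZZ$ merged code. Distances of LP codes are not determined by the factors in general. Here they are, because $D$ is scalar, so the LP reduces to an honest HGP over $\mathbb{F}_2$ with factor $\mathbb{B}(A)$; I would verify this factorization first.
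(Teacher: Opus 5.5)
\textbf{There is a genuine gap in the merged-code step, which is the substance of the second bullet.}

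In the $\mathcal{M}_{\bar Z\bar Z}$ merged code, the $Z$-logicals are exactly the operators the new vertex checks act on. A merged $Z$-logical $L$ satisfies $H_X z_{\mathrm{data}}=\Phi_0 z_{\mathcal E}$. So its restriction to $\mathcal Q'$ or to $\mathcal Q$ is in general not a cycle of that block: it has syndrome $z_{\mathcal E}$ on the coupled check fibres. Multiplying by the vertex checks indexed by $y$ turns the coupled slices $(a,b)$ into $(a+y,b+y)$ and adds $\mathbb B(A)y$ on the ancilla.

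The mapping-cone exact sequence, together with $\coker\mathbb B(A)=0$, only tells you that $[L]$ is the image of \emph{some} split class. It gives no bound on the weight of a split representative. To get one you must choose $y$ with $\mathbb B(A)y=z_{\mathcal E}$ and bound $|a+y|+|b+y|$ by $|L|$. In general that needs an expansion or soundness property of $\mathbb B(A)$ relative to the transversal $\Phi_1$, which is the same phenomenon that forces Cheeger constants, soundness and thickening elsewhere in the paper. ``The added column only glues along identical copies of $A$'' does not supply it.

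The paper's proof targets this point directly. It collapses each coupled pair $(a,b)\mapsto(a+b,0)$ and drops the ancilla. It uses that $a+b$ is invariant under vertex checks, so $|a+b|\le|a+u|+|b+u|$ gives the weight control. It excludes ancilla-only logicals via full rank of $\mathbb B(D)$ and Lemma~\ref{lemma: monomial-matrix-no-ancilla-logical}. Even there, the collapse must be reconciled with the checks on the coupled column of $\mathcal Q$, which also see the right-sector qubits through $B$. As literally stated, it can send a $Z$-check of $\mathcal Q$ meeting column $j$ to an operator anticommuting with an $X$-check of $\mathcal Q'$. So this step is where the real work lies.

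Your $\bar X\bar X$ case has the same hole for the deformed type. Saying the kept $X$-logicals are built from codewords of $D^*$ exhibits weight-$d_s$ representatives but proves no lower bound. Your step $\min(d_s,d_s,d_Z(\mathcal Q'))=d_s$ also silently assumes $d_A\ge d_s$.

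\textbf{Comparison for the transistor code, and a fix for $\bar X\bar X$.} Reducing $\mathcal Q'$ to the hypergraph product of $\mathbb B(A)$ with the length-$d_s$ repetition code is correct. It is a genuinely different and cleaner route than the paper's componentwise argument, which shows each of the $d_s$ slices $a+A^*h_j$ is nonzero and that the $\mathcal D_1$-slices of a $Z$-representative sum to $u$. Two corrections are needed:
\begin{itemize}
\item The HGP formula gives $d_Z(\mathcal Q')=d(\ker\mathbb B(A))=d_A$ outright. The other summand is $\coker A\otimes\ker D^*$, and $\ker D^T=0$, so no transpose-code term appears and no appeal to $A^*$ is needed.
\item $\mathcal Q=\LP_l(A,B)$ need not be symmetric. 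The bound $d\le d_A$ follows from the canonical $Z$-logicals $u\otimes e^{(j)}$.
\end{itemize}

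More usefully, the same reduction closes the $\bar X\bar X$ case. That merged code is $\LP_l(A'',D^*)$, where $A''$ is $A\oplus D^{\oplus r_A}$ plus $r_A$ coupling rows. Since $D^*$ is still scalar, it is an honest hypergraph product. The HGP formula then gives $d_X=d(\ker D)=d_s$ and $d_Z=d(\ker\mathbb B(A''))\ge d_s$. For the latter, a kernel vector with zero surface part is a $u\in\ker\mathbb B(A)$ vanishing on the information fibres $[r_A]$, hence $u=0$ under the paper's full-row-rank assumption.

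The $\bar Z\bar Z$ merged code $\LP_l(A,(D^*\oplus B)')$ carries the polynomial factor $B$, so that shortcut is unavailable there. That step, not $d_Z(\mathcal Q')\ge d$, is the real obstacle. Your fault-distance paragraph is fine once the merged distances are established.
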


Hence, for practically purposes, the above parallel magic-state is provably fault-tolerant, whose phenomenological distance is lower-bounded by the distance of the stacked copies of surface codes $\mc{Q}''$. 

\begin{example}
    In Table~\ref{tab: magic-state-injection}, we give an example of the high-rate, parallel magic-state injection to the headline code $\LP^{3 \times 5}_{33}$, where the protocol injects $66 = k_c/2$-many $|T\rangle$ states in parallel. Note that the surgery gadgets for LP inter-column joint measurements are distance-preserving and incur only minimal space resources (middle rows of Table~\ref{tab: magic-state-injection}). Specifically, the $\bar{X}\bar{X}$-joint measurements between the surface code patches (2 copies of $\mathcal{Q}''$) and the transistor code have merged distance $d_s=7$, and $\bar{Z}\bar{Z}$-joint measurements have merged distance $d \leq 20$. Since the connectivity maps of these two LP inter-column surgery gadgets are transversal, i.e., having maximum row or column weight $1$, we can conclude that this protocol is fault-tolerant. 
\end{example}
 
Using the column-parallel injection protocol in Def.~\ref{def: main-magic-state-injection}, we can further inject magic states into all $k_c$ logical qubits of a canonical LP code $\mc{Q}=\mr{LP}_l(A,B)$ by applying the protocol once to each of the $r_B^*$ columns. Consequently, when $r_B^*$ is constant, the total injection time is independent of $l$. Altogether, this yields a highly parallel, provably fault-tolerant magic-state teleportation protocol with low additional space overhead.

We refer to Appendix~\ref{sec: LP-parallel-surgery-magic} for a detailed discussion of the parallel magic-state injection protocol, including the proof of Theorem~\ref{thm: main-magic-injection-distance-preserving}.

\section{Outlook}
By exploiting the algebraic structure and symmetries of Abelian LP codes, we obtain a large sub-family of canonical LP codes with a highly structured logical basis, and equip them with a flexible, low-overhead, and fully benchmarked fault-tolerant instruction set. We expect these constructions to push the frontier of fault-tolerant quantum computation on high-rate architectures~\cite{cain2026shorsalgorithmpossible10000}.

Looking forward, several directions remain open. The logical clock speed, currently set by the $\Theta(d)$ syndrome-extraction rounds of each surgery, could be reduced to $O(1)$ by incorporating \emph{single-shot} surgery gadgets~\cite{Campbell_2019, tan2025singleshotuniversalityquantumldpc, golowich2025constantoverheadaddressablegatessingleshot}. Moreover, we anticipate that several of our gadgets, most notably the parallel code surgery and the parallel magic-state injection, admit further optimizations that would reduce their space overhead. Finally, we leave detailed numerical simulations and optimizations, as well as the instantiation of our schemes on concrete hardware platforms, to future work.

\section{Acknowledgements}
We refer readers to the concurrent works~\cite{Bhardwaj2026MittenCodes, Hong2026RateOneFifthLDPC} for related developments on non-Abelian LP codes.
We thank Aditya Bhardwaj, Muzhou Ma, Harald Putterman, Robbie King, Dolev Bluvstein, John Preskill, Madelyn Cain, Hsin-Yuan Huang, Yifan Hong, and Zhiyang He for helpful discussions. 
We acknowledge support from the ARO(W911NF-23-1-0077), ARO MURI (W911NF-21-1-0325), AFOSR MURI (FA9550-21-1-0209, FA9550-23-1-0338), NSF (ERC-1941583, OMA-2137642, OSI-2326767, CCF-2312755, OSI-2426975). Q.X. acknowledges funding by the Walter
Burke Institute for Theoretical Physics at Caltech.

\tableofcontents 
\appendix

\section{Preliminaries}\label{app-sec: premliminaries}
In this section, we review the necessary background for the discussion that follows. 

\subsection{Basics on the commutative ring theory}\label{app-sec: commutative-ring-theory}
Throughout the texts, we exclusively consider polynomial rings of characteristic $2$. All rings are commutative with identity, and all modules are left modules unless the side of the action matters (Definition~\ref{def: R-module}), as in the balanced product. Let $\mathbb{F}_2[x_1, \cdots, x_m]$ be the space of $\mathbb{F}_2$-valued multivariate polynomials. Given polynomials $f_1, \cdots, f_s \in \mathbb{F}_2[x_1, \cdots, x_m]$, we define the quotient ring
\begin{align}
    R := \mathbb{F}_2[x_1, \cdots, x_m]/(f_1, \cdots, f_s).
\end{align}
A standard choice is given by $R_l := \mathbb{F}_2[x]/(x^l +1)$; when the context is clear, we write $R$ for $R_l$. In what follows, we work in some generality.

\begin{definition}[Integral domain]
     A commutative ring $R$ with $1 \neq 0$ is said to be an integral domain if $ab = 0$ implies $a = 0$ or $b = 0$.
\end{definition}

Obviously $\mathbb{F}_2[x_1, \cdots, x_m]$ is an integral domain. Another important notion is that of an \emph{ideal} $I \subseteq R$: an additive subgroup of $R$ that is closed under multiplication by arbitrary elements of $R$. 

\begin{definition}[Principal ideal domain]
     An integral domain $R$ is said to be a principal ideal domain (PID) if every ideal $I \subset R$ is generated by a single element.
\end{definition}

\begin{example}
     It is easy to observe that $\mathbb{F}_2[x]$ is an integral domain. It remains to show that every ideal is generated by a single element. Suppose an ideal is generated by two elements, $(f(x), g(x))$; observe that it is then also generated by their greatest common divisor $\operatorname{gcd}(f, g)$. It remains to prove that coprime $f, g$ generate the entire ring $\mathbb{F}_2[x]$. Let $h \in (f, g)$ have minimal degree. Then $h$ divides every element of $(f, g)$: otherwise, the Euclidean algorithm produces an element $af + bg \bmod h$ of degree less than that of $h$, a contradiction. Since $f, g$ are coprime, it follows that $h = 1 \in (f, g)$, whence $(f, g) = (1) = \mathbb{F}_2[x]$. This is also called \emph{Bezout's identity}. 
\end{example}

\begin{example}
     Let us also consider a famous counter-example. Consider the ideal generated by $(2, x)$ in $\mathbb{Z}[x]$; we show that it cannot be generated by any single element. Suppose $(2, x) = (g)$ for some $g \in \mathbb{Z}[x]$. Then $g \mid 2$ forces $g$ to be a constant, $g \in \{1, 2\}$ up to sign. If $g = 2$, then $g \nmid x$; if $g = 1$, then $(2, x) = \mathbb{Z}[x]$, which fails because every element of $(2, x)$ has an even constant term.
\end{example}

Given an ideal $I \subset R$, we can form the \emph{quotient ring} $R/I$. A canonical example is $I = (x^n + 1)$ for some integer $n$. A quotient ring is typically not an integral domain when $x^n+1$ factorizes. Write $x^n + 1 = ab$ with both factors nonconstant. Then $ab = x^n+1 =0$ while $a\neq 0$, $b \neq 0$. We therefore need additional tools to handle the quotient structures, which are crucial to our study.

\begin{definition}[Principal ideal ring]
     A commutative ring $R$ is a principal ideal ring if for every ideal $I \subseteq R$, there exists an element $g \in R$ such that $I = (g)$.
\end{definition}

Note that every principal ideal domain is a principal ideal ring, but the converse fails: a principal ideal ring may contain zero divisors, as the next example shows. In particular, $R_l$ with reducible $x^l + 1$ is a principal ideal ring that is not a principal ideal domain.

\begin{lemma}
    Suppose that $R$ is a principal ideal domain (PID), then its quotient ring with respect to any ideal $I \subseteq R$ is a principal ideal ring. 
    \begin{proof}
        Denote the quotient ring by $R/I$ and the quotient map $f: R \rightarrow R/I$, which is a surjective map with $f(x) = x + I = [x]$. For any ideal $J \subseteq R/I$, we have that $f^{-1}(J)$ is an ideal over $R$: it is closed under addition, and for any $r \in R$ and $a \in f^{-1}(J)$, we have $f(ra) = f(r)f(a) \in J$ since $J$ absorbs multiplication by elements of $R/I$; hence $ra \in f^{-1}(J)$. Since $R$ is a principal ideal domain, $f^{-1}(J)$ is generated by a single element, $f^{-1}(J) = (d)$. By surjectivity of $f$, it follows that $J = f(f^{-1}(J)) = f((d)) = ([d])$ is generated by a single element. 
    \end{proof}
\end{lemma}

\begin{example}[$\mathbb{F}_2 \times \mathbb{F}_2$]
     In general, a direct product of fields is a PIR. Consider the simplest case, the product of two binary fields. Since $(1, 0) \cdot (0, 1) = (0, 0)$, the ring has zero divisors and hence cannot be a PID. It is nevertheless a principal ideal ring, by the following fact. 
\end{example}

\begin{lemma}
   Any finite direct product of principal ideal rings is a principal ideal ring. 
\end{lemma}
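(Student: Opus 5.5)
The statement to prove is that if $R_1,\dots,R_n$ are principal ideal rings, then so is $R:=R_1\times\cdots\times R_n$. The plan has three steps: show that every ideal of $R$ splits as a product of ideals of the factors, pick a generator in each factor, and assemble them into one generator. By induction on $n$ (or by working with all $n$ factors at once), it suffices to handle $R=R_1\times R_2$. Let $e_1=(1,0)$ and $e_2=(0,1)$. These are orthogonal central idempotents with $e_1+e_2=1$, in line with the idempotent decomposition already used in Theorem~\ref{thm: main-computation-kernel-cokernel}.

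\textbf{Step 1: every ideal is a product of ideals.} Let $J\subseteq R$ be an ideal, and let $\pi_i:R\to R_i$ be the projections. Set $J_i:=\pi_i(J)$. Each $J_i$ is an ideal of $R_i$: it is closed under addition because $\pi_i$ is additive, and it absorbs multiplication because $\pi_i$ is a surjective ring homomorphism. We claim $J=J_1\times J_2$.

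The inclusion $J\subseteq J_1\times J_2$ is immediate. For the reverse inclusion, take $(a_1,a_2)$ with $a_i\in J_i$, and choose $u,v\in J$ with $\pi_1(u)=a_1$ and $\pi_2(v)=a_2$. Then $e_1u=(a_1,0)\in J$ and $e_2v=(0,a_2)\in J$, because $J$ absorbs multiplication by $e_1,e_2$. Their sum is $(a_1,a_2)$, so it lies in $J$. This absorption by the idempotents is the only step with real content.

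\textbf{Step 2: choosing generators in each factor.} Since each $R_i$ is a principal ideal ring, we can write $J_i=(g_i)$ for some $g_i\in R_i$.

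\textbf{Step 3: a single generator for $J$.} Set $g:=(g_1,g_2)$. We claim $J=(g)$.

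First, $(g)\subseteq J$, because $g\in J_1\times J_2=J$. Conversely, take any $(r_1g_1,r_2g_2)\in J$. It equals $(r_1,r_2)\cdot g$, so it lies in $(g)$. Hence $J=(g)$, and $R$ is a principal ideal ring.

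The induction step works the same way, since $R_1\times\cdots\times R_n\cong(R_1\times\cdots\times R_{n-1})\times R_n$ as rings.
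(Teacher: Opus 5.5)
Your proof is correct. Splitting an ideal $J$ of $R_1\times R_2$ as $\pi_1(J)\times\pi_2(J)$ using the idempotents $e_1,e_2$, and then taking $g=(g_1,g_2)$, is the standard argument; it needs only that the rings are commutative with identity, which the paper assumes throughout. The paper itself gives no proof and cites the result as standard (Atiyah--Macdonald), so your argument supplies the omitted proof.
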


 This is a standard result (see Ref.~\cite{AtiyahMacdonald1969IntroCA}). Note that the above statement is known to be false for infinite products of principal ideal rings. This product structure is ubiquitous when taking quotients.

\begin{lemma}[Chinese remainder theorem]\label{lemma: chinese-reminder}
      Consider a polynomial $h(x) = f^{e_1}_1 f^{e_2}_2 \cdots f^{e_s}_s(x)$, where $f_1, \cdots, f_s$ are pairwise distinct irreducible polynomials and $e_i \geq 1$ are integer exponents. We define $ R_h := \mathbb{F}_2[x] / ( h(x) )$. The ring $R_h$ is isomorphic to the direct product of rings, 
    \begin{align}
        R_h \cong  R_{(1)} \times \cdots \times R_{(s)}, 
    \end{align}
    where $R_{(i)} := \mathbb{F}_2[x]/( f^{e_i}_i(x) )$.  
\end{lemma}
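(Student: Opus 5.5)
We prove the Chinese remainder theorem for $R_h$ by writing down the natural reduction map and checking that it is a ring isomorphism. Define
\[
\pi : \mathbb{F}_2[x] \rightarrow R_{(1)} \times \cdots \times R_{(s)}, \qquad g \mapsto \bigl(g \bmod f_1^{e_1}, \ldots, g \bmod f_s^{e_s}\bigr).
\]
Each component is a quotient map, so $\pi$ is a ring homomorphism.

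\textbf{Kernel.} We claim $\ker \pi = (h)$. A polynomial $g$ lies in $\ker\pi$ exactly when $f_i^{e_i} \mid g$ for every $i$. The $f_i$ are pairwise distinct irreducibles, so the prime powers $f_i^{e_i}$ are pairwise coprime. By unique factorization in the PID $\mathbb{F}_2[x]$, if all of them divide $g$, then so does their product $h$. Conversely, $(h) \subseteq \ker\pi$ is clear. The first isomorphism theorem then gives an injective ring homomorphism $\bar\pi : R_h \hookrightarrow \prod_i R_{(i)}$.

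\textbf{Surjectivity.} There are two ways to see this.

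\emph{Counting.} $R_h$ has dimension $\deg h = \sum_i e_i \deg f_i$ over $\mathbb{F}_2$, and this equals $\sum_i \dim_{\mathbb{F}_2} R_{(i)}$. Since $\bar\pi$ is an injective $\mathbb{F}_2$-linear map between spaces of equal finite dimension, it is bijective.

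\emph{Explicit idempotents.} Set $h_i := h/f_i^{e_i}$. Then $\gcd(f_i^{e_i}, h_i) = 1$, so Bézout's identity (as in the earlier example showing $\mathbb{F}_2[x]$ is a PID) gives $a_i, b_i$ with $a_i f_i^{e_i} + b_i h_i = 1$. Put $c_i := b_i h_i$. Then $c_i \equiv 1 \bmod f_i^{e_i}$ and $c_i \equiv 0 \bmod f_j^{e_j}$ for $j \neq i$. Given a target tuple $(g_1, \ldots, g_s)$, the polynomial $\sum_i g_i c_i$ maps to it under $\pi$. This version is worth recording because the images of the $c_i$ in $R_h$ are exactly the central idempotents $c_{b^{(i)}}$ used in Theorem~\ref{thm: main-computation-kernel-cokernel}.

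\textbf{Main obstacle.} The only substantive step is the kernel computation, specifically that pairwise coprimality turns "divisible by each $f_i^{e_i}$" into "divisible by $h$". We prove it by induction on $s$. If $f_1^{e_1} \mid g$ and $m \mid g$ with $\gcd(f_1^{e_1}, m) = 1$, write $1 = u f_1^{e_1} + v m$. Multiplying by $g$ gives $g = u f_1^{e_1} g + v m g$, and each term is divisible by $f_1^{e_1} m$. Finally, $f_1^{e_1}$ is coprime to the product of the remaining prime powers, because it shares no irreducible factor with them. Everything else is routine.
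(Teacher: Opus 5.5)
Your proof is correct and complete. The paper does not prove this lemma; it states it as a standard fact and uses it directly, so there is no competing argument to compare against.

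The closest analogue is in Section~\ref{sec: field-decompositions}. There the central idempotents $c_{b^{(i)}}$ are built from Bezout's identity exactly as your $c_i = b_i h_i$ (Eq.~\eqref{eq: central-idempotent}), but only in the multiplicity-free case. Your version handles general exponents $e_i$, which the paper's later remark on repeated factors $(b^{(i)})^{m_i}$ implicitly relies on.

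Your injectivity of $\bar\pi$ also gives uniqueness of these idempotents directly. The paper instead obtains uniqueness through root evaluation and Lagrange interpolation (Lemma~\ref{lemma: idempotent-root-evaluation}).
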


An important observation relevant to our discussion is the following. 

\begin{corollary}
     For any prime power $q$, the quotient ring $\mathbb{F}_q[x]/(x^l-1)$ is a principal ideal ring, regardless of whether $l$ is odd or even. Over $\mathbb{F}_2$, $x^l - 1$ coincides with $x^l + 1$, and this specializes to $R_l$. 
\end{corollary}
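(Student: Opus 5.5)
The plan is to deduce the corollary from the lemma immediately preceding it together with the Chinese remainder theorem (Lemma~\ref{lemma: chinese-reminder}). The key observation is that $\mathbb{F}_q[x]$ is a principal ideal domain. This follows by the same Euclidean-division argument given for $\mathbb{F}_2[x]$ in the example above, because division with remainder works over any field. The quotient $\mathbb{F}_q[x]/(x^l-1)$ is then the quotient of a PID by the ideal $(x^l-1)$. By the lemma stating that a quotient of a PID is a principal ideal ring, we conclude immediately that $\mathbb{F}_q[x]/(x^l-1)$ is a principal ideal ring.

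This argument never uses the parity of $l$, which is exactly the point of the remark ``regardless of whether $l$ is odd or even.'' For completeness, and to connect with the structure used later, I would also give a second route through Lemma~\ref{lemma: chinese-reminder}. Factor $x^l-1=\prod_i f_i^{e_i}$ into distinct irreducibles. When $l$ is not coprime to the characteristic, the factorization may have repeated factors ($e_i>1$), so the summands $\mathbb{F}_q[x]/(f_i^{e_i})$ need not be fields. Each summand is still a quotient of the PID $\mathbb{F}_q[x]$, so it is a principal ideal ring by the same lemma. The finite-product lemma then shows that the direct product is a principal ideal ring.

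Finally, the specialization is immediate. In characteristic $2$ we have $-1=1$, so $x^l-1=x^l+1$, and taking $q=2$ gives $R_l$.

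No step is a real obstacle. The only point needing care is confirming that the earlier PID argument for $\mathbb{F}_2[x]$ extends verbatim to $\mathbb{F}_q[x]$. This holds because the argument uses only the degree function and the invertibility of leading coefficients over a field.
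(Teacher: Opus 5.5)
Your proof is correct and matches how the paper obtains the corollary. The paper states it without a separate proof, right after the lemmas that a quotient of a PID is a principal ideal ring, the Chinese remainder theorem, and that finite products of principal ideal rings are principal ideal rings; these are exactly the ingredients of your two routes, and your direct quotient-of-PID argument already suffices without the CRT decomposition.
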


\begin{remark}\label{remark: square-free-odd-l}
    For odd $l$, the polynomial $x^l + 1$ is square-free: in characteristic $2$, $\gcd(x^l+1, (x^l+1)') = \gcd(x^l+1, l x^{l-1}) = \gcd(x^l+1, x^{l-1}) = 1$, since $l$ is odd and $x \nmid x^l + 1$. Hence, every root is simple, the factorization $x^l + 1 = \prod^t_{i=1} b^{(i)}$ into irreducible polynomials is multiplicity-free, and $R_l \cong \prod^t_{i=1} \mathbb{F}_2[x]/(b^{(i)})$ is a finite direct product of fields. This square-freeness underlies the semi-simplicity of $R_l$ used throughout Appendix~\ref{app-sec: lp-basis-characterization}.
\end{remark}

An important technical reason for considering principal ideal rings is Bezout's identity. 

\begin{lemma}\label{lemma: bezout-lemma}
    Let $R$ be a principal ideal ring, let $f, g \in R$, and write $\operatorname{gcd}(f, g) = b$. Then there exist $r, s \in R$ such that 
    \begin{align}
        r(x)f(x) + s(x)g(x) = b(x).
    \end{align}
    \begin{proof}
        Note that in a principal ideal ring, every ideal is generated by a single element. In particular, consider the ideal generated by $(f, g) = (\operatorname{gcd}(f, g)) = (b)$. Hence there exist $r, s$ with $rf + sg = b$, as desired. 
    \end{proof}
\end{lemma}

\subsection{Basics on chain complexes}\label{app-sec: basic-chain-complex}
We also give a basic introduction to the theory of chain complexes over a ring $R$. Throughout this work, we can assume $R$ to be a finite commutative ring of characteristic $2$ and, even more, $R = R_l$ for concreteness. A chain (complex) $\mathcal{C}_{\bullet}$ over the ring $R$ of dimension $t$ is given by 
\begin{align}
    \mathcal{C}_t \xrightarrow{\partial^C_t} \cdots \rightarrow \mathcal{C}_1 \xrightarrow{\partial^C_1} \mathcal{C}_0, 
\end{align}
such that $\partial^C_{j}\partial^C_{j+1} = 0$ for $j=1, \cdots, t-1$. 

\begin{remark}
    Formally, we need to define \emph{chain complexes} by first specifying a topology such as simplexes. Here we do not make this distinction and refer (co)chain and (co)chain complexes indistinguishably. 
\end{remark}

To define the cochain complex, it is convenient to introduce an inner-product notation. We define this on $R$ though the extension to the multivariate case is straightforward. Let $f = \sum^{l-1}_{i=0} a_i x^i \in R$, then we define its conjugate by
\begin{align}\label{eq: conjugate-polynomial}
    \bar{f} = \sum^{l-1}_{i=0} a_i x^{-i} = \sum^l_{i=1} a_{l-i} x^i,
\end{align}
 and we denote 
\begin{align}\label{eq: inner-product-ring}
    \langle f, g \rangle_R : = \sum^{l-1}_{i=0} a_i b_i =:  [x^0]f \bar{g},
\end{align}
for $g = \sum^{l-1}_{i=0}b_i x^i$. Treating $f, g$ as coefficient vectors $a, b \in  \mathbb{F}^l_2$, we have that $\langle f, g \rangle_R = \langle a, b \rangle_{\mathbb{F}_2}$: this pairing is precisely the standard binary inner product of the coefficient vectors. This identification defines the isomorphism $f \mapsto \langle \cdot, f \rangle$. Hence, letting $\mathcal{C}^i := \{ \varphi: \mathcal{C}_i \rightarrow \mathbb{F}_2 \} = \operatorname{Hom}(\mathcal{C}_i, \mathbb{F}_2)$ be the $i$-cochain, we have $\mathcal{C}^i \cong \mathcal{C}_i$ and the cochain complex 
\begin{align}
    \mathcal{C}^t \xleftarrow{\delta^C_{t-1}} \mathcal{C}^{t-1} \leftarrow \cdots \leftarrow \mathcal{C}^1 \xleftarrow{\delta^C_0} \mathcal{C}^0, 
\end{align}
where the coboundary map $\delta^C_i$ is defined via identification by 
\begin{align}
    \langle f, \delta^C_i(g) \rangle_R = \langle \partial^C_{i+1}(f), g \rangle_R. 
\end{align}
We can write the boundary and coboundary maps more explicitly, i.e., as matrices, which requires us to introduce more structures. 

\begin{definition}[Left $R$-module]\label{def: R-module}
     A left $R$-module is an Abelian group $(M, +)$ equipped with a left action by $R$,
    \begin{align}
        R \times M \ni (r, a) \mapsto r \cdot a \in M,
    \end{align}
    such that the following conditions are satisfied.
    \begin{itemize}
         \item  $r \cdot(a+b)=r \cdot a+r \cdot b$,
\item $r \cdot 0=0$,
\item  $(r+s) \cdot a=r \cdot a+s \cdot a$,
\item  $r \cdot(s \cdot a)=(r s) \cdot a$,
\item $1 \cdot a=a$.
    \end{itemize}
    A right $R$-module is defined analogously through a right action $M \times R \rightarrow M$. Since our $R$ is commutative, the two notions coincide, and we simply refer to $R$-modules unless the side of the action matters, as in the balanced product.
\end{definition}

\begin{lemma}\label{lemma: lifted-cochain}
    Let $\mathcal{C}_{\bullet}$ be a $t$-dimensional chain complex over $R$ whose boundary maps $\partial^C_i \in \operatorname{Hom}(\mathcal{C}_i, \mathcal{C}_{i-1})$ are assumed to be matrices over $R$, for $i = 1, \cdots, t$. Then its cochain complex $\mathcal{C}^{\bullet}$ is given by the coboundary maps $\delta^C_{i-1} = (\overline{\partial^C_i})^T \equiv (\partial^C_i)^*$. 
    \begin{proof}
        Extend the pairing entrywise, $\langle u, v \rangle_R = \sum_k [x^0]\, u_k \bar{v}_k$. Then we compute
        \begin{align}
           \begin{aligned}
            & \langle \partial^C_i u, v \rangle_R
            = \sum_k \sum_j  [x^0] \left((\partial^C_i)_{kj} u_j\right) \bar{v}_k \\
            &= \sum_j \sum_k[x^0]\, u_j  (\partial^C_i)_{kj} \bar{v}_k \\
            &= \sum_j \sum_k [x^0]\, u_j\, (\partial^C_i)^T_{jk} \bar{v}_k \\
            &=: \sum_j \sum_k [x^0]\, u_j\, \overline{\delta^C_{i-1}}_{jk} \bar{v}_k
           \end{aligned}
        \end{align}
        where the third equality uses that conjugation is a ring automorphism of $R$. Hence, the defining relation of the coboundary map gives $\delta^C_{i-1} = (\overline{\partial^C_i})^T$.
    \end{proof}
\end{lemma}

The chain-complex formalism provides a convenient language for stabilizer codes on qubits, whose checks are given by two consecutive (co)boundary maps. We define the $i$th (co)homology group $H_i(\mathcal{C})$ of a chain complex $\mathcal{C}_{\bullet}$ ($\mathcal{C}^{\bullet}$) by
\begin{align}
    H_i(\mathcal{C}) := \ker \partial^C_i / \IM \partial^C_{i+1}; \quad H^{i+1}(\mathcal{C}) := \ker \delta^C_{i+1} / \IM \delta^C_{i}, 
\end{align}
for $i =0, \cdots, t-1$, with the conventions $\partial^C_0 := 0$ and $\delta^C_t := 0$; in particular, $H_0(\mathcal{C}) = \mathcal{C}_0/\IM \partial^C_1$ and $H^t(\mathcal{C}) = \mathcal{C}^t/\IM \delta^C_{t-1}$. It is easy to see that the $i$th (co)homology groups are indeed $R$-modules. 

\begin{definition}[Chain maps]
    Let $\mathcal{C}_{\bullet}$ and $\mathcal{D}_{\bullet}$ be two chains over $R$. A chain map from $\mathcal{C}_{\bullet}$ to $\mathcal{D}_{\bullet}$ is a collection of maps $(\varphi_0, \varphi_1, \cdots)$ such that $\varphi_i \partial^C_{i+1} = \partial^{D}_{i+1}\varphi_{i+1}$. 
\begin{equation}
    \begin{tikzcd}
	{\cdots } & {\mathcal{C}_{i+1}} & {\mathcal{C}_i } & {\mathcal{C}_{i-1}} & {\cdots } \\
	{\cdots } & {\mathcal{D}_{i+1}} & {\mathcal{D}_i} & {\mathcal{D}_{i-1}} & {\cdots }
	\arrow[from=1-1, to=1-2]
	\arrow["{\partial^C_{i+1}}", from=1-2, to=1-3]
	\arrow["{\varphi_{i+1} }", from=1-2, to=2-2]
	\arrow["{\partial^C_{i}}", from=1-3, to=1-4]
	\arrow["{\varphi_{i} }", from=1-3, to=2-3]
	\arrow[from=1-4, to=1-5]
	\arrow["{\varphi_{i-1} }", from=1-4, to=2-4]
	\arrow[from=2-1, to=2-2]
	\arrow["{\partial^D_{i+1}}", from=2-2, to=2-3]
	\arrow["{\partial^D_{i}}", from=2-3, to=2-4]
	\arrow[from=2-4, to=2-5]
\end{tikzcd}
\end{equation}
\end{definition}
The chain maps naturally induce maps for the homology groups from $H(\mathcal{C})$ to $H(\mathcal{D})$. To see this, it suffices to check that for $u \in \ker \partial^C_i$, then $\partial^D_i \varphi_i (u) = 0$ so that $\varphi_i(u) \in \ker \partial^D_i$. Similarly, for $u \in \IM \partial^C_{i+1}$, writing $u = \partial^C_{i+1}(w)$ gives $\varphi_i(u) = \partial^D_{i+1}(\varphi_{i+1}(w)) \in \IM \partial^D_{i+1}$. This checks well-definedness; the explicit action on the $i$th homology groups typically requires a more detailed understanding of the homology groups themselves, which is the central subject of this work in the case of lifted-product codes. 

A central notion for chain complexes is the weight of elements of an $R$-module. For a polynomial $f = \sum_{(i_1, \cdots, i_m)} a_{i_1 \cdots i_m} x_1^{i_1} \cdots x_m^{i_m} \in R$, we define its weight as the Hamming weight of its coefficient vector. Formally, following the notation of Refs.~\cite{panteleev2022asymptoticallygoodquantumlocally, Panteleev_2021}, we consider 
\begin{align}
    f \mapsto \mathbb{B}(f) = \vec{a}, 
\end{align}
so that $|f| = |\vec{a}|$. More generally, for a vector $u \in R^n$, $\mathbb{B}(u) \in \mathbb{F}^{ln}_2$ stacks the coefficient vectors of the entries $u_j$, and $|u| := |\mathbb{B}(u)|$. For a matrix $A \in R^{m_A \times n_A}$, $\mathbb{B}(A) \in \mathbb{F}^{l m_A \times l n_A}_2$ replaces each entry by the $l \times l$ circulant matrix of multiplication by that entry, so that $\mathbb{B}(Au) = \mathbb{B}(A)\mathbb{B}(u)$ and $\mathbb{B}(\bar{A}^T) = \mathbb{B}(A)^T$; the multivariate case is analogous. 

\begin{definition}[(Co)systolic distance]
    Let $\mathcal{C}_i$ ($\mathcal{C}^i$) be the $i$th (co)chain over $R$. The $i$th (co)systolic distance is given by
    \begin{align}
        d_{\operatorname{sys}}(i)
        &:= \min\{ |u|: u \in \ker \partial^C_i \setminus \IM \partial^C_{i+1} \}, \\
        d_{\operatorname{cosys}}(i)
        &:= \min\{ |u|: u \in \ker \delta^C_i \setminus \IM \delta^C_{i-1} \}.
    \end{align}
\end{definition}

 We are primarily concerned with CSS codes, which are $2$-dimensional chain complexes; when the context is clear, we refer to the code distance simply as the (co)systolic distance, without specifying the degree. 

We now introduce the notion of (co)boundary expansion, together with the auxiliary reduced distance. 

\begin{definition}[Reduced distance]
    Let $\varphi: M \rightarrow M'$ be a linear map over $R$. The reduced distance of $u \in M$ with respect to $\varphi$ and $N \subseteq M$ is defined as 
    \begin{align}
        d_{\varphi}(u, N) = \min_{v \in N} |\varphi(u + v)|,
    \end{align}
    where $\varphi$ should be taken as the identity map when omitted. 
\end{definition}

\begin{definition}[(Co)boundary expanding]
    Let $\mathcal{C}_{\bullet}$ be a chain complex over $R$. We say that $\mathcal{C}_{\bullet}$ is $i$th (co)boundary expanding with (co)isoperimetric constant $h_i$ if, for all $u \in \mathcal{C}_i$,
    \begin{align}
        |\partial^C_i u| \geq h_i d(u, \IM \partial^C_{i+1}). 
    \end{align}
\end{definition}
Note that this definition is not normalized, in contrast to those used in standard high-dimensional-expander contexts~\cite{kalachev2023twosidedrobustlytestablecodes, dinur2025expansionhigherdimensionalcubicalcomplexes}. A famous example of (co)boundary expansion is the Cheeger constant of a graph. Graphs can be presented as chain complexes, with a slight abuse of notation, as follows. In our setting, we simply say that the chain $\mathcal{C}_{\bullet}$ is boundary expanding if it is $i$th boundary expanding, where $\mathcal{C}_i$ encodes the qubits. 

\begin{definition}[Graph chain]\label{def: graph-chains}
    Let $\mathcal{C}_{\bullet}$ be a $2$-dimensional chain, where we associate 
    \begin{itemize}
    \item (Binary field). $\mathcal{C}_2 = \mathbb{F}_2 = \{0, 1 \}$. 
    \item (Vertex space). $\mathcal{C}_1$ is the space of vertices. 
    \item (Edge space). $\mathcal{C}_0$ is the space of edges.
    \item (Cycle space). $\mathcal{C}_{-1}$ is the space of cycles (faces). 
\end{itemize}
\begin{equation}
    \begin{tikzcd}
	{\mathcal{C}_2} & {\mathcal{C}_1} & {\mathcal{C}_0} & {\mathcal{C}_{-1}}
	\arrow["{\partial^C_2}", from=1-1, to=1-2]
	\arrow["{\partial^C_1}", from=1-2, to=1-3]
	\arrow["{\partial^C_0}", from=1-3, to=1-4]
\end{tikzcd},
\end{equation}
where $\IM \partial^C_1 = \ker \partial^C_0$ by construction and $\partial^C_2(0) = 0$, $\partial^C_2(1) = \mathbf{1}_V$, the all-ones vector on $\mathcal{C}_1$. This definition applies purely to graphs. Throughout, we assume the graph is connected, so that the chain is exact at $\mathcal{C}_1$: $\ker \partial^C_1 = \IM \partial^C_2 = \{0, \mathbf{1}_V \}$. 
\end{definition}
It is then clear that the Cheeger constant is the first isoperimetric constant $h_1$. Note that, throughout, we use chains instead of cochains, which differs from the standard treatment. We can further extend the notion of (co)boundary expansion using chain maps: let $\varphi_i$ be the degree-$i$ component of a chain map from $\mathcal{C}_{\bullet}$ to $\mathcal{D}_{\bullet}$; then the boundary expansion with respect to $\varphi_i$ is defined by 
\begin{align}
     |\partial^C_i u| \geq h_i d_{\varphi_i}(u, \IM \partial^C_{i+1}). 
\end{align}

\subsection{Product chains and complexes}\label{app-sec: product-chains-complexes}

Higher-dimensional chain complexes can be constructed using products, such as the cubical complexes that underpin the code constructions of extensive families of quantum LDPC codes~\cite{bravyi2013homologicalproductcodes, TillichZemor2009HypergraphProduct, panteleev2022asymptoticallygoodquantumlocally, Panteleev_2021, dinur2025expansionhigherdimensionalcubicalcomplexes}. Another convenient and equivalent framework is that of balanced product~\cite{Breuckmann2021BalancedProductQuantumCodes}, which generalizes the homological product over finite fields and rings. 

\begin{definition}[Balanced product~\cite{Breuckmann2021BalancedProductQuantumCodes}]\label{def: balanced-product}
    Let $G$ be a finite group and $\mathbb{F}_2[G]$ be its group algebra. Let $M$ be a right and $N$ a left $\mathbb{F}_2[G]$-module; equivalently, $M$ and $N$ are $\mathbb{F}_2$-vector spaces carrying a right and a left $G$-representation, $\rho_M: G \rightarrow \mathrm{GL}(M, \mathbb{F}_2)$ and $\rho_N: G \rightarrow \mathrm{GL}(N, \mathbb{F}_2)$, where we write $u g := \rho_M(g)(u)$ and $g v := \rho_N(g)(v)$. Denote the $\mathbb{F}_2$-Kronecker product by $M \otimes_{\mathbb{F}_2} N$, and let
    \begin{align}
        \begin{aligned}
        T_G := \mathrm{span}_{\mathbb{F}_2}\{&u g \otimes_{\mathbb{F}_2} v + u \otimes_{\mathbb{F}_2} g v : \\
        &u \in M,\ v \in N,\ g \in G\}
        \end{aligned}
    \end{align}
    be the subspace of balancing relations. Then the balanced product between $M$ and $N$ is the quotient
    \begin{align}
        M \boxtimes_G N := (M \otimes_{\mathbb{F}_2} N) / T_G,
    \end{align}
    so that $u g \otimes_{\mathbb{F}_2} v \sim u \otimes_{\mathbb{F}_2} g v$ for all $g \in G$, where $[\cdot]$ denotes the equivalence class after the quotient. 
\end{definition}

Equivalently, the balancing relation can be written as $u g \otimes_{\mathbb{F}_2} g^{-1}v \sim u \otimes_{\mathbb{F}_2} v$. Modules over a field are vector spaces, and we refer to them as modules for consistency.

\begin{example}[Balanced product of vector spaces]
    Let $V$ be a right and $W$ a left $G$-representation over $\mathbb{F}_2$. Then the balanced product is spanned by the classes of simple tensors,
    \begin{align}
        V \boxtimes_G W
        = \mathrm{span}_{\mathbb{F}_2}\{ [v \otimes_{\mathbb{F}_2} w] : v \in V,\ w \in W\},
    \end{align}
    with the identification
    \begin{align}
        [v g \otimes_{\mathbb{F}_2} w]
        = [v \otimes_{\mathbb{F}_2} gw].
    \end{align}
    Since $T_G$ is a linear subspace, the quotient $V \boxtimes_G W = (V \otimes_{\mathbb{F}_2} W)/T_G$ remains a vector space. 
\end{example}

\begin{definition}[Tensor product over $R$]\label{def: tensor-product-over-R}
    Let $R$ be any commutative ring. We define the tensor product $\otimes_R$ for any two $R$-modules, $M \otimes_R N$, by the following conditions. For all $r \in R$, $m \in M$, and $n \in N$
    \begin{enumerate}
        \item $(m_1 + m_2) \otimes_R n = m_1 \otimes_R n + m_2 \otimes_R n$,
        \item $m \otimes_R (n_1 + n_2) = m\otimes_R n_1 + m \otimes_R n_2$,
        \item $(rm) \otimes_R n = m \otimes_R (rn)$.
    \end{enumerate}
\end{definition}

\begin{lemma}[Tensor product over the group algebra as a balanced product]\label{lemma: balanced-product-isomorphism-R-linear-tensor}
    Let $G$ be a finite group and $R=\mbb{F}_2[G]$. Let $M$ and $N$ be a right and a left $R$-module, respectively. Then there is a natural $\mbb{F}_2$-linear isomorphism
    \begin{align}
        s:M\otimes_R N &\longrightarrow M\boxtimes_G N, &
        m\otimes_R n &\longmapsto [m\otimes_{\mbb{F}_2}n].
    \end{align}
    \begin{proof}
        By Definitions~\ref{def: balanced-product} and~\ref{def: tensor-product-over-R}, the map is $\mbb{F}_2$-linear and well-defined because
        \begin{align}
            s(mg\otimes_R n)
            =[mg\otimes_{\mbb{F}_2}n]
            =[m\otimes_{\mbb{F}_2}gn]
            =s(m\otimes_Rgn).
        \end{align}
        It is surjective by the definition of the balanced product. To prove injectivity, let $v\in M\otimes_RN$ satisfy $s(v)=0$, and choose a representative $\widetilde v\in M\otimes_{\mbb{F}_2}N$. Then $\widetilde v\in T_G$, where $T_G$ is the subspace of balancing relations defined above. Every generator of $T_G$ maps to zero in $M\otimes_RN$ because
        \begin{align}
            mg\otimes_Rn+m\otimes_Rgn=0.
        \end{align}
        Hence, $v=0$, so $s$ is injective and therefore an isomorphism.
    \end{proof}
\end{lemma}

\begin{definition}[Balanced product over chains]
    Let $\mathcal{C}_{\bullet}$ and $\mathcal{D}_{\bullet}$ be two chains with a linear right (respectively left) action of a group $G$. We define the balanced product of the two chains by 
    \begin{align}
        \begin{aligned}
            &(\mathcal{C} \boxtimes_{G} \mathcal{D})_{p,q} = \mathcal{C}_p \boxtimes_G \mathcal{D}_q, \\
            &\partial^v = \partial^C \boxtimes_G \mathrm{1}_{D}; \quad \partial^h = \mathrm{1}_{C}\boxtimes_G \partial^D
        \end{aligned}
    \end{align}
    which form the commuting square of the double complex 
    \begin{equation}
        \begin{tikzcd}
	{\mathcal{C}_p \boxtimes_{G}\mathcal{D}_q} & {\mathcal{C}_{p} \boxtimes_{G}\mathcal{D}_{q-1}} \\
	{\mathcal{C}_{p-1} \boxtimes_{G}\mathcal{D}_q} & {\mathcal{C}_{p-1} \boxtimes_{G}\mathcal{D}_{q-1}}
	\arrow["{\partial^h}", from=1-1, to=1-2]
	\arrow["{\partial^v}"', from=1-1, to=2-1]
	\arrow["{\partial^v}", from=1-2, to=2-2]
	\arrow["{\partial^h}"', from=2-1, to=2-2],
\end{tikzcd}
    \end{equation}
    The total complex is $\mathrm{Tot}_n(\mathcal{C} \boxtimes_G \mathcal{D}) := \bigoplus_{p+q=n} \mathcal{C}_p \boxtimes_G \mathcal{D}_q$ with boundary $\partial = \partial^v + \partial^h$, acting on classes as
    \begin{align}
        \partial[z \otimes z'] =  [\partial^C z \otimes z'] + [z \otimes \partial^D z'].
    \end{align}
\end{definition}

In this view, the lifted-product code is a special instance of the balanced product, as established by Lemmas~\ref{lemma: balanced-product-isomorphism-R-linear-tensor} and~\ref{lemma: binarized-tensor-balanced-chain}. 

\begin{lemma}\label{lemma: binarized-tensor-balanced-chain}
     Let $\mathcal{A}_{\bullet}$ and $\mathcal{B}_{\bullet}$ be two chain complexes over $R = \mathbb{F}_2[G]$ for some finite group $G$. Let $\mathcal{C}_{\bullet}:=\mbb B(\mathcal{A}_{\bullet})$ and $\mathcal{D}_{\bullet}:=\mbb B(\mathcal{B}_{\bullet})$ be the corresponding binarized chains, that is, $\mathcal{C}_i = \mbb B(\mathcal{A}_i)$ and $\mathcal{D}_i = \mbb B(\mathcal{B}_i)$ with the induced boundary maps
\begin{align}
    \mathcal{C}_{i+1} \xrightarrow{\mbb B(\partial^A_{i+1})} \mathcal{C}_{i}; \quad \mathcal{D}_{i+1} \xrightarrow{\mbb B(\partial^B_{i+1})} \mathcal{D}_{i}
\end{align}
for all $i \geq 0$. Then, we have 
\begin{align}
    H_k(\mathcal C \boxtimes_G \mathcal D) \cong H_k(\mathcal A \otimes_R \mathcal B).
\end{align}
\begin{proof}
    We have that $H_i(\mathcal{C}) \cong H_i(\mathcal{A})$, since the binarization $\mathbb{B}$ defines a chain isomorphism. Moreover, the degreewise isomorphisms of Lemma~\ref{lemma: balanced-product-isomorphism-R-linear-tensor} commute with the boundary maps $\partial^v$ and $\partial^h$, and hence assemble into an isomorphism of the total complexes. Taking homology yields the claim. 
\end{proof}
\end{lemma}

\subsection{Basics on classical error-correcting codes}\label{app-sec: classical-error-correcting-codes}

We first give some basic notation relevant to the precise formulation in the following. Throughout this subsection, $\mathcal{C}$ denotes a classical linear code; it is not to be confused with the chain complexes $\mathcal{C}_{\bullet}$ of Appendix~\ref{app-sec: basic-chain-complex} or the hypergraph cycle space appearing in the surgery constructions. We always work in characteristic $2$; that is, $q = 2^s$ is a power of $2$. 

\begin{definition}
   [Linear code] A linear code $\mathcal{C}$ of parameter $[n, k]_{\mathbb{F}_q}$ is a $k$-dimensional subspace of $\mathbb{F}^n_q$. 
\end{definition}

A central notion of a code $\mathcal{C}$ is the distance. We define the \emph{Hamming weight} of a vector $u \in \mathbb{F}^n_q$ as the number of positions at which it is nonzero. We say an $[n, k]_{\mathbb{F}_q}$ code $\mathcal{C}$ has distance $d$ (denoted $[n, k, d]_{\mathbb{F}_q}$) if the minimal Hamming weight of a nonzero codeword of $\mathcal{C}$ is $d$. We also introduce the following notation.

\begin{definition}[Standard form]\label{def: standard-form}
     For any $k$-dimensional linear code $\mathcal{C}$ over $\mathbb{F}^n_q$, generator matrix $G$ and parity check matrix $H$ can take the standard form
	\begin{align}\label{eq: classical-code-systematic-form}
		G & =\left(\begin{array}{ll}
			J^T & I_k
		\end{array}\right), \\
		H & =\left(\begin{array}{ll}
			I_{n-k} & J
		\end{array}\right),
	\end{align}
	where $J \in \mathbb{F}^{(n-k) \times k}_q$.
\end{definition}

\begin{definition}[Information set] \label{def: information-set}
   An information set $I \subseteq [n]$ of $\mathcal{C}$ is a set of cardinality $|I| = k$ on which the column restriction $c \mapsto c|_{I}$ is injective on $\mathcal{C}$; equivalently, it is a minimum-cardinality set such that $c|_{I} = 0$ implies $c = 0$. 
\end{definition}

Given an information set $I$ of $\mathcal{C}$, the \emph{systematic generator matrix with respect to $I$} is the unique generator matrix whose column restriction to $I$ is the identity $I_k$; Eq.~\eqref{eq: classical-code-systematic-form} is the special case where $I$ consists of the last $k$ coordinates.

\begin{lemma}
    Let $\mathcal{C}$ be a classical code with distance $d$. For any index set $T \subseteq[n]$ with $|T| \geq n-d+1$, there exists an information set $I \subseteq T$ such that the following hold:

    \begin{enumerate}
        \item $|I|=k=\operatorname{dim} \mathcal{C}$.
        \item For any codeword $c \in \mathcal{C}$, if $\left.c\right|_I=\mathbf{0}$, then $c$ is the trivial codeword (i.e., the all-zeros vector of length $n$).
        \item The restriction of the parity check matrix $H$ to the complement of $I$ denoted $\left.H\right|_{\bar{I}}$ is full-rank. Hence, the information set serves as a $k$-puncture set for the parity check matrix $H$.
    \end{enumerate}

\begin{proof}
    For the first statement, consider the generator matrix $G \in \mathbb{F}_q^{k \times n}$. We claim that any $\left.G\right|_T$ must have full-rank $k$. Suppose not, then there must be a nonzero codeword $c$ with $c|_T = 0$, supported entirely on $\bar{T}$. However, $|\bar{T}|<d$, so it cannot support any nonzero codeword. Hence, by row reduction and column swaps, we can always find $k$ pivots for $G$, so that $G$ is of the systematic form
\begin{align}
    G=\left(J^T \mid I_k\right), 
\end{align}
for a $k$-by- $k$ identity matrix and some $k$-by- $(n-k)$ matrix $J^T$. The last $k$ coordinates of the permuted code form an information set; since the pivot columns lie in $T$, undoing the column swaps yields an information set $I \subseteq T$. For the second statement, if $\left.c\right|_I=\mathbf{0}$, then clearly $c$ cannot be generated by $G$ unless $c$ is the all-zero vector. Lastly, denoting the complement of $I$ by $\bar{I}$, we show that $\left.H\right|_{\bar{I}}$ has full rank, i.e., $\operatorname{rk}(\left.H\right|_{\bar{I}}) = n-k$. This follows from the fact that any vector whose support is entirely contained in $\bar{I}$ must correspond to a distinct syndrome under $H$. Suppose, for contradiction, that two such vectors $u$ and $v$ satisfy $H u=H v$. Then their difference $u-v$ lies in the code $\mathcal{C}$, since $H(u-v)=0$. However, $u-v$ is also supported entirely on $\bar{I}$, which contradicts the fact that $I$ is an information set: no nonzero codeword can be zero on all positions in $I$. Therefore, all such vectors produce distinct syndromes, implying that the columns of $\left.H\right|_{\bar{I}}$ are linearly independent. Hence, $\left.H\right|_{\bar{I}}$ has full rank $n-k$, and $I$ is a $k$-puncture set for the parity check matrix $H$.
\end{proof}
\end{lemma}

In what follows, we consider two classical code modifications, called augmentation and puncture.

\begin{definition}[Puncture and augmentation]\label{def: puncture-augmentation}
     Let $H \in \mathbb{F}^{(n-k) \times n}_q$ be the parity check matrix of the code $\mathcal{C}$ and $I \subseteq [n]$ be an information set. Let $\kappa \subseteq [n]$ be a column subset whose complement satisfies $\bar{\kappa} \subseteq I$. The \emph{$\bar{\kappa}$-puncture} of $\mathcal{C}$ is the code with parity check matrix $H|_{\kappa}$, the column restriction of $H$ onto $\kappa$. Let $H_{\eta} \in \mathbb{F}^{\eta \times n}_q$ be a matrix whose rows are mutually linearly independent and supported exclusively on the information set, $H_{\eta}|_{\bar{I}} = 0$. The \emph{$\eta$-augmentation} of $\mathcal{C}$ is the code with the row-augmented parity check matrix
    \begin{align}
      H' =  \begin{pmatrix}
        H \\
        H_{\eta}
    \end{pmatrix}.
    \end{align}
\end{definition}

These operations belong to a general class of code modifications (see, for example, the review~\cite{hall2015codingnotes}), and we show below that they are distance-preserving. Note that we depart from the notation of Ref.~\cite{hall2015codingnotes} in favor of a more intuitive presentation. 

\begin{lemma}
    Applied to an $[n, k, d]$ code $\mathcal{C}$, the augmentation and puncture operations result in a modified code $\mathcal{C}'$ with dimension $k'$ and distance $d'$ satisfying $k' \leq k$ and $d' \geq d$. 
    \begin{proof}
        For the $\eta$-augmentation, every codeword of $\mathcal{C}'$ satisfies the checks of $H$, so $\mathcal{C}' \subseteq \mathcal{C}$ is a subcode: $k' \leq k$, and every nonzero codeword of $\mathcal{C}'$ is a nonzero codeword of $\mathcal{C}$, so $d' \geq d$. For the $\bar{\kappa}$-puncture, a vector $u \in \mathbb{F}^{\kappa}_q$ satisfies $H|_{\kappa} u = 0$ if and only if its extension by zeros on $\bar{\kappa}$ is a codeword of $\mathcal{C}$; hence the punctured code is the restriction to $\kappa$ of the subcode $\{c \in \mathcal{C}: \operatorname{supp}(c) \subseteq \kappa \}$. In particular, every nonzero codeword of $\mathcal{C}'$ extends to a nonzero codeword of $\mathcal{C}$ of the same weight, so $d' \geq d$. Since $\bar{\kappa} \subseteq I$ and the column restriction onto $I$ is a bijection on $\mathcal{C}$, the constraints $c|_{\bar{\kappa}} = 0$ impose $|\bar{\kappa}|$ independent conditions, so $k' = k - |\bar{\kappa}| \leq k$.
    \end{proof}
\end{lemma}

Note that the $\bar{\kappa}$-puncture above coincides with \emph{shortening} at $\bar{\kappa}$ in the standard coding-theory terminology~\cite{hall2015codingnotes}; standard puncturing, which deletes columns without the information-set constraint, may decrease the distance.

\section{Basis characterization of lifted-product codes}\label{app-sec: lp-basis-characterization}
We detail explicitly the basis characterization for the lifted-product code. We now give an elementary treatment of finding the logical basis for a general class of lifted-product codes, defined on $R = \mathbb{F}_2[x]/(x^l + 1)$ (Appendix~\ref{app-sec: premliminaries}).

\subsection{General formulation via homological algebra}\label{subsec: general-formulation-homological-algebra}
Let $M$ be a $R$-module, then we say that $M$ is \emph{finitely generated} if there exists a finite subset $E \subseteq M $ such that every element in $M$ can be written as $R$-linear combinations with elements in $E$. In this way, we denote $M = \mathrm{span}_R\{e \in E \}$. Note that we $E$ might not be a basis set, which stands a key contrast to the case of modules over a field (vector spaces).
\begin{definition}[Free and projective module]\label{def: free-projective-module}
    An $R$-module $M$ is free if there exist a finite index set $I \subseteq \mathbb{N}$ and a set $E = \{ e_i \}_{i\in I}$ such that, for every $m \in M$, there exist unique coefficients $\{r_i \in R \}_{i \in I}$ with $m = \sum_{i\in I} r_i e_i$. In this case, we refer to $E$ as the basis set. We say that a $R$-module $P$ is projective if it is a direct summand of a free $R$-module.

\end{definition}
\begin{definition}[Simple, semi-simple modules ]
      Let $R$ be any ring. A $R$-module $M$ is said to be \emph{simple} if there is no non-zero proper submodule. A $R$-module is called \emph{semi-simple} if $M$ is isomorphic to a direct sum of a family of simple modules.
\end{definition}
Furthermore, a ring $R$ is \emph{simple} if, viewed as a module over itself, it is a simple module. A ring $R$ is \emph{semi-simple} if it is a semi-simple module over itself. A useful result we will use is the following; see, for example, Ref.~\cite{weibel1994}.
\begin{proposition}\label{proposition: semi-simple-iff-projective}
A (left)-$R$ module $C$ is projective if and only if every short exact sequence
\begin{equation}
    \begin{tikzcd}
	0 & A & B & C & 0
	\arrow[from=1-1, to=1-2]
	\arrow["\psi", from=1-2, to=1-3]
	\arrow["\phi", from=1-3, to=1-4]
	\arrow[from=1-4, to=1-5]
\end{tikzcd}
\end{equation}
splits for (left)-$R$-modules $A, B$. That is, there exists a $R$-linear \emph{section} $s: C \rightarrow B$ with $\phi \circ s = \mathrm{1}_{C}$ such that $B \cong \psi(A) \oplus s(C)$. Equivalently, there exists a $R$-linear \emph{retraction} $r: B \rightarrow A$ such that $r \circ \psi = \mathrm{1}_{A}$. A Ring $R$ is semi-simple if and only if every its (left)-$R$ module is projective.
\end{proposition}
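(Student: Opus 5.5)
The statement has two parts: the characterization of projective modules by splitting of short exact sequences, and the characterization of semisimple rings as those over which every module is projective. I will prove the first part directly and derive the second from it together with the fact that submodules of semisimple modules are direct summands.

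\textbf{Projective implies splitting.} Suppose $C$ is projective, so $C\oplus C'\cong F$ for a free module $F$ with basis $\{e_i\}$. Free modules have the lifting property: given a surjection $\phi:B\to C$ and a map $g:F\to C$, choose preimages $b_i\in\phi^{-1}(g(e_i))$ and extend $R$-linearly. This extension is well defined because the coefficients of an element of $F$ in the basis are unique. Direct summands inherit the property: extend $g:C\to C$ by zero on $C'$, lift to $F\to B$, and restrict to $C$. Applying this with $g=\mathrm{1}_C$ gives a section $s$ with $\phi\circ s=\mathrm{1}_C$. Then every $b\in B$ decomposes as $b=(b-s\phi(b))+s\phi(b)$. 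The first term lies in $\ker\phi=\psi(A)$, and $\psi(A)\cap s(C)=0$ because $\phi$ is injective on $s(C)$. Hence $B\cong\psi(A)\oplus s(C)$. The retraction is $r=\psi^{-1}\circ(\mathrm{1}_B-s\phi)$; the converse construction $s(c)=b-\psi r(b)$ for any $b\in\phi^{-1}(c)$ shows that the section and retraction formulations are equivalent.

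\textbf{Splitting implies projective.} Let $F$ be the free module on the set $C$ and take the canonical surjection $\phi:F\to C$. This gives the short exact sequence $0\to\ker\phi\to F\to C\to0$. If it splits, then $C\cong s(C)$ is a direct summand of $F$, hence projective. (For a finitely generated $C$ one can use a finite generating set instead, matching Definition~\ref{def: free-projective-module}.)

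\textbf{The ring statement.} First suppose $R$ is semisimple. The key step is that every submodule $N$ of a semisimple module $M=\bigoplus_\alpha S_\alpha$ is a direct summand. By Zorn's lemma, take a maximal index set $J$ such that $N\cap\bigoplus_{J}S_\alpha=0$. If some $S_\beta$ were not contained in $N\oplus\bigoplus_J S_\alpha$, simplicity of $S_\beta$ would force its intersection with that sum to be zero, and $J$ could be enlarged. So $M=N\oplus\bigoplus_J S_\alpha$. Any module $C$ is a quotient $F/K$ with $F$ free; $F$ is a direct sum of copies of $R$ and is therefore semisimple. Thus $K$ is a direct summand, so $C$ is a summand of $F$, hence projective. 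Conversely, suppose every module is projective, and let $I\subseteq R$ be any left ideal. Then $0\to I\to R\to R/I\to0$ splits by the first part, so every submodule of $R$ is a direct summand. For $R=R_l$ this suffices: $R$ is finite, so by induction on $\mathbb{F}_2$-dimension, repeatedly splitting off a minimal nonzero submodule (which is simple) writes $R$ as a direct sum of simples. I expect the Zorn's lemma complement argument to be the only nontrivial step; everything else is routine diagram chasing. In our setting with odd $l$ the statement can also be checked concretely via Remark~\ref{remark: square-free-odd-l}: $R_l$ is a finite product of fields, which is manifestly semisimple.
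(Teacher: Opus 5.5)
Your proof is correct. The paper gives no argument of its own for this proposition; it simply defers to Weibel~\cite{weibel1994}. What you have written is therefore a self-contained version of the standard textbook proof: the lifting property of free modules and their summands, the decomposition $b=(b-s\phi(b))+s\phi(b)$, and the Zorn's-lemma complement for submodules of a semisimple module.

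\textbf{Where you are less general than the cited result.} Your converse, ``every module projective $\Rightarrow$ $R$ semisimple'', is carried out only for finite rings such as $R_l$, by induction on $\mathbb{F}_2$-dimension. For an arbitrary ring you would add one step. Suppose every submodule of $M$ is a direct summand, and take $0\neq x\in M$. By Zorn, choose a submodule $L\subseteq Rx$ maximal among those not containing $x$. By the modular law $L$ is a summand of $Rx$, so $Rx=L\oplus L'$ with $L'\cong Rx/L$ simple. Hence every nonzero submodule contains a simple one. The socle of $M$ is then a summand whose complement has no simple submodule, so that complement is zero and $M$ is semisimple.

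For the paper this restriction is harmless. The proposition is only used for $R_l$ with odd $l$, where semisimplicity is immediate anyway from Remark~\ref{remark: square-free-odd-l}.

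\textbf{A definitional point.} As you partly noticed, the paper's Definition~\ref{def: free-projective-module} allows only finite bases. So the ``if'' direction, and the claim that \emph{every} module over a semisimple ring is projective, must be read in one of two ways. Either use the usual, not necessarily finitely generated, notion of free module, as your ring argument implicitly does. Or restrict to finitely generated modules, in which case your argument goes through verbatim with $F$ finitely generated.
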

\begin{theorem}
    [K{\"u}nneth theorem for a semi-simple ring]\label{thm: kunneth-theorem-semi-simple-with-balanced}
     Let $\mathcal{A}_{\bullet}$ and $\mathcal{B}_{\bullet}$ be chain complexes of right and left modules, respectively, over $R = \mathbb{F}_2[G]$ for any finite, Abelian group $G$. If $R$ is semi-simple or, equivalently, if the characteristic of $\mathbb{F}_2$ does not divide the order of $G$, then
\begin{equation}\label{expr: kunneth-semi-simple-ring}
    \begin{tikzcd}
		{\bigoplus_{p+q=k} H_p (\mathcal A) \otimes_R H_q(\mathcal B)} & {H_k(\mathcal A \otimes_R \mathcal B)}
		\arrow["\cong", from=1-1, to=1-2].
\end{tikzcd}
\end{equation}
Let $\mathcal{C}_{\bullet}:=\mbb B(\mathcal{A}_{\bullet})$ and $\mathcal{D}_{\bullet}:=\mbb B(\mathcal{B}_{\bullet})$ be the corresponding binarized chains, as in Lemma~\ref{lemma: binarized-tensor-balanced-chain}. Then
\begin{align}\label{expr: kunneth-semi-simple-balanced-product}
    \bigoplus_{p+q=k} H_p(\mathcal C) \boxtimes_G H_q(\mathcal D)
    \cong H_k(\mathcal C \boxtimes_G \mathcal D).
\end{align}
\begin{proof}Did 
    Expression~\eqref{expr: kunneth-semi-simple-ring} follows from
    \begin{align}
        H_p(\mathcal A)\otimes_R H_q(\mathcal B)
        \cong H_p(\mathcal C)\boxtimes_G H_q(\mathcal D).
    \end{align}
    Combining this identification with expression~\eqref{expr: kunneth-semi-simple-ring}
\end{proof}
\end{theorem}

\begin{remark}
    This can be proved beyond the Abelian group for any finite group $G$ of odd order; see Lemma~19 from Ref.~\cite{Breuckmann2021BalancedProductQuantumCodes}.
\end{remark}

We now give an explicit example illustrating the balanced-product route; the componentwise computation for the same code via algebraic roots appears in the corresponding example of Section~\ref{subsec: basis-algebraic-root-theory}.
\begin{example}[Generalized bicycle codes Ref.~\cite{webster2025explicitconstructionlowoverheadgadgets}]
    We look into the code examples $\LP_{31}(A, B)$ for $A = 1 + x^6 + x^{15}$ and $B =1 + x^5 + x^7 $ in Ref.~\cite{webster2025explicitconstructionlowoverheadgadgets}. In this case, denote 
\begin{align}
    \begin{aligned}
        c = &x^{30} + x^{29} + x^{27} + x^{26} + x^{23} + x^{22} + x^{21} \\
        &+ x^{16} + x^{15} + x^{13} + x^{11} + x^{8} + x^{4} + x^{2} + x + 1.
    \end{aligned}
\end{align}
Then we can compute that (note $c^2 =c$)
\begin{align}
    G_A=G_B=c, \qquad E_A=E_B=\begin{pmatrix}c&cx&cx^2&cx^3&cx^4\end{pmatrix}^T.
\end{align}
Hence, according to Theorem~\ref{thm: main-LP-basis-characterization}
\begin{align}
    \left\{\mbb B(c) \oplus 0, \ldots, \mbb B(cx^4) \oplus 0,
    0 \oplus \mbb B(c), \ldots, 0 \oplus \mbb B(cx^4)\right\}.
\end{align}
We now describe how to use the balanced product to find the basis set above. The set
\begin{align}
    \left\{\mbb B(c),\mbb B(cx),\mbb B(cx^2),\mbb B(cx^3),\mbb B(cx^4)\right\}
\end{align}
is a basis of both $\ker_{\mbb F_2}\mbb B(A)$ and $\ker_{\mbb F_2}\mbb B(B)$. Similarly, the chosen representatives $\mbb B(E_A)=\mbb B(E_B)$ give the basis
\begin{align}
    \left\{\mbb B(c),\mbb B(cx),\mbb B(cx^2),\mbb B(cx^3),\mbb B(cx^4)\right\}
\end{align}
of both $\coker_{\mbb F_2}\mbb B(A)$ and $\coker_{\mbb F_2}\mbb B(B)$. The $\mathbb{F}_2$-tensor product therefore gives $50$ logical operators before applying the balanced quotient relation
\begin{align}
    \mbb B(ux^s) \otimes_{\mathbb{F}_2} \mbb B(v) \sim \mbb B(u) \otimes_{\mathbb{F}_2} \mbb B(x^sv).
\end{align}
In this case, 
\begin{align}
    \mbb B(cx^i) \otimes_{\mathbb{F}_2} \mbb B(cx^j) \oplus 0
    \sim \mbb B(cx^{i+j}) \otimes_{\mathbb{F}_2} \mbb B(c) \oplus 0,
\end{align}
which reduces to $5$ linearly independent basis elements (To see this, note that $cx^{i+j}$ is always given by $\mathbb{F}_2$-linear combinations of $\{c, cx, cx^2, cx^3, cx^4\}$ , which is proved in Lemma~\ref{lemma: basis-dual-basis-from-finite-field-ring}), and similarly for the right sector. This reduce the number of logical basis operators to $10$, matching the computation through earlier sections. 
\end{example}

K{\"u}nneth theorem for a semi-simple ring indicates that if we wish to compute the first homology group, we need to characterize $H_1(A) = \ker_R A$ and $H_0(A) = \coker_R A$ and similarly for $\mathcal{B}_{\bullet}$.
Recall from Lemma~\ref{lemma: balanced-product-isomorphism-R-linear-tensor} that $M \otimes_R N \cong M \boxtimes_G N$; for odd $|G|$, this isomorphism admits the canonical $\mathbb{F}_2$-linear section $s([m \otimes_{\mathbb{F}_2} n]) = \sum_{g \in G} mg \otimes_R g^{-1}n$.
\begin{remark}[$\ker_R A$ and $\coker_R A$ as projective modules]
     Let $A: R^n \rightarrow R^m $. In general $\ker_R A$ and $\coker_R A$ are not free modules since there could exist some element $a \in R$ which annihilates $\ker A$ or $\coker A$. Indeed, for our discussion, we primarily concern whether $\ker A$ and $\coker A$ are projective. For $\ker A$, we have the following exact sequence:
    \begin{equation}\label{exseq: ker-A}
        \begin{tikzcd}
	0 & {\ker_R A} & {R^n} & {\IM_R A} & 0
	\arrow[from=1-1, to=1-2]
	\arrow["{\iota }", from=1-2, to=1-3]
	\arrow["A", from=1-3, to=1-4]
	\arrow[from=1-4, to=1-5]
\end{tikzcd},
    \end{equation}
     where $\iota$ is the embedding map. Suppose there exists a retraction $r: R^n \rightarrow \ker_R A$, then $R^n = \ker_R r \oplus  \ker_R A$. Recall that $\coker_R A := R^m / \IM_R A$, which can be represented by the short exact sequence:
\begin{equation}\label{exseq: coker-A}
    \begin{tikzcd}
	0 & {\IM_R A} & {R^m} & {\coker_R A} & 0
	\arrow[from=1-1, to=1-2]
	\arrow["\psi", from=1-2, to=1-3]
	\arrow["\phi", from=1-3, to=1-4]
	\arrow[from=1-4, to=1-5].
\end{tikzcd}
\end{equation}
Suppose the exact sequence~\eqref{exseq: coker-A} splits so that there exists a section $s$ and let $\psi$ be inclusion map. It is clear that $R^m = \IM_R A \oplus \IM_R s$ and $\IM_R s \cong \coker_R A$. In this case, $\coker_R A$ is a projective $R$-module.
\end{remark}
\begin{example} Let $A = (1+x)$ and $R = \mathbb{F}_2[x]/(x^l+1)$. We systematically study the following examples.
\begin{enumerate}
        \item ($\coker_R A$). Suppose we choose $s([1]) =1$ and $s([0]) =0$. This is not valid $R$-linear map since $s((1+x)[1]) = (1+x) s([1]) = 1+x = s([0]) =0$ which leads to a contradiction. The only consistent choice is that $s([1]) = e$ such that $(1+x) e =0$ so that $e = 1+x + \cdots + x^{l-1}$. If $l$ is odd, then $\mathrm{gcd}(1+x, e) =1$, hence, it defines a valid section. If $l$ is even, by the above, we can only choose $s([1]) = e$. But we have that $\phi \circ s = 1_{\coker_R A}$ so that $\phi \circ s([1]) = \phi(e) =[0] \neq [1]$, since in this case $x+1 | e$. Hence, we conclude that there is no splitting in the even case.
        \item ($\ker_R A$). For $l$ is odd, then we could define a $R$-linear map  $r: R \rightarrow \ker_R A$ by $r(1) = e \cdot 1$ for $e = 1+ x+ \cdots + x^{l-1}$ . Note that in this case, $\ker_R A= (e)$ and $r$ is clearly a retraction; hence $R = (e) \oplus (1+x)$, which can always be given by the Bezout's identity Lemma~\ref{lemma: bezout-lemma}. However, suppose $l$ is even and say $l=2$. Then $\ker_R A = (1+x)$ which requires us to define that $r(1) = (1+x) \cdot 1$ but in this case $r(1+x) = 0$ which is a contradiction.
    \end{enumerate}

\end{example}
Since $\ker_R A$ is a $R$-submodule of $R^n$, we can also denote a set of generators $G_{A}$ such that $\mathrm{span}_R(G_{A}) = \ker_R A$. A subtle issue is that $\coker_R A$ is not directly a $R$-submodule of $R^m$, hence, we need the following characterization for finding its generating set in $R^m$.
\begin{lemma}\label{lemma: section-maps-coker-R}
    Let $\mathcal{A}_{\bullet}: \mathcal{A}_1 = R^{n_A} \xrightarrow{A} \mathcal{A}_0 = R^{m_A}$ be the base chain of Definition~\ref{def: main-2d-LPAB}. The following are equivalent.
    \begin{enumerate}
        \item There exists a finite set $E_{A} = \{ e^{(i)}_{\mathcal{A}_0} \} \subset R^{m_A}$ such that any element $u \in R^{m_A}$ can be written as
        \begin{align}\label{eq: splitting-condition-1}
            \sum_{i} r_i e^{(i)}_{\mathcal{A}_0}  + Ah = u
        \end{align}
        for some coefficients $r_i \in R$ and some $h \in R^{n_A}$. Furthermore, any formal sum satisfies
        \begin{align}\label{eq: splitting-condition-2}
            \sum_{i} r_i e^{(i)}_{\mathcal{A}_0} \in \IM_R A \iff  \sum_{i} r_i e^{(i)}_{\mathcal{A}_0}  = 0.
        \end{align}
        Let $M_E:=\rs_R(E_{A})\subseteq R^{m_A}$ be the space of formal sums. Then $M_E$ is an $R$-module and
        \begin{align}
            R^{m_A}=\IM A\oplus M_E, \qquad M_E\cong\coker_R A.
        \end{align}
        \item The exact sequence~\eqref{exseq: coker-A} splits; that is, there exists an $R$-linear section $s: \coker_R A \rightarrow R^{m_A}$ with $\phi \circ s = \mathrm{1}_{\coker_R A}$.
    \end{enumerate}
    \begin{proof}
        It is easy to see from $(1)$ to $(2)$. We just need to show $M_E$ is a module, i.e. closure. With respect to  $r \in R$, $rm \in M_E $ if $m \in M_E$, and $m, m' \in M_E$, $m + m' \in M_E$, since if $m + m' \in \IM_R A$, then $m +m' =0$ satisfying the constraint Eq.~\eqref{eq: splitting-condition-2}. From $(2)$ to $(1)$, choose a generating set $\{[v_i]\}$ of $\coker_R A$ and set $e^{(i)}_{\mathcal{A}_0} := s([v_i])$, with $E_{A} := \{ e^{(i)}_{\mathcal{A}_0} \}$. Then it is clear that $M_E := \IM_R s$ is a $R$-submodule (since $\coker_R A$ is a $R$-module and the map $s$ is a $R$-linear homomorphism). Furthermore by $\phi \circ s = \mathrm{1}_{\coker_R A}$, we note that
        \begin{align}
            \phi |_{M_E}: M_E \rightarrow \coker_R A
        \end{align}
        is an isomorphism. By surjection, Eq.~\eqref{eq: splitting-condition-1} is satisfied. By injection, we have that
        \begin{align}
            \phi \left( \sum_{i } r_i e^{(i)}_{\mathcal{A}_0} \right) = [0]  \implies \sum_i r_i e^{(i)}_{\mathcal{A}_0} =0.
        \end{align}
        That satisfies the condition Eq.~\eqref{eq: splitting-condition-2}.
    \end{proof}
\end{lemma}

With this understanding, we can give a first, intuitive basis characterization of the lifted-product codes, given by Theorem~\ref{thm: main-LP-basis-characterization}. 

\begin{lemma}[First basis characterization of LP code]\label{lemma: first-LP-basis-characterization}
    Let $R = \mathbb{F}_2[x]/(x^l + 1)$ with odd $l$, and let the $2D$ lifted-product code $\LP_l(A, B)$ be given in Definition~\ref{def: main-2d-LPAB}. Then the first homology group $H_1(\mc{M})$, corresponding to $Z$-type logical operators, admits a minimal generator matrix of the form
    \begin{align}\label{eq: generator-LZ-LP}
        L_{Z}^M =   G_{A} \otimes_R E_{B} \bigoplus E_{A} \otimes_R G_{B},
    \end{align}
    where, for $C \in \{A, B\}$:
    \begin{enumerate}
        \item $\rs_R(G_C) = \ker_R C$ and $\rs_{\mathbb{F}_2}(E_C) \cong \coker_R C$ in the sense of Lemma~\ref{lemma: section-maps-coker-R}.
        \item The nonzero rows of $L^M_Z$ generate $H_1(\mc{M})$ over $\mathbb{F}_2$.
    \end{enumerate}
    Similarly, the first cohomology group $H^1(\mc{M})$, corresponding to $X$-type logical operators, admits a minimal generator matrix of the form
    \begin{align}\label{eq: generator-LX-LP}
        L_{X}^M =  E_{A^*} \otimes_R G_{B^*} \bigoplus G_{A^*} \otimes_R E_{B^*},
    \end{align}
    where we take $\rs_{\mathbb{F}_2}(G_{C^*}) = \ker_R C^*$ and $\rs_R(E_{C^*}) \cong \coker_R C^*$ in the sense of Lemma~\ref{lemma: section-maps-coker-R}, for $C \in \{A, B\}$, and the nonzero rows of $L^M_X$ generate $H^1(\mc{M})$ over $\mathbb{F}_2$. Consequently, the nonzero rows of $\mathbb{B}(L^M_Z)$ and $\mathbb{B}(L^M_X)$ generate the binary $Z$- and $X$-type logical operators.
\end{lemma}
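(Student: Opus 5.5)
The plan is to derive Lemma~\ref{lemma: first-LP-basis-characterization} from the Künneth isomorphism of Theorem~\ref{thm: kunneth-theorem-semi-simple-with-balanced}, made explicit through the splittings of Lemma~\ref{lemma: section-maps-coker-R}. For odd $l$, Remark~\ref{remark: square-free-odd-l} shows that $R$ is a finite product of fields and hence semi-simple. By Proposition~\ref{proposition: semi-simple-iff-projective}, both short exact sequences \eqref{exseq: ker-A} and \eqref{exseq: coker-A} therefore split for $C\in\{A,B\}$. This gives $R$-module decompositions
\[
R^{n_C}=\ker_R C\oplus K_C, \qquad R^{m_C}=\IM_R C\oplus M_{E_C},
\]
where $M_{E_C}=\IM_R s\cong\coker_R C$. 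I take $G_C$ to be any finite $R$-generating set of $\ker_R C$, and $E_C$ to be the image under the section $s$ of an $\mathbb{F}_2$-basis of $\coker_R C$. Condition (1) then holds by construction.

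First I show that every row of $L^M_Z$ is a cycle. Take $g\in\ker_R A$ and $e\in M_{E_B}\subseteq\mathcal{B}_0$, and place $g\otimes_R e$ in the sector $\mathcal{A}_1\otimes_R\mathcal{B}_0$. Applying $\partial^M_1$ gives $(A\otimes I)(g\otimes e)=Ag\otimes e=0$. The symmetric argument handles $e\otimes g'$ in $\mathcal{A}_0\otimes_R\mathcal{B}_1$, using $Bg'=0$.

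Next I identify $H_1(\mathcal{M})$ with $H_1(\mathcal{A})\otimes_R H_0(\mathcal{B})\oplus H_0(\mathcal{A})\otimes_R H_1(\mathcal{B})$ via Theorem~\ref{thm: kunneth-theorem-semi-simple-with-balanced}. The Künneth map sends $[g]\otimes[e]$ to $[g\otimes e]$. Since $\ker_R A\to H_1(\mathcal A)$ is the identity and $M_{E_B}\to\coker_R B$ is an isomorphism, the classes of $G_A\otimes_R E_B$ and $E_A\otimes_R G_B$ span the image. This gives surjectivity, i.e., the generating claim (2).

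For minimality and $\mathbb{F}_2$-independence of the nonzero rows, I use the splittings to write
\[
\mathcal{M}_1=(\ker A\otimes M_{E_B})\oplus(M_{E_A}\otimes\ker B)\oplus(\text{complement}).
\]
I then check that $\IM\partial^M_2=\IM(I\otimes B)\oplus\IM(A\otimes I)$ meets the first two summands trivially, which is where Eq.~\eqref{eq: splitting-condition-2} enters. The key observation is that $(\ker A\otimes R^{m_B})\cap\IM(I\otimes B)=\ker A\otimes\IM B$, which holds by flatness, since every module over a semi-simple ring is flat. Combined with $M_{E_B}\cap\IM B=0$, this shows that a nonzero combination of rows of $G_A\otimes E_B$ cannot be a boundary. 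Reducing any redundancy among the $\mathbb{F}_2$-rows (for instance, when $G_A$ is an over-complete $R$-generating set) to a basis then yields a minimal generator matrix.

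The cohomology statement follows by applying the same argument to the dual complex. By Lemma~\ref{lemma: lifted-cochain}, the coboundaries are $(\partial^M_2)^*$ and $(\partial^M_1)^*$, and the dual complex is the lifted product built from $A^*$ and $B^*$ with the roles of the two factors exchanged, which explains the reversed order in Eq.~\eqref{eq: generator-LX-LP}. Binarization is then immediate because $\mathbb{B}$ is a chain isomorphism.

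I expect the main obstacle to be the intersection step $\ker A\otimes\IM B=(\ker A\otimes R^{m_B})\cap\IM(I\otimes B)$ together with the cross-sector mixing from $\partial^M_2$. I would handle it componentwise: using the CRT idempotents $c_{b^{(i)}}$, every tensor over $R$ decomposes into tensors over fields $R_{(i)}$, where these are ordinary linear-algebra statements about direct-sum decompositions.
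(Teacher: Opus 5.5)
Your proposal is correct, and its backbone is the paper's argument: the Künneth isomorphism over the semisimple ring $R$, together with the identity $\rs_{\mathbb{F}_2}(G_A\otimes_R E_B)=\ker_R A\otimes_R M_{E_B}$. That identity comes from moving the $R$-coefficient of each generator of $\ker_R A$ across the tensor into $M_{E_B}$, which is an $R$-submodule $\mathbb{F}_2$-spanned by the rows of $E_B$. This absorption step is the whole content of the paper's proof, so write it out. Without it, your sentence that the classes ``span the image'' is only an $R$-span statement, whereas item (2) is over $\mathbb{F}_2$.

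Beyond the paper, you add two things: (a) an explicit proof that no nonzero combination of rows is a boundary, and (b) a minimality claim.

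Part (a) is correct but redundant. The Künneth map is already an isomorphism, and $\ker_R A\otimes_R M_{E_B}$ is a direct summand of $\mathcal{M}_1$ via the splittings. Also, $\IM\partial^M_2$ is only \emph{contained} in $\IM(I\otimes_R B)\oplus\IM(A\otimes_R I)$, not equal to it, because $\partial^M_2$ maps into both sectors at once. The inclusion is all your argument uses, so this does not break anything.

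Part (b) goes further than the paper, whose proof never addresses minimality. Counter-Example~\ref{counterexample: first-LP-basis-minimality} shows the nonzero rows can be $\mathbb{F}_2$-dependent already in $\mathcal{M}_1$, not merely modulo boundaries. Pruning therefore does give a minimal generating set, but one that is no longer literally of tensor form. To keep the tensor form with independent nonzero rows, choose $G_A$ and $E_B$ compatibly with the idempotent decomposition $R=\bigoplus_i c_{b^{(i)}}R$, as in Theorems~\ref{thm: formal-computation-kernel-cokernel} and~\ref{thm: formal-LP-basis-characterization}.

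Finally, for $H^1$ the statement gives the $\mathbb{F}_2$-spanning role to $G_{C^*}$ and the $R$-generating role to $E_{C^*}$. That is the reverse of what your ``same argument'' produces. Either convention works, because absorption only needs one factor whose $\mathbb{F}_2$-span is an $R$-submodule, and $\ker_R C^*$ always is one.
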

\begin{proof}
    We prove the claim for the summand $G_A \otimes_R E_B$; the other summands follow analogously. Let $M_E := \rs_{\mathbb{F}_2}(E_B)$, an $R$-submodule with $M_E \cong \coker_R B$ by Lemma~\ref{lemma: section-maps-coker-R}, and denote by $g^{(i)}_A$ and $e^{(j)}_B$ the rows. First, $\operatorname{span}_{\mathbb{F}_2}\{ g^{(i)}_A \otimes_R e^{(j)}_B \} \subseteq \ker_R A \otimes_R M_E$. Conversely, it suffices to consider a simple tensor $m \otimes_R n$ with $m = \sum_i u_i g^{(i)}_A$, $u_i \in R$, and $n \in M_E$; then
    \begin{align}
        m \otimes_R n = \sum_i g^{(i)}_A \otimes_R u_i n,
    \end{align}
    and $u_i n \in M_E$ since $M_E$ is an $R$-submodule, so that $m \otimes_R n \in \operatorname{span}_{\mathbb{F}_2}\{ g^{(i)}_A \otimes_R e^{(j)}_B \}$. Hence $\rs_{\mathbb{F}_2}(G_A \otimes_R E_B) = \ker_R A \otimes_R M_E \cong \ker_R A \otimes_R \coker_R B$, and the K\"unneth theorem, Lemma~\ref{thm: kunneth-theorem-semi-simple}, gives $\rs_{\mathbb{F}_2}(L^M_Z) = H_1(\mc{M})$ for any such choice of generating sets.
\end{proof}

Though this presentation is clean, it does not capture the inherent physics of the logical operators.
In particular, we cannot count the logical dimension from the above characterization. The following example illustrates that the nonzero rows of $G_A \otimes_R E_B$ need not be $\mathbb{F}_2$-linearly independent, even if $G_A$ and $E_B$ are minimal generating sets.

\begin{counterexample}[Minimality]\label{counterexample: first-LP-basis-minimality}
    Let $l = 3$ and $R = \mathbb{F}_2[x]/(x^3+1)$. Take
    \begin{align}
        A = \left( 1+x+x^2, \; 1+x+x^2 \right), \quad B = \left( \begin{array}{c} 1+x \\ 1 \end{array} \right).
    \end{align}
    Then $G_A$ with rows $u_1 = (1,\, 1+x+x^2)$ and $u_2 = (0,\, x+x^2)$ satisfies $\rs_R(G_A) = \ker_R A$ with the minimal number of generators, and $E_B$ with rows $\{(1,0), (x,0), (x^2,0)\}$ is a minimal $\mathbb{F}_2$-basis of $M_E = R \oplus 0 \cong \coker_R B$, where $R^2 = \IM_R B \oplus M_E$. All six rows of $G_A \otimes_R E_B$ are nonzero, yet
    \begin{align}
        \sum^{2}_{m=0} u_2 \otimes_R (x^m, 0) = 0,
    \end{align}
    since $(x+x^2)(1+x+x^2) = x(1+x^3) = 0$. Hence the nonzero rows are $\mathbb{F}_2$-linearly dependent: minimality of the generating sets alone does not imply linear independence of the nonzero rows.
\end{counterexample}

\subsection{Field decompositions}\label{sec: field-decompositions}
  It is easy to know that, by rank-nullity theorem, the exact sequence also splits in vector spaces on finite fields. Furthermore, such $E_{A}$ can be constructed by choosing information sets, and $M_E$ is naturally a vector space. By proposition~\ref{proposition: semi-simple-iff-projective}, we can always find a set $E_{A}$ satisfying conditions Eq.~\eqref{eq: splitting-condition-1} and Eq.~\eqref{eq: splitting-condition-2}, which is primary interests in subsequent sections (as well as $G_{A}$). The choices of $G_{A}$ and $E_{A}$ need not to be unique; they are given by $R$-linear combinations of each other. It is, however, desirable to find such $G_A$ and $E_{A}$ which are minimal. In what follows, we introduce the main mathematical machinery in computing explicitly the minimal generating set $G_A$ and $E_A$ through the \emph{field decomposition}. Recall that we can express
\begin{align}
    x^l +1 = \prod^t_{i=1} (b^{(i)})^{m_i},
\end{align}
for some integer multiplicities $m_i$. Application of the Chinese remainder theorem (Lemma~\ref{lemma: chinese-reminder})
\begin{align}
    R = \mathbb{F}_2[x]/(x^l +1) \cong \frac{\mathbb{F}_2[x]}{((b^{(1)})^{m_1})} \times \cdots \times \frac{\mathbb{F}_2[x]}{((b^{(t)})^{m_t})}.
\end{align}
Suppose that $m_i =1$. Then we have that $\mathbb{F}_2[x]/(b^{(i)}) \cong \mathbb{F}_{q_i}$ for $q_i = 2^{\deg b^{(i)}}$.
\begin{example}
    For odd $l$, then each irreducible polynomial factor is multiplicity-free.  Hence, $R$ is semi-simple. The case in which $l$ is even is different since there always exists an irreducible polynomial factor that is has multiplicity.
\end{example}
Let us consider $x^l +1 = \prod^t_{i=1} (b^{(i)})^{m_i}$ where $b^{(i)}|x^l +1$ is an irreducible polynomial. By Bezout's identity and the fact that $R$ is a principal ideal ring (PIR) (Lemma~\ref{lemma: bezout-lemma}), there exist polynomials $r_i(x)$ and $s_i(x)$ such that
\begin{align}
    r_i(x)b^{(i)}(x) + \prod_{j \neq i} b^{(j)}(x)s_i(x) = \operatorname{gcd}(b^{(i)}, \prod_{j \neq i} b^{(j)}).
\end{align}
We work in the case that $\operatorname{gcd}(b^{(i)}, \prod_{j \neq i} b^{(j)}) = 1$, which holds automatically for odd $l$ by Remark~\ref{remark: square-free-odd-l}. In this case, we say that the central idempotent $c_{b^{(i)}}$ associated with $b^{(i)}$ is
\begin{align}\label{eq: central-idempotent}
    c_{b^{(i)}} := \prod_{j \neq i} b^{(j)}(x)s_i(x).
\end{align}
In particular, $c_{b^{(i)}} \equiv 1 \pmod{b^{(i)}}$ and $c_{b^{(i)}} \equiv 0 \pmod{b^{(j)}}$ for $j \neq i$. For the factor $1+x$, the central idempotent is $\chi := \sum_{m=0}^{l-1} x^m$, as in Section~\ref{sec: main-LP-basis}.
The central idempotents behave like projection operators:
\begin{lemma}\label{lemma: central-idempotent-property}
    Let $c_{b^{(i)}}$ be a central idempotent for the irreducible factor $b^{(i)}$. Then the following hold over $R$:
    \begin{itemize}
     \item $c_{b^{(i)}} b^{(i)} = 0$,
        \item $(c_{b^{(i)}})^2 = c_{b^{(i)}}$,
        \item $c_{b^{(i)}} c_{b^{(j)}} = 0$ for $i\neq j$,
        \item $\sum^t_{i=1} c_{b^{(i)}} = 1$ if $l$ is odd, or equivalently, $m_i =1$ for all $i=1, \cdots, t$.
    \end{itemize}
    Furthermore, there exists a natural isomorphism
    \begin{align}
         ( c_{b^{(i)}} ) \cong \mathbb{F}_2[x]/(b^{(i)}).
    \end{align}

    \begin{proof}
    The first condition is straightforward by inspection. By Eq.~\eqref{eq: central-idempotent}, let $g_i = \prod_{j \neq i} b^{(j)}(x)$.
       \begin{align}
           (c_{b^{(i)}})^2 = s^2_i g^2_i = s_i g_i (1 + r b^{(i)})  = s_i g_i = c_{b^{(i)}},
       \end{align}
       where we used the fact that $b^{(i)} c_{b^{(i)}} = x^l +1 = 0$. For two different indices, $c_{b^{(i)}} c_{b^{(j)}} = s_i g_i s_jg_j$. Note that $b^{(i)} | g_j$ and $b^{(j)} | g_i$, so that $x^l +1 | c_{b^{(i)}} c_{b^{(j)}}$, which implies the result. Finally, we show that $\sum^t_{i=1} c_{b^{(i)}} = 1$. Notice that $1 -\sum^t_{i=1} c_{b^{(i)}}$ must lie in every prime ideal generated by $b^{(i)}$, $i=1, \cdots, t$. However, since they are coprime, it necessarily follows that $1 -\sum^t_i c_{b^{(i)}}$ must be of the form $a (x^l +1) =0$. Finally, we show the natural isomorphism $\phi_i: \langle c_{b^{(i)}} \rangle_{R} \cong \mathbb{F}_{q_i}$, given by
       \begin{align}
           \phi_i: a c_{b^{(i)}} \mod x^l +1 \mapsto a  \mod b^{(i)},
       \end{align}
       which naturally preserves the equivalence class since $b^{(i)} | x^l +1$. Note that $\ker_R \phi_i = \{a \in c_{b^{(i)}} R : a \in ( b^{(i)} ) \} $. Due to the coprime assumption, it follows that $\ker_R \phi_i = \{0 \}$. The surjectivity follows from the definition.
    \end{proof}
\end{lemma}
\begin{remark}
    Note that for general $l$, there might be a case where $b^{(i)}$ has multiplicity $m_i > 1$. In this case, we could still define the central idempotent $c_{b^{(i)}}$ with respect to $(b^{(i)})^{m_i}$. The main caveat is that the decomposition with respect to the Chinese remainder theorem (CRT) gives $\mathbb{F}_2[x]/((b^{(i)})^{m_i})$, which is not a field rather a local ring.
\end{remark}
\begin{remark}
    We could show the isomorphism using techniques from homological algebra, which will be important in the following section. To see this, let us denote the ideal $I_i=(1-c_{b^{(i)}})R=(b^{(i)})$. Then we claim that we can build the following short exact sequences:
\begin{equation}
    \begin{tikzcd}
	0 & {I_i} & {R} & {c_{b^{(i)}}R} & 0
	\arrow[from=1-1, to=1-2]
	\arrow["{\varphi_i}", from=1-2, to=1-3]
	\arrow["{\psi_i}", from=1-3, to=1-4]
	\arrow[from=1-4, to=1-5].
\end{tikzcd}
\end{equation}
To see that this is indeed an exact sequence, first $\varphi_i$, given by the embedding, is naturally injective, and consequently $\psi_i$ is surjective. Furthermore, we have that $\ker_R \psi_i = \{h \in R: h c_{b^{(i)}} =0 \mod x^l +1 \}$. This holds if and only if $h$ contains the factor $b^{(i)}$, which implies that $h \in \IM_R \varphi_i$. The converse direction is straightforward to verify. The key observation is that the above exact sequence splits. In particular, we can construct a \emph{retraction} $r_i: R \rightarrow I_i$ with $r_i(h)$ given by the canonical projection. Then the standard result in homological algebra implies that the sequence splits as
\begin{align}
    R \cong I_i \oplus c_{b^{(i)}} R.
\end{align}
Note that we can also apply the canonical short exact sequence given by
\begin{equation}
    \begin{tikzcd}
	0 & {I_i} & {R} & {R/I_i} & 0
	\arrow[from=1-1, to=1-2]
	\arrow["{\alpha_i}", from=1-2, to=1-3]
	\arrow["{\gamma_i}", from=1-3, to=1-4]
	\arrow[from=1-4, to=1-5].
\end{tikzcd}
\end{equation}
Similarly, the short exact sequence splits and $R \cong I_i \oplus R / I_i$, which also concludes the result.
\end{remark}

\begin{lemma}\label{lemma: module-decomposition-odd-l}
Let $l$ be odd, and let $\pi_i: R \rightarrow c_{b^{(i)}} R$ be the orthogonal projection map, such that $\sum_i \pi_i=\operatorname{id}_{R}$ by the fourth property of Lemma~\ref{lemma: central-idempotent-property}. Let $M$ be any $R$-module. This induces the isomorphism
\begin{align}
    M \cong \bigoplus^t_{i=1} c_{b^{(i)}} M.
\end{align}
\begin{proof}
    Let $I_i=(1-c_{b^{(i)}})R=(b^{(i)})$. We can extend $\pi_i$ to the projection $\pi^M_i:M\rightarrow c_{b^{(i)}}M$ defined by $\pi^M_i(m)=c_{b^{(i)}}m$. This gives the short exact sequence
\[\begin{tikzcd}
	0 & {I_iM} & M & {c_{b^{(i)}}M} & 0
	\arrow[from=1-1, to=1-2]
	\arrow["{\alpha_i}", from=1-2, to=1-3]
	\arrow["{\pi_i^M}", from=1-3, to=1-4]
	\arrow[from=1-4, to=1-5].
\end{tikzcd}\]
Indeed, $\pi_i^M$ is surjective and $\ker_R\pi_i^M=I_iM$, since $c_{b^{(i)}}m=0$ if and only if $m=(1-c_{b^{(i)}})m$. Hence,
\begin{align}
    c_{b^{(i)}}M\cong M/I_iM.
\end{align}
Finally, $\sum_i c_{b^{(i)}}=1$ implies that every $m\in M$ can be written as $m=\sum_i c_{b^{(i)}}m$. Since $c_{b^{(i)}}c_{b^{(j)}}=0$ for $i\neq j$, this decomposition is direct, which proves the result.
\end{proof}
\end{lemma}
\begin{example}
    An important example is the kernel: let $A: R^{n_A} \rightarrow R^{m_A}$ and consider $M = \ker_R A$, an $R$-submodule of $R^{n_A}$. Then there exists an isomorphism
    \begin{align}
        \ker_R A \cong \bigoplus^t_{i=1} c_{b^{(i)}} \ker_R A,
    \end{align}
    which underlies the componentwise computation in Lemma~\ref{lemma: kernel-A-odd} and Lemma~\ref{lemma: cokernel-A-odd}.
\end{example}
\begin{example}
    Let $M = R^n_l$ and $N = R^m_l$, and let us denote by $\operatorname{Hom}_R(M, N)$ the space of $R$-module homomorphisms between $M$ and $N$. Then $\operatorname{Hom}_R(M, N)$ is also an $R$-module. Then there exists an isomorphism
    \begin{align}
        \operatorname{Hom}_R(M, N) \cong \bigoplus^t_{i=1} c_{b^{(i)}} \operatorname{Hom}_R(M, N).
    \end{align}
\end{example}
Recall the tensor product over $R$ from Definition~\ref{def: tensor-product-over-R}. We now compute it componentwise along the idempotent decomposition.
\begin{lemma}
     Let $M$ be any $R$-module, and let $c_{b^{(i)}}$ be the central idempotent with respect to a (square-free) irreducible factor $b^{(i)} |x^l +1$. Then we have that
    \begin{align}
            c_{b^{(i)}} M \cong M \otimes_R R/I_i \cong M / I_i M,
    \end{align}
    where $I_{i} = (b^{(i)})$ is the ideal generated by the (square-free) irreducible factor.
    \begin{proof}
        Let us work through these conditions. Denote the map
        \begin{align}
            \psi: c_{b^{(i)}} M \rightarrow M \otimes_R R/I_i; \quad \psi(c_{b^{(i)}}m) = m \otimes _R\bar{1},
        \end{align}
        where $\bar{1}$ denotes the identity of $R/I_i$. We first show that this map is well-defined. Suppose that if $c_{b^{(i)}} m = c_{b^{(i)}}m' $, then we have that $m -m' \in (1-c_{b^{(i)}})M = I_iM$, since by the construction of the central idempotent Eq.~\eqref{eq: central-idempotent} and Lemma~\ref{lemma: central-idempotent-property}, there exists some $u,u' \in R$ such that $m = b^{(i)} u + c_{b^{(i)}}m$ and $m' = b^{(i)} u' + c_{b^{(i)}}m$.  Denote $h = u - u'$, and  $\psi(c_{b^{(i)}}(m -m')) = (b^{(i)}h) \otimes _R1_{I_i} = h \otimes _R b^{(i)} = 0$. We then show it constitutes an isomorphism by explicitly characterizing its inverse. Define
        \begin{align}
            \phi: M \otimes_R R/I_i \rightarrow c_{b^{(i)}} M; \quad \quad \phi(m \otimes _R\bar{r}) = c_{b^{(i)}} m r,
        \end{align}
        where $\bar{r}$ is defined on $R/I_i$. This map is clearly well-defined and preserves the equivalence class. Furthermore, we have that
        \begin{align}
            \begin{aligned}
                &\phi \circ \psi(c_{b^{(i)}}m) = \phi(m \otimes _R\bar{1}) = c_{b^{(i)}}m = id_{c_{b^{(i)}}M}(c_{b^{(i)}}m), \\
                & \psi \circ \phi(m \otimes _R\bar{r}) = \psi(c_{b^{(i)}} m r) = c_{b^{(i)}} m \otimes _R\bar{r} \\
                &= m \otimes _R \overline{c_{b^{(i)}} r} = m \otimes_R \bar{r} = \operatorname{id}_{M \otimes_R R/I_i}(m \otimes _R\bar{r}),
            \end{aligned}
        \end{align}
        where the last line uses $c_{b^{(i)}} \equiv 1 \pmod{b^{(i)}}$.
        Furthermore, it is straightforward to see that $M \otimes_R R/I_i \cong M/I_i M$ via the isomorphism $m \otimes _R\bar{r} = \bar{r} m \otimes _R\bar{1}  \mapsto \bar{r}m$.
    \end{proof}
\end{lemma}
The above statement indicates the following.
\begin{lemma}
    Let $l$ be odd so that $R$ is semi-simple. Let $M$ and $N$ be two $R$-modules. Then there exists a canonical decomposition
    \begin{align}
        M \otimes_R N \cong \bigoplus^t_{i=1} c_{b^{(i)}} (M \otimes_R N).
    \end{align}
\begin{proof}
    Apply Lemma~\ref{lemma: module-decomposition-odd-l} to the $R$-module $M \otimes_R N$.
\end{proof}
\end{lemma}
Given $A: R^{n_a}_{l} \rightarrow R^{m_a}_l$ and $B: R^{n_b}_l \rightarrow R^{m_b}_l$, in what follows we compute the tensor product $A \otimes_R B$, viewed as a tensor product of $1$-chain complexes. We show how the fact that $R$ is semi-simple implies a drastically simplified version of the K\"unneth spectral sequence.
\begin{lemma}[K\"unneth theorem for semi-simple ring]\label{thm: kunneth-theorem-semi-simple}
     Let $l$ be odd so that $R$ is semi-simple. Let $\mathcal{C}_{\bullet}$ and $\mathcal{D}_{\bullet}$ be two chain complexes over $R$. Then the K\"unneth spectral sequence collapses to the following isomorphism at the $k$th homology:
\begin{align}
    \bigoplus_{p+q=k} H_p (\mathcal{C}) \otimes_R H_q(\mathcal{D}) \cong H_k(\mathcal{C} \otimes_R \mathcal{D}).
\end{align}
Furthermore, we have an isomorphism
\begin{align}
    H_p(\mathcal{C}) \otimes_R H_q(\mathcal{D}) \cong  \bigoplus^t_{i=1} c_{b^{(i)}} H_p(\mathcal{C})\otimes_{R} H_q(\mathcal{D}),
\end{align}
where $c_{b^{(i)}}$ is the central idempotent with respect to the irreducible factor $b^{(i)}|x^l +1$.
\end{lemma}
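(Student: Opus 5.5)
The plan is to prove the two claims of Lemma~\ref{thm: kunneth-theorem-semi-simple} separately. The first claim is the K\"unneth collapse. The second is the idempotent decomposition of each K\"unneth summand.

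\textbf{Step 1: build the map.} I would define the natural cross-product map
\[
\times: \bigoplus_{p+q=k} H_p(\mathcal{C})\otimes_R H_q(\mathcal{D}) \to H_k(\mathcal{C}\otimes_R\mathcal{D}),\qquad [z]\otimes_R[z'] \mapsto [z\otimes_R z'].
\]
Well-definedness follows from the Leibniz rule $\partial(z\otimes z')=\partial z\otimes z' + z\otimes\partial z'$. If $z,z'$ are cycles, $z\otimes z'$ is a cycle. If $z=\partial w$, then $z\otimes z'=\partial(w\otimes z')$ is a boundary. The map respects the balancing relation $rz\otimes z'=z\otimes rz'$, so it descends to $\otimes_R$.

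\textbf{Step 2: prove $\times$ is an isomorphism via splittings.} Since $l$ is odd, $R$ is semi-simple (Remark~\ref{remark: square-free-odd-l} and Lemma~\ref{lemma: central-idempotent-property}). Hence every module is projective (Proposition~\ref{proposition: semi-simple-iff-projective}), and every short exact sequence splits. I would apply this to
\[
0\to Z_p\to \mathcal{C}_p\xrightarrow{\partial} B_{p-1}\to 0 \qquad\text{and}\qquad 0\to B_p\to Z_p\to H_p\to 0 .
\]
The splittings give a decomposition $\mathcal{C}_p \cong B_p\oplus H_p\oplus B'_{p-1}$, where $\partial$ maps $B'_{p-1}$ isomorphically onto $B_{p-1}$ and kills the other two summands. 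So $\mathcal{C}$ is isomorphic, as a chain complex, to $H(\mathcal{C})$ with zero differential plus a direct sum of two-term complexes $B'\xrightarrow{\cong}B$. Each two-term complex is contractible.

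\textbf{Step 3: tensor the decompositions.} The tensor product over $R$ is additive. I would decompose $\mathcal{C}$ and $\mathcal{D}$ as in Step 2 and write $\mathcal{C}\otimes_R\mathcal{D}$ as a direct sum of four types of pieces:
\begin{itemize}
\item $H(\mathcal{C})\otimes H(\mathcal{D})$, which has zero differential;
\item $H(\mathcal{C})\otimes(\text{contractible})$;
\item $(\text{contractible})\otimes H(\mathcal{D})$;
\item $(\text{contractible})\otimes(\text{contractible})$.
\end{itemize}
A tensor product with a contractible complex is contractible: if $\mathrm{id}=\partial h+h\partial$, then $h\otimes\mathrm{id}$ is a contracting homotopy for the product. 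So the last three types are acyclic. The homology of $\mathcal{C}\otimes_R\mathcal{D}$ is therefore exactly $\bigoplus_{p+q=k}H_p\otimes_R H_q$. One must also check that this identification agrees with $\times$; it does, because the $H$ summands embed as cycle representatives.

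This is the main obstacle, and it is only a bookkeeping matter: keeping the chosen sections consistent and verifying that the homotopies commute with the boundary signs. Signs are trivial in characteristic 2, so this simplifies.

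\textbf{Step 4: the idempotent decomposition.} Here $H_p(\mathcal{C})\otimes_R H_q(\mathcal{D})$ is an $R$-module. Applying Lemma~\ref{lemma: module-decomposition-odd-l} to it gives
\[
H_p(\mathcal{C})\otimes_R H_q(\mathcal{D}) \cong \bigoplus_i c_{b^{(i)}}\bigl(H_p(\mathcal{C})\otimes_R H_q(\mathcal{D})\bigr).
\]
This uses $\sum_i c_{b^{(i)}}=1$ and $c_{b^{(i)}}c_{b^{(j)}}=0$ for $i\neq j$. By $R$-bilinearity, $c(m\otimes n)=cm\otimes n$, so each summand can equally be written as $c_{b^{(i)}}H_p(\mathcal{C})\otimes_R H_q(\mathcal{D})$, which is the stated form.
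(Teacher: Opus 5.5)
Your proof is correct, but it takes a genuinely different route from the paper. The paper gives no chain-level argument for the first isomorphism. It appeals to the K\"unneth spectral sequence, whose $E^2$ page is built from the groups $\mathrm{Tor}^R_p$ of the homologies. In the remark following the lemma it notes that all higher $\mathrm{Tor}$ terms vanish over a semi-simple ring, so only the $\otimes_R$ row survives. The second isomorphism is obtained exactly as in your Step~4, by applying Lemma~\ref{lemma: module-decomposition-odd-l} to the $R$-module $H_p(\mathcal{C})\otimes_R H_q(\mathcal{D})$.

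Your splitting argument transplants the field-case proof to $R$. You write each complex as its homology plus contractible two-term pieces $B'\xrightarrow{\cong}B$, then tensor. It uses only the splitting of short exact sequences already recorded in Proposition~\ref{proposition: semi-simple-iff-projective}.

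What your route buys:
\begin{itemize}
\item It is self-contained, with no spectral sequence and no implicit flatness or boundedness hypotheses needed for convergence.
\item More usefully, it identifies the isomorphism concretely as the cross product $[z]\otimes_R[z']\mapsto[z\otimes_R z']$ on cycle representatives. This is exactly what Lemma~\ref{lemma: first-LP-basis-characterization} later relies on when it takes the rows $g_A\otimes_R e_B$ as logical representatives. The spectral-sequence citation only supplies an abstract isomorphism.
\end{itemize}
The paper's route, in turn, is a one-line citation. It also places the statement within the general picture the remark alludes to, namely even $l$, where $\mathrm{Tor}$ terms genuinely appear.

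One small point in Step~4. Bilinearity only shows that the image of the natural map $c_{b^{(i)}}H_p(\mathcal{C})\otimes_R H_q(\mathcal{D})\to H_p(\mathcal{C})\otimes_R H_q(\mathcal{D})$ equals $c_{b^{(i)}}\bigl(H_p(\mathcal{C})\otimes_R H_q(\mathcal{D})\bigr)$. To identify the two modules you also need this map to be injective. That holds because $c_{b^{(i)}}H_p(\mathcal{C})$ is a direct summand of $H_p(\mathcal{C})$, or equivalently because every $R$-module is flat.
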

\begin{remark}
    For a general ring $R$, the Kunneth theorem is not typically given by a short exact sequence. Rather, it is given by the spectral sequences of the torsion groups. In the special case where the ring is semi-simple, the torsion group always vanishes. We investigate precisely the sources from which the torsion arises for even lifting size $l$. For even $l$, there will be cases of repeated irreducible factors $b^{(i)}$, where the number of occurrences is referred to as the multiplicity $m_i$. Following the Chinese remainder theorem, Lemma~\ref{lemma: chinese-reminder}, we can always associate the central idempotent $c_{(b^{(i)})^{m_i}}$ to $(b^{(i)})^{m_i}$. By Lemma~\ref{lemma: module-decomposition-odd-l}, if $\mathcal{C}_{\bullet}$ and $\mathcal{D}_{\bullet}$ are two chain complexes over $R$, the $k$th homology group $H_k(\mathcal{C} \otimes_{R} \mathcal{D})$ admits the decomposition
 \begin{align}
     H_k(\mathcal{C} \otimes_R \mathcal{D}) \cong \bigoplus^t_{i=1} c_{(b^{(i)})^{m_i}}   H_k(\mathcal{C} \otimes_R \mathcal{D}),
 \end{align}
with $x^l +1 = \prod^t_{i=1} (b^{(i)})^{m_i}$ and multiplicities $m_i$. A full treatment of the even-$l$ case via the K\"unneth spectral sequence is beyond the scope of this work.
\end{remark}

\subsection{Basis characterization using algebraic root theory}\label{subsec: basis-algebraic-root-theory}

The use of homological algebra is useful in understanding the dimensions of the relevant homology groups. In what follows, we utilize algebraic root theory to derive the basis construction, and we refer to Ref.~\cite{roman1997field} for mathematical background. We can write $x^l + 1 = (x - \beta_1) \cdots (x - \beta_{l})$, where we associate a splitting field $E$ in which these algebraic roots $\{\beta_1, \cdots, \beta_l \}$ are defined~\cite{roman1997field}. The polynomial $x^l + 1$ is called \emph{separable} over $\mathbb{F}_2$ if these roots are all distinct in $E$. Hence, $x^l + 1$ is separable if and only if $l$ is odd (Remark~\ref{remark: square-free-odd-l}). It is known that every finite family of polynomials has a splitting field. The structure of the splitting field is quite involved and largely unnecessary to state in our case. Instead, we focus on an irreducible polynomial factor $b^{(i)}$ with $m_i=1$ so that we can associate a central idempotent $c_{b^{(i)}}$. Note that we have the Galois field extension
\begin{align}\label{galois-central-idempotents-isometry}
    \mathbb{F}_{q_i} \cong \mathbb{F}_2[x]/(b^{(i)}) \cong c_{b^{(i)}}R,
\end{align}
 for $q_i = 2^{\deg b^{(i)}}$ and, hence, by the Chinese remainder theorem, we can decompose $R$ as a direct sum of finite fields given by the central idempotents. Furthermore, by Lemma~\ref{lemma: module-decomposition-odd-l}, any module over $R$ can be factorized in this way. In what follows, we explicitly construct the isomorphism~\eqref{galois-central-idempotents-isometry} --- an isometry once the trace pairing is introduced --- and provide the necessary mathematical background.

\begin{proposition}\label{prop: roots-splitting-field-properties}
    Let $l$ be odd and let $E$ be the splitting field of $x^l+1$ with algebraic roots $G = \{ \beta_1, \beta_2, \cdots, \beta_l\}$. Then the following hold:
    \begin{itemize}
    \item $G$ is a cyclic group within $E$. 
    \item (Frobenius automorphism). Let $\sigma(\gamma) = \gamma^2$ denote the Frobenius squaring map for $\gamma \in G$. Then $\sigma$ decomposes $G$ into $t$ orbits $G_1, \cdots, G_t$, where $t$ is the number of irreducible polynomial factors of $x^l +1$. 
    \item The roots contained in each orbit $G_i$ for $i=1, \cdots, t$ are precisely the algebraic roots of each irreducible polynomial factor $b^{(i)}$. 
    \end{itemize}
   \begin{proof}
Since \(\mathrm{char}(\mathbb F_{2})=2\), we have \(x^{l}+1=x^{l}-1\). Because \(l\) is odd,
\begin{align}
    \gcd(x^{l}+1,(x^{l}+1)')
    &:= \gcd(x^{l}+1,lx^{l-1}) \\
    &= \gcd(x^{l}+1,x^{l-1}) = 1.
\end{align}
So \(x^{l}+1\) has \(l\) distinct roots in \(E\) (separable: see Ref.~\cite{roman1997field}). Thus \(G\) is precisely the group of \(l\)-th roots of unity in \(E\). As \(E\) is a finite field, \(E^\times\) is cyclic; hence every finite subgroup of \(E^\times\), in particular \(G\), is cyclic. Now let \(\sigma(\gamma)=\gamma^{2}\), then
\begin{align}
    \sigma(\gamma)^{l}=(\gamma^{2})^{l}=(\gamma^{l})^{2}=1,
\end{align}
so \(\sigma(G)=G\). Hence \(G\) is partitioned into Frobenius orbits. Fix \(i\), and let \(\beta\) be a root of \(b^{(i)}\). Since \(b^{(i)}\in\mathbb F_{2}[x]\), from \(b^{(i)}(\beta)=0\) we obtain
\begin{align}
    b^{(i)}(\beta^{2^{m}})=b^{(i)}(\beta)^{2^{m}}=0
\qquad (m\ge 0),
\end{align}
so that the Frobenius orbit of \(\beta\) is contained in the root set of \(b^{(i)}\). Let \(r\) be the size of this orbit, and consider the \emph{orbit polynomial}
\begin{align}
    m_\beta(x)=\prod_{j=0}^{r-1}\bigl(x+\beta^{2^{j}}\bigr).
\end{align}
The coefficient of each power of $x$ in $m_\beta$ is an elementary symmetric polynomial. Since $\sigma$ respects addition and multiplication, applying $\sigma$ to a coefficient evaluates the same symmetric polynomial on the squared orbit elements; and since $\beta^{2^{r}} = \beta$, squaring merely permutes the orbit cyclically, so every coefficient $c$ of $m_\beta$ satisfies $\sigma(c) = c$. Thus $c^2 = c$, i.e., $c(c+1) = 0$, and since $E$ is a field, $c \in \{0, 1\} = \mathbb{F}_2$. Hence, $m_\beta \in \mathbb{F}_2[x]$, and $\beta$ is a root of $m_\beta$. Since \(b^{(i)}\) is the minimal polynomial of \(\beta\) over \(\mathbb F_{2}\), we must have $\deg b^{(i)}\le r$. On the other hand, the orbit is contained in the root set of \(b^{(i)}\), so \(r\le \deg b^{(i)}\). Therefore $ r=\deg b^{(i)}$, and the Frobenius orbit of \(\beta\) is exactly the full set of roots of \(b^{(i)}\). Since the factors \(b^{(i)}\) are distinct and pairwise coprime, their root sets are disjoint, and since
\begin{align}
    x^{l}+1=\prod_{i=1}^{t} b^{(i)}(x),
\end{align}
these root sets exhaust all roots of \(x^{l}+1\). Hence the Frobenius orbits are in bijection with the irreducible factors \(b^{(1)},\dots,b^{(t)}\), proving the claim.
\end{proof}
\end{proposition}

With the use of the splitting field $E$, we introduce the notion of Lagrange interpolation, which is central to the study of algebraic geometry (AG) codes~\cite{hall2015codingnotes, stichtenoth2009algebraic}. Write the following \emph{Vandermonde} matrix, 
\begin{align}
    V = \left(\begin{array}{cccc}
1 & \beta_1 & \cdots & \beta_1^{l-1} \\
\vdots & \vdots & \cdots & \vdots \\
1 & \beta_l & \cdots & \beta_l^{l-1}
\end{array}\right).
\end{align}
Then this matrix has determinant $\operatorname{det}(V) = \prod_{i < j} (\beta_j + \beta_i)$, which is non-zero in the splitting field $E$. A famous result states the following.

\begin{theorem}[Lagrange interpolation]\label{thm: lagrange-interpolation}
     Write $p(x) := x^l +1$ and let $l$ be odd. Any $h \in R$ is uniquely determined by its values at the roots and can be represented as
    \begin{align}
        h(x) = \sum_{\gamma \in G} h(\gamma) L_\gamma(x),
    \end{align}
    where we have that 
    \begin{align}\label{eq: lagrange-root-function}
        L_\gamma(x) = \prod_{\beta \neq \gamma} \frac{x+\beta}{\gamma + \beta} = \frac{p(x)}{(x+\gamma) p'(\gamma)} = \sum^{l-1}_{t=0}x^t \gamma^{-t}.
    \end{align}
    Note that the interpolation coefficients live in the splitting field $E$; for $h \in R$, the combination on the right-hand side again has coefficients in $\mathbb{F}_2$.
\end{theorem}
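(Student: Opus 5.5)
The plan is to treat $R$ as a space of functions on the root set $G$. Because $l$ is odd, $x^l+1$ is separable (Remark~\ref{remark: square-free-odd-l}), so $G$ consists of $l$ distinct elements of the splitting field $E$. I would establish two facts: the evaluation map $\mathrm{ev}: E[x]/(x^l+1)\to E^{G}$, $h\mapsto (h(\gamma))_{\gamma\in G}$, is a bijection, and the $L_\gamma$ form the dual (delta) basis. Unique determination and the interpolation formula both follow from these.

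\textbf{Well-definedness and injectivity.} Evaluation at $\gamma$ is well defined on the quotient because $p(\gamma)=0$ for every $\gamma\in G$. For injectivity, I would take $h$ with $\deg h\le l-1$ representing a class that vanishes on all of $G$. Then $h$ is divisible by $\prod_{\gamma}(x+\gamma)=p(x)$, which has degree $l$, so $h=0$. Equivalently, the Vandermonde matrix $V$ has nonzero determinant $\prod_{i<j}(\beta_i+\beta_j)$ because the roots are distinct. Since both sides of $\mathrm{ev}$ have $E$-dimension $l$, $\mathrm{ev}$ is an isomorphism.

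\textbf{Closed forms for $L_\gamma$.} There are three expressions to match.
\begin{itemize}
\item The product $\prod_{\beta\neq\gamma}(x+\beta)/(\gamma+\beta)$ has degree $l-1$, equals $1$ at $\gamma$ and $0$ at every other root. So it is $\mathrm{ev}^{-1}$ of the delta function at $\gamma$.
\item Its numerator equals $p(x)/(x+\gamma)$. The denominator equals $\prod_{\beta\ne\gamma}(\gamma+\beta)=p'(\gamma)$, obtained by differentiating $p=(x+\gamma)q$ and evaluating at $\gamma$. This gives the second form. Note $p'(\gamma)=l\gamma^{l-1}=\gamma^{-1}$, since $l$ is odd and $\gamma^l=1$.
\item For the third form, compute $p(x)/(x+\gamma)=(x^l+\gamma^l)/(x+\gamma)=\sum_{t=0}^{l-1}x^t\gamma^{l-1-t}$. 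Multiplying by $1/p'(\gamma)=\gamma$ gives $\sum_t x^t\gamma^{l-t}=\sum_t x^t\gamma^{-t}$.
\end{itemize}
As a cross-check, the delta property can be verified directly: $L_\gamma(\gamma')=\sum_t(\gamma'\gamma^{-1})^t$ is the geometric sum $l\cdot 1=1$ when $\gamma'=\gamma$. When $\gamma'\ne\gamma$, the ratio $\zeta=\gamma'\gamma^{-1}\ne1$ is an $l$-th root of unity, so the sum is $(\zeta^l+1)/(\zeta+1)=0$.

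\textbf{Interpolation formula and coefficients in $\mathbb{F}_2$.} Define $\tilde h=\sum_\gamma h(\gamma)L_\gamma$. It has the same evaluations as $h$ on $G$, so injectivity of $\mathrm{ev}$ gives $\tilde h=h$ in $E[x]/(x^l+1)$. For $h\in R$ the coefficients are in $\mathbb{F}_2$ simply because the identity holds and the left side is $h$. One can also see it intrinsically: the coefficient of $x^t$ is $\sum_\gamma h(\gamma)\gamma^{-t}$. Frobenius maps this sum to $\sum_\gamma h(\gamma^2)\gamma^{-2t}$, which is the same sum because squaring permutes $G$ (Proposition~\ref{prop: roots-splitting-field-properties}). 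Hence each coefficient is fixed by Frobenius and lies in $\mathbb{F}_2$.

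\textbf{Main obstacle.} The only delicate point is characteristic $2$. One must use oddness of $l$ so that $l=1$ in $\mathbb{F}_2$; this gives $p'(\gamma)\neq0$ and the value $1$ for the geometric sum. With even $l$ the roots coalesce and the construction fails.
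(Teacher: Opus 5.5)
Your proof is correct and complete, and it rests on the same fact the paper invokes: the paper gives no proof beyond noting that the Vandermonde matrix $V$ has nonzero determinant $\prod_{i<j}(\beta_j+\beta_i)$ and calling the statement a famous result, and that nonvanishing is exactly your injectivity of the evaluation map. Your explicit derivation of the three closed forms for $L_\gamma$ (in particular $p'(\gamma)=l\gamma^{l-1}=\gamma^{-1}$, which uses that $l$ is odd) and your Frobenius argument for the $\mathbb{F}_2$-rationality of the coefficients supply details the paper leaves implicit but later relies on, e.g.\ in Lemma~\ref{lemma: central-idempotent-coeff-root}.
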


The root theory provides a convenient characterization of elements in $R$. 

\begin{lemma}[Idempotents evaluated at roots]\label{lemma: idempotent-root-evaluation}
     An element $h \in R$ is the central idempotent $c_{b^{(i)}}$ if and only if
    \begin{enumerate}
        \item for any root $\gamma$ of $b^{(i)}$ ($\gamma \in G_i$), $h(\gamma) = 1$,
        \item for any root $\gamma$ of $b^{(j)}$ ($\gamma \in G_j$) with $j \neq i$, $h(\gamma) = 0$.
    \end{enumerate}
    \begin{proof}
        We first show that $c_{b^{(i)}}$ satisfies the two conditions. By the congruence characterization of the construction~\eqref{eq: central-idempotent}, $c_{b^{(i)}} + b^{(i)}u = 1$ for some $u \in R$. Evaluating at a root $\gamma \in G_i$, where $b^{(i)}(\gamma) = 0$, gives $c_{b^{(i)}}(\gamma) = 1$. For the second condition, by the orthogonality of central idempotents, $c_{b^{(i)}}c_{b^{(j)}} = 0$; evaluating at $\gamma \in G_j$ and using $c_{b^{(j)}}(\gamma) = 1$ forces $c_{b^{(i)}}(\gamma) = 0$.
        Conversely, suppose $h \in R$ satisfies the two conditions. Then $h$ and $c_{b^{(i)}}$ agree at all $l$ roots of $x^l+1$, since every root lies in exactly one orbit $G_j$. As the roots are distinct for odd $l$, Theorem~\ref{thm: lagrange-interpolation} shows that an element of $R$ is uniquely determined by its values at the roots; hence $h = c_{b^{(i)}}$. 
    \end{proof}
\end{lemma}

\begin{lemma}[Coefficient formula for central idempotents]\label{lemma: central-idempotent-coeff-root}
    Let $G_i$ be the set of roots of $b^{(i)} | x^l +1$, then we have that $c_{b^{(i)}}(x) = \sum_{\gamma \in G_i} L_\gamma(x)$. Furthermore, the coefficient of $x^{l-k}$ for $1\leq k \leq l$ in $c_{b^{(i)}}(x)$ is given by 
   \begin{align}
       [x^{l-k}] c_{b^{(i)}}(x) = \sum_{\gamma \in G_i} \gamma^k.
   \end{align}
   \begin{proof}
      Let us prove the first assertion. By the Lagrange interpolation Theorem~\ref{thm: lagrange-interpolation} and Lemma~\ref{lemma: idempotent-root-evaluation},
      \begin{align}
          c_{b^{(i)}}(x) = \sum_{\gamma} c_{b^{(i)}}(\gamma) L_\gamma(x) = \sum_{\gamma \in G_i} L_\gamma(x).
      \end{align}
      The second equality follows from the fact that $c_{b^{(i)}}(\gamma) = 0$ for all $\gamma \notin G_i$ and $c_{b^{(i)}}(\gamma) = 1$ for $\gamma \in G_i$. Recall that we have shown that 
      \begin{align}
          L_\gamma(x) = \frac{p(x)}{(x + \gamma) p'(\gamma)} = \frac{x^l +1}{(x +\gamma) \gamma^{l-1}} = \sum^{l-1}_{t=0} x^t \gamma^{-t}.
      \end{align}
      Hence, equating the $x^{l-k}$ term, we obtain 
      \begin{align}
          [x^{l-k}]c_{b^{(i)}}(x) = \sum_{\gamma \in G_i} \gamma^k. 
      \end{align}
      
   \end{proof}
\end{lemma}

The final ingredient concerns the field $\mathbb{F}_q$ for $q = p^{s}$ for some integer $s$ and prime $p$. Then we can always prescribe a basis set in which any $z \in \mathbb{F}_q$ can be written
\begin{align}
    z = \sum^{s-1}_{i=0} c_i \alpha_i, 
\end{align}
for coefficients $c_i \in \mathbb{F}_p$. 

\begin{proposition}[Trace map~\cite{roman1997field}]\label{prop: trace-map-properties}
    For $q = p^s$, we define the trace map 
    \begin{align}
        \tr_{\mathbb{F}_q/\mathbb{F}_p}(x) = \sum^{s-1}_{i =0} x^{p^{i}}
    \end{align}
    with the following properties:
    \begin{enumerate}
        \item $\tr_{\mathbb{F}_q/\mathbb{F}_p}: \mathbb{F}_q \rightarrow \mathbb{F}_p$ is a linear map with $\tr_{\mathbb{F}_q/\mathbb{F}_p}(x+y) = \tr_{\mathbb{F}_q/\mathbb{F}_p}(x) + \tr_{\mathbb{F}_q/\mathbb{F}_p}(y)$ and $\tr_{\mathbb{F}_q/\mathbb{F}_p}(\alpha x) = \alpha \tr_{\mathbb{F}_q/\mathbb{F}_p}(x)$ for $\alpha \in \mathbb{F}_p$
        \item 
        $\tr_{\mathbb{F}_q/\mathbb{F}_p}(x^{p})$ = $\tr_{\mathbb{F}_q/\mathbb{F}_p}(x)$
        \item Surjectivity. Precisely, $p^{s-1}$ distinct elements in $\mathbb{F}_q$ will be mapped to the same element of $\mathbb{F}_p$ by the trace. 
    \end{enumerate}
\end{proposition}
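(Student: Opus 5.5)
The plan is to derive all three properties from two facts: the Frobenius map $\phi(x)=x^p$ is a ring endomorphism of $\mathbb{F}_q$ (characteristic $p$), and $x^{q}=x$ for every $x\in\mathbb{F}_q$. Before the listed properties, I will check that the trace really takes values in $\mathbb{F}_p$. Write $T(x):=\sum_{i=0}^{s-1}x^{p^i}$. Then
\begin{align}
    T(x)^p=\sum_{i=0}^{s-1}x^{p^{i+1}}=\sum_{i=1}^{s-1}x^{p^i}+x^{p^s}=T(x),
\end{align}
using $x^{p^s}=x^q=x$. The elements of $\mathbb{F}_q$ fixed by $\phi$ are roots of $y^p-y$. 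This polynomial has at most $p$ roots, and all $p$ elements of the prime field $\mathbb{F}_p$ are roots by Fermat's little theorem. Hence the fixed field of $\phi$ is exactly $\mathbb{F}_p$, and so $T(x)\in\mathbb{F}_p$.

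For \textbf{linearity (1)}, additivity follows from $(x+y)^{p}=x^p+y^p$. This holds because the middle binomial coefficients $\binom{p}{j}$, $0<j<p$, vanish mod $p$. Iterating gives $(x+y)^{p^i}=x^{p^i}+y^{p^i}$, and summing over $i$ gives $T(x+y)=T(x)+T(y)$. For homogeneity, take $\alpha\in\mathbb{F}_p$. Then $\alpha^{p^i}=\alpha$, so $(\alpha x)^{p^i}=\alpha x^{p^i}$ and $T(\alpha x)=\alpha T(x)$.

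For \textbf{property (2)}, $T(x^p)=\sum_{i=0}^{s-1}x^{p^{i+1}}$. This is the computation already done above: the index shifts cyclically and $x^{p^s}=x$, so the sum is again $T(x)$.

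For \textbf{surjectivity (3)}, I will use a counting argument on polynomial roots, which I expect to be the only step needing care. For each $c\in\mathbb{F}_p$, the fibre $T^{-1}(c)$ is the set of roots in $\mathbb{F}_q$ of $T(x)-c$. This is a nonzero polynomial of degree exactly $p^{s-1}$, since its leading term is $x^{p^{s-1}}$. Hence each fibre has at most $p^{s-1}$ elements. There are $p$ fibres, and they partition $\mathbb{F}_q$, which has $q=p\cdot p^{s-1}$ elements. So every fibre must have exactly $p^{s-1}$ elements. In particular each fibre is nonempty, so $T$ is surjective, and each value in $\mathbb{F}_p$ has exactly $p^{s-1}$ preimages. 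The point to be careful about is that the polynomial is genuinely nonzero with the stated degree; this is immediate from the leading monomial, and it is what makes the count tight.
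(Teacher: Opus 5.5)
Your proof is correct and complete. You get $\mathbb{F}_p$-valuedness and linearity from the additivity of Frobenius together with $x^{q}=x$. Property (2) follows from the cyclic index shift. The fibre sizes $p^{s-1}$ come from the root bound on the degree-$p^{s-1}$ polynomial $T(x)-c$, combined with the partition of $\mathbb{F}_q$ into $p$ fibres.

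The paper gives no proof of its own and only cites Roman's field theory text, where this is essentially the standard argument. An equivalent variant notes that $T$ is a nonzero $\mathbb{F}_p$-linear functional because $\deg T=p^{s-1}<q$. Hence $T$ is onto, and each fibre is a coset of a kernel of size $p^{s-1}$.
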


 An important concept here is the dual basis set. Let $\{\alpha_j\}^{s-1}_{j=0}$ be a basis set of $\mathbb{F}_q$; its dual basis set $\{\beta_{k}\}^{s-1}_{k=0}$ satisfies the property $\tr_{\mathbb{F}_q/\mathbb{F}_p}(\alpha_j \beta_k) = \delta_{jk}$. A basis set elements $\{\alpha_j\}^{s-1}_{j=0}$ is called \emph{self-dual} if $\tr_{\mathbb{F}_q / \mathbb{F}_2}(\alpha_i \alpha_j) = \delta_{ij}$. A self-dual basis always exists for $\mathbb{F}_q$ with $q=2^s$~\cite{BayerFluckiger1989}. The trace map is a direct generalization of the familiar $\mathbb{F}_2$ inner product in the following sense, 

\begin{lemma}
    Let $x, y \in \mathbb{F}_q$ and denote a self-dual basis $\{\alpha_0, \alpha_1, \cdots, \alpha_{s-1}\}$ where $q = 2^s$. Then we have that 
    \begin{align}
        \tr_{\mathbb{F}_q/\mathbb{F}_2}(xy) = \langle a, b \rangle_{\mathbb{F}_2},
    \end{align}
    where $x= \sum^{s-1}_{i=0} a_i \alpha_i$ and $y = \sum^{s-1}_{i=0} b_i \alpha_i$. 
    \begin{proof}
        Writing $x, y$ with the self-dual basis, we compute 
        \begin{align}
            \begin{aligned}
\tr_{\mathbb{F}_q/\mathbb{F}_2} (xy) &= \sum^{s-1}_{i=0} \left(\sum^{s-1}_{k, l=0} a_k b_l \alpha_k \alpha_l \right)^{2^i}, \\
                &= \sum^{s-1}_{k, l=0} a_k b_l\tr_{\mathbb{F}_q/ \mathbb{F}_2}(\alpha_k \alpha_l), \\
                &= \sum^{s-1}_{k=0} a_k b_k,
            \end{aligned}
        \end{align}
        where the second equality uses the linearity of the trace map, noting that $a_k, b_l \in \mathbb{F}_2$. The final equality is a consequence of the self-dual basis. 
    \end{proof}
\end{lemma}

We can also express this in another basis set. 

\begin{lemma}\label{lemma: galois-field-root-basis}
    Let $b \in \mathbb{F}_p[x]$ be an irreducible polynomial of degree $s$ and denote $\mathbb{F}_q = \mathbb{F}_p[x]/(b)$ to be the Galois field extension. Let $\beta$ be an algebraic root with $b(\beta) =0$; then any element $h \in \mathbb{F}_{q}$ can be written as a polynomial $h = \sum^{s-1}_{j=0}c_j \beta^j$.
\end{lemma}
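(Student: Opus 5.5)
The plan is to identify $\mathbb{F}_q=\mathbb{F}_p[x]/(b)$ with the simple field extension $\mathbb{F}_p(\beta)$ via evaluation at the root $\beta$, and then reduce every element to a polynomial of degree less than $s$ by division modulo $b$.

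First, I will define the evaluation homomorphism $\mathrm{ev}_\beta:\mathbb{F}_p[x]\to E$, $f\mapsto f(\beta)$, where $E$ is any extension (e.g.\ a splitting field) containing $\beta$. Its kernel is an ideal containing $b$. Since $b$ is irreducible and $\mathbb{F}_p[x]$ is a PID (example earlier in the appendix), $(b)$ is maximal. The kernel is proper, because $1\mapsto 1$, so it equals $(b)$. This gives an injective ring map $\mathbb{F}_p[x]/(b)\hookrightarrow E$ whose image is $\mathbb{F}_p[\beta]$, the set of polynomial expressions in $\beta$. Under this identification, the class of $x$ corresponds to $\beta$.

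Second, for any $h\in\mathbb{F}_q$, I will pick a representative $f\in\mathbb{F}_p[x]$. Euclidean division gives $f=ub+r$ with $\deg r<s$. Hence $h=r(\beta)=\sum_{j=0}^{s-1}c_j\beta^j$ with $c_j\in\mathbb{F}_p$, which is the claimed form.

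Third, as a remark that makes $\{1,\beta,\dots,\beta^{s-1}\}$ a genuine basis (useful later for dual bases and the trace), I will show the coefficients are unique. A relation $\sum c_j\beta^j=0$ with $\deg<s$ gives a polynomial in $\ker\mathrm{ev}_\beta=(b)$ of degree below $\deg b$, so it must be zero. A dimension count, $|\mathbb{F}_q|=p^s$, confirms this.

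There is no real obstacle. The only point needing care is that $\ker\mathrm{ev}_\beta=(b)$ exactly, i.e.\ that $b$ is the minimal polynomial of $\beta$. This follows from irreducibility together with the PID property, as in the proof of Proposition~\ref{prop: roots-splitting-field-properties}.
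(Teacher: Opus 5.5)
Your proof is correct. Note that the paper gives no proof of this lemma at all: it is stated as standard field-theory background and used immediately afterwards, so there is no competing argument to compare against.

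Your route is the standard one and fills that gap. You identify $\mathbb{F}_p[x]/(b)$ with $\mathbb{F}_p[\beta]$ through $\ker \mathrm{ev}_\beta=(b)$, which follows from maximality of $(b)$ in the PID $\mathbb{F}_p[x]$. You then reduce representatives modulo $b$ by Euclidean division. The evaluation step is exactly what is needed because the statement allows $\beta$ to be a root in an external extension (such as the splitting field $E$ used elsewhere). If one took $\beta$ to be the class of $x$ itself, Euclidean division alone would suffice.

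Your uniqueness step is worth keeping, even though the lemma as stated only claims spanning. The later arguments need $\{1,\beta,\dots,\beta^{s-1}\}$ to be a genuine $\mathbb{F}_p$-basis. For example, the choice $m^{(i)}_k=k$ in the dual-basis construction of Lemma~\ref{lemma: basis-dual-basis-from-finite-field-ring} is this fact applied to the root $\beta^{-1}$ of $b^\vee$.
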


In what follows, we simply choose an irreducible, multiplicity-free polynomial factor $b |x^l+1$. In this case, we need not necessarily set $l$ to be odd, as long as the multiplicity associated with $b$ is $1$. 
Specific to our case, where $p=2$ and $d_b := \operatorname{deg} b$, we consider the trace map $\tr_{\mathbb{F}_{q}/\mathbb{F}_2}: \mathbb{F}_{q} \cong \mathbb{F}_2[x]/(b) \rightarrow \mathbb{F}_2$. We now revisit the isomorphism $c_bR \cong \mathbb{F}_{q}$ via the algebraic roots.

\begin{theorem}\label{thm: isometry-lifted-product}
    Let $b |x^l +1$ be an irreducible, multiplicity-free factor, and let $c_{b}$ be its central idempotent. Let $\beta$ be an algebraic root with $b(\beta) =0$. Let the map $\mathrm{ev}_{\beta}: c_bR \rightarrow \mathbb{F}_{q}$ with $q = 2^{d_b}$ be given by $\mathrm{ev}_{\beta}(h) = h(\beta)$. Then $\mathrm{ev}_{\beta}$ is an isomorphism and, furthermore, an isometry with respect to the bilinear form $[x^0]hw$ and the trace form $\tr_{\mathbb{F}_q/\mathbb{F}_2}$.
    \begin{proof}
         We first prove the isomorphism part and construct the inverse map explicitly. We first show that this map is a well-defined ring homomorphism. Let $h_1, h_2 \in c_bR$. Then it is clear that $\mathrm{ev}_\beta(h_1 h_2) = (h_1 h_2) (\beta) = h_1(\beta) h_2(\beta) = \mathrm{ev}_\beta(h_1) \mathrm{ev}_{\beta}(h_2)$. To see surjectivity, let $h \in R$. Bezout's identity in Lemma~\ref{lemma: bezout-lemma} gives $h = u b + c_{b} v$ for some $u, v \in R$. Note that by construction $\mathbb{F}_{q} \cong R/(b)$, so that every class has a representative in $c_bR$.
         Then $h(\beta) = v(\beta) c_{b}(\beta)= v(\beta)$ by construction, which ensures surjectivity. We now consider injectivity, where we aim to show that $h_1(\beta) = h_2(\beta) \iff c_{b}h_1 = c_{b}h_2$. Recall from the Frobenius automorphism, Proposition~\ref{prop: roots-splitting-field-properties}, that we can prescribe a root orbit
        \begin{align}
            G_b = \{\beta, \beta^2, \cdots, \beta^{2^i}, \cdots, \beta^{2^{d_b -1}} \},
        \end{align}
        where $h(\beta^{2^j}) = (h(\beta))^{2^j}$, so equality of values at $\beta$ propagates to all roots in $G_b$. It thus suffices to show that the values on $G_b$ determine $c_b h$. By the Lagrange interpolation Theorem~\ref{thm: lagrange-interpolation}, we can write
    \begin{align}
        c_bh_1 = \sum_{\beta \in G_b} h_1(\beta) L_\beta(x),
    \end{align}
    where, collecting the coefficients at $x^j$, we have $ [x^j]c_{b} h_1 = \sum_{\beta \in G_b} h_1(\beta) \beta^{l -j}$, using the fact that $c_{b}(\beta) = 1$ if $\beta$ is a root of $b$, and otherwise it is zero. Hence, 
    \begin{align}
       \begin{pmatrix}
           [x^0]c_b h_1 \\
           \vdots \\
           [x^{l-1}]c_bh_1
       \end{pmatrix} = \begin{pmatrix}
           1 & \cdots & 1 \\
           \vdots & \cdots & \vdots \\
           \beta & \cdots & \beta^{2^{d_b-1}}
       \end{pmatrix} \begin{pmatrix}
           h_1(\beta) \\
           \vdots \\
           h_1(\beta^{2^{d_b-1}})
       \end{pmatrix}.
    \end{align}
    The matrix is the (transpose of) Vandermonde matrix. In this case, this matrix has full column rank, so the result must be unique. This proves injectivity. We finally show that this map is an isometry with respect to the bilinear form $[x^0]hw$ and $\tr_{\mathbb{F}_{q}/\mathbb{F}_2}$; note that this form is related to the pairing of Appendix~\ref{app-sec: basic-chain-complex} by $\langle h, w \rangle_R = [x^0] h \bar{w}$. Denote $h = c_{b} \sum^{l-1}_{j=0}a_j x^j$ and $w = c_{b} \sum^{l-1}_{j=0}b_j x^j$. Compute $ hw = c_{b} \sum_{j, k}a_j b_k x^{j+k}$ and note that, upon equating at $x^0$, 
        \begin{align}
            [x^0]c_{b} hw
            &= \sum_{jk} a_j b_k [x^{l-j-k}]c_{b} \\
            &= \sum_{\beta \in G_b} \sum_{jk} a_j b_k \beta^{j+k}
             = \sum_{\beta \in G_b} h(\beta) w(\beta).
        \end{align}
        Recall that the root orbit is $G_b=\{\beta,\beta^2,\ldots,\beta^{2^{d_b-1}}\}$. Hence, we write 
        \begin{align}
            \begin{aligned}
                [x^0] h w
                &= \sum^{d_b-1}_{j=0} h(\beta^{2^j})w(\beta^{2^j}) \\
                &= \sum^{d_b-1}_{j=0} (h(\beta) w(\beta))^{2^j} \\
                &= \tr_{\mathbb{F}_{q}/ \mathbb{F}_2}
                   (\mathrm{ev}_{\beta}(h) \mathrm{ev}_{\beta}(w)).
            \end{aligned}
        \end{align}
        The second equality follows from the Frobenius automorphism in characteristic $2$, and the final equality follows from the trace-map formula for a Galois field extension. 
\end{proof}
\end{theorem}

Equipped with the isometry of Theorem~\ref{thm: isometry-lifted-product}, we can now restate Theorem~\ref{thm: main-computation-kernel-cokernel} in terms of the field decomposition. First we investigate the kernel and cokernel of a matrix over $R$ in terms of its components over the finite fields $\mathbb{F}_{q_i}$. Throughout, $d_i := \deg b^{(i)}$ and $q_i = 2^{d_i}$; we identify $c_{b^{(i)}}R$ with $R_{(i)} := \mathbb{F}_2[x]/(b^{(i)}) \cong \mathbb{F}_{q_i}$ via the isomorphism $\mathrm{ev}_{\beta_i}: c_{b^{(i)}}R \rightarrow \mathbb{F}_{q_i}$ of Theorem~\ref{thm: isometry-lifted-product}, where $\beta_i$ is a root of $b^{(i)}$, applied entrywise to vectors and matrices; in particular, $A_{(i)} := A(\beta_i) = \mathrm{ev}_{\beta_i}(c_{b^{(i)}}A)$.

\begin{lemma}[Structure of kernel of $A$]\label{lemma: kernel-A-odd}
    Let $l$ be odd, $R = \mathbb{F}_2[x]/(x^l+1)$, and $x^l+1=\prod^t_{i=1}b^{(i)}$ with corresponding central idempotents $c_{b^{(i)}}$. Let $A \in R^{m_A \times n_A}$, and let $A_{(i)} = \mathrm{ev}_{\beta_i}(c_{b^{(i)}}A) = A(\beta_i) \in \mathbb{F}_{q_i}^{m_A \times n_A}$ be the component matrix. Then the following hold: 
    \begin{enumerate}
        \item We have $u \in \ker_R A$ if and only if $u \in \ker_R c_{b^{(i)}}A$ for all $i=1, \cdots, t$.
        \item Let $I^{(i)}_A$ be an information set of $\ker_{R_{(i)}} A_{(i)}$, and let $G_{A_{(i)}} \in \mathbb{F}_{q_i}^{k_i \times n_A}$ with $k_i := \dim_{R_{(i)}} \ker_{R_{(i)}} A_{(i)}$ be the systematic generator matrix with respect to $I^{(i)}_A$ (Definition~\ref{def: standard-form}), whose rows are $g^{(i,j)}_{\mathcal{A}_1} \in \mathbb{F}_{q_i}^{n_A}$ for $j \in I^{(i)}_A$. Then, for every $u \in \ker_R A$, there exist unique coefficients $\alpha_j \in \mathbb{F}_{q_i}$ such that
        \begin{align}
            \mathrm{ev}_{\beta_i}(c_{b^{(i)}} u) = \sum_{j \in I^{(i)}_A} \alpha_j g^{(i,j)}_{\mathcal{A}_1},
        \end{align}
        namely, $\alpha_j$ is the $j$-th coordinate of $\mathrm{ev}_{\beta_i}(c_{b^{(i)}} u)$. Consequently, the lifted matrix $G^{(i)}_A := \mathrm{ev}^{-1}_{\beta_i}(G_{A_{(i)}})$ satisfies $c_{b^{(i)}} u \in \rs_R G^{(i)}_A$, and $\ker_R A = \bigoplus^t_{i=1} \rs_R G^{(i)}_A$.
    \end{enumerate}
\end{lemma}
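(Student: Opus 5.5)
We will prove the two items of Lemma~\ref{lemma: kernel-A-odd} in order, relying entirely on the idempotent decomposition of Lemma~\ref{lemma: central-idempotent-property} and Lemma~\ref{lemma: module-decomposition-odd-l}, together with the isomorphism $\mathrm{ev}_{\beta_i}: c_{b^{(i)}}R \to \mathbb{F}_{q_i}$ of Theorem~\ref{thm: isometry-lifted-product}.

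\textbf{Item 1.} Suppose first that $Au = 0$. Then $c_{b^{(i)}}Au = 0$ for every $i$, because $A$ is $R$-linear and $c_{b^{(i)}}$ is a scalar in the commutative ring $R$. Conversely, suppose $c_{b^{(i)}}Au = 0$ for all $i$. Summing over $i$ and using $\sum_i c_{b^{(i)}} = 1$ (valid for odd $l$) gives $Au = 0$.

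It will also be convenient to record a reformulation for later use: $c_{b^{(i)}}Au = (c_{b^{(i)}}A)(c_{b^{(i)}}u)$, since $c_{b^{(i)}}^2 = c_{b^{(i)}}$. Hence the $i$-th condition depends only on the component $c_{b^{(i)}}u$.

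\textbf{Item 2.} Fix $i$ and $u\in\ker_R A$, and set $w := \mathrm{ev}_{\beta_i}(c_{b^{(i)}}u)\in\mathbb{F}_{q_i}^{n_A}$.

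First, $w$ lies in the field kernel. Apply $\mathrm{ev}_{\beta_i}$ entrywise to $c_{b^{(i)}}Au=0$. Since $\mathrm{ev}_{\beta_i}$ is a ring homomorphism on $c_{b^{(i)}}R$, and evaluation at $\beta_i$ is multiplicative on all of $R$ with $c_{b^{(i)}}(\beta_i)=1$, we get $A(\beta_i)\,w = 0$. Thus $w\in\ker_{R_{(i)}}A_{(i)}$.

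Next, express $w$ in the systematic basis. The matrix $G_{A_{(i)}}$ restricted to $I^{(i)}_A$ is the identity, so $w' := \sum_{j\in I^{(i)}_A} w_j\, g^{(i,j)}$ agrees with $w$ on $I^{(i)}_A$. The difference $w+w'$ is then a kernel vector vanishing on the information set, so it is zero by Definition~\ref{def: information-set}. This gives existence of the coefficients and identifies $\alpha_j=w_j$. Uniqueness follows from the same restriction argument.

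Then pull back to $R$. Each entry of $G_{A_{(i)}}$ lies in $\mathbb{F}_{q_i}$, so we lift it via $\mathrm{ev}_{\beta_i}^{-1}$ to an element of $c_{b^{(i)}}R$; this defines $G^{(i)}_A$. We also lift each $\alpha_j$ to $a_j\in c_{b^{(i)}}R$. Injectivity of $\mathrm{ev}_{\beta_i}$, applied entrywise, gives $c_{b^{(i)}}u=\sum_j a_j\,G^{(i)}_{A,j}$, so $c_{b^{(i)}}u \in \rs_R G^{(i)}_A$.

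\textbf{Direct sum.} The decomposition $u=\sum_i c_{b^{(i)}}u$ shows $\ker_R A\subseteq \sum_i \rs_R G^{(i)}_A$. For the reverse inclusion, take a row $g$ of $G^{(i)}_A$. Its entries lie in $c_{b^{(i)}}R$, so for $j\ne i$ we have $c_{b^{(j)}}Ag = 0$ by orthogonality of the idempotents. For $j=i$, evaluating gives $\mathrm{ev}_{\beta_i}(c_{b^{(i)}}Ag)=A(\beta_i)g(\beta_i)=0$, and injectivity of $\mathrm{ev}_{\beta_i}$ then gives $c_{b^{(i)}}Ag=0$. Item 1 therefore yields $Ag=0$. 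Directness holds because $\rs_R G^{(i)}_A\subseteq c_{b^{(i)}}R^{n_A}$, and these summands are independent by Lemma~\ref{lemma: module-decomposition-odd-l}.

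\textbf{Main obstacle.} The step needing the most care is the entrywise commutation of evaluation with matrix multiplication, specifically the identity $\mathrm{ev}_{\beta_i}(c_{b^{(i)}}Au)=A(\beta_i)\,\mathrm{ev}_{\beta_i}(c_{b^{(i)}}u)$ with the uncompressed $A$ on the right. This rests on evaluation at $\beta_i$ being a ring homomorphism $R\to\mathbb{F}_{q_i}$, which holds because $\beta_i^l=1$ makes it well defined on the quotient, combined with $c_{b^{(i)}}(\beta_i)=1$ from Lemma~\ref{lemma: idempotent-root-evaluation}.
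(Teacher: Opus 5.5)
Your proof is correct and follows the paper's argument essentially step for step. You prove item 1 via $\sum_i c_{b^{(i)}}=1$; you then evaluate the $i$-th component into $\ker_{R_{(i)}}A_{(i)}$, read off the coefficients from the systematic form on $I^{(i)}_A$, lift back by injectivity of $\mathrm{ev}_{\beta_i}$, and use orthogonality of the idempotents plus item 1 for the reverse inclusion and directness. Your justification of $\mathrm{ev}_{\beta_i}(c_{b^{(i)}}Au)=A(\beta_i)\,\mathrm{ev}_{\beta_i}(c_{b^{(i)}}u)$, namely that evaluation at $\beta_i$ is a ring homomorphism on all of $R$ with $c_{b^{(i)}}(\beta_i)=1$, spells out slightly more explicitly a step the paper takes for granted.
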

\begin{proof}
    (1) Note that $u \in \ker_R c_{b^{(i)}}A$ means $c_{b^{(i)}}Au = 0$. If $Au = 0$, then $c_{b^{(i)}}Au = 0$ for every $i$. Conversely, if $c_{b^{(i)}}Au = 0$ for all $i$, then, since $1 = \sum_i c_{b^{(i)}}$ by properties of central idempotents Lemma~\ref{lemma: central-idempotent-property}, $Au = \sum_i c_{b^{(i)}}Au = 0$. (2) This follows from (1) and the isomorphism of Theorem~\ref{thm: isometry-lifted-product}: since $\mathrm{ev}_{\beta_i}: c_{b^{(i)}}R \rightarrow \mathbb{F}_{q_i}$ is a ring isomorphism applied entrywise, $\mathrm{ev}_{\beta_i}(c_{b^{(i)}}Aw) = A_{(i)}\, \mathrm{ev}_{\beta_i}(c_{b^{(i)}}w)$ for any $w \in R^{n_A}$, and hence, explicitly, $c_{b^{(i)}}w \in \ker_R c_{b^{(i)}}A \iff \mathrm{ev}_{\beta_i}(c_{b^{(i)}}w) \in \ker_{R_{(i)}} A_{(i)}$. Now fix $u \in \ker_R A$. By (1), $c_{b^{(i)}}u \in \ker_R c_{b^{(i)}}A$, so $v := \mathrm{ev}_{\beta_i}(c_{b^{(i)}}u) \in \ker_{R_{(i)}} A_{(i)}$. The rows of $G_{A_{(i)}}$ form a basis of $\ker_{R_{(i)}} A_{(i)}$, so $v = \sum_{j \in I^{(i)}_A} \alpha_j g^{(i,j)}_{\mathcal{A}_1}$ for unique $\alpha_j \in \mathbb{F}_{q_i}$; restricting this identity to the columns in $I^{(i)}_A$ and using the systematic form (Definition~\ref{def: standard-form}) gives $\alpha_j = v_j$. Applying $\mathrm{ev}^{-1}_{\beta_i}$ entrywise yields $c_{b^{(i)}}u = \sum_j \mathrm{ev}^{-1}_{\beta_i}(\alpha_j)\, \mathrm{ev}^{-1}_{\beta_i}(g^{(i,j)}_{\mathcal{A}_1}) \in \rs_R G^{(i)}_A$. Conversely, each row $\tilde{g} = \mathrm{ev}^{-1}_{\beta_i}(g^{(i,j)}_{\mathcal{A}_1})$ of $G^{(i)}_A$ satisfies $g^{(i,j)}_{\mathcal{A}_1} \in \ker_{R_{(i)}} A_{(i)}$, so the above equivalence gives $c_{b^{(i)}}A\tilde{g} = 0$; for $i' \neq i$, $c_{b^{(i')}}A\tilde{g} = A\, c_{b^{(i')}}c_{b^{(i)}}\tilde{g} = 0$ by orthogonality, and hence $\tilde{g} \in \ker_R A$ by (1). Since $u = \sum_i c_{b^{(i)}}u$ and $\rs_R G^{(i)}_A \subseteq c_{b^{(i)}}R^{n_A}$ with orthogonal sectors, $\ker_R A = \bigoplus^t_{i=1} \rs_R G^{(i)}_A$.
\end{proof}
Similarly, we state the result for $\coker_R A$. To this end, we first state a simple result on the cokernel of a matrix over a finite field.
\begin{lemma}[Cokernel basis from an information set]\label{lemma: coker-basis-ker-transpose}
    For any $H \in \mathbb{F}^{m \times n}_q$, let $I^*$ be an information set of $\ker_{\mathbb{F}_q} H^T \subseteq \mathbb{F}^m_q$. Then the classes of the unit vectors $\{[e^{(j)}] : j \in I^*\}$ form a basis of $\coker_{\mathbb{F}_q} H$.
\end{lemma}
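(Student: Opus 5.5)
The plan is to use the duality $\coker_{\mathbb{F}_q} H \cong (\ker_{\mathbb{F}_q} H^T)^*$ induced by the standard bilinear pairing on $\mathbb{F}_q^m$. By counting dimensions, $\dim \coker_{\mathbb{F}_q} H = m - \operatorname{rk} H = \dim \ker_{\mathbb{F}_q} H^T = |I^*|$, since an information set has cardinality equal to the dimension of the code (Definition~\ref{def: information-set}). It is therefore enough to show that the $|I^*|$ classes $\{[e^{(j)}]\}_{j\in I^*}$ are linearly independent in $\coker_{\mathbb{F}_q} H = \mathbb{F}_q^m/\IM H$. Spanning then follows from the dimension count.

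To prove independence, suppose $\sum_{j\in I^*} c_j e^{(j)} = Hh \in \IM H$ for some coefficients $c_j\in\mathbb{F}_q$ and some $h \in \mathbb{F}_q^n$. Pair both sides with an arbitrary $w \in \ker_{\mathbb{F}_q} H^T$. The right-hand side gives $w^T H h = (H^T w)^T h = 0$. The left-hand side gives $\sum_{j\in I^*} c_j w_j$. Hence $\sum_{j\in I^*} c_j w_j = 0$ for every $w$ in the code $\ker H^T$.

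Next we use the information-set property. The restriction map $w \mapsto w|_{I^*}$ is injective on $\ker H^T$. Because $|I^*| = \dim \ker H^T$, it is also surjective onto $\mathbb{F}_q^{I^*}$. Concretely, the systematic generator matrix with respect to $I^*$ has the identity $I_k$ on the columns indexed by $I^*$. So for each $j_0\in I^*$ there is a codeword $w$ with $w|_{I^*} = e^{(j_0)}|_{I^*}$. Substituting this $w$ into the identity above gives $c_{j_0}=0$. All coefficients therefore vanish, which proves that the classes are independent and hence form a basis.

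The only delicate point is the surjectivity of the restriction map onto $\mathbb{F}_q^{I^*}$. This is exactly where we need $|I^*|=k$ as a minimum-cardinality set, rather than just any set on which restriction is injective. An injective linear map between spaces of equal dimension is bijective, so this step is settled by a one-line dimension argument.
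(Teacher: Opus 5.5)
Your proposal is correct and follows essentially the same route as the paper: the dimension count $|I^*| = \dim \ker_{\mathbb{F}_q} H^T = m - \operatorname{rank} H$, followed by linear independence, obtained by pairing a relation $\sum_{j\in I^*}\alpha_j e^{(j)} = Hz$ against the systematic basis of $\ker_{\mathbb{F}_q} H^T$ with respect to $I^*$ to read off each $\alpha_j = 0$. Your explicit justification that restriction to $I^*$ is bijective (injective between spaces of equal dimension) is exactly what guarantees that this systematic basis exists, a point the paper leaves implicit.
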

\begin{proof}
    First, $|I^*| = \dim \ker_{\mathbb{F}_q} H^T = m - \operatorname{rank} H = \dim \coker_{\mathbb{F}_q} H$. For linear independence, let $c = \sum_{j \in I^*} \alpha_j e^{(j)} \in \IM_{\mathbb{F}_q} H$, say $c = Hz$. Let $\{w^{(j)} : j \in I^*\}$ be the systematic basis of $\ker_{\mathbb{F}_q} H^T$ with respect to $I^*$ (Definition~\ref{def: standard-form}), so that $w^{(j)}_{j'} = \delta_{jj'}$ for $j' \in I^*$. Then, for each $j \in I^*$, the standard bilinear form gives $0 = \langle z, H^T w^{(j)} \rangle = \langle Hz, w^{(j)} \rangle = \sum_{j' \in I^*} \alpha_{j'} w^{(j)}_{j'} = \alpha_j$. Hence the classes are linearly independent in $\coker_{\mathbb{F}_q} H$, and by the dimension count they form a basis.
\end{proof}

\begin{lemma}[Structure of cokernel of $A$]\label{lemma: cokernel-A-odd}
    Similarly under the conditions of Lemma~\ref{lemma: kernel-A-odd}, let $I^{(i)}_{A^*}$ be an information set of $\ker_{R_{(i)}} A^T_{(i)}$, and let $E_{A_{(i)}} = \{ e^{(i,j)}_{\mathcal{A}_0} \in \mathbb{F}^{m_A}_{q_i}\}^{\dim_{R_{(i)}} \coker_{R_{(i)}} A_{(i)}}_{j=1}$, where $e^{(i,j)}_{\mathcal{A}_0}$ is the unit vector supported on the $j$-th index of $I^{(i)}_{A^*}$; by Lemma~\ref{lemma: coker-basis-ker-transpose}, their classes form a basis of $\coker_{R_{(i)}} A_{(i)}$. Then the following hold:
    \begin{enumerate}
        \item Let $E_A = \cup^t_{i=1} \mathrm{ev}^{-1}_{\beta_i}(E_{A_{(i)}})$, where $\mathrm{ev}^{-1}_{\beta_i}(E_{A_{(i)}}) = \{ \mathrm{ev}^{-1}_{\beta_i}(e^{(i,j)}_{\mathcal{A}_0}) = c_{b^{(i)}} e^{(i,j)}_{\mathcal{A}_0} \in c_{b^{(i)}}R^{m_A}\}^{\dim_{R_{(i)}} \coker_{R_{(i)}} A_{(i)}}_{j=1}$. Then $E_A$ satisfies conditions Eq.~\eqref{eq: splitting-condition-1} and Eq.~\eqref{eq: splitting-condition-2} of Lemma~\ref{lemma: section-maps-coker-R}: any element $u \in R^{m_A}$ can be written as
        \begin{align}
            u = \sum_{i,j} r_{ij} c_{b^{(i)}} e^{(i,j)}_{\mathcal{A}_0} + Ah,
        \end{align}
        for some $h \in R^{n_A}$ and $r_{ij} \in R$. Furthermore, 
        \begin{align}
            \sum_{i,j} r_{ij} c_{b^{(i)}} e^{(i,j)}_{\mathcal{A}_0} \in \IM_R A \iff  \sum_{i,j} r_{ij} c_{b^{(i)}} e^{(i,j)}_{\mathcal{A}_0}=0.
        \end{align}
        \item By Lemma~\ref{lemma: section-maps-coker-R}, $E_A$ defines an $R$-linear section $s: \coker_R A \rightarrow R^{m_A}$ with $s(\coker_R A) = \rs_R E_A =: M_E$, giving the splitting
        \begin{align}
            R^{m_A} = \IM_R A \oplus M_E; \quad M_E \cong \coker_R A.
        \end{align}
    \end{enumerate}
\end{lemma}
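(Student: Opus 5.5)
The plan is to reduce both claims to the componentwise statements over the fields $R_{(i)}\cong\mathbb{F}_{q_i}$ and then reassemble using the idempotent decomposition of Lemma~\ref{lemma: module-decomposition-odd-l}. Write $u=\sum_i c_{b^{(i)}}u$ for any $u\in R^{m_A}$, with $c_{b^{(i)}}u\in c_{b^{(i)}}R^{m_A}$. The key compatibility is $\mathrm{ev}_{\beta_i}(c_{b^{(i)}}Ah)=A_{(i)}\,\mathrm{ev}_{\beta_i}(c_{b^{(i)}}h)$, which was already used in the proof of Lemma~\ref{lemma: kernel-A-odd}. This gives $c_{b^{(i)}}\IM_R A=\mathrm{ev}^{-1}_{\beta_i}(\IM_{R_{(i)}}A_{(i)})$. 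It also gives $\IM_R A=\bigoplus_i c_{b^{(i)}}\IM_R A$, since $Ah=\sum_i A c_{b^{(i)}}h$ and the sectors are orthogonal.

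For the spanning condition Eq.~\eqref{eq: splitting-condition-1}, fix $i$ and set $v:=\mathrm{ev}_{\beta_i}(c_{b^{(i)}}u)\in\mathbb{F}_{q_i}^{m_A}$. By Lemma~\ref{lemma: coker-basis-ker-transpose}, the classes $[e^{(i,j)}_{\mathcal{A}_0}]$ span $\coker_{R_{(i)}}A_{(i)}$. Hence $v=\sum_j\alpha_{ij}e^{(i,j)}_{\mathcal{A}_0}+A_{(i)}z_i$ for some $\alpha_{ij}\in\mathbb{F}_{q_i}$ and $z_i$. Now pull back through $\mathrm{ev}^{-1}_{\beta_i}$, which is surjective from $c_{b^{(i)}}R$ by Theorem~\ref{thm: isometry-lifted-product}. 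Choose $r_{ij}\in c_{b^{(i)}}R$ with $\mathrm{ev}_{\beta_i}(r_{ij})=\alpha_{ij}$, and $h_i\in c_{b^{(i)}}R^{n_A}$ with $\mathrm{ev}_{\beta_i}(h_i)=z_i$. Injectivity of $\mathrm{ev}_{\beta_i}$ then gives $c_{b^{(i)}}u=\sum_j r_{ij}c_{b^{(i)}}e^{(i,j)}_{\mathcal{A}_0}+Ah_i$. Summing over $i$ with $h=\sum_i h_i$ yields the claimed decomposition.

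For the uniqueness condition Eq.~\eqref{eq: splitting-condition-2}, suppose $w:=\sum_{i,j}r_{ij}c_{b^{(i)}}e^{(i,j)}_{\mathcal{A}_0}=Ah$. I would project with $c_{b^{(i)}}$, using $c_{b^{(i)}}c_{b^{(i')}}=\delta_{ii'}c_{b^{(i)}}$. This gives $\sum_j r_{ij}c_{b^{(i)}}e^{(i,j)}_{\mathcal{A}_0}=A c_{b^{(i)}}h$. Applying $\mathrm{ev}_{\beta_i}$ then gives $\sum_j \mathrm{ev}_{\beta_i}(c_{b^{(i)}}r_{ij})\,e^{(i,j)}_{\mathcal{A}_0}\in\IM_{R_{(i)}}A_{(i)}$. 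By the linear independence part of Lemma~\ref{lemma: coker-basis-ker-transpose}, every coefficient $\mathrm{ev}_{\beta_i}(c_{b^{(i)}}r_{ij})$ vanishes. Injectivity of $\mathrm{ev}_{\beta_i}$ on $c_{b^{(i)}}R$ then gives $c_{b^{(i)}}r_{ij}=0$, so each sector of $w$ vanishes, and $w=\sum_i c_{b^{(i)}}w=0$. The reverse implication is trivial.

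Part (2) then follows immediately from the $(1)\Rightarrow(2)$ direction of Lemma~\ref{lemma: section-maps-coker-R}. That lemma yields $R^{m_A}=\IM_R A\oplus M_E$ with $M_E=\rs_R E_A\cong\coker_R A$, and the section is $s([u])$ equal to the $M_E$-component of $u$.

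The main obstacle is bookkeeping rather than substance. One must make sure that the coefficients $r_{ij}\in R$ act on $c_{b^{(i)}}e^{(i,j)}_{\mathcal{A}_0}$ only through $c_{b^{(i)}}r_{ij}$, so that $\mathrm{ev}_{\beta_i}$ can legitimately be applied to them. One must also confirm that the identification $\mathrm{ev}^{-1}_{\beta_i}(e^{(i,j)}_{\mathcal{A}_0})=c_{b^{(i)}}e^{(i,j)}_{\mathcal{A}_0}$ holds because unit vectors have entries in $\{0,1\}$ and $\mathrm{ev}_{\beta_i}(c_{b^{(i)}})=1$.
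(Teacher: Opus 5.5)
Your proposal is correct and follows essentially the same route as the paper: you decompose $u$ sectorwise with the central idempotents, apply Lemma~\ref{lemma: coker-basis-ker-transpose} to each field component $A_{(i)}$, transfer back through the isomorphism $\mathrm{ev}_{\beta_i}$ of Theorem~\ref{thm: isometry-lifted-product}, project with $c_{b^{(i)}}$ for the uniqueness condition, and then invoke Lemma~\ref{lemma: section-maps-coker-R} for part (2). Your version is slightly more explicit than the paper's: you derive $c_{b^{(i)}}r_{ij}=0$ directly from injectivity of $\mathrm{ev}_{\beta_i}$ rather than arguing by contradiction.
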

\begin{proof}
    We will only prove for the first point and the second point follows from Lemma~\ref{lemma: section-maps-coker-R}. Recall that $\sum^t_{i=1} c_{b^{(i)}} =1$ and by properties of the central idempotents, we may prove component-wise for $c_{b^{(i)}}u$. Let $v:= \mathrm{ev}_{\beta_i}(c_{b^{(i)}}u) \in \mathbb{F}^{m_A}_{q_i}$. By Lemma~\ref{lemma: coker-basis-ker-transpose}, the classes of $E_{A_{(i)}}$ form a basis of $\coker_{R_{(i)}} A_{(i)}$; equivalently,
    \begin{align}
        v = \sum_j \alpha_j e^{(i,j)}_{\mathcal{A}_0} + A_{(i)} w,
    \end{align}
    for some $\alpha_j \in \mathbb{F}_{q_i}$ and $w \in \mathbb{F}_{q_i}^{n_A}$, and
    \begin{align}
        \sum_j \alpha_j e^{(i,j)}_{\mathcal{A}_0} \in \IM_{R_{(i)}} A_{(i)} \iff \sum_j \alpha_j e^{(i,j)}_{\mathcal{A}_0} = 0.
    \end{align}
    By Theorem~\ref{thm: isometry-lifted-product}, we conclude that the conditions are met for $c_{b^{(i)}}u$ so that 
    \begin{align}
        c_{b^{(i)}}u = \sum_j r_{ij} c_{b^{(i)}} e^{(i,j)}_{\mathcal{A}_0} + Ah^{(i)},
    \end{align}
    for $h^{(i)} \in c_{b^{(i)}}R^{n_A}$, and 
    \begin{align}
        \sum_j r_{ij} c_{b^{(i)}} e^{(i,j)}_{\mathcal{A}_0} \in \IM_R c_{b^{(i)}}A \iff  \sum_j r_{ij} c_{b^{(i)}} e^{(i,j)}_{\mathcal{A}_0}=0.
    \end{align}
    Note that this also means that 
    \begin{align}
        \sum_{i,j} r_{ij} c_{b^{(i)}} e^{(i,j)}_{\mathcal{A}_0} \in \IM_R A \iff  \sum_{i,j} r_{ij} c_{b^{(i)}} e^{(i,j)}_{\mathcal{A}_0}=0.
    \end{align}
    Suppose not: $0 \neq x := \sum_{i,j} r_{ij} c_{b^{(i)}} e^{(i,j)}_{\mathcal{A}_0} = Ah$ for some $h \in R^{n_A}$. Multiplying by $c_{b^{(i)}}$, the other central idempotents vanish and $c^2_{b^{(i)}} = c_{b^{(i)}}$, so $\sum_{j} r_{ij} c_{b^{(i)}} e^{(i,j)}_{\mathcal{A}_0} = c_{b^{(i)}} A h \in \IM_R c_{b^{(i)}}A$; by the component condition above, every sector part vanishes, so $x = 0$ --- a contradiction.
    Summing the component decompositions over $i$ and using $\sum_i c_{b^{(i)}} = 1$ gives Eq.~\eqref{eq: splitting-condition-1}, proving the first assertion.
    By Lemma~\ref{lemma: section-maps-coker-R}, the second assertion follows.
\end{proof}

Another part we consider is conjugation, which we need to define basis elements for both homology and cohomology. For $f\in R$, write $\overline{f}(x):=f(x^{-1})$. If $b\mid x^l+1$ is irreducible with $d_b = \deg b$, define its reciprocal polynomial by $b^\vee(x):=x^{d_b} b(x^{-1})$.

\begin{lemma}\label{lemma: conjugate-central-idempotent}
    Let $b\mid x^l+1$ be irreducible and let $c_b$ be its central idempotent. Then $b^\vee$ is an irreducible factor of $x^l+1$ and
    \begin{align}
        \overline{c_b}=c_{b^\vee}.
    \end{align}
    \begin{proof}
Let \(d_b=\deg b\). Since \(b\mid x^l+1\), we may write
\begin{align}
    x^l+1=b(x)q(x).
\end{align}
Applying the reciprocal operation gives
\begin{align}
    x^l+1
    =\bigl(x^{d_b} b(x^{-1})\bigr)\bigl(x^{l-d_b}q(x^{-1})\bigr).
\end{align}
Hence, $b^\vee$ divides $x^l+1$. To see that $b^\vee$ is irreducible, suppose
\begin{align}
    b^\vee(x)=u(x)v(x)
\end{align}
with \(\deg u,\deg v>0\). Then
\begin{align}
    b(x)=x^{d_b} b^\vee(x^{-1})
     =x^{\deg u}u(x^{-1})\cdot x^{\deg v}v(x^{-1}),
\end{align}
which contradicts the irreducibility of $b$. Thus, $b^\vee$ is irreducible. By Lemma~\ref{lemma: idempotent-root-evaluation}, the central idempotent $c_b$ is uniquely characterized by
\begin{align}
    c_b &\equiv 1 \pmod b, \\
    c_b &\equiv 0 \pmod a,\qquad a\neq b,
\end{align}
for every other irreducible factor $a$ of $x^l+1$. Applying conjugation and using that $x$ is a unit in $R$ gives $\overline{c_b}\equiv1\pmod{b^\vee}$. Similarly, for $a^\vee(x):=x^{\deg a}a(x^{-1})$, we have $\overline{c_b}\equiv0\pmod{a^\vee}$. Since $a\mapsto a^\vee$ permutes the irreducible factors of $x^l+1$, $\overline{c_b}$ satisfies the congruences characterizing the central idempotent $c_{b^\vee}$. As central idempotents are unique from the construction by Bezout's identity Lemma~\ref{lemma: bezout-lemma}, we conclude $\overline{c_b}=c_{b^\vee}$.
\end{proof}
\end{lemma}
\begin{example}
    Let $l = 15$, with $x^{15}+1 = (1+x)(1+x+x^2)(1+x+x^4)(1+x^3+x^4)(1+x+x^2+x^3+x^4)$. Take $b = 1+x+x^4$; then $b^\vee = x^4 b(x^{-1}) = 1+x^3+x^4$, with central idempotents
    \begin{align}
        c_b &= x+x^2+x^3+x^4+x^6+x^8+x^9+x^{12}, \\
        c_{b^\vee} &= x^3+x^6+x^7+x^9+x^{11}+x^{12}+x^{13}+x^{14}.
    \end{align}
    One verifies directly that $\overline{c_b} = c_{b^\vee}$. The remaining three factors are self-reciprocal, with $\overline{c_b} = c_b$.
\end{example}

\begin{lemma}\label{lemma: conjugate-root}
        Let $b\mid x^l+1$ be irreducible with $d_b=\deg b$, let $h\in R$, and let $\beta$ be a root of $b$. Then $\beta^{-1}$ is a root of $b^\vee$, and
    \begin{align}
        h(\beta)=\overline{h}(\beta^{-1}).
    \end{align}
    \begin{proof}
        First, $b^\vee(\beta^{-1})=\beta^{-d_b}b(\beta)=0$, so $\beta^{-1}$ is a root of $b^\vee$. By the Lagrange interpolation Theorem~\ref{thm: lagrange-interpolation},
        \begin{align}
            h(x) = \sum_{\gamma \in G} h(\gamma) L_{\gamma}(x),
        \end{align}
        where $G$ denotes all the $l$ distinct roots of $x^l+1$. Directly from Eq.~\eqref{eq: lagrange-root-function}, we observe
        \begin{align}
            \overline{L_{\gamma}}(x) = L_{\gamma}(x^{-1}) = \sum^{l-1}_{t=0} x^{-t}\gamma^{-t} = \sum^{l-1}_{t=0} x^{t}\gamma^{t} = L_{\gamma^{-1}}(x),
        \end{align}
        so that
        \begin{align}
            \overline{h}(x) = \sum_{\gamma \in G} h(\gamma) L_{\gamma^{-1}}(x).
        \end{align}
        We now evaluate at $x = \beta^{-1}$. By Eq.~\eqref{eq: lagrange-root-function}, $L_{\gamma^{-1}}(\beta^{-1}) = 0$ unless $\gamma^{-1} = \beta^{-1}$, while $L_{\beta^{-1}}(\beta^{-1}) = l = 1$ since $l$ is odd. Hence $\overline{h}(\beta^{-1}) = h(\beta)$.
    \end{proof}
\end{lemma}
A final ingredient we need is the following.
\begin{lemma}\label{lemma: basis-dual-basis-from-finite-field-ring}
    Let $b$ be an irreducible polynomial with roots $G_b$ and $b |x^l+1$ with odd $l$. Let $c_b$ be the associated central idempotent, and let $\beta \in G_b$. Then for any set of distinct powers $\{ j_1, \cdots, j_{d_b}\}$ where $0 \leq j_i \leq l-1 $, there exist $g_{j_1}, \cdots, g_{j_{d_b}} \in R$ such that $[x^{j_i}] c_bg_{j_k} = \delta_{i, k}$ if and only if $\{\beta^{-j_1}, \cdots, \beta^{-j_{d_b}}\}$ is an $\mathbb{F}_2$-basis of $\mathbb{F}_q$; equivalently, the images of $x^{j_1}, \cdots, x^{j_{d_b}}$ span $\mathbb{F}_2[x]/(b^\vee)$. 
    
    \begin{proof}
    By the Lagrange interpolation Theorem~\ref{thm: lagrange-interpolation}, $c_b g = \sum_{\beta' \in G_b} g(\beta') L_{\beta'}(x)$ for any $g \in R$, so that, by Eq.~\eqref{eq: lagrange-root-function},
        \begin{align}
            [x^{j_i}]c_bg_{j_k} =  \sum_{\beta' \in G_b} g_{j_k}(\beta') \beta'^{-j_i} = \tr_{\mathbb{F}_q / \mathbb{F}_2}\bigl(g_{j_k}(\beta)\, \beta^{-j_i}\bigr),
        \end{align}
        where the second equality follows from $G_b = \{\beta^{2^m}\}^{d_b-1}_{m=0}$. Since $\mathrm{ev}_{\beta}: c_bR \rightarrow \mathbb{F}_q$ is an isomorphism and the trace form is nondegenerate, the dual system with $\tr_{\mathbb{F}_q/\mathbb{F}_2}(g_{j_k}(\beta)\beta^{-j_i}) = \delta_{i,k}$ exists if and only if $\{\beta^{-j_1}, \cdots, \beta^{-j_{d_b}}\}$ is an $\mathbb{F}_2$-basis of $\mathbb{F}_q$. Finally, since $\beta^{-j_i} = x^{j_i}(\beta^{-1})$ and $\beta^{-1}$ is a root of $b^\vee$ (Lemma~\ref{lemma: conjugate-root}), $\{\beta^{-j_i}\}$ is an $\mathbb{F}_2$-basis of $\mathbb{F}_q$ if and only if the images of $x^{j_1}, \cdots, x^{j_{d_b}}$ span $\mathbb{F}_2[x]/(b^\vee)$.
\end{proof}
\end{lemma}

\begin{remark}
For simplicity, we can set $j_i = i$ for $i=0, \cdots, d_b-1$. For two different irreducible polynomials $b, b'|x^l +1$, the above lemma applies equally to different choices of dual basis elements. 
\end{remark}
\begin{example}
    Let $l = 15$ and $b = 1+x^3+x^4$, so that $b^\vee = 1+x+x^4$, and set $j_i = i$ for $i = 0, \cdots, 3$. Applying Lemma~\ref{lemma: basis-dual-basis-from-finite-field-ring} to $b^\vee$ yields the dual basis $\hat{g}_0, \cdots, \hat{g}_3 \in c_{b^\vee}R$ to the primal basis $\{c_b, c_b x, c_b x^2, c_b x^3\}$ of $c_b R$ with respect to the pairing of Eq.~\eqref{eq: inner-product-ring}:
    \begin{align}
        \hat{g}_0 &= 1+x^4+x^5+x^6+x^7+x^9+x^{11}+x^{12}, \\
        \hat{g}_1 &= x+x^5+x^6+x^7+x^8+x^{10}+x^{12}+x^{13}, \\
        \hat{g}_2 &= x^2+x^6+x^7+x^8+x^9+x^{11}+x^{13}+x^{14}, \\
        \hat{g}_3 &= x^3+x^4+x^5+x^6+x^8+x^{10}+x^{11}+x^{14},
    \end{align}
    which satisfy $\langle c_b x^{j_i}, \hat{g}_k \rangle_R = \delta_{i,k}$. Since $c_b \overline{c_b} = c_b c_{b^\vee} = 0$, the pairing vanishes identically on $c_b R$, and the dual basis lies in the $c_{b^\vee}$ sector. Moreover, $\hat{g}_1 = x\hat{g}_0$ and $\hat{g}_2 = x\hat{g}_1$, whereas $\hat{g}_3 \neq x\hat{g}_2$; see Lemma~\ref{lemma: lp-cyclic-shift-logical-basis}.
\end{example}

We also consider cyclic symmetry using the basis characterization. Note that the basis elements given by $c_b x^{j_i}$, $i = 0, \cdots, d_b-1$, are already in cyclic form. At the boundary term, we have $c_b x^{j_{d_b-1}} x^{j_0 -j_{d_b-1}} = c_b x^{j_0}$. The natural question is whether the dual basis elements $c_b g_{j_i}$ also have cyclic symmetry, and it turns out that they typically do not.

\begin{lemma}[No cyclic shift symmetry]\label{lemma: lp-cyclic-shift-logical-basis}
Let $b |x^l+1$ be an irreducible polynomial with root $\beta$. Denote by $g_{0}, \cdots, g_{d_b-1}$ a set satisfying $[x^{i}] c_bg_{j} = \delta_{i, j}$ for $i, j = 0, \cdots, d_b-1$. Then we have the following relations, 
\begin{equation}\label{eq: dual-basis-recurrence-relation}
  \begin{aligned}
        &x g_j = g_{j+1} + b_{d_b-1-j} g_0; \quad j=0, \cdots, d_b-2 \\
        & xg_{d_b-1} = b_0 g_0 = g_0,
    \end{aligned}  
\end{equation}
where $b_k = [x^k]b$ and $b_0 = 1$ since $x \nmid b$.
\begin{proof}
   Recall that $g_{0}, \cdots, g_{d_b-1}$ form a basis of $c_b R$, so that $xg_j = \sum^{d_b-1}_{i=0} p_{ji} g_i$ with coefficients $p_{ji} = [x^i]c_b(xg_j) = \tr_{\mathbb{F}_q/\mathbb{F}_2}(g_j(\beta) \beta^{1-i})$, as in the proof of Lemma~\ref{lemma: basis-dual-basis-from-finite-field-ring}. For $i = 1, \cdots, d_b-1$, duality gives $p_{ji} = \delta_{j, i-1}$. For $i = 0$, $p_{j0} = \tr_{\mathbb{F}_q/\mathbb{F}_2}(g_j(\beta)\beta)$, and dividing $b(\beta) = 0$ by $\beta^{d_b-1}$ yields
   \begin{align}
       \beta = \sum^{d_b-1}_{k=0} b_k \beta^{-(d_b-1-k)},
   \end{align}
   so that, by duality, $p_{j0} = b_{d_b-1-j}$. Collecting terms, $xg_j = g_{j+1} + b_{d_b-1-j} g_0$ for $j = 0, \cdots, d_b-2$, and $xg_{d_b-1} = b_0 g_0 = g_0$.
\end{proof}
\end{lemma}

Hence, in general, the dual basis might not be related by a cyclic group element, and numerical calculation indicates that the weights for the dual basis elements are not uniform. However, we could say that these dual basis polynomials are related by the "dual" cyclic group.

\begin{lemma}\label{lemma: dual-basis-transitive-action}
Let the dual basis be $\{ g_0, \cdots, g_{d_b-1}\}$ associated with $c_b $ and $d_b = \deg b$. There exists, for $0 \leq j \leq d_b-1$, a coset
\begin{align}
    \{u_j + qb: q \in R\} 
\end{align}
such that $ g_{j+1} = (u_j + x) g_{j}$ for $0 \leq j \leq d_b-1$, where the indices satisfy $g_{d_b} \equiv g_0$. Furthermore, any dual basis element can be generated by $g_0$:
\begin{align}
    g_j = f_j g_0,
\end{align}
for some $f_j \in R$ such that $|b| \leq |f_j| \leq |b|+1$.
    \begin{proof}
        We can solve the recurrence relation in Eqs.~\eqref{eq: dual-basis-recurrence-relation} so that all the dual basis elements are generated by $g_0$:
        \begin{align}\label{eq: recursed-dual-basis-boundary}
            g_j = \left(x^{j} + b_{d_b-1}x^{j-1} +\cdots+ b_{d_b-j}  \right)g_0, \quad 0 \leq j \leq d_b-1. 
        \end{align}
        This is consistent with the boundary relation in Eqs.~\eqref{eq: dual-basis-recurrence-relation}, since 
        \begin{align}
            xg_{d_b-1} = (b +1)g_0 = g_0, 
        \end{align}
        where $b g_0 = 0$ since $g_0 \in c_bR$. Denote $f_{j} g_0 := g_j$. Note that $c_bR \cong \mathbb{F}_q$ for $q = 2^{d_b}$, so that every nonzero element admits an inverse:
        \begin{align}
            c_b v_{j}\left(x^{j} + b_{d_b-1}x^{j-1} + \cdots + b_{d_b-j}\right) = c_b , 
        \end{align}
        for a unique $v_{j} \in c_bR$. Setting $u_j := b_{d_b-1-j} v_j$ for $0 \leq j \leq d_b-2$, the recurrence gives $g_{j+1} = xg_j + b_{d_b-1-j} g_0 = (u_j + x)g_j$; at the boundary, $g_0 = xg_{d_b-1}$ gives $u_{d_b-1} = 0$. Any representative of $\{u_j + qb: q \in R\}$ serves equally, since $b g_j = 0$.
    \end{proof}
\end{lemma}

\begin{remark}
The ideal $(b)$ forms a cyclic code. Hence, we could choose the sparsest representation from $\{u_j + qb: q \in R\} $, which is a typical syndrome-decoding problem. 
\end{remark}

\begin{example}[Generalized bicycle codes Ref.~\cite{webster2025explicitconstructionlowoverheadgadgets}]
    We look into the code example $\LP_{31}(A, B)$ for $A = 1 + x^6 + x^{15}$ and $B =1 + x^5 + x^7 $ in Ref.~\cite{webster2025explicitconstructionlowoverheadgadgets}; the balanced-product view of the same code appears in the example of Section~\ref{subsec: general-formulation-homological-algebra}. According to Theorem~\ref{thm: main-LP-basis-characterization}, it suffices to evaluate $\ker_R A, \coker_R A, \ker_R B, \coker_R B$. By Theorem~\ref{thm: main-computation-kernel-cokernel}, we can evaluate this explicitly through decomposing into different algebraic rules of irreducible polynomial factors of $x^l+1$: 
    \begin{align}
        \begin{aligned}
            b^{(1)} &= 1 +x, \\
            b^{(2)} &= 1 +x^2 + x^5, \\
            b^{(3)} &= 1  + x^3 + x^5, \\
            b^{(4)} &= 1 + x+ x^2 + x^3 + x^5, \\
            b^{(5)} &= 1 +x+ x^2 + x^4 + x^5, \\
            b^{(6)} &= 1 + x^2 + x^3 + x^4+ x^5, \\
            b^{(7)} &= 1 + x+ x^3 + x^4+ x^5.
        \end{aligned}
    \end{align}
      For $i = 1, \ldots, 6$, the reductions $A^{(i)}$ and $B^{(i)}$ are nonzero, hence invertible in the field $R_{(i)}$, so that $\ker_{R_{(i)}} A = \coker_{R_{(i)}} A = \ker_{R_{(i)}} B = \coker_{R_{(i)}} B = 0$; only $b^{(7)}$ contributes. Under the algebraic rule of $b^{(7)}=0$, $A$ reduces to $A^{(7)} = 0$. Hence, $\ker_{R_{(7)}} A \cong cR$ and $\coker_{R_{(7)}} A \cong cR$. Similarly, $B$ reduces to $B^{(7)} = 0$ so that $\ker_{R_{(7)}} B \cong cR$ and $\coker_{R_{(7)}} B \cong cR$, for central idempotent $c$ associated with $b^{(7)}$:
      \begin{align}
    \begin{aligned}
        c = &x^{30} + x^{29} + x^{27} + x^{26} + x^{23} + x^{22} + x^{21} \\
        &+ x^{16} + x^{15} + x^{13} + x^{11} + x^{8} + x^{4} + x^{2} + x + 1.
    \end{aligned}
\end{align}
        
Then we can compute that (note $c^2 =c$)
\begin{align}
   G_A = G_B = c
\end{align}
and 
\begin{align}
    E_A = E_B = \begin{pmatrix}
        c \\
        cx \\
        cx^2 \\
        cx^3 \\
        cx^4
    \end{pmatrix}.
\end{align}
An $R$-valued minimal generator matrix $L^M_Z$ is given by
\begin{align}\label{eq: main-webster-gb-code-basis}
    L^M_Z = \begin{pmatrix}
        c \\
        cx \\
        cx^2 \\
        cx^3 \\
        cx^4
    \end{pmatrix} \bigoplus \begin{pmatrix}
        c \\
        cx \\
        cx^2 \\
        cx^3 \\
        cx^4
    \end{pmatrix}
\end{align}
The corresponding binary generator matrix is $\mbb B(L_Z^M)$.
\end{example}
\begin{example}[$\LP_l(A, b)$] Let $b |x^l+1$ be an irreducible factor, $\LP_l(A, b)$ defines code families studied in~\cite{Panteleev_2021, Panteleev_2022}; in particular for $\LP_l(A, 1+x)$ which has asymptotic distance $n/\log n$ for an expanding $A$. Whenever $R$ is semi-simple (or, more generally, $b$ is the unique irreducible factor), we can analytically find the basis. Let $l = 9$,
\begin{align}
     A = \left( \begin{array}{ccc}
    1 & 1 & 1   \\
    1 & x & x^2
\end{array}\right),
\end{align}
and $\LP_l(A, b=x^2 + x+ 1)$. This code encodes only two logical qubits given by 
\begin{align}
    L^M_Z = G_A \otimes_R E_b \oplus 0
\end{align}
with $ \rs_R G_A = \ker_R A$ and $\rs_{\mathbb{F}_2} E_b \cong \coker_R b = R/(b) \cong c_bR$. The second summand $E_A \otimes_R G_b$ vanishes: it is supported in the $c_b$ sector, where $A(\beta)$ has full rank and hence $c_b \coker_R A = 0$. In this case, we can compute $c_b$ from the coefficient formula Lemma~\ref{lemma: central-idempotent-coeff-root}:
\begin{align}
    c_b = x + x^2 + x^4 + x^5 + x^7 + x^8
\end{align}
so that 
\begin{align}
    E_b = \left( c_b, c_b x\right)^T.
\end{align}
One can compute that $\ker_R A = (u)$ for 
\begin{align}
   u = \left(x, x+1, 1 \right)^T.
\end{align}
This would give two $Z$-type basis operators: 
\begin{align}
    L^M_Z = ( c_b u, c_b x u)^T \oplus 0. 
\end{align}
\end{example}

\subsection{Proof of Theorem~\ref{thm: main-LP-basis-characterization} and Theorem~\ref{thm: main-computation-kernel-cokernel}}\label{subsec: lp-basis-characterization}

The above systematical study through the lens of algebraic root theory provides an analytical, fully explicit manner, which finally allows us to prove the stated theorems in Section~\ref{sec: main}.

Lemma~\ref{lemma: first-LP-basis-characterization} overlooks two caveats: $(i)$ We are unable to count analytically the number of logical operators for the lifted-product (LP) codes, as evident from Counter-Example~\ref{counterexample: first-LP-basis-minimality}. $(ii)$ There is so far no guarantee for $Z$-basis $L_Z$ and $X$-type basis $L_X$ to be conjugate basis.

\begin{theorem}[Restatement of Theorem~\ref{thm: main-computation-kernel-cokernel}]\label{thm: formal-computation-kernel-cokernel}
    Let $R = \mathbb{F}_2[x]/(x^l + 1)$ with odd $l$, let $x^l+1 = \prod^t_{i=1} b^{(i)}$ with central idempotents $c_{b^{(i)}}$ and $d_i := \deg b^{(i)}$, and let $A \in R^{m_A \times n_A}$ with components $A_{(i)} = \mathrm{ev}_{\beta_i}(c_{b^{(i)}}A) = A(\beta_i)$. For each $i = 1, \cdots, t$: 
    \begin{itemize}
        \item Let $I^{(i)}_A$ be an information set of $\ker_{R_{(i)}} A_{(i)}$ with systematic generator matrix $G_{A_{(i)}}$ whose rows are $g^{(i,j)}_{\mathcal{A}_1}$ for $j \in I^{(i)}_A$, and let $e^{(i,j)}_{\mathcal{A}_1}$ be the unit vector supported on the $j$-th index of $I^{(i)}_A$. Denote their row-stacked matrix $E_{A^*_{(i)}}$, which forms a basis of $\coker_{R_{(i)}} A^T_{(i)}$. 
        
        \item Let $I^{(i)}_{A^*}$ be an information set of $\ker_{R_{(i)}} A^T_{(i)}$ with systematic generator matrix $G_{A^*_{(i)}}$ whose rows are $g^{(i,j)}_{\mathcal{A}_0}$ for $j \in I^{(i)}_{A^*}$, and let $e^{(i,j)}_{\mathcal{A}_0}$ be the unit vector supported on the $j$-th index of $I^{(i)}_{A^*}$. Denote their row-stacked matrix $E_{A_{(i)}}$, which forms a basis of $\coker_{R_{(i)}} A_{(i)}$.
        
        \item Let $G^{(i)}_{A^*}$ be constructed as follows. For each row $g^{(i, j)}_{\mathcal{A}_0}$ of $G_{A^*_{(i)}}$, denote the conjugated lift $\bar{c}_{b^{(i)}}\bar{\xi}^{(i, j)}_{\mathcal{A}_0} := \overline{\mathrm{ev}^{-1}_{\beta_i}(g^{(i, j)}_{\mathcal{A}_0})}$. Let $m^{(i)}_0, \cdots, m^{(i)}_{d_i-1}$ be distinct integers from $[l]$ such that $\{\beta_i^{-m^{(i)}_k}\}^{d_i-1}_{k=0}$ is an $\mathbb{F}_2$-basis of $\mathbb{F}_{q_i}$ --- equivalently, such that the dual basis of Lemma~\ref{lemma: basis-dual-basis-from-finite-field-ring} exists; for instance, $m^{(i)}_k = k$ --- and denote the primal basis: 
        \begin{align}
            x^{m^{(i)}_0} \bar{c}_{b^{(i)}}\bar{\xi}^{(i, j)}_{\mathcal{A}_0}, \cdots, x^{m^{(i)}_{d_i-1}} \bar{c}_{b^{(i)}}\bar{\xi}^{(i, j)}_{\mathcal{A}_0}.
        \end{align}

        \item Similarly, let $E^{(i)}_{A}$ be constructed as follows. For each row $e^{(i,j)}_{\mathcal{A}_0}$ from $E_{A_{(i)}}$, denote $c_{b^{(i)}}e^{(i,j)}_{\mathcal{A}_0} = \mathrm{ev}^{-1}_{\beta_i}(e^{(i, j)}_{\mathcal{A}_0})$. Taking the integers $m^{(i)}_0, \cdots, m^{(i)}_{d_i-1}$ to be the same as in the previous item, we denote the dual basis:
        \begin{align}
            h^{(i)}_{m^{(i)}_0} c_{b^{(i)}} e^{(i,j)}_{\mathcal{A}_0}, \cdots, h^{(i)}_{m^{(i)}_{d_i-1}} c_{b^{(i)}} e^{(i,j)}_{\mathcal{A}_0},
        \end{align}
        for $h^{(i)}_{m^{(i)}_0}, \cdots, h^{(i)}_{m^{(i)}_{d_i-1}}$ the dual basis associated with $c_{b^{(i)}}$ as in Lemma~\ref{lemma: basis-dual-basis-from-finite-field-ring}.
    \end{itemize}
    Then let $G_{A^*} = \mathrm{vstack}(G^{(1)}_{A^*}, \cdots, G^{(t)}_{A^*})$, and $E_{A} = \mathrm{vstack}(E^{(1)}_{A}, \cdots, E^{(t)}_{A})$. Let $G_A = \mathrm{vstack}(\mathrm{ev}^{-1}_{\beta_{1}}(G_{A_{(1)}}), \cdots,\mathrm{ev}^{-1}_{\beta_{t}}(G_{A_{(t)}}) )$ and $E_{A^*} = \mathrm{vstack}(\overline{\mathrm{ev}^{-1}_{\beta_{1}}(E_{A^*_{(1)}})}, \cdots, \overline{\mathrm{ev}^{-1}_{\beta_{t}}(E_{A^*_{(t)}})})$, whose rows are $\bar{c}_{b^{(i)}} e^{(i,j)}_{\mathcal{A}_1}$. Then we have 
       \begin{align}
       \begin{aligned}
            &\rs_R G_{A} = \ker_R A; \quad \rs_{\mathbb{F}_2} E_{A} \cong  \coker_R A, \\
            &\rs_{\mathbb{F}_2} G_{A^*} = \ker_R A^*; \quad \rs_{R} E_{A^*} \cong  \coker_R A^*,
       \end{aligned}
    \end{align}
    and $G_{A^*}$, $E_{A}$ are minimal (over $\mathbb{F}_2$) generating matrices of $\ker_R A^*$ and $\coker_R A$, respectively.
\end{theorem}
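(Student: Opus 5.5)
The plan is to reduce all four claims to per-sector statements over the fields $R_{(i)}\cong\mathbb{F}_{q_i}$ and then reassemble them using $\sum_i c_{b^{(i)}}=1$ and the orthogonality in Lemma~\ref{lemma: central-idempotent-property}. The first two equalities, $\rs_R G_A=\ker_R A$ and $\rs_{\mathbb{F}_2}E_A\cong\coker_R A$, are essentially Lemma~\ref{lemma: kernel-A-odd} and Lemma~\ref{lemma: cokernel-A-odd}. The only new point for $E_A$ is that its rows are the $\mathbb{F}_2$-multiples $h^{(i)}_{m_k}c_{b^{(i)}}e^{(i,j)}_{\mathcal{A}_0}$, not the $R$-span of $c_{b^{(i)}}e^{(i,j)}_{\mathcal{A}_0}$. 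I will therefore show that $\{h^{(i)}_{m_k}\}_{k}$ is an $\mathbb{F}_2$-basis of $c_{b^{(i)}}R\cong\mathbb{F}_{q_i}$. Lemma~\ref{lemma: basis-dual-basis-from-finite-field-ring} gives a dual system, so these $d_i$ elements are $\mathbb{F}_2$-independent, and $d_i=\dim_{\mathbb{F}_2}c_{b^{(i)}}R$. Hence
\[
\rs_{\mathbb{F}_2}\{h^{(i)}_{m_k}c_{b^{(i)}}e^{(i,j)}\}_k=c_{b^{(i)}}R\,e^{(i,j)} .
\]
Summing over $j$ and $i$ recovers $M_E$ of Lemma~\ref{lemma: cokernel-A-odd}, which is isomorphic to $\coker_R A$.

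For the conjugate side, I will transfer everything through the involution $f\mapsto\bar f$. Conjugation is a ring automorphism of $R$, so $\ker_R A^*=\overline{\ker_R A^T}$. By Lemma~\ref{lemma: conjugate-central-idempotent}, it sends the sector $c_{b^{(i)}}$ to $c_{b^{(i)\vee}}=\bar c_{b^{(i)}}$. By Lemma~\ref{lemma: conjugate-root}, $\mathrm{ev}_{\beta_i}$ on that sector corresponds to $\mathrm{ev}_{\beta_i^{-1}}$ on the conjugated sector. Applying Lemma~\ref{lemma: kernel-A-odd} to $A^T$ shows that $\ker_R A^T$ in sector $i$ is the $R$-span of $\mathrm{ev}^{-1}_{\beta_i}(g^{(i,j)}_{\mathcal{A}_0})$. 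Conjugating, sector $\bar c_{b^{(i)}}$ of $\ker_R A^*$ is $\bar c_{b^{(i)}}R\cdot\bar\xi^{(i,j)}$. I then replace the $R$-span by an $\mathbb{F}_2$-span, using that $\{x^{m_k}\bar c_{b^{(i)}}\}_k$ is an $\mathbb{F}_2$-basis of $\bar c_{b^{(i)}}R$. This holds because its image under $\mathrm{ev}_{\beta_i^{-1}}$ is $\{\beta_i^{-m_k}\}$, which is a basis by the hypothesis on the $m_k$.

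The key steps for the cokernel of $A^*$ are the following.
\begin{itemize}
\item Apply Lemma~\ref{lemma: coker-basis-ker-transpose} sectorwise to $A^T_{(i)}$, viewing the kernel of its transpose as $\ker A_{(i)}$ with information set $I^{(i)}_A$.
\item Lift back with $\mathrm{ev}^{-1}_{\beta_i}$ and conjugate.
\item Invoke Lemma~\ref{lemma: section-maps-coker-R} (through the analogue of Lemma~\ref{lemma: cokernel-A-odd} for $A^*$) to obtain $\rs_R E_{A^*}\cong\coker_R A^*$.
\end{itemize}

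For minimality over $\mathbb{F}_2$, the approach is a dimension count. The number of rows of $G_{A^*}$ is $\sum_i d_i\,\dim_{\mathbb{F}_{q_i}}\ker A_{(i)}^T$, which equals $\dim_{\mathbb{F}_2}\ker_R A^*$ because each sector contributes $d_i$ times its $\mathbb{F}_{q_i}$-dimension. The same count works for $E_A$ against $\dim_{\mathbb{F}_2}\coker_R A$. Since the rows span, equality of counts forces linear independence, so both generating matrices are minimal.

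I expect the main obstacle to be the bookkeeping between sectors $b$ and $b^\vee$ under conjugation. Specifically, one must check that the index sets and the evaluation points $\beta_i$ versus $\beta_i^{-1}$ are used consistently, so that the conjugated lift of a $\beta_i$-systematic generator really lies in $\ker A^*$. I will verify this by checking $A^*\bar\xi=\overline{A^T\xi}$ entrywise and evaluating at $\beta_i^{-1}$ with Lemma~\ref{lemma: conjugate-root}. This evaluation reduces to $A^T(\beta_i)g=0$.
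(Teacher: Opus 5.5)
Your proposal is correct and follows essentially the same route as the paper. You reduce sectorwise via Lemma~\ref{lemma: kernel-A-odd} and Lemma~\ref{lemma: cokernel-A-odd}, and use the dual-basis argument to show that $\{c_{b^{(i)}}h^{(i)}_{m_k}\}_k$ is an $\mathbb{F}_2$-basis of $c_{b^{(i)}}R$. You then transfer to $A^*$ by conjugation, using Lemmas~\ref{lemma: conjugate-central-idempotent} and~\ref{lemma: conjugate-root} and the basis $\{x^{m_k}\bar c_{b^{(i)}}\}_k$ of $\bar c_{b^{(i)}}R$, and obtain minimality by a dimension count. One small precision: it is $c_{b^{(i)}}h^{(i)}_{m_k}$, not $h^{(i)}_{m_k}$ itself, that lies in and spans $c_{b^{(i)}}R$.
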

\begin{proof}
        We prove the four displayed identities in turn, and then the minimality claim.
        \begin{itemize}
            \item ($\rs_R G_A = \ker_R A$). This is Lemma~\ref{lemma: kernel-A-odd}.

            \item ($\rs_{\mathbb{F}_2} E_A \cong \coker_R A$). By Lemmas~\ref{lemma: coker-basis-ker-transpose} and~\ref{lemma: cokernel-A-odd}, the lifted unit vectors $c_{b^{(i)}} e^{(i,j)}_{\mathcal{A}_0}$ satisfy conditions Eq.~\eqref{eq: splitting-condition-1} and Eq.~\eqref{eq: splitting-condition-2} of Lemma~\ref{lemma: section-maps-coker-R}, and the duals $\{h^{(i)}_{m^{(i)}_k}\}$ of Lemma~\ref{lemma: basis-dual-basis-from-finite-field-ring} form an $\mathbb{F}_2$-basis of $c_{b^{(i)}}R$, since they pair to the identity against the $\mathbb{F}_2$-basis $\{c_{b^{(i)}}x^{m^{(i)}_k}\}$. Hence the rows of $E_A$ are $\mathbb{F}_2$-linearly independent --- rows with distinct $j$ have disjoint supports, and for fixed $j$ the coefficients $h^{(i)}_{m^{(i)}_k}$ are $\mathbb{F}_2$-independent --- and $\rs_{\mathbb{F}_2} E_A = \rs_R E_A \cong \coker_R A$ by Lemma~\ref{lemma: section-maps-coker-R}.

            \item ($\rs_{\mathbb{F}_2} G_{A^*} = \ker_R A^*$). Conjugation is a ring automorphism with $\overline{c_{b^{(i)}}} = c_{b^{(i)\vee}}$ (Lemma~\ref{lemma: conjugate-central-idempotent}); applying it to $A^T c_{b^{(i)}}\xi^{(i,j)}_{\mathcal{A}_0} = 0$ gives $A^* \bar{c}_{b^{(i)}}\bar{\xi}^{(i,j)}_{\mathcal{A}_0} = 0$. Moreover, since $\overline{A^T u} = A^* \bar{u}$ for every $u \in R^{m_A}$, we have $u \in \ker_R A^T$ if and only if $\bar{u} \in \ker_R A^*$, and $v \in \IM_R A^T$ if and only if $\bar{v} \in \IM_R A^*$; applying this to Lemmas~\ref{lemma: kernel-A-odd} and~\ref{lemma: cokernel-A-odd} for $A^T$ gives the corresponding statements for $A^*$, with $c_{b^{(i)}}$ replaced by $\bar{c}_{b^{(i)}}$. By Lemma~\ref{lemma: conjugate-root}, $\beta_i^{-1}$ is a root of $b^{(i)\vee}$, and $\mathrm{ev}_{\beta_i^{-1}}(\bar{c}_{b^{(i)}} x^{m^{(i)}_k}) = \beta_i^{-m^{(i)}_k}$; since $\{\beta_i^{-m^{(i)}_k}\}^{d_i-1}_{k=0}$ is an $\mathbb{F}_2$-basis of $\mathbb{F}_{q_i}$ by construction and $\mathrm{ev}_{\beta_i^{-1}}$ is an isomorphism (Theorem~\ref{thm: isometry-lifted-product}), the set $\{\bar{c}_{b^{(i)}} x^{m^{(i)}_k}\}^{d_i-1}_{k=0}$ is an $\mathbb{F}_2$-basis of $\bar{c}_{b^{(i)}}R$. Hence, for each $j$, $\operatorname{span}_{\mathbb{F}_2}\{x^{m^{(i)}_k}\bar{c}_{b^{(i)}}\bar{\xi}^{(i,j)}_{\mathcal{A}_0}\}^{d_i-1}_{k=0} = R\,\bar{c}_{b^{(i)}}\bar{\xi}^{(i,j)}_{\mathcal{A}_0}$, so the rows of $G_{A^*}$ form an $\mathbb{F}_2$-basis of $\ker_R A^*$.

            \item ($\rs_R E_{A^*} \cong \coker_R A^*$). By conjugating Lemma~\ref{lemma: cokernel-A-odd} applied to $A^T$, the vectors $\bar{c}_{b^{(i)}} e^{(i,j)}_{\mathcal{A}_1}$ satisfy conditions Eq.~\eqref{eq: splitting-condition-1} and Eq.~\eqref{eq: splitting-condition-2} of Lemma~\ref{lemma: section-maps-coker-R} with $A$ replaced by $A^*$; hence $\rs_R E_{A^*} \cong \coker_R A^*$.

            \item (Minimality). $G_{A^*}$ has $\sum^t_{i=1} d_i |I^{(i)}_{A^*}|$ rows and $\dim_{\mathbb{F}_2} \ker_R A^* = \sum^t_{i=1} d_i \dim_{R_{(i)}} \ker_{R_{(i)}} A^T_{(i)} = \sum^t_{i=1} d_i |I^{(i)}_{A^*}|$; $E_A$ has the same number of rows, and $\dim_{\mathbb{F}_2} \coker_R A = \sum^t_{i=1} d_i \dim_{R_{(i)}} \coker_{R_{(i)}} A_{(i)}$ equals it by Lemma~\ref{lemma: coker-basis-ker-transpose}. A basis has the minimal number of rows among generating sets, which proves the last claim.
        \end{itemize}
    \end{proof}

    Constructed in this way, we can finally state a formal, precise version of Theorem~\ref{thm: main-LP-basis-characterization}, which addresses Counter-Example~\ref{counterexample: first-LP-basis-minimality} and the caveats in Lemma~\ref{lemma: first-LP-basis-characterization}.

    \begin{theorem}[Restatement of Theorem~\ref{thm: main-LP-basis-characterization}]\label{thm: formal-LP-basis-characterization}
          Let $R = \mathbb{F}_2[x]/(x^l + 1)$ with odd $l$, and let the 2D lifted-product (LP) code $\LP_l(A, B)$ be given in Definition~\ref{def: main-2d-LPAB}. Then the $Z$-type logical operators corresponding to $H_1(\mc{M})$ have supports generated by
    \begin{align}
        L_{Z}^M =   G_{A} \otimes_R E_{B} \bigoplus E_{A} \otimes_R G_{B},
    \end{align}
    where $G_C$ and $E_C$ are constructed in Theorem~\ref{thm: formal-computation-kernel-cokernel} for $C \in \{A, B\}$. Similarly, the $X$-type logical operators corresponding to $H^1(\mc{M})$ have supports generated by
    \begin{align}
        L_{X}^M =  E_{A^*} \otimes_R G_{B^*} \bigoplus G_{A^*} \otimes_R E_{B^*},
    \end{align}
    where we take $G_{C^*}$ and $E_{C^*}$ constructed in Theorem~\ref{thm: formal-computation-kernel-cokernel} for $C \in \{A, B\}$. Explicitly, for each $i$ and $m \in \{m^{(i)}_0, \cdots, m^{(i)}_{d_i-1}\}$, the nonzero rows of $L^M_Z$ and $L^M_X$ are, respectively,
    \begin{align}
        \supp \bar{Z}^{(L,i)}_{p, q, m} &:= c_{b^{(i)}} h^{(i)}_{m} \zeta^{(p)}_{\mathcal{A}_1} \otimes_R e^{(q)}_{\mathcal{B}_0}, \label{eq: Z-basis-left-LP}\\
        \supp \bar{X}^{(L,i)}_{p, q, m} &:= \bar{c}_{b^{(i)}}x^{m} e^{(p)}_{\mathcal{A}_1} \otimes_R  \bar{\xi}^{(q)}_{\mathcal{B}_0}, \label{eq: X-basis-left-LP}
    \end{align}
    for the left sector and $p \in I^{(i)}_A$ and $q \in I^{(i)}_{B^*}$, and $\zeta^{(p)}_{\mathcal{A}_1} := \mathrm{ev}^{-1}_{\beta_i}(g^{(i,p)}_{\mathcal{A}_1})$, $\bar{\xi}^{(q)}_{\mathcal{B}_0} := \overline{\mathrm{ev}^{-1}_{\beta_i}(g^{(i,q)}_{\mathcal{B}_0})}$, with $\zeta^{(q)}_{\mathcal{B}_1}$ and $\bar{\xi}^{(p)}_{\mathcal{A}_0}$ defined in the same way. For the right sector, they are
    \begin{align}
        \supp \bar{Z}^{(R,i)}_{p, q, m} &:= h^{(i)}_{m} c_{b^{(i)}} e^{(p)}_{\mathcal{A}_0} \otimes_R \zeta^{(q)}_{\mathcal{B}_1}, \label{eq: Z-basis-right-LP}\\
        \supp \bar{X}^{(R,i)}_{p, q, m} &:= x^{m} \bar{c}_{b^{(i)}} \bar{\xi}^{(p)}_{\mathcal{A}_0} \otimes_R  e^{(q)}_{\mathcal{B}_1}, \label{eq: X-basis-right-LP}
    \end{align}
    for $p \in I^{(i)}_{A^*}$ and $q \in I^{(i)}_B$. Then the $\bar{Z}^{(i)}_{p,q,m}$ and $\bar{X}^{(i)}_{p,q,m}$ form $\mathbb{F}_2$-bases of $H_1(\mc{M})$ and $H^1(\mc{M})$, respectively, and they are conjugate with respect to $\langle \cdot, \cdot \rangle_R$:
    \begin{align}\label{eq: conjugate-pairing-formal-LP-basis}
        \langle \supp \bar{Z}^{(i)}_{p, q, m},\; \supp \bar{X}^{(i')}_{p', q', m'} \rangle_R = \delta_{ii'}\,\delta_{pp'}\,\delta_{qq'}\,\delta_{mm'}.
    \end{align}
    \end{theorem}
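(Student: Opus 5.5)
The proof has three parts: reduce everything to the idempotent sectors $c_{b^{(i)}}R$ and the two blocks, show the listed rows are logical and of the right count, and then prove the pairing identity Eq.~\eqref{eq: conjugate-pairing-formal-LP-basis}. The pairing identity also gives $\mathbb{F}_2$-linear independence.

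\emph{Validity and counting.} By Lemma~\ref{lemma: first-LP-basis-characterization} together with the K\"unneth isomorphism (Lemma~\ref{thm: kunneth-theorem-semi-simple}), $H_1(\mc{M})\cong \ker_R A\otimes_R \coker_R B\oplus \coker_R A\otimes_R \ker_R B$. Theorem~\ref{thm: formal-computation-kernel-cokernel} supplies $G_A,E_B$, and so on, meeting its hypotheses, so every row listed is a cycle or cocycle. Lemma~\ref{lemma: module-decomposition-odd-l} splits both sides sector by sector. In sector $i$ we have $c_{b^{(i)}}(M\otimes_R N)\cong M_{(i)}\otimes_{\mathbb{F}_{q_i}}N_{(i)}$, so
\[
\dim_{\mathbb{F}_2}c_{b^{(i)}}H_1 = d_i\bigl(|I^{(i)}_A|\,|I^{(i)}_{B^*}|+|I^{(i)}_{A^*}|\,|I^{(i)}_B|\bigr).
\]
This is exactly the number of index triples $(p,q,m)$, since there are $d_i$ values of $m$. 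The same count holds for $H^1$. It therefore suffices to prove the pairing identity. A square pairing matrix equal to the identity forces both families to be linearly independent, hence bases. Because the pairing of a cycle against a cocycle vanishes whenever either is a boundary, it descends to homology, so the families are conjugate bases.

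\emph{The pairing.} Three cases vanish immediately. Left-sector and right-sector rows have disjoint supports in $\mc{M}_1$. For different sectors $i\neq i'$, I expect to use $\langle u,v\rangle_R=[x^0]\,u\bar v$, so the product carries the factor $c_{b^{(i)}}\overline{\bar c_{b^{(i')}}}=c_{b^{(i)}}c_{b^{(i')}}=0$ (Lemmas~\ref{lemma: central-idempotent-property} and~\ref{lemma: conjugate-central-idempotent}). The tensor pairing factorizes: $\langle a\otimes_R b,\,a'\otimes_R b'\rangle_R=[x^0]\,(a\cdot\bar a')(b\cdot\bar b')$, where the inner products are $R$-valued dot products. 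In the left sector of block $i$ this gives the two factors $\zeta^{(p)}_{\mathcal A_1}\cdot\overline{e^{(p')}_{\mathcal A_1}}$ and $e^{(q)}_{\mathcal B_0}\cdot\xi^{(q')}_{\mathcal B_0}$. Systematic form on the information sets $I^{(i)}_A$ and $I^{(i)}_{B^*}$ makes these $c_{b^{(i)}}\delta_{pp'}$ and $c_{b^{(i)}}\delta_{qq'}$ after evaluation. The pairing then reduces to $[x^0]\,c_{b^{(i)}}h^{(i)}_m x^{-m'}=[x^{m'}]\,c_{b^{(i)}}h^{(i)}_m=\delta_{mm'}$, which is the dual-basis property of Lemma~\ref{lemma: basis-dual-basis-from-finite-field-ring}. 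The right sector is symmetric, with the roles of $A$ and $B$ exchanged.

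\emph{Main obstacle.} The delicate step is the conjugation bookkeeping. I have to check that $\overline{\bar\xi}=\xi$ lands in the $c_{b^{(i)}}$ sector rather than the $c_{b^{(i)\vee}}$ sector. I also have to check that $\mathrm{ev}_{\beta_i}$ of the product of a lifted systematic generator row and a lifted unit vector is exactly the coordinate $\delta_{pp'}$, not merely a unit of $\mathbb{F}_{q_i}$. For this I would apply Lemma~\ref{lemma: conjugate-root} ($\bar h(\beta^{-1})=h(\beta)$) together with the isometry of Theorem~\ref{thm: isometry-lifted-product}, and verify that the residual scalar in $c_{b^{(i)}}R$ is precisely $c_{b^{(i)}}$. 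Only once that holds does the final trace computation reduce to $\delta_{mm'}$.
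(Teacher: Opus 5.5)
Your proof is correct. The pairing computation is the same one the paper carries out. Where you differ is in how you establish the basis property.

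\textbf{How the paper argues.}
\begin{enumerate}
\item It takes spanning from Lemma~\ref{lemma: first-LP-basis-characterization}, via K\"unneth.
\item It proves $\mathbb{F}_2$-independence of the rows directly \emph{as vectors}. It applies $c_{b^{(i)}}$ and $\mathrm{ev}_{\beta_i}$. It then uses two facts: $\{\mathrm{ev}_{\beta_i}(h^{(i)}_m)\}_m$ is an $\mathbb{F}_2$-basis of $\mathbb{F}_{q_i}$, and the tensors $g^{(i,p)}_{\mathcal{A}_1}\otimes e^{(q)}_{\mathcal{B}_0}$ are $\mathbb{F}_{q_i}$-independent.
\item It proves the conjugate pairing afterwards as a separate property.
\end{enumerate}

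\textbf{How you argue.} You check the cycle and cocycle conditions, count $\dim_{\mathbb{F}_2}H_1$ sector by sector from K\"unneth, and let the identity pairing matrix give independence.

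\textbf{What each route buys.}
\begin{itemize}
\item Your route yields independence of homology and cohomology \emph{classes} for both families at once. It therefore does not lean on the fact, implicit in the paper, that the row span of $L^M_Z$ meets $\IM\partial^M_2$ trivially.
\item The paper's route needs no a priori dimension formula. It produces Corollary~\ref{cor: dimension-LP-semi-simple} as an output, where you use the count as an input. You derive that count independently from K\"unneth, so nothing is circular.
\end{itemize}

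\textbf{The step you flag as the main obstacle is immediate.} Conjugation is an involution, so $\overline{\bar\xi^{(q')}_{\mathcal{B}_0}}=\mathrm{ev}^{-1}_{\beta_i}(g^{(i,q')}_{\mathcal{B}_0})$. This lies in $c_{b^{(i)}}R^{m_B}$ by definition. For the same reason, $\overline{\bar c_{b^{(i')}}}=c_{b^{(i')}}$ needs no appeal to Lemma~\ref{lemma: conjugate-central-idempotent}.

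Next, $\mathrm{ev}^{-1}_{\beta_i}$ is a ring isomorphism onto $c_{b^{(i)}}R$ sending $1\mapsto c_{b^{(i)}}$ and $0\mapsto 0$. Hence the systematic coordinates lift exactly: $\xi^{(q')}_{q}=c_{b^{(i)}}\delta_{qq'}$ and $\zeta^{(p)}_{p'}=c_{b^{(i)}}\delta_{pp'}$. The residual scalar is thus precisely $c_{b^{(i)}}$. No use of Lemma~\ref{lemma: conjugate-root} or of the trace isometry is required before you reduce to $[x^{m'}]\bigl(c_{b^{(i)}}h^{(i)}_m\bigr)=\delta_{mm'}$.
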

\begin{proof}
    First, $\rs_{\mathbb{F}_2} L^M_Z = H_1(\mc{M})$ and $\rs_{\mathbb{F}_2} L^M_X = H^1(\mc{M})$ by Lemma~\ref{lemma: first-LP-basis-characterization}, whose conditions hold by Theorem~\ref{thm: formal-computation-kernel-cokernel}. Next, we show the linear independence. The rows $\bar{Z}^{(L,i)}_{p,q,m}$ and $\bar{Z}^{(R,i)}_{p,q,m}$ are supported on disjoint components of the physical space, so the two cases separate. Given a vanishing $\mathbb{F}_2$-combination $\sum_{i,p,q,m} c^{(i)}_{pqm} \bar{Z}^{(L,i)}_{p,q,m} = 0$, applying $c_{b^{(i)}}$ and $\mathrm{ev}_{\beta_i}$ entrywise gives, for each $i$,
    \begin{align}
        \sum_{p,q,m} c^{(i)}_{pqm}\, \mathrm{ev}_{\beta_i}(h^{(i)}_{m})\, g^{(i,p)}_{\mathcal{A}_1} \otimes e^{(q)}_{\mathcal{B}_0} = 0
    \end{align}
    over $\mathbb{F}_{q_i}$; the tensors $g^{(i,p)}_{\mathcal{A}_1} \otimes e^{(q)}_{\mathcal{B}_0}$ are $\mathbb{F}_{q_i}$-linearly independent, and $\{\mathrm{ev}_{\beta_i}(h^{(i)}_{m})\}_{m}$ is an $\mathbb{F}_2$-basis of $\mathbb{F}_{q_i}$, so all $c^{(i)}_{pqm} = 0$. The rows $\bar{Z}^{(R,i)}_{p,q,m}$ and the rows of $L^M_X$ are analogous. Hence the nonzero rows form $\mathbb{F}_2$-bases. Finally, we show the conjugacy Eq.~\eqref{eq: conjugate-pairing-formal-LP-basis}. For tensor rows,
    \begin{align}
       \begin{aligned}
        &\langle u_1 \otimes_R u_2, v_1 \otimes_R v_2 \rangle_R \\
        &= [x^0]\Big( \sum_a (u_1)_a \overline{(v_1)}_a \Big)\Big( \sum_b (u_2)_b \overline{(v_2)}_b \Big).
       \end{aligned}
    \end{align}
    Rows with different labels $L$ and $R$ pair to zero, since they are supported on disjoint components of the physical space. For $\bar{Z}^{(L,i)}_{p,q,m}$ and $\bar{X}^{(L,i')}_{p',q',m'}$, the two factors evaluate, by the systematic forms, to $\delta_{ii'}\delta_{pp'}\, c_{b^{(i)}} h^{(i)}_{m} x^{-m'}$ and $\delta_{ii'}\delta_{qq'}\, c_{b^{(i)}}$, so that
    \begin{align}
        \langle \supp \bar{Z}^{(L,i)}_{p,q,m}, \supp \bar{X}^{(L,i')}_{p',q',m'} \rangle_R &= \delta_{ii'}\delta_{pp'}\delta_{qq'}\, [x^{m'}]\big(c_{b^{(i)}} h^{(i)}_{m}\big) \\
        &= \delta_{ii'}\delta_{pp'}\delta_{qq'}\delta_{mm'};
    \end{align}
    the pairing of $\bar{Z}^{(R,i)}_{p,q,m}$ with $\bar{X}^{(R,i')}_{p',q',m'}$ is analogous, which establishes Eq.~\eqref{eq: conjugate-pairing-formal-LP-basis}.
\end{proof}

\begin{example}[Correct minimal basis construction to Counter-Example~\ref{counterexample: first-LP-basis-minimality}]
        We revisit Counter-Example~\ref{counterexample: first-LP-basis-minimality} with the construction of Theorem~\ref{thm: formal-computation-kernel-cokernel}. Here $x^3 + 1 = b^{(1)} b^{(2)}$ with $b^{(1)} = 1+x$ and $b^{(2)} = 1+x+x^2$, so that $d_1 = 1$, $d_2 = 2$, and $c_{b^{(1)}} = 1+x+x^2$, $c_{b^{(2)}} = x+x^2$. The information sets are $I^{(1)}_A = \{1\}$, $I^{(2)}_A = \{1, 2\}$, and $I^{(1)}_{B^*} = I^{(2)}_{B^*} = \{1\}$ with $e^{(1)}_{\mathcal{B}_0} = (1, 0)$; since $\ker_R B = 0$, there are no rows with label $R$. Taking $m^{(i)}_k = k$, the dual bases are $h^{(1)}_0 = 1+x+x^2$ and $h^{(2)}_0 = 1+x^2$, $h^{(2)}_1 = x+x^2$, and the lifted systematic rows are $\zeta^{(1)}_{\mathcal{A}_1} = (1+x+x^2,\, 1+x+x^2)$ for $b^{(1)}$, and $(x+x^2,\, 0)$, $(0,\, x+x^2)$ for $b^{(2)}$. The five nonzero rows of $L^M_Z$ are
        \begin{align}
            \bar{Z}^{(L,1)}_{1,1,0} &= (1+x+x^2,\; 1+x+x^2) \otimes_R e^{(1)}_{\mathcal{B}_0}, \\
            \bar{Z}^{(L,2)}_{1,1,0} &= (1+x^2,\; 0) \otimes_R e^{(1)}_{\mathcal{B}_0}, \\
            \bar{Z}^{(L,2)}_{1,1,1} &= (x+x^2,\; 0) \otimes_R e^{(1)}_{\mathcal{B}_0}, \\
            \bar{Z}^{(L,2)}_{2,1,0} &= (0,\; 1+x^2) \otimes_R e^{(1)}_{\mathcal{B}_0}, \\
            \bar{Z}^{(L,2)}_{2,1,1} &= (0,\; x+x^2) \otimes_R e^{(1)}_{\mathcal{B}_0},
        \end{align}
        which are $\mathbb{F}_2$-linearly independent, matching $k = 5$. In contrast to Counter-Example~\ref{counterexample: first-LP-basis-minimality}, each row is supported in a single $c_{b^{(i)}}$-component, and the six-row dependence does not arise.
    \end{example}

\begin{corollary}[Dimension counting of lifted-product code]\label{cor: dimension-LP-semi-simple}
      When $l$ is odd, the lifted-product code $\LP_l(A, B)$ encodes
    \begin{align}
        k = \sum^t_{i=1} d_i \left( |I^{(i)}_A|\, |I^{(i)}_{B^*}| + |I^{(i)}_{A^*}|\, |I^{(i)}_B| \right)
    \end{align}
    logical qubits, where $|I^{(i)}_A| = \dim_{R_{(i)}} \ker_{R_{(i)}} A_{(i)}$ and $|I^{(i)}_{A^*}| = \dim_{R_{(i)}} \coker_{R_{(i)}} A_{(i)}$, and similarly for $B$.
\end{corollary}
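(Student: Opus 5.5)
The count will come from the explicit $\mathbb{F}_2$-basis of Theorem~\ref{thm: formal-LP-basis-characterization}, not from the abstract K\"unneth isomorphism. That theorem gives $\mathbb{F}_2$-bases of $H_1(\mc{M})$ in the form $\{\bar{Z}^{(L,i)}_{p,q,m}\}\cup\{\bar{Z}^{(R,i)}_{p,q,m}\}$. Since binarization $\mbb{B}$ is a chain isomorphism, $k=\dim_{\mathbb{F}_2}H_1(\mc{Q})=\dim_{\mathbb{F}_2}H_1(\mc{M})$. The whole proof therefore reduces to counting the labels $(i,p,q,m)$ in each sector and confirming that they index linearly independent elements.

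For the left sector, fix a component $i$. The index $p$ ranges over the information set $I^{(i)}_A$ of $\ker_{R_{(i)}}A_{(i)}$, so there are $|I^{(i)}_A|=\dim_{R_{(i)}}\ker_{R_{(i)}}A_{(i)}$ choices. The index $q$ ranges over $I^{(i)}_{B^*}$, the information set of $\ker_{R_{(i)}}B^T_{(i)}$. The index $m$ ranges over the $d_i$ integers $m^{(i)}_0,\dots,m^{(i)}_{d_i-1}$ indexing the dual basis $\{h^{(i)}_m\}$ of Lemma~\ref{lemma: basis-dual-basis-from-finite-field-ring}. This gives $d_i|I^{(i)}_A||I^{(i)}_{B^*}|$ elements. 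The right sector gives $d_i|I^{(i)}_{A^*}||I^{(i)}_B|$ elements by the same count, with $p\in I^{(i)}_{A^*}$ and $q\in I^{(i)}_B$. Linear independence of these elements is part of Theorem~\ref{thm: formal-LP-basis-characterization}. As a cross-check, it also follows from the conjugate pairing \eqref{eq: conjugate-pairing-formal-LP-basis}: a nondegenerate pairing between the $Z$- and $X$-families forces independence. Summing over $i$ and over both sectors gives the stated formula.

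Next I identify each $|I|$ with the stated dimension. By Definition~\ref{def: information-set}, $|I^{(i)}_A|$ equals $\dim$ of the code it is an information set for, which is $\dim_{R_{(i)}}\ker_{R_{(i)}} A_{(i)}$. For the cokernel, Lemma~\ref{lemma: coker-basis-ker-transpose} gives $|I^{(i)}_{A^*}|=\dim\ker A^T_{(i)}=m_A-\operatorname{rank}A_{(i)}=\dim\coker A_{(i)}$. The analogous identities hold for $B$. One more point must be checked: $I^{(i)}_{B^*}$ is defined via $B^T_{(i)}$ over the field $R_{(i)}$, whereas the $X$-logicals use $B^*$. The conjugated generators in Theorem~\ref{thm: formal-computation-kernel-cokernel} handle this, since conjugation is a ring automorphism that preserves dimensions componentwise (Lemma~\ref{lemma: conjugate-central-idempotent}).

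I expect the main obstacle to be the $m$-index. I must verify that exactly $d_i$ values of $m$ occur per $(i,p,q)$ and that no collapse of the kind in Counter-Example~\ref{counterexample: first-LP-basis-minimality} can happen. This holds because $c_{b^{(i)}}R\cong\mathbb{F}_{q_i}$ has $\mathbb{F}_2$-dimension $d_i$ and $\{\mathrm{ev}_{\beta_i}(h^{(i)}_m)\}_m$ is an $\mathbb{F}_2$-basis of it. As a sanity check, I also plan to verify the count against the abstract K\"unneth route. By Lemma~\ref{thm: kunneth-theorem-semi-simple}, $H_1\cong\ker A\otimes_R\coker B\oplus\coker A\otimes_R\ker B$. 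Decomposing componentwise, the $c_{b^{(i)}}$-piece is a tensor product of $\mathbb{F}_{q_i}$-vector spaces. Its $\mathbb{F}_2$-dimension is therefore $d_i$ times the product of the $\mathbb{F}_{q_i}$-dimensions, which matches the formula.
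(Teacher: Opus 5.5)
Your proposal is correct and follows the paper's implicit argument. The corollary is read off by counting the labels $(i,p,q,m)$ of the explicit $\mathbb{F}_2$-basis of $H_1(\mc{M})$ in Theorem~\ref{thm: formal-LP-basis-characterization}, with $|I^{(i)}_A|$ and $|I^{(i)}_{A^*}|$ (and likewise for $B$) identified with the component kernel and cokernel dimensions via Definition~\ref{def: information-set} and Lemma~\ref{lemma: coker-basis-ker-transpose}. Your conjugate-pairing argument gives independence modulo boundaries, and the componentwise K\"unneth check confirms the count; both are sound and make the argument slightly more self-contained than the paper's chain-level independence proof.
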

Hence, both caveats of Lemma~\ref{lemma: first-LP-basis-characterization} are resolved: the number of logical operators is counted analytically by Corollary~\ref{cor: dimension-LP-semi-simple}, addressing $(i)$, and the constructed bases are conjugate with respect to $\langle \cdot, \cdot \rangle_R$ by Theorem~\ref{thm: formal-LP-basis-characterization}, addressing $(ii)$.

\subsection{Conditions of canonical LP codes and properties of canonical LP basis}

We now discuss the canonical LP codes and canonical LP basis. Recall the definition of canonical LP codes in Definition~\ref{def: main-canonical-basis-LP}, which we restate here for convenience.

\begin{definition}[Canonical LP basis and canonical LP code]\label{def: formal-canonical-LP-basis}
      For an $\mathrm{LP}_l(A, B)$ code, we say that its logical bases $L^M_Z$ in Eq.~\eqref{eq: main-LZ} and $L^M_X$ in Eq.~\eqref{eq: main-LX} are canonical if, up to column permutations over $R$, they contain a subset of rows of the following form, where $r_A := n_A - m_A$ and $r_B^* := m_B - n_B$:
    \begin{equation}
    \begin{aligned}
        \tilde{L}_Z^M & = \mathrm{vstack}\{x^m \left( I_{r_A}, G_A^0\right)\otimes_R \left(I_{r_B^*}, 0\right) \oplus 0\}_{m = 0}^{l - 1}; \\
        \tilde{L}_X^M & = \mathrm{vstack}\{x^{m^{\prime}} \left( I_{r_A}, 0\right)\otimes_R \left(I_{r_B^*}, G_B^{*0}\right) \oplus 0\}_{m^{\prime} = 0}^{l - 1},
    \end{aligned}
    \label{eq: canonical-basis-formal}
    \end{equation}
    where $G_A^0 \in R^{r_A \times m_A}$ and $G_B^{*0} \in R^{r_B^* \times n_B}$.
    We can also rewrite these matrices in terms of their indexed rows:
    \begin{equation}
    \begin{aligned}
        \tilde{L}_Z^M & = \mathrm{vstack}\{\supp \bar{Z}_{i, j, m}\}_{i \in [r_A], j \in [r_B^*], m \in [l]}; \\
        \tilde{L}_X^M & = \mathrm{vstack}\{\supp \bar{X}_{i, j, m}\}_{i \in [r_A], j \in [r_B^*], m \in [l]},
    \end{aligned}
    \label{eq: indexed-basis-formal}
    \end{equation}
    where $\supp \bar{Z}_{i, j, m}$ and $\supp \bar{X}_{i, j, m}$ denote the following support vectors over $R$, respectively:
    \begin{align}\label{eq: canonical-basis-support-vectors-formal}
       &\supp \bar{Z}_{i, j, m} :=  x^m u^{(i)}_{\mathcal{A}_1} \otimes_R e^{(j)}_{\mathcal{B}_0}, \\
       &\supp \bar{X}_{i, j, m} := x^m e^{(i)}_{\mathcal{A}_1} \otimes_R v^{(j)}_{\mathcal{B}_0}
    \end{align}
    where $i \in [r_A]$ and $j \in [r_B^*]$. The vectors $u^{(i)}_{\mathcal{A}_1}$ and $e^{(i)}_{\mathcal{A}_1}$ are the $i$-th rows of $\left( I_{n_A - m_A}, G_A^0\right)$ and $\left( I_{n_A - m_A}, 0\right)$, respectively. The vectors $e^{(j)}_{\mathcal{B}_0}$ and $v^{(j)}_{\mathcal{B}_0}$ are the $j$-th rows of $\left(I_{m_B - n_B}, 0\right)$ and $\left(I_{m_B - n_B}, G_B^{*0}\right)$, respectively.
     We say that an $\mathrm{LP}_l(A, B)$ code is canonical if it admits a canonical logical basis in the above form.
\end{definition}

This form has an enriched cyclicity structure and is primarily used for all later logical constructions. We now provide a necessary and sufficient condition for a lifted-product LP code to be canonical.

\begin{theorem}[Conditions for canonical LP codes: Restatement of Theorem~\ref{thm: main-canonical-LP-basis}]\label{thm: formal-condition-canonical-LP}
     Let $\LP_l(A, B)$ be a lifted-product (LP) code with odd $l$. For $C \in \{A, B^*\}$, let $I^{(i)}_{C}$ denote the information set of $G^{(i)}_C$, defined as in Theorem~\ref{thm: formal-computation-kernel-cokernel} for each component $R_{(i)}$. Then $\LP_l(A, B)$ admits a canonical LP basis in the sense of Definition~\ref{def: formal-canonical-LP-basis} if and only if these information sets can be chosen such that
     \begin{align}
        |\bigcap_{i=1}^t I^{(i)}_A| \geq n_A - m_A =: r_A, \quad |\bigcap_{i=1}^t I^{(i)}_{B^*}| \geq m_B - n_B =: r_B^*,
     \end{align}
     where $I^{(i)}_A \subseteq [n_A]$ and $I^{(i)}_{B^*} \subseteq [m_B]$ are the information sets of $G^{(i)}_A$ and $G^{(i)}_{B^*}$, respectively. The resulting canonical sector contains $k_c := r_A r_B^* l \leq k$ logical qubits; any additional logical qubits are residual.
\end{theorem}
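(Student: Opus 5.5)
Both directions go through the idempotent decomposition of Lemma~\ref{lemma: kernel-A-odd}, Lemma~\ref{lemma: cokernel-A-odd} and Theorem~\ref{thm: formal-computation-kernel-cokernel}. The key fact is that an $R$-vector $u\in R^{n_A}$ equals $\sum_i c_{b^{(i)}}u$. Via $\mathrm{ev}_{\beta_i}$, $u$ is determined by its images $u(\beta_i)\in\mathbb{F}_{q_i}^{n_A}$. A monomial multiple $x^m u$ is evaluated componentwise to $\beta_i^m u(\beta_i)$. So I will translate the canonical form into simultaneous systematic forms on every component field.

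For sufficiency, permute columns of $A$ so that $[r_A]\subseteq I^{(i)}_A$ for all $i$. First I check that $\dim_{R_{(i)}}\ker A_{(i)}\ge r_A$ always holds, since $A_{(i)}$ has $m_A$ rows. For each $i$ and each $p\in[r_A]$, the systematic generator $g^{(i,p)}$ with respect to $I^{(i)}_A$ has entry $\delta_{pp'}$ on $[r_A]$. Define the $R$-vector $u^{(p)}:=\sum_i \mathrm{ev}^{-1}_{\beta_i}(g^{(i,p)})$. Its $[r_A]$-restriction is $\sum_i c_{b^{(i)}}e^{(p)}=e^{(p)}$. On the remaining $r_A$-complement of $I^{(i)}_A$ it has zero entries within each component, so $u^{(p)}=(e^{(p)},G^0_A)$ lies in $\ker_R A$ by Lemma~\ref{lemma: kernel-A-odd}(1). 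Dually, permute the rows of $B$ (the columns of $B^*$) so that $[r_B^*]\subseteq I^{(i)}_{B^*}$. This gives $v^{(q)}=(e^{(q)},G^{*0}_B)\in\ker_R B^*$. Moreover, the unit vectors $e^{(q)}_{\mathcal{B}_0}$, $q\in[r_B^*]$, lift to cokernel representatives: $\sum_i c_{b^{(i)}}e^{(q)}=e^{(q)}$ satisfies the splitting conditions of Lemma~\ref{lemma: section-maps-coker-R} by Lemma~\ref{lemma: coker-basis-ker-transpose} applied on each component. Likewise $e^{(p)}_{\mathcal{A}_1}$ represents $\coker A^*$ classes. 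The rows $x^m u^{(p)}\otimes e^{(q)}$ and $x^m e^{(p)}\otimes v^{(q)}$ then lie in $\rs_{\mathbb F_2}$ of $L_Z^M,L_X^M$ from Theorem~\ref{thm: formal-LP-basis-characterization}, because $\rs_{\mathbb{F}_2}(G_A\otimes E_B)=\ker_R A\otimes_R M_E$ is an $R$-module. They are therefore nontrivial logicals. Their pairing is $\langle x^m u^{(p)}\otimes e^{(q)},x^{m'}e^{(p')}\otimes v^{(q')}\rangle_R=\delta_{pp'}\delta_{qq'}[x^0]x^{m-m'}=\delta\delta\delta_{mm'}$. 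This gives $r_Ar_B^*l$ independent conjugate pairs, so $k_c\le k$, and the remaining basis vectors of Theorem~\ref{thm: formal-LP-basis-characterization} can be symplectically completed as residual qubits.

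For necessity, assume the canonical rows exist. Then $(I_{r_A},G^0_A)$ has rows in $\ker_R A$ (up to trivial parts). I will argue this by pairing against the canonical $X$-rows and using that the $\mathcal{B}_0$-factor is a cokernel unit vector. Evaluating at $\beta_i$ gives $r_A$ vectors in $\ker A_{(i)}$ whose restriction to $[r_A]$ is the identity. Hence the restriction map $\ker A_{(i)}\to\mathbb{F}_{q_i}^{[r_A]}$ is surjective. One can then extend $[r_A]$ to an information set $I^{(i)}_A\supseteq[r_A]$ by completing a pivot set. The same argument for $B^*$ gives the condition on $I^{(i)}_{B^*}$.

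The main obstacle is the necessity step that extracts $\ker_R A$ membership from the tensor rows $x^m u^{(i)}\otimes e^{(j)}$ being logicals. Boundaries and the right sector could in principle contaminate this. My first attempt would be to apply $\partial_1^M$ directly: the left-sector row has image $Au^{(i)}\otimes e^{(j)}$, and this must vanish. Since $e^{(j)}$ is a free unit vector, this forces $Au^{(i)}=0$.
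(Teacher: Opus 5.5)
Your proposal is correct and follows the paper's route for sufficiency: you glue the componentwise systematic kernel rows with the central idempotents, and you use that the unit vectors on the common information indices represent cokernel classes in every component. The paper dismisses the converse as ``straightforward,'' and your argument fills it in correctly: applying $\partial^M_1$ (resp.\ $(\partial^M_2)^T$) to the canonical rows forces $Au^{(i)}=0$ (resp.\ $B^*v^{(j)}=0$), evaluating at each root gives kernel vectors that restrict to the identity on $[r_A]$, and extending $[r_A]$ by independent columns yields information sets containing it.
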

\begin{proof}
    We first prove the ``if'' direction. After column permutations, assume $[r_A] \subseteq \bigcap^t_{i=1} I^{(i)}_A$ and $[r_B^*] \subseteq \bigcap^t_{i=1} I^{(i)}_{B^*}$. For $p \in [r_A]$, define the glued rows
    \begin{align}
        u^{(p)}_{\mathcal{A}_1} := \sum^t_{i=1} c_{b^{(i)}} \zeta^{(p)}_{\mathcal{A}_1},
    \end{align}
    where $\zeta^{(p)}_{\mathcal{A}_1}$ are the lifted systematic rows of Theorem~\ref{thm: formal-computation-kernel-cokernel}. Since $p \in I^{(i)}_A$ for every $i$, the systematic entries give $u^{(p)}_{\mathcal{A}_1}|_{[r_A]} = e^{(p)}$, so the rows $u^{(p)}_{\mathcal{A}_1}$ form the block $\left(I_{r_A}, G_A^0\right)$; similarly, $v^{(q)}_{\mathcal{B}_0} := \sum^t_{i=1} \bar{c}_{b^{(i)}} \bar{\xi}^{(q)}_{\mathcal{B}_0}$ for $q \in [r_B^*]$ forms $\left(I_{r_B^*}, G_B^{*0}\right)$. Moreover, since $q \in I^{(i)}_{B^*}$ for every $i$, the unit vector $e^{(q)}_{\mathcal{B}_0}$ represents a cokernel class in every component by Lemmas~\ref{lemma: coker-basis-ker-transpose} and~\ref{lemma: cokernel-A-odd}. Hence the families
    \begin{align}
        x^m u^{(p)}_{\mathcal{A}_1} \otimes_R e^{(q)}_{\mathcal{B}_0}, \quad x^{m'} e^{(p)}_{\mathcal{A}_1} \otimes_R v^{(q)}_{\mathcal{B}_0},
    \end{align}
    for $m, m' \in \{0, \cdots, l-1\}$, $p \in [r_A]$, and $q \in [r_B^*]$, lie in $H_1(\mc{M})$ and $H^1(\mc{M})$ by Lemma~\ref{lemma: first-LP-basis-characterization}, and constitute precisely the canonical form Eq.~\eqref{eq: canonical-basis-formal} of Definition~\ref{def: formal-canonical-LP-basis}. The converse direction follows straightforwardly. 
\end{proof}

\begin{counterexample}[Non-canonical LP code]\label{counterexample: non-canonical-LP}
    Let $l = 3$ and take $B = (1+x,\, 1)^T$ as in Counter-Example~\ref{counterexample: first-LP-basis-minimality}, with
    \begin{align}
        A = \left( 1+x, \; 1+x+x^2 \right),
    \end{align}
    so that $r_A = 1$ and $\LP_3(A, B)$ is a $[[15, 3]]$ code. The component kernels are $\ker_{R_{(1)}} A_{(1)} = \operatorname{span}\{(1, 0)\}$ and $\ker_{R_{(2)}} A_{(2)} = \operatorname{span}\{(0, 1)\}$, each spanned by a unit vector, so the information sets are unique: $I^{(1)}_A = \{1\}$ and $I^{(2)}_A = \{2\}$. Hence $|\bigcap^2_{i=1} I^{(i)}_A| = 0 < r_A$, and by Theorem~\ref{thm: formal-condition-canonical-LP}, $\LP_3(A, B)$ admits no canonical LP basis. In contrast, the code of Counter-Example~\ref{counterexample: first-LP-basis-minimality} satisfies the condition, with a canonical sector of $k_c = 3$ among its $k = 5$ logical qubits.
\end{counterexample}

\subsection{Application to the multivariate cases}

It is of interest to study multivariate polynomial rings and their basis characterization. Here, with a slight abuse of notation, we denote 
\begin{align}
    R := \mathbb{F}_2[x, y]/(x^l+1, y^m+1), 
\end{align}
for some integers $l, m$. The Chinese remainder theorem still applies in this case: 
\begin{align}
    \prod^{t}_{i=1} (a^{(i)})^{\mu_i} = x^l +1, \quad  \prod^{s}_{j=1} (b^{(j)})^{\nu_j} = y^m +1,
\end{align}
where $a^{(i)}$ are the irreducible factors of $x^l+1$ and $b^{(j)}$ those of $y^m+1$,
and $\mu_i, \nu_j$ are multiplicities. Then $R$ admits the following factorization 
\begin{align}
    R \cong \prod^{t}_{i=1} \prod^{s}_{j=1} R / \left((a^{(i)})^{\mu_i},(b^{(j)})^{\nu_j}\right).
\end{align}
Hence, we can still find the corresponding central idempotent $c_{ij} := c_{a^{(i)}}c_{b^{(j)}}$, where $c_{a^{(i)}}$ and $c_{b^{(j)}}$ are, respectively, the central idempotents for $(a^{(i)})^{\mu_i}$ and $(b^{(j)})^{\nu_j}$. Hence, in the multiplicity-free case with $\mu_i, \nu_j=1$, the isomorphism in Theorem~\ref{thm: isometry-lifted-product} extends naturally with 

\begin{align}
    \begin{aligned}
    c_{ij}R
    &\cong \mathbb{F}_{q_i} \otimes_{\mathbb{F}_2} \mathbb{F}_{q_j} \\
    &\cong \prod^{\operatorname{gcd}(\deg a^{(i)}, \deg b^{(j)})}_{u=1}
    \mathbb{F}_{2^{\operatorname{lcm}(\deg a^{(i)}, \deg b^{(j)})}},
    \end{aligned}
\end{align}
for $q_i = 2^{\deg a^{(i)}}$ and $q_j = 2^{\deg b^{(j)}}$, where $\operatorname{lcm}(\deg a^{(i)}, \deg b^{(j)})$ denotes the least common multiple. In particular, when $\gcd(\deg a^{(i)}, \deg b^{(j)}) > 1$, the component $c_{ij}R$ is a product of fields --- semisimple but not a field --- in contrast to the univariate case.
This can be generalized further to more than two variables. Similarly, for $A \in R^{m_A \times n_A}$, 
\begin{align}
    A = \sum_{i, j} c_{a^{(i)}} c_{b^{(j)}} A,
\end{align}
since $\sum_{i, j} c_{a^{(i)}} c_{b^{(j)}} = 1$. In the multiplicity-free case, $R$ is semisimple, and the constructions of Theorems~\ref{thm: formal-computation-kernel-cokernel}--\ref{thm: formal-condition-canonical-LP} apply componentwise, with each component a finite product of fields.

\section{Code surgeries from chain complexes}\label{sec: code-surgery-general}
In this section, we discuss systematically the various code surgeries techniques used for canonical LP codes, which facilitates a modular construction. We will assume the following notation hierarchies: 
\begin{enumerate}
    \item \textbf{Abstract logical operators.} We denote $\bar{L} \in \bar{\mathcal{P}}_k$ for a generic, abstract logical operator, and $\bar{Z}_{i, j, m}$ (resp. $\bar{X}_{i, j, m}$) by the basis set of $\bar{\mathcal{P}}_k$. 
    \item \textbf{Symplectic notation/Physical representatives.} We denote \emph{upper-case letters} without bars to be physical Pauli operators $L$ for $\bar{L}$, which can be thought of as symplectic notation. Though, when the context is clear, we will use Pauli (product) notation and symplectic (addition) notation interchangeably. 
    \item \textbf{Binary support vectors.} We denote \emph{lower-case letters} for binary support vectors, which can use to represent the suppor of a physical Pauli representative such as 
    \begin{align}
        L:= L(u, v) = X(u) Z(v).
    \end{align}
\end{enumerate}

\subsection{General formulation}\label{sec: surgery-formulation}

A clean and systematic formulation of code surgery can be given in terms of chain complexes, initiated by the cone code formalism~\cite{Ide_2025}. 

 \begin{definition}[Graph surgery: Reformulations of Definition~\ref{def: main-graph-surgery-gadget}]\label{def: surgery-cone-code}
Let $\mathcal{G}(\mathcal{V}, \mathcal{E}, \mathcal{C})$ be a sparse graph with cycles $\mathcal{C}$, edges $\mathcal{E}$, and vertices $\mathcal{V}$ --- identified, respectively, with the added $X$-type checks, the ancilla qubits, and the added $Z$-type checks ---, forming an exact $2$-chain: $ \mathcal{G}: \mathbb{F}_2[\mathcal{V}] \xrightarrow{\partial_1} \mathbb{F}_2[\mathcal{E}]\xrightarrow{\partial_0} \mathbb{F}_2[\mathcal{C}]$ be the resulted chain with $\partial_1$ and $\partial_0$ be, respectively, the edge-vertex and cycle-edge incidence matrices. Consider the chain maps $(\Phi^{(X)}_1, \Phi^{(Z)}_1)$ and $(\Phi^{(X)}_0, \Phi^{(Z)}_0)$ such that the following diagram commutes: 
\begin{equation}\label{diagram: surgery-cone-code}
   \begin{tikzcd}
	{\mathcal{Q}_0} & {\mathcal{Q}_1} & {\mathcal{Q}_2} \\
	{\mathcal{A}_1} & {\mathcal{A}_0} & {\mathcal{A}_{-1}} \\
	{\mathcal{Q}_2} & {\mathcal{Q}_1} & {\mathcal{Q}_0} \\
	{\textcolor{blue}{Z}} & {\textcolor{green}{Q}} & {\textcolor{red}{X}}
	\arrow[from=1-1, to=1-2]
	\arrow["{H_Z}", from=1-2, to=1-3]
	\arrow[shift right=4, draw={rgb,255:red,117;green,117;blue,117}, curve={height=-18pt}, dotted, tail reversed, from=1-2, to=3-2]
	\arrow["{\Phi^{(X)}_1}", from=2-1, to=1-2]
	\arrow["{\partial_1}", from=2-1, to=2-2]
	\arrow["{\Phi^{(Z)}_1}"{description}, from=2-1, to=3-2]
	\arrow["{\Phi^{(X)}_0}"{description}, from=2-2, to=1-3]
	\arrow["{\partial_0}", from=2-2, to=2-3]
	\arrow["{\Phi^{(Z)}_0}", from=2-2, to=3-3]
	\arrow[from=3-1, to=3-2]
	\arrow["{H_X}", from=3-2, to=3-3]
	\arrow[curve={height=-24pt}, dotted, from=4-1, to=2-1]
	\arrow[dotted, from=4-1, to=3-1]
	\arrow[dotted, from=4-2, to=3-2]
	\arrow[curve={height=24pt}, dotted, from=4-3, to=2-3]
	\arrow[dotted, from=4-3, to=3-3]
\end{tikzcd}
\end{equation}
    We denote the dashed gray line by identifying the upper and bottom $\mathcal{Q}_1$ to be the physical qubits of the same code $\mathcal{Q}$. We denote the \emph{surgery gadget} to be $\mathcal{S}[\mathcal{G};  (\Phi_1, \Phi_0)]$ and the \emph{merged code} to be $\mathrm{cone}(\mathcal{Q}, \mathcal{S}[\mathcal{G};  (\Phi_1, \Phi_0)])$ for $\Phi_1 = (\Phi^{(X)}_1, \Phi^{(Z)}_1)$ and $\Phi_0 = (\Phi^{(X)}_0, \Phi^{(Z)}_0)$, with the parity check matrix: 
    \begin{equation}
        H_{\text {merged }}=\quad
        \begin{NiceArray}{cccc}[baseline=line-5]
        \Block{1-2}{\mathcal{Q}} & & \Block{1-2}{\mathcal{G}} & \\
        \textcolor{red}{X} & \textcolor{blue}{Z} & \textcolor{red}{X} & \textcolor{blue}{Z} \\
        H_X & 0 & \Phi_0^{(Z)} & 0\\
        0 & H_Z &  \Phi_0^{(X)}& 0 \\
        (\Phi_1^{(X)})^T & (\Phi_1^{(Z)})^T & 0 & \partial_1^T \\
        0 & 0 & \partial_0 & 0
        \CodeAfter
        \SubMatrix({3-1}{6-4})[hlines=2,vlines=2]
        \end{NiceArray}\quad .
    \end{equation}
     In addition to $H_X H_Z^T = 0$, the commutativity of the merged code stabilizers requires that
    \begin{equation}
    \begin{aligned}
        (\Phi_1^{(X)})^T \Phi_1^{(Z)} & = (\Phi_1^{(Z)})^T \Phi_1^{(X)}, \\
        H_Z \Phi_1^{(X)} + \Phi_0^{(X)} \partial_1 & = 0, \\
        H_X \Phi_1^{(Z)} + \Phi_0^{(Z)} \partial_1 & = 0.
    \end{aligned}
    \end{equation}
    The $\Phi^T_1 = ((\Phi^{(X)}_1)^T| (\Phi^{(Z)}_1)^T)$ is given in the symplectic notation. We say that the surgery gadget $\mathcal{S}[\mathcal{G};  (\Phi_1, \Phi_0)]$ measures a set of logical operators $\{ \bar{L}^{(i)} \}_{i=1}$ if they admit physical representative $L^{(i)} \in \Phi_1(\ker \partial_1)$ for each $i$. We say that the surgery gadget $\mathcal{S}[\mathcal{G};  (\Phi_1, \Phi_0)]$ is \emph{distance-preserving} or \emph{preserves the distance of $\mathcal{Q}$} if the merged code $\mathrm{cone}(\mathcal{Q}, \mathcal{S}[\mathcal{G};  (\Phi_1, \Phi_0)])$ has distance at least that of $\mathcal{Q}$.
\end{definition}

\begin{remark}
    Note that technically, the set of operators measured by the surgery gadget $\mathcal{S}[\mathcal{G};  (\Phi_1, \Phi_0)]$ is given by $\Phi_1(\ker \partial_1)$, which may not necessarily correspond to logical operators of the code $\mathcal{Q}$. The commutativity only ensures that these operators commute with the stabilizers of $\mathcal{Q}$. It is, then, customary to impose such an assumption. In this case, we denote $\Phi_1(\ker \partial_1)$ to be the subspace of measured logical operators and $\dim \Phi_1(\ker \partial_1)$ to be the dimension of logical qubits (as opposed to the symplectic dimension). 
\end{remark}

The above definition presents a generalization of the cone code formalism~\cite{Ide_2025} to the case of measuring a set of logical operators, which might not be low-rate. Here, we say that a surgery measurement is \emph{low-rate} if $\dim \ker \partial_1=1$, i.e., a connected graph. We say that a surgery measurement is \emph{high-rate} if $\dim \ker \partial_1>1$, i.e., a disconnected graph or, more generally, a hypergraph. 

\begin{definition}[Hypergraph surgery: Restatement of Definition~\ref{def: main-hypergraph-surgery}]\label{def: hypergraph-surgery}
    A hypergraph surgery gadget $\mathcal{S}[\mathcal{G}; (\Phi_1, \Phi_0)]$ to $\mathcal{Q}$ and the merged code $\mathrm{cone}(\mathcal{Q}, \mathcal{S}[\mathcal{G}; (\Phi_1, \Phi_0)])$ is defined similarly to that in Definition~\ref{def: surgery-cone-code}, with the following additions/exceptions. 
    \begin{itemize}
        \item $\mathcal{G}: \mathbb{F}_2[\mathcal{V}] \xrightarrow{\partial_1} \mathbb{F}_2[\mathcal{E}] \xrightarrow{\partial_0} \mathbb{F}_2[\mathcal{C}]$ is a generic $\mbb{F}_2$-valued hypergraph chain complex, representing a hypergraph with vertices $\mathcal{V}$, hyperedges $\mathcal{E}$, and hyperfaces $\mathcal{C}$, which is not necesarily exact. 
        \item $\dim \Phi_1(\ker \partial_1) \geq 1$, and every nonzero $L \in \Phi_1(\ker \partial_1)$ corresponds to a nontrivial logical operator of $\mathcal{Q}$. 
    \end{itemize}
\end{definition}

It turns out that the fault tolerance of this measurement relies on analyzing the merged cone code $\mathrm{cone}(\mathcal{Q}, \mathcal{S}[\mathcal{G}; (\Phi_1, \Phi_0)])$ for $\Phi_1 = (\Phi^{(X)}_1, \Phi^{(Z)}_1)$. We first understand its logical structure. 

\begin{lemma}[Classification of logical structures of cone code Definition~\ref{def: surgery-cone-code}]\label{lemma: surgery-cone-code-logical-operators}
Let $\mathcal{Q}$ has logical dimension $k(\mathcal{Q})$ and denote $\mathcal{G}$ has logical dimension $k(\mathcal{G})$. Then the cone code defined in~\eqref{diagram: surgery-cone-code} has the logical dimension at most $k(\mathcal{Q}) - \dim \Phi_1(\ker \partial_1) + k(\mathcal{G})$. In particular, we can characterize any logical operator $\bar{L}$ which admits a physical representative $L=(L_{\mathcal{Q}}, L_{\mathcal{E}})$ in one of the following two types: 
\begin{enumerate}
    \item \label{surgery-logical-data} There exists a representative $L = (L^{\mathcal{Q}}_X L^{\mathcal{Q}}_Z, E_X)$ such that $L^{\mathcal{Q}}_X$ and $L^{\mathcal{Q}}_Z$ are supported on $\mathcal{Q}$ at least one of which is a nontrivial logical operator to $\mathcal{Q}$.  
    \item \label{surgery-logical-ancilla} Either $L$ admits a representative $L = (P^{\mathcal{Q}}_X P^{\mathcal{Q}}_Z,\, L^{\mathcal{G}}_Z E_X)$ with $L^{\mathcal{G}}_Z$ a nontrivial $Z$-type logical operator of $\mathcal{G}$, or $L$ admits a representative $L = (0, L^{\mathcal{G}}_X)$ with $L^{\mathcal{G}}_X$ a nontrivial $X$-type logical operator of $\mathcal{G}$. 
\end{enumerate}
\begin{proof}
    To be consistent with notation, we will use upper-case letters for Pauli operaors and lower-case letters for their support vectors. We can always denote a logical Pauli operator by
    \begin{align}
        L = (P^{\mathcal{Q}}_XP^{\mathcal{Q}}_Z, E_X E_Z)
    \end{align}
    for data qubits and ancillary qubits, respectively. By construction, we have that 
    \begin{align}
        \quad e_Z \in \ker \partial_0. 
    \end{align}
    Suppose that if $e_Z \in \IM \partial_1$, then there exists a deformation by added (vertex) checks: 
    \begin{align}
     (P^{\mathcal{Q}}_XP^{\mathcal{Q}}_Z, E_X E_Z) \mapsto (P^{\mathcal{Q}}_X X(\Phi^{(X)}_1(v)) P^{\mathcal{Q}}_Z Z(\Phi^{(Z)}_1(v)), E_X).
    \end{align}
    Then it is clear that $p_X + \Phi^{(X)}_1(v) \in \ker H_Z$ and similarly $p_Z + \Phi^{(Z)}_1(v) \in \ker H_X$. Suppose that neither  $P^{\mathcal{Q}}_X X(\Phi^{(X)}_1(v))$ nor $P^{\mathcal{Q}}_Z Z(\Phi^{(Z)}_1(v))$ is a nontrivial logical operator in $\mathcal{Q}$. There exists a $u \in \mathcal{Q}_0$ and $u' \in \mathcal{Q}_2$ such that the above can be deformed into 
    \begin{align}
          (0, E_X X(\Phi^{(X)T}_0(u'))X(\Phi^{(Z)T}_0(u))),
    \end{align}
    and $e_X +\Phi^{(X)T}_0(u') + \Phi^{(Z)T}_0(u) \in \ker \partial^T_1$, so that it is only nontrivial if $e_X +\Phi^{(X)T}_0(u') + \Phi^{(Z)T}_0(u) \notin \IM \partial^T_0$ so that it is a nontrivial $X$-type logical operator to $\mathcal{G}$. If $e_Z \notin \IM \partial_1$, then $E_Z$ is a nontrivial $Z$-type logical operator to $\mathcal{G}$. Denote $L^{\mathcal{Q}}_X$ and $L^{\mathcal{Q}}_Z$ to be nontrivial logical operators to $\mathcal{Q}$, and $L^{\mathcal{G}}_X$ and $L^{\mathcal{G}}_Z$ to be nontrivial logical operators to $\mathcal{G}$. Hence, in summary, we can divide into the following 2 classes: 
    \begin{enumerate}
        \item  $(L^{\mathcal{Q}}_X L^{\mathcal{Q}}_Z, E_X)$ such that $L^{\mathcal{Q}}_X$ and $L^{\mathcal{Q}}_Z$ are supported on $\mathcal{Q}$ at least one of which is a nontrivial logical operator to $\mathcal{Q}$, such that there does not exist any $z \in \ker \partial_1$ such that $L^{\mathcal{Q}}_X X(\Phi^{(X)}_1(z)) L^{\mathcal{Q}}_Z Z(\Phi^{(Z)}_1(z))$ becomes a trivial logical operator in $\mathcal{Q}$. 
        \item The ancillary "$Z$-type" logical operators which admit representatives of the forms: $(P^{\mathcal{Q}}_XP^{\mathcal{Q}}_Z, L^{\mathcal{G}}_Z E_X)$ and ancillary $X$-type logical operators which admit representatives of the forms: $(0, L^{\mathcal{G}}_X)$. 
    \end{enumerate}
    Hence, the logical dimension of the merged code is given by at most $k(\mathcal{Q}) - \dim \Phi_1(\ker \partial_1) + k(\mathcal{G})$, as desired. 
    \end{proof}
\end{lemma}

\begin{remark}
    Note that it is in general not true that the logical dimension of the merged code is equal to $k(\mathcal{Q}) - \dim \Phi_1(\ker \partial_1) + k(\mathcal{G})$. In particular, there might exist some vector $u \in \mathcal{Q}_0$ such that $H^T_X u=0$ (resp. $u' \in \mathcal{Q}_2$ with $H^T_Z u'=0$), and $l^{\mathcal{G}}_X + \Phi^{(Z)T}_0(u)=0$ (resp. $l^{\mathcal{G}}_X + \Phi^{(X)T}_0(u')=0$). This is not the case if $H_X$ and $H_Z$ is full-row rank, where in this case, the dimension of the merged (cone) code saturates the dimension upper bound. Furthermore, logical operators of type~\ref{surgery-logical-ancilla} vanish whenever the graph chain is exact. 
\end{remark}

A key property that we are interested in is given below: 

\begin{definition}[Dressed-distance of the merged code]
     Let $\mathcal{Q}$ be a CSS code of distance $d$. Let $\mathcal{G}(\mathcal{V}, \mathcal{E}, \mathcal{C})$ be a (hyper)graph and a chain: $\mathcal{G}: \mathbb{F}_2[\mathcal{V}] \xrightarrow{\partial_1} \mathbb{F}_2[\mathcal{E}] \xrightarrow{\partial_0} \mathbb{F}_2[\mathcal{C}]$, given above in Definition~\ref{def: main-hypergraph-surgery} (resp. Definition~\ref{def: main-graph-surgery-gadget}). Denote $d(\mathcal{G})$ to be the \emph{$Z$-distance of the graph chain} $\mathcal{G}$: the minimal Hamming weight of a vector $w \in \ker \partial_0 \setminus \IM \partial_1$ (equivalently, of the $Z$-type Pauli operator $Z(w)$). Consider the merged code as a subsystem code by treating the logical operators of type~\ref{surgery-logical-ancilla} as gauge operators. Its \emph{dressed distance} $d(\mathcal{S})$ is the minimum Hamming weight of a physical representative of any logical operator of type~\ref{surgery-logical-data}, where representatives may be multiplied by arbitrary stabilizers and gauge operators. 
\end{definition}

\begin{definition}[Measurement protocols]\label{def: surgery-measurement-protocol}
   Let $\mathcal{Q}$ be a CSS code of distance $d$. Let $\mathcal{G}(\mathcal{V}, \mathcal{E}, \mathcal{C})$ be a (hyper)graph and a chain: $\mathcal{G}: \mathbb{F}_2[\mathcal{V}] \xrightarrow{\partial_1} \mathbb{F}_2[\mathcal{E}] \xrightarrow{\partial_0} \mathbb{F}_2[\mathcal{C}]$, given above Definition~\ref{def: main-hypergraph-surgery} (resp. Definition~\ref{def: main-graph-surgery-gadget}). The measurement protocols of the surgery gadget $\mathcal{S}[\mathcal{G}; (\Phi_1, \Phi_0)]$ onto $\mathcal{Q}$ are given as follows. 
    \begin{enumerate}
        \item \label{surgery-protocol-initialization} \textbf{Initialization:} Prepare at the ancilla qubits $\Lket{+}^{\otimes |\mathcal{E}|}_{\mathcal{E}}$. 
        \item \label{surgery-protocol-merge} \textbf{Merge:} Measure the stabilizers of the merged code Eq.~\eqref{eq: main-merged-parity-check}, which includes the added vertex checks and cycle checks from the second-half rows as well as deformed $Z$- and $X$-type checks. 
    \item \label{surgery-protocol-split-correct} \textbf{Split-Correction:} Transversely measure along $X$-basis on the ancilla qubits in $\mathcal{E}$, followed by $Z$-correction and/or $X$-correction, depending on splitting measurement outcomes. 
    \end{enumerate}
    Before Step~\ref{surgery-protocol-initialization}, we perform $d$ rounds of syndrome extraction on $\mathcal{Q}$. In Step~\ref{surgery-protocol-merge}, the stabilizers of the merged code are measured for $d$ consecutive rounds. After Step~\ref{surgery-protocol-split-correct}, we again perform $d$ rounds of syndrome extraction on $\mathcal{Q}$. 
\end{definition}

For every $z \in \ker \partial_1$, the ancillary factors cancel in the product of vertex checks indexed by $z$, and the corresponding measured operator is
    \begin{align}\label{eq: main-logicals-measured}
        L(z)
        &= \prod_{v:z_v=1}X(\Phi^{(X)}_1(v))Z(\Phi^{(Z)}_1(v)) \notag\\
        &= X(\Phi^{(X)}_1(z))Z(\Phi^{(Z)}_1(z))
        \in \Phi_1(\ker \partial_1).
    \end{align}

 A useful result we will state is the following:

\begin{theorem}[Phenomenological distance of surgery protocol, Ref.~\cite{WilliamsonYoder2024GaugingLogicalOperators}]\label{thm: phenomenological-distance-surgery}
    The protocol given in Definition~\ref{def: surgery-measurement-protocol} has phenomenological fault distance at least the dressed distance $d(\mathcal{S})$ of the merged code. 
\end{theorem}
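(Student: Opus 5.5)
The plan is the standard spacetime-stabilizer-code argument, specialized to the three-phase protocol of Definition~\ref{def: surgery-measurement-protocol}. I will build the spacetime detector structure in three stages, stacked in time. The first stage consists of $d$ rounds of the data code $\mathcal{Q}$. The second consists of $d$ rounds of the merged code of Eq.~\eqref{eq: main-merged-parity-check}. The third consists of $d$ rounds of $\mathcal{Q}$ after the split.

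Detectors are defined as follows. Within a stage, they are the products of consecutive outcomes of the same check. At the first boundary, the initial $X$-basis preparation of the edge qubits makes the cycle checks $\partial_0$ and the $\Phi_0$-deformed checks deterministic. At the last boundary, the transversal ancilla measurement makes the same checks reconstructible. A phenomenological fault is a single data/ancilla Pauli error in some time slice or a single flipped check outcome. The goal is to show that any fault configuration with trivial syndrome that causes a logical failure has weight at least $d(\mathcal{S})$.

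Step one is a cleaning argument on undetectable fault sets $F$. I classify by their effect into three classes. A flipped logical outcome $\prod_{z} L(z)$ for $z\in\ker\partial_1$ means a logical measurement error. A residual logical on $\mathcal{Q}$ at the end of the protocol is another. The third class is a gauge/stabilizer equivalence, which is harmless. For each time slice, measurement errors can be pushed by time-like deformations generated by check outcomes. The projection of $F$ onto a single time slice of the merge stage is then a Pauli operator commuting with all merged checks. If the failure is a data-logical error, this slice operator is a type~\ref{surgery-logical-data} logical of the merged code (Lemma~\ref{lemma: surgery-cone-code-logical-operators}), up to stabilizers and gauge operators. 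Its weight is therefore at least $d(\mathcal{S})$, while its weight is at most $|F|$. In the pre- and post-merge stages, the same projection gives a logical of $\mathcal{Q}$, so the weight is at least $d \ge d(\mathcal{S})$ whenever the merged code is distance-preserving. Otherwise I argue directly from $d(\mathcal{S})\le d$: the merged code contains $\mathcal{Q}$-logicals as type~\ref{surgery-logical-data} representatives.

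Step two handles the time-like failure, namely flipping the measured outcome. The outcome is recorded as the product of vertex checks indexed by $z\in\ker\partial_1$ in each of the $d$ merge rounds. An undetected flip must corrupt the vertex-check record consistently in all $d$ rounds. It can instead be compensated by Pauli errors at the boundaries, but those anticommute with $L(z)$ and are hence nontrivial logicals. Thus a purely measurement-error chain has weight at least $d\ge d(\mathcal{S})$. A mixed chain that deforms into a boundary logical is covered by step one.

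The main obstacle is the cleaning step: showing rigorously that an arbitrary undetectable spacetime configuration can be deformed into one whose single-slice restriction is a genuine type~\ref{surgery-logical-data} logical, without weight increase, while the gauge logicals of type~\ref{surgery-logical-ancilla} are correctly quotiented out. For this I will follow Ref.~\cite{WilliamsonYoder2024GaugingLogicalOperators}. I treat the gauge operators as measured-but-random and absorb them using the initialization and split boundaries. In practice I would cite their result for the graph case, and note that the argument uses only the cone-code structure and Lemma~\ref{lemma: surgery-cone-code-logical-operators}, so it extends verbatim to the hypergraph case.
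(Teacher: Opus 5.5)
The paper gives no proof of its own here; it imports the result from Ref.~\cite{WilliamsonYoder2024GaugingLogicalOperators}. Your overall route (spacetime detectors, with the cleaning step deferred to that reference) therefore matches it.

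The genuine gap is the inequality $d\ge d(\mathcal S)$, which both your pre/post-merge case and your time-like case (step two) depend on. Your first justification has the direction reversed: a distance-preserving gadget satisfies $d(\mathcal S)\ge d$, the opposite of what you need. Your fallback is that weight-$d$ logicals of $\mathcal Q$ survive as type~\ref{surgery-logical-data} logicals of the merged code. This fails in general, because the minimum-weight logicals of $\mathcal Q$ may be exactly the measured ones (now stabilizers) or may anticommute with them (no longer logical). For instance, take $\mathcal Q=\mathcal Q_a\oplus\mathcal Q_b$ with $k(\mathcal Q_a)=1$, $d(\mathcal Q_a)=d$ and $d(\mathcal Q_b)\ge 2d$. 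Measure $\bar Z$ of $\mathcal Q_a$ with a connected-graph gadget attached only to $\mathcal Q_a$, so there are no gauge operators. Every type~\ref{surgery-logical-data} logical must then act as a nontrivial logical on $\mathcal Q_b$, whose checks are undeformed, so $d(\mathcal S)\ge 2d$.

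This gap cannot be closed, because the time-like chain in your step two is a genuine weight-$d$ logical fault. Flip the outcome of one vertex check $v$ with $z_v=1$ in all $d$ merge rounds. No detector fires: vertex checks act by $Z$ on edges prepared in $\lvert+\rangle$, so they are random in the first merge round, and they cannot be reconstructed after the $X$-basis split. Yet the recorded outcome flips. So with $d$ merge rounds the fault distance is at most $d$, and the statement as literally posed is false whenever $d(\mathcal S)>d$, as in the example above.

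What your argument (and Ref.~\cite{WilliamsonYoder2024GaugingLogicalOperators}) actually delivers is fault distance at least $\min(d,d(\mathcal S))$. This equals $d$ exactly for distance-preserving gadgets, which is the only case the paper uses downstream. You should retarget the claim to that bound rather than try to derive $d(\mathcal S)\le d$.

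Separately, ``extends verbatim to the hypergraph case'' is too quick. When the ancilla chain is not exact, $Z$-type gauge errors $Z(w)$ with $w\in\ker\partial_0\setminus\IM\partial_1$ are caught only by detectors built from the $X$-type gauge operators $X(x)$ with $x\in\ker\partial_1^T\setminus\IM\partial_0^T$. These are fixed by the $\lvert+\rangle$ initialization and read out at the split, and such merge-spanning detectors are missing from your detector list.
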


Hence, it is sufficient to bound the dressed distance of the merged (cone) code, which motivates the following definition: 

\begin{definition}[Distance-preserving of a surgery gadget]\label{def: distance-preserving-surgery}
   Let $\mathcal{Q}$ be a (CSS) code with distance $d$, and $\mathcal{S}[\mathcal{G}; (\Phi_1, \Phi_0)]$ be a surgery gadget. Denote $d(\mathcal{S})$ to be the dressed distance. We say that the surgery gadget $\mathcal{S}[\mathcal{G}; (\Phi_1, \Phi_0)]$ is \emph{distance-preserving} if $d(\mathcal{S}) \geq d$.
\end{definition}

Now we discuss how to ensure a surgery gadget is distance-preserving.

\begin{definition}[Elementary connectivity maps]\label{def: elementary-connectivity-surgery}
    Let $\mathcal{G}$ be a graph, the data code $\mathcal{Q}$, and their graph-surgery gadget $\mathcal{S}[\mathcal{G}; (\Phi_1, \Phi_0)]$. We say that $\Phi = (\Phi_1, \Phi_0)$ are connectivity maps if they satisfy Definition~\ref{def: surgery-cone-code}. In addition $\Phi$ is a $r$-elementary connectivity map (or, simply, elementary connectivity map if $r=1$) if additionally, 
    \begin{align}
        \Phi^{T}_1 := \left( (M^{(X)})^T | (M^{(Z)})^T\right)
    \end{align}
    in the symplectic form, where $M^{(X)}$ and $M^{(Z)}$ are restriction maps onto the supports of physical representatives of logical operators of $\mathcal{Q}$, with $\max(\omega_{\mathrm{col}}(M^{(X)}), \omega_{\mathrm{col}}(M^{(Z)})) = r$ and $\omega_{\mathrm{row}}(M^{(X)}) = \omega_{\mathrm{row}}(M^{(Z)}) \leq 1$.  Furthermore, the column Hamming weight of $\Phi_0$ is at most $r$. 
\end{definition}

This definition of connectivity maps captures all essential graph-surgery gadgets constructed for our cases. 

\begin{example}[Connectivity maps to hypergraph surgery Definition~\ref{def: main-LP-intracolumn-hyerpgraph-surgery} and Definition~\ref{def: main-LP-intercolumn-surgery}]
    We first give a simple example with $\Phi_1 = (0, \Phi^{(Z)}_1)$, associated with the LP inter-column surgery gadget $\mathcal{S}[\mathcal{G}; (\Phi_1, \Phi_0)]$ to a canonical LP code $\LP_l(A, B)$ (see Fig.~\ref{fig: parallel-surgery}(b)). We can draw the LP inter-column surgery gadget 
\begin{equation}\label{diagram: main-LP-intercolumn-surgery}
    \begin{tikzcd}
	&& {\mathcal{A}_1 \otimes_R \mathcal{B}_1} & \\
	& {\mathcal{A}_1 \otimes_R \mathcal{B}_0} && {\mathcal{A}_0 \otimes_R \mathcal{B}_1} \\
	{\mathcal{A}_1 } && {\mathcal{A}_0 \otimes_R \mathcal{B}_0} \\
	& {\mathcal{A}_0 }
	\arrow["{I_{n_A} \otimes_R B}"', from=1-3, to=2-2]
	\arrow["{A \otimes_R I_{n_B}}", from=1-3, to=2-4]
	\arrow["{A \otimes_R I_{m_B}}", from=2-2, to=3-3]
	\arrow["{I_{m_A} \otimes_R B}", from=2-4, to=3-3]
	\arrow["{\Phi_1}"', from=3-1, to=2-2]
	\arrow["{\partial_1=A}"{description}, from=3-1, to=4-2]
	\arrow["{\Phi_0}"', from=4-2, to=3-3]
\end{tikzcd}
\end{equation}
for trivial embedding maps $\Phi_1 = M_{\mathcal{A}_1}$ and $\Phi_0 = M_{\mathcal{A}_0}$. The hypergraph is given by the identification $\mathcal{V} = \mathcal{A}_1$, $\mathcal{E} = \mathcal{A}_0$, and $\mathcal{C}=0$. This way, the connectivity map is transversal, i.e., column and row weight to be $1$. This transversality can be explained as a LP code modification. Let $B_{\eta} = \left(1, 0, \cdots, 0\right)$
If we can treat it as a modified LP code $\LP_l(A, B')$ such that the modified boundary maps components: 

\begin{align}
    \begin{aligned}
        I_{n_A} \otimes_R \left(\begin{array}{cc}
             B \\
             B_{\eta} 
        \end{array} \right)
     &= \left(\begin{array}{cc}
             I_{n_A} \otimes_R B \\
            \Phi^T_1
        \end{array} \right); \\
        I_{m_A} \otimes_R \left(\begin{array}{cc}
             B \\
             B_{\eta} 
        \end{array} \right)
     &= \left(\begin{array}{cc}
             I_{m_A} \otimes_R B \\
            \Phi^T_0
        \end{array} \right); \\
        A \otimes_R I_{m_B+1} &= \left(\begin{array}{cc}
             A \otimes_R I_{m_B} & 0 \\
             0 & A
        \end{array} \right).
\end{aligned}
\end{align}
This way, it connects the homomorphic (generalized Steane-type) measurements~\cite{xu2025fast} with the LP inter-column surgery. 
\end{example}

\begin{example}[Connectivity maps to the seed surgery gadgets]
    A slightly more advanced example is well-studied in the generalized surgery for qLDPC codes~\cite{CowtanHeWilliamsonYoder2025ParallelCodeSurgery, he2025extractorsqldpcarchitecturesefficient, WilliamsonYoder2024GaugingLogicalOperators}. Let $\mathcal{S}[\mathcal{G}; (\Phi_1, \Phi_0)]$ be a surgery gadget to some (CSS) code $\mathcal{Q}$. If we wish to measure a $Z$-type logical operator $\bar{L}$ on a physical representative $L$, then we can always identify the ancilla vertices $\mbb{F}_2[\mathcal{V}] \cong \mathbb{F}_2^{|\supp(L)|}$. In this case, we can simply take $\Phi_1 = (0, M_L)$ for the embedding/restriction map $M_L$ onto the support of $L$. If the graph is connected, that is, $\dim \ker \partial_1 =1$, then there exists a guaranteed choice of $\Phi_0 = (0, \Phi^{(Z)}_0)$ to ensure the commutativity condition(s) Eq.~\eqref{eq: main-commutation_requirement}.
\end{example}

Two useful results we will state are the following: 

\begin{lemma}[Monotonicity of elementary connectivity maps]\label{lemma: monotocity-elementary-connvectivity}
    For any vectors $x \in \mathbb{F}_2^n$ and a $r$-elementary connectivity map $\Phi_1$, then $|\Phi^{(X)}_1(x)| \leq r|x|$ and $|\Phi^{(Z)}_1(x)| \leq r|x|$. 
    \begin{proof}
        This follows from the fact that $\omega_{\mathrm{col}}(\Phi^{(X)}_1)$ (resp. $\omega_{\mathrm{col}}(\Phi^{(Z)}_1)$) is less than or equal to $r$ for any $r$-elementary connectivity map. 
    \end{proof}
\end{lemma}

\begin{lemma}[Compositions of elementary connectivity maps]\label{lemma: composition-elementary-connectivity}\label{lemma: composition-Phi1-for-bridge}
    Let $\Phi_1$ and $\Phi'_1$ be $r$-elementary and $r'$-elementary connectivity maps, respectively, with nonzero columns supported on two disjoint subsets $U, U' \subseteq \mathcal{V}$, then $\Phi_1 + \Phi'_1$ is an $\max(r, r')$-elementary connectivity map. 
    \begin{proof}
        This follows from the fact that their respective $X$- and $Z$-components are the restriction maps to $U$ and $U'$, which are disjoint. Hence, the column Hamming weight of $\Phi_1 + \Phi'_1$ is at most $\max(r, r')$.
    \end{proof}
    
\end{lemma}

A sufficient criterion for bounding the code distance is given by the small-set soundness in Definition~\ref{def: main-small-set-soundness}: 

\begin{definition}[Small-set soundness]
    We say that $\partial_1: \mathbb{F}^{n}_2 \rightarrow \mathbb{F}^m_2$ is $(t, \rho_t)$-sound or has $(t, \rho_t)$-soundness relative to $\Phi_1$ if for all $u \in \mathbb{F}^{n}_2$ such that $|\partial_1 u| < t$,  
    \begin{align}
        |\partial_1 u| \geq \rho_t \, \mathrm{dist}_{\Phi_1}(u, \ker \partial_1),
    \end{align}
    where $\mathrm{dist}_{\Phi_1}(u, \ker \partial_1) = \min_{v \in \ker \partial_1} |\Phi_1(u + v)|$ for a linear map $\Phi_1$. We say simply that $\partial_1$ is $(t, \rho_t)$-sound or has $(t, \rho_t)$-soundness if $\Phi_1$ is an identity matrix. 
\end{definition}

We introduce the following concept:

\begin{definition}[Soundness of a surgery gadget]\label{def: soundness-surgery-gadget}
    Let $\mathcal{S}[\mathcal{G}; (\Phi_1, \Phi_0)]$ be a surgery gadget associated with some quantum code $\mathcal{Q}$. We say that the surgery gadget is $(t, \rho_t)$-sound or has $(t, \rho_t)$-soundness if $\partial_1: \mathbb{F}_2[\mathcal{V}] \rightarrow \mathbb{F}_2[\mathcal{E}]$ is $(t, \rho^{(X)}_t)$-sound relative to $\Phi^{(X)}_1$ and  $(t, \rho^{(Z)}_t)$-sound relative $\Phi^{(Z)}_1$, respectively, and $\rho_t = \min(\rho^{(X)}_t, \rho^{(Z)}_t)$. 
\end{definition}

Hence, the distance-preserving property of a given surgery gadget can be ensured with sufficient soundness and $Z$-distance of the (hyper)graph chain $\mathcal{G}$, defined by the minimal Hamming weight of a vector supported on $\mathcal{E}$ that lies in $\ker \partial_0 \setminus \IM \partial_1$. 

\begin{lemma}[Distance-preserving via small-set soundness]\label{lemma: distance-preserving-soundness}
    Let the data code $\mathcal{Q}$ with distance $d$, its surgery gadget be $\mathcal{S}[\mathcal{G}; (\Phi_1, \Phi_0)]$ with $r$-elementary connectivity map. If $\partial_1: \mathbb{F}_2[\mathcal{V}] \rightarrow \mathbb{F}_2[\mathcal{E}]$ is $(t, \rho_t)$-sound, then $\mathcal{S}[\mathcal{G}; (\Phi_1, \Phi_0)]$ is at least $(t, \rho_t/r)$-sound. Furthermore, let the $Z$-distance of $\mathcal{G}$ be $d(\mathcal{G})$. If $t\geq d$ and $\rho_t > 0$, set $\rho_d := \rho_t$ (valid since $t \geq d$); then the dressed distance $d(\mathcal{S})$ is at least $\min(d, \rho_d d/r, d(\mathcal{G}))$. 
    \begin{proof}
        Let $y \in \mathbb{F}_2[\mathcal{V}]$ be a vector and let $L = (L^{\mathcal
        Q}_X L^{\mathcal{Q}}_Z, E_X)$ a representative of some logical operator of type~\ref{surgery-logical-data}. Since at least one of $L^{\mathcal{Q}}_X$ or $L^{\mathcal{Q}}_Z$ is nontrivial, the only ways to reduce the weight below $d$ are multiplication by stabilizers of $\mathcal{Q}$ (absorbed into the representative), the added vertex checks, and the type~\ref{surgery-logical-ancilla} dressings treated below; the cycle checks act only on the ancilla qubits as $X$-type operators. In this case, the defomation is applied to $L$: 
        \begin{align}
           (L^{\mathcal{Q}}_X L^{\mathcal{Q}}_Z, E_X) \mapsto (L^{\mathcal{Q}}_X X(\Phi^{(X)}_1(y))Z(\Phi^{(Z)}_1(y)) L^{\mathcal{Q}}_Z , E_XZ(\partial_1(y))).  
        \end{align}
        If $|Z(\partial_1(y))| \geq d$, then the distance is preserved. Hence, it remains to prove the case when $|Z(\partial_1(y))| < d$. Without loss of generality, we assume that $L^{\mathcal{Q}}_Z$ is nontrivial so that we can only focus on the deformation map given by $\Phi^{(Z)}_1$, and subsequently we simply write $\Phi^{(Z)}_1$ as $\Phi_1$. Working back into binary representation, we have 
        \begin{align}
            |\partial_1(y)| \geq \rho_t \mathrm{dist}(y, \ker \partial_1) \geq \frac{\rho_t}{r} \mathrm{dist}_{\Phi_1}(y, \ker \partial_1),
        \end{align}
        where the second inequality follows from linearity and the monotonicity of elementary connectivity maps from Lemma~\ref{lemma: monotocity-elementary-connvectivity} with a slight notation abuse $\Phi_1 = \Phi^{(Z)}_1$ in the subscript. Next suppose $t=d$. Hence, it follows that 
        \begin{align}
          \begin{aligned}
                |\partial_1(y)| &\geq \rho_d \min_{z \in \ker \partial_1}|y + z| \\
                &\geq \rho_d/r \min_{z \in \ker \partial_1}|\Phi^{(Z)}_1(y) + \Phi^{(Z)}_1(z)|.
          \end{aligned}
        \end{align}
        The second inequality follows since $\omega_{\mathrm{col}}(\Phi_1) = r$. Let $\rho'_d := \rho_d/r$. By assumption, $Z(\Phi_1(z))$ is a nontrivial $Z$-type logical operator of the data code and we can assume that $L^{\mathcal{Q}}_Z$ and $Z(\Phi_1(z))$ are two different logical operators. Hence, we have
        \begin{align}
             \begin{aligned}
                &|(L^{\mathcal{Q}}_X X(\Phi^{(X)}_1(y))Z(\Phi^{(Z)}_1(y)) L^{\mathcal{Q}}_Z , E_XZ(\partial_1(y)))| \\
                & \geq |l^{\mathcal{Q}}_Z  + \Phi^{(Z)}_1(y)| + \rho'_d|\Phi^{(Z)}_1(y) + \Phi^{(Z)}_1(z)|.  \\
             \end{aligned}
        \end{align}
        If $\rho'_d > 1$, then we have that 
        \begin{align}
           \begin{aligned}
            &|l^{\mathcal{Q}}_Z  + \Phi^{(Z)}_1(y)| + \rho'_d|\Phi^{(Z)}_1(y) + \Phi^{(Z)}_1(z)| \\
            \geq& |l^{\mathcal{Q}}_Z  + \Phi^{(Z)}_1(z)| + (\rho'_d - 1)|\Phi^{(Z)}_1(y) + \Phi^{(Z)}_1(z)| \\
            \geq & d. 
           \end{aligned}
        \end{align}
        If $\rho'_d \leq 1$, then we reverse the argument: 
        \begin{align}
            \begin{aligned}
                &|l^{\mathcal{Q}}_Z  + \Phi^{(Z)}_1(y)| + \rho'_d|\Phi^{(Z)}_1(y) + \Phi^{(Z)}_1(z)| \\
            \geq& (1-\rho'_d)|l^{\mathcal{Q}}_Z  + \Phi^{(Z)}_1(y)| + \rho'_d|l^{\mathcal{Q}}_Z  + \Phi^{(Z)}_1(z)| \\
            \geq & \rho'_d d,
            \end{aligned}
        \end{align}
        where, in this both case, we use the fact that $|l^{\mathcal{Q}}_Z  + \Phi^{(Z)}_1(z)| \geq d$. Next, suppose that $L^{\mathcal{Q}}_Z$ and $Z(\Phi_1(z))$ are the same logical operator. Then $L^{\mathcal{Q}}_X$ and $X(\Phi^{(X)}_1(z))$ must be different logical operators: otherwise deforming by $z$ would render the operator trivial, contradicting type~\ref{surgery-logical-data}. In this case we apply the same argument to the $X$-type component, using the soundness of $\partial_1$ relative to $\Phi^{(X)}_1$; for each $z \in \ker \partial_1$, at least one of the two components is nontrivial, and the corresponding bound applies. To determine the dressed distance, we need to apply the deformation of logical operators of type~\ref{surgery-logical-ancilla}. First note that, for $X$-type logical operators of type~\ref{surgery-logical-ancilla}, which admits a representative: 
        \begin{align}
            (0, L^{\mathcal{G}}_X) = (0, X(x))
        \end{align}
        for $x \in \ker \partial^T_1 \setminus \IM \partial^T_0$. Note that applying this alone does not decrease the dressed-distance from the above analysis, nor dressed by any $X$-type stabilizers, since it would only induces stabilizer deformation of original data code $\mathcal{Q}$ onto nontrivial logical operator of $\mathcal{Q}$, which preserves the distance up to $d$. For any $Z$-type logical operators of type~\ref{surgery-logical-ancilla}, which admits a representative: 
        \begin{align}
            (P^{\mathcal{Q}}_XP^{\mathcal{Q}}_Z, L^{\mathcal{G}}_Z E_X)
        \end{align}
        for $L^{\mathcal{G}} = (0, Z(w))$ the nontrivial $Z$-type logical operator to $\mathcal{G}$: $w \in \ker \partial_0 \setminus \IM \partial_1$. Furthermore, note that for any vertex checks, the induced $Z$-deformation on $\mathcal{E}$ would preserve the weight $|L^{\mathcal{G}} Z(\partial_1 y)| \geq d(\mathcal{G})$, which implies final lower bound: 
        \begin{align}
            d(\mathcal{S}) \geq \min(d, \rho_d d/r, d(\mathcal{G})).
        \end{align}
    \end{proof}
\end{lemma}

\begin{remark}
    As stated, the $X$-distance of $\mathcal{G}$ is irrelevant to ensuring that the surgery gadget is fault-tolerant, which underpins the thickening protocol. 
\end{remark}

\begin{example}
    We give a concrete example where a small $d(\mathcal{G})$ could seriously affect the distance-preserving property of the surgery gadget. Suppose
    \begin{align}
        \partial_1 = \left( \begin{array}{cc}
            1 &  1\\
           1 & 1 \\
            1 & 1 \\
            1 & 1
        \end{array} \right); \quad \partial_0 = 0.
    \end{align}
    The map $\partial_1$ is $(5, 4)$-sound; yet $d(\mathcal{G})=1$ given by binary support vector $\begin{pmatrix}
        1 & 0 &0 & 0
    \end{pmatrix}^T$. Subsequently, the low-weight $Z$-type logical operator of $\mathcal{G}$ would reduce the dressed distance $d(\mathcal{S})$ to $2$ ($1$ from the data sector): since the vertex stabilizers can never completely remove the support of an unmeasured data logical operator. 
\end{example}

\subsection{Code symmetries for surgery}\label{sec: code-surgery-symmetry}

The above Section~\ref{sec: surgery-formulation} establishes the basic principles on surgery techniques. Here we give an algebraic view on the surgery, especially on their connectivity maps. 
A key characterization is that there are potentially many possible chain maps which satisfy Definition~\ref{def: surgery-cone-code} above. Hence, a surgery ancilla graph may measure many logical operators by rewiring its connectivity map. In what follows, we restrict ourselves to measuring $Z$-type logical operators, where $\Phi_1 = (0, \Phi^{(Z)}_1)$ and $\Phi_0 = (0, \Phi^{(Z)}_0)$; we then simply write $\Phi_1 = \Phi^{(Z)}_1$ and $\Phi_0 = \Phi^{(Z)}_0$, and default to the binary notation. The exception is the mixed-type setting of Remark~\ref{remark: mixed-type-surgery} and the second example below, where the full symplectic pairs reappear.

\begin{lemma}\label{lemma: classical-chain-maps}
    Let $\mathcal{S}[\mathcal{G}; (\Phi_1, \Phi_0)]$ be a surgery gadget to a (CSS) code $\mathcal{Q}$ with distance $d$. Suppose that the surgery gadget is $(d, \rho_d)$-sound. Then the followings hold.
    \begin{enumerate}
        \item \label{local-chain-map-XX} (Chain map for same-type logical operators). Let $\bar{L}$ and $\bar{L}'$ be two $Z$-type (resp. $X$-type) logical operators. Suppose there exists a chain map $\Gamma = (\Gamma_1, \Gamma_0)$: 
\begin{equation}\label{diagram: classical-chain-map-same-type}
   \begin{tikzcd}
	{\mathcal{Q}_{1}} & {\mathcal{Q}_0} \\
	{\mathcal{Q}_{1}} & {\mathcal{Q}_0}
	\arrow["{H_X}", from=1-1, to=1-2]
	\arrow["{\Gamma_1}"', from=1-1, to=2-1]
	\arrow["{\Gamma_0}", from=1-2, to=2-2]
	\arrow["{H_X}"', from=2-1, to=2-2]
\end{tikzcd}
\end{equation}
such that $\omega_{\mathrm{col}}(\Gamma_1) \leq r$ and $\Gamma_1(l) = l'$ for two physical, binary representatives $l$ and $l'$ of $\bar{L}$ and $\bar{L}'$, respectively. Then the transformed surgery gadget $\mathcal{S}[\mathcal{G};  (\Gamma_1 \Phi_1 , \Gamma_0 \Phi_0)]$ measures $\bar{L}'$, and is at least $(d, \rho_d/r)$-sound.

        \item \label{local-chain-map-XZ} (Chain map for mixed-type logical operators). Let, without loss of generality, $\bar{L}_Z$ be a $Z$-type logical operator and $\bar{L}_X$ be an $X$-type logical operator. Suppose that there exists a chain map $\Psi = (\Psi_1, \Psi_0)$: 
\begin{equation}\label{diagram: ZX-duality}
   \begin{tikzcd}
	{\mathcal{Q}_1} & {\mathcal{Q}_0} \\
	{\mathcal{Q}_1} & {\mathcal{Q}_2}
	\arrow["{H_X}", from=1-1, to=1-2]
	\arrow["{\Psi_1}"', from=1-1, to=2-1]
	\arrow["{\Psi_0}", from=1-2, to=2-2]
	\arrow["{H_Z}"', from=2-1, to=2-2]
\end{tikzcd}
\end{equation}
such that $\omega_{\mathrm{col}}(\Psi_1) \leq r$ and $\Psi_1(l_Z) = l_X$ with $l_Z$ and $l_X$ being physical, binary representatives of $\bar{L}_Z$ and $\bar{L}_X$, respectively.
Then the mapped surgery gadget $\mathcal{S}[\mathcal{G};  (\Psi_1 \Phi_1 , \Psi_0 \Phi_0)]$ measures $\bar{L}_X$ and is at least $(d, \rho_d/r)$-sound.
    \end{enumerate}
    \begin{proof}
        We can prove these through the chain diagrams respectively. For the first assertion, we have that 
        \begin{equation}\label{diagram: classical-code-transformation-surgery-Z-type}
          \begin{tikzcd}
	{\mathcal{G}} & {\mathcal{V}} & {\mathcal{E}} & {\mathcal{C}} \\
	{\Gamma } && {\mathcal{Q}_1} & {\mathcal{Q}_0} \\
	{\mathcal{Q}} & {\mathcal{Q}_2} & {\mathcal{Q}_1} & {\mathcal{Q}_0}
	\arrow["{\partial_1}", from=1-2, to=1-3]
	\arrow["{\Phi_1}", from=1-2, to=2-3]
	\arrow["{\partial_0}", from=1-3, to=1-4]
	\arrow["{\Phi_0}", from=1-3, to=2-4]
	\arrow["{H_X}", from=2-3, to=2-4]
	\arrow["{\Gamma_1}", from=2-3, to=3-3]
	\arrow["{\Gamma_0}", from=2-4, to=3-4]
	\arrow["{H^T_Z}", from=3-2, to=3-3]
	\arrow["{H_X}", from=3-3, to=3-4]
\end{tikzcd}. 
        \end{equation}
        By assumption, Diagram~\ref{diagram: classical-code-transformation-surgery-Z-type} commutes, which defines a proper (same-type) surgery gadget according to Definition~\ref{def: surgery-cone-code} (resp. Definition~\ref{def: hypergraph-surgery}). By assumption $\Gamma_1(l) = l'$, so that we have $l' \in  \Gamma_1 \Phi_1(\ker \partial_1)$. Furthermore, we have $|\Gamma_1(x)| \leq r|x|$ for any $x$ since $\omega_{\mathrm{col}}(\Gamma_1) \leq r$; hence $\mathrm{dist}_{\Gamma_1\Phi_1}(u, \ker \partial_1) \leq r\,\mathrm{dist}_{\Phi_1}(u, \ker \partial_1)$ for all $u$, and the transformed gadget is at least $(d, \rho_d/r)$-sound. For the second assertion, the proof is analogous, with the commutative diagram: 
        \begin{equation}\label{diagram: ZX-duality-surgery}
            \begin{tikzcd}
	{\mathcal{G}} & {\mathcal{V}} & {\mathcal{E}} & {\mathcal{C}} \\
	\Psi && {\mathcal{Q}_1} & {\mathcal{Q}_0} \\
	{\mathcal{Q}} & {\mathcal{Q}_0} & {\mathcal{Q}_1} & {\mathcal{Q}_2}
	\arrow["{\partial_1}", from=1-2, to=1-3]
	\arrow["{\Phi_1}", from=1-2, to=2-3]
	\arrow["{\partial_0}", from=1-3, to=1-4]
	\arrow["{\Phi_0}", from=1-3, to=2-4]
	\arrow["{H_X}", from=2-3, to=2-4]
	\arrow["{\Psi_1}", from=2-3, to=3-3]
	\arrow["{\Psi_0}", from=2-4, to=3-4]
	\arrow["{H^T_X}", from=3-2, to=3-3]
	\arrow["{H_Z}", from=3-3, to=3-4]
\end{tikzcd}.
        \end{equation}
    \end{proof}

\end{lemma}

\begin{remark}
For full generality, we can define the chain maps restricted onto the supports of $\Phi_1$. For example, let $\Phi_1$ be an elementary connectivity map which measures a single $Z$-type logical operator $\bar{L}$ so that $\Phi_1 = M_L$, where $M_L$ is the restriction map onto the support of a physical representative $L$ of $\bar{L}$. Then it suffices to define $\Gamma$ on this support, i.e., as a chain map from the restricted complex $H_XM_L$ to $H_XM_{L'}$ for another $Z$-type logical operator $\bar{L}'$ whose physical representative is $L'$.
\end{remark}

The symmetry characterization shows that a fixed surgery gadget can be used for measuring many different logical operators, which can combine the above two symmetry actions fault-tolerantly. 
\begin{remark}\label{remark: mixed-type-surgery}
Let $\mathcal{S}[\mathcal{G}; (\Phi^{(Z)}_1, \Phi^{(Z)}_0)]$ and $\mathcal{S}[\mathcal{G}; (\Phi^{(X)}_1, \Phi^{(X)}_0)]$ be surgery gadgets to $\mathcal{Q}$ measuring $\bar{L}_Z$ and $\bar{L}_X$, respectively, such that $\Phi^{(X)T}_1 \Phi^{(Z)}_1 = \Phi^{(Z)T}_1 \Phi^{(X)}_1$. Then the pair $((\Phi^{(X)}_1, \Phi^{(Z)}_1), (\Phi^{(X)}_0, \Phi^{(Z)}_0))$ satisfies Definition~\ref{def: surgery-cone-code} (resp. Definition~\ref{def: hypergraph-surgery}), and the resulting mixed-type surgery gadget measures the joint operators $X(\Phi^{(X)}_1(z))Z(\Phi^{(Z)}_1(z))$ for $z \in \ker \partial_1$; in particular, when the representatives of $\bar{L}_X$ and $\bar{L}_Z$ arise from a common $z$ (as for a connected graph, where $\dim \ker \partial_1 = 1$), it measures $\bar{L}_X\bar{L}_Z$. By Definition~\ref{def: soundness-surgery-gadget}, its soundness is inherited componentwise: it is $(d, \rho_d)$-sound whenever both single-type gadgets are. Its dressed distance is then bounded by Lemma~\ref{lemma: distance-preserving-soundness}.
\end{remark}

We give few examples which are relevant for our study. 
\begin{example}
    Let $\mathcal{S}[\mathcal{G}; (\Phi_1, \Phi_0)]$ be a surgery gadget, and $\Gamma$ be the chain map according to Diagram~\ref{diagram: classical-code-transformation-surgery-Z-type}, and $\Phi_1 = \Phi^{(Z)}_1$ and $\Phi_0 = \Phi^{(Z)}_0$. Then the linear combination of connectivity maps $(\Gamma_1 + I) \Phi_1$ and $(\Gamma_0+I) \Phi_0$ defines a new surgery gadget measuring the joint logical operator $\bar{L} \bar{L}'$, in the Lemma~\ref{lemma: classical-chain-maps}~\ref{local-chain-map-XX}.
\end{example}

\begin{example}
    Let $\mathcal{S}[\mathcal{G}; (\Phi_1, \Phi_0)]$ be a surgery gadget, and let $\Psi$ be the chain map according to Diagram~\ref{diagram: ZX-duality-surgery}, and $\Phi_1 = \Phi^{(Z)}_1$ and $\Phi_0 = \Phi^{(Z)}_0$. Then the pair of connectivity maps $(\Psi_1\Phi^{(Z)}_1, \Phi^{(Z)}_1)$ and $(\Psi_0\Phi^{(Z)}_0, \Phi^{(Z)}_0)$ defines a new surgery gadget measuring the joint logical operator $\bar{L}_X \bar{L}_Z$ by Remark~\ref{remark: mixed-type-surgery}, if $\Psi_1$ is symmetric, which ensures that the commutation relation $\Phi^{(X)T}_1 \Phi^{(Z)}_1 = \Phi^{(Z)T}_1 \Psi^T_1 \Phi^{(Z)}_1 = \Phi^{(Z)T}_1 \Psi_1 \Phi^{(Z)}_1$ is symmetric.
\end{example}

We also a give stronger notion of $ZX$-duality useful for discussions in the following sections.
\begin{definition}[Canonical $ZX$-duality]\label{def: canonical-ZX-duality}
    Let $\mathcal{Q}$ be a CSS code with a $Z$-type logical basis $\mathfrak{S}^{(Z)}$ and an $X$-type logical basis $\mathfrak{S}^{(X)}$. We say that a chain map $(\Psi_1, \Psi_0)$ induces a canonical $ZX$-duality for a (CSS) code $\mathcal{Q}$ if it induces the commutative Diagram~\ref{diagram: ZX-duality}, and in addition, the following holds.
   \begin{itemize}
    \item $\Psi_1, \Psi_0$ are representation of some involutions on physical qubits and checks, according to Diagram~\ref{diagram: ZX-duality}. 
    \item Let the action of $\Psi_1$ be presented by an involution $\tau$ on the physical qubits. We require that the induced action maps the logical supports as 
        \begin{align}
            \tau: \supp \bar{Z}_{\mu} \mapsto \supp \bar{X}_{\tau(\mu)},
        \end{align}
        for every basis index $\mu$, where $\tau(\mu)$ denotes the induced index involution. 
   \end{itemize}
\end{definition}

Evidently, the symmetric canonical LP code $\LP_l(A, A^*)$ satisfies Definition~\ref{def: canonical-ZX-duality}, as do symmetric hypergraph-product (HGP) codes. In the next section, we assume that our codes satisfies the above definition, which leads to an space-efficient surgery design.

\subsection{Fault-tolerant graph surgery gadgets}\label{sec: FT-graph-surgery-gadgets}
In this section, we formally introduce the conditions for which the surgery gadgets can be used to fault-tolerantly measure logical operators. 

In this section, we only consider the \emph{graph surgery} according to Definition~\ref{def: surgery-cone-code}, which is a special case of the hypergraph surgery according to Definition~\ref{def: hypergraph-surgery}. The use of a connected graph chain ensures the existence of connectivity maps, which are otherwise hard to construct. We make these precise as follows.

\begin{definition}[Graphs, directed graphs, multi-edge graphs; Ref.~\cite{golowich2024quantumldpccodestransversal}]\label{def: formal-graph-definition}
    A (multi)graph $\mathcal{G}(\mathcal{V}, \mathcal{E}, \partial)$ consists of  a set of vertices $\mathcal{V}$, a set of edges $\mathcal{E}$ with two projections $\mathrm{v}_0, \mathrm{v}_1: \mathcal{E} \rightarrow \mathcal{V}$ with tuple 
	$$
	 (\mathrm{v}_0(e), \mathrm{v}_1(e)).  
	$$
	We say that a graph is a directed (multi)graph if $\mathcal{E}$ is associated with $\mathrm{v}: \mathcal{E} \rightarrow \mathcal{V} \times \mathcal{V}$ given by $\mathrm{v}(e) = (\mathrm{v}_0(e), \mathrm{v}_1(e))$ (so $\partial = \mathrm{v}$). We say a graph is an undirected (multi)graph if $\mathcal{E}$ is associated with $\tilde{\mathrm{v}}: \mathcal{E} \rightarrow \mathcal{V} \times \mathcal{V} / \sim$ given by $\tilde{\mathrm{v}}(e) = \{\mathrm{v}_0(e), \mathrm{v}_1(e)\}$ (so $\partial = \tilde{\mathrm{v}}$), where the equivalence relation denotes unordered orientation: $(\mathrm{v}_0, \mathrm{v}_1) \sim (\mathrm{v}_1, \mathrm{v}_0)$. Furthermore, we say that two edges are multiple to each other for undirected graph if $\tilde{\mathrm{v}}(e) = \tilde{\mathrm{v}}(e')$, denoted by the equivalence relation $e_0 \sim e_3$. Finally, we say that an edge $e$ is a loop (self-loop) at $u \in \mathcal{V}$ if $\mathrm{v}_0(e) = \mathrm{v}_1(e) = u$: for a directed graph $\mathrm{v}(e) = (u, u)$, while for an undirected graph $\tilde{\mathrm{v}}(e) = \{u, u\} = \{u\}$ is a singleton, i.e. a fixed point of the flip $(\mathrm{v}_0, \mathrm{v}_1) \sim (\mathrm{v}_1, \mathrm{v}_0)$. 
\end{definition}

\begin{definition}[Paths and cycles]\label{def: path-cycle}
    Let $\mathcal G(\mathcal V,\mathcal E, \mathrm{v})$ be a graph. A \emph{path} is a sequence: 
    \begin{align}
        (v_0, e_1, v_1, e_2, v_2, \ldots, e_r, v_r)
    \end{align}
    for a sequence of edges $\{e_1, \cdots, e_r\}$ such that $\mathrm{v}_0(e_i) = v_{i}$ and $\mathrm{v}_1(e_i) = v_{i+1}$ for $i = 0, \ldots, r-1$ and vertices are pairwise distinct. A \emph{cycle} is a path with $v_0 = v_r$. 
\end{definition}
In this section, we only consider undirected graphs, which simplifiies the discussion, by working in the binary field. Let $\mathcal{G}: \mathbb{F}_2[\mathcal{V}] \xrightarrow{\partial_1} \mathbb{F}_2[\mathcal{E}] \xrightarrow{\partial_0} \mathbb{F}_2[\mathcal{C}]$ be the graph chain associated with a graph $\mathcal{G}(\mathcal{V},\mathcal{E}, \mathcal{C})$. For any subset $L \subseteq \mathcal{V}$, let $x_L \in \mathbb{F}_2[\mathcal{V}]$ be its indicator vector, namely $x_L(v)=1$ if and only if $v \in L$. Throughout, the vertex subset $L$ is identified with the support of the measured logical operator, matching the restriction map $M_L$. If $|L|$ is even, we write $P(L) \subseteq \mathcal{E}$ for an edge set with boundary $L$, that is, its indicator vector $y_L \in \mathbb{F}_2[\mathcal{E}]$ satisfies
\begin{align}
    (\partial_1)^T y_L = x_L.
\end{align}
For a connected graph, such an edge set exists for every even-cardinality subset $L \subseteq \mathcal{V}$. A key question is to ensure the existence of sparse $\Phi_0$. 

\begin{lemma}\label{lemma: sparse-Phi0-existence}
Let $\mathcal{G}: \mathbb{F}_2[\mathcal{V}] \xrightarrow{\partial_1} \mathbb{F}_2[\mathcal{E}] \xrightarrow{\partial_0} \mathbb{F}_2[\mathcal{C}]$ be a \emph{connected} graph chain with graph vertex degree at most $\Delta$, $\Delta \geq 2$. Then the following statements hold.
    \begin{itemize}
        \item Let $H : \mathbb{F}_2[\mathcal{V}] \to \mathbb{F}_2^{m_H}$ be a linear map, represented by a matrix $H \in \mathbb{F}_2^{m_H \times |\mathcal{V}|}$. There exists a linear map $\Phi_0 : \mathbb{F}_2[\mathcal{E}] \to \mathbb{F}_2^{m_H}$ such that
        \begin{align}\label{eq: row-linear-comb-all-even-weight}
            H = \Phi_0 \circ \partial_1
        \end{align}
        if and only if every row of $H$ has even Hamming weight.

        \item There exists an even-weight row vector $H \in \mathbb{F}_2^{1 \times |\mathcal{V}|}$ such that every $\Phi_0$ satisfying
        \begin{align}
            H = \Phi_0 \circ \partial_1
        \end{align}
        has row Hamming weight at least $\Omega(\log_{\Delta} |\mathcal{V}|)$.
    \end{itemize}
    \begin{proof}
        Since $\mathcal{G}$ is connected, the kernel of $\partial_1$ is generated by the all-ones vector on $\mathcal{V}$. Hence
        \begin{align}
            \dim \rs \partial_1 = |\mathcal{V}|-1.
        \end{align}
        Moreover, every row of $\partial_1$ has Hamming weight $2$, so every vector in $\rs \partial_1$ has even Hamming weight. Conversely, the subspace of even-weight vectors in $\mathbb{F}_2[\mathcal{V}]$ also has dimension $|\mathcal{V}|-1$. Therefore $\rs \partial_1$ is precisely the even-weight subspace of $\mathbb{F}_2[\mathcal{V}]$. This proves the first statement.

        For the second statement, choose two vertices $u,v \in \mathcal{V}$ at graph distance $\mathrm{dist}_{\mathcal{G}}(u,v)=\operatorname{diam}(\mathcal{G})$, and let $H = (x_u + x_v)^T$, writing $x_w := x_{\{w\}}$ for the singleton indicator. If $H = \Phi_0 \circ \partial_1$, then $\Phi_0$ is a row vector indexed by edges, and its support defines an edge set $P(\{u,v\})$ satisfying
        \begin{align}
            (\partial_1)^T \Phi_0^T = x_u + x_v.
        \end{align}
        In particular, the support of $\Phi_0$ contains a $u$--$v$ path, so its row Hamming weight is at least $\mathrm{dist}_{\mathcal{G}}(u,v)=\operatorname{diam}(\mathcal{G})$. For bounded-degree graphs, the Moore bound~\cite{HoffmanSingleton1960} gives $\operatorname{diam}(\mathcal{G}) = \Omega(\log_{\Delta}|\mathcal{V}|)$, and the claim follows.
    \end{proof}
\end{lemma}

There is no general statement guaranteeing an sparsity of $\Phi_0$. We now state a useful result, motivated by the extractor construction of Ref.~\cite{he2025extractorsqldpcarchitecturesefficient}: when $H$ (the relevant part of the parity-check matrix) is LDPC, a sparse $P_L$ can be constructed explicitly by attaching one cycle per row of $H$.

\begin{definition}[Graphification of a parity-check matrix]\label{def: graphify-parity-check}
    Let $H \in \mathbb{F}^{m_H \times n_H}_2$ be a matrix. We say that an undirected graph $\mathcal{G}(\mathcal{V}, \mathcal{E}, \mathrm{v})$ with $\Phi = \{ \Phi_1, \Phi_0\}$ is a graphification of $H$ if every row of $H$ can be represented as a path or cycle in $\mathcal{G}$, that is, for $a$-th row of $H$ ($H_a$), there exists a path in $\mathcal{G}$:
    \begin{align}\label{expr: path-graphification-parity-check}
        (v_{a_0}, e_{a_1}, v_{a_1}, e_{a_2}, v_{a_2}, \ldots, e_{a_{r-1}}, v_{a_{r-1}}, e_{a_r}, v_{a_r})
    \end{align}
    or cycle: 
    \begin{align}\label{expr: cycle-graphification-parity-check}
        (v_{a_0}, e_{a_1}, v_{a_1}, e_{a_2}, v_{a_2}, \ldots, e_{a_{r-1}}, v_{a_{r-1}}, e_{a_r}, v_{a_r} = v_{a_0})
    \end{align}
    such that $\{ v_{a_0}, v_{a_1}, \ldots, v_{a_r} \} = \supp(H_a)$, the column indices of nonzero supports for the $a$-th row of $H$. Furthermore, we denote the graphification of $H$ by a tuple $\Phi = \{ \Phi_1, \Phi_0\}$ such that 
    \begin{align}
        \Phi_1: \mathbb{F}_2[\mathcal{V}] \to \mathbb{F}_2^{n_H}, \quad \Phi_0: \mathbb{F}_2[\mathcal{E}] \to \mathbb{F}_2^{m_H}
    \end{align}
    wiring, respectively, the vertices and edges of $\mathcal{G}$ to the columns and rows of $H$, such that $\Phi_1$ is a transversal embedding of the vertices of $\mathcal{G}$ into the columns of $H$, and $\Phi_0$ maps each edge of $\mathcal{G}$ to the corresponding row of $H$ it represents according to expression~\eqref{expr: cycle-graphification-parity-check} or~\eqref{expr: path-graphification-parity-check}.
\end{definition}

Evidently, $\Phi_0$ has has row weight at most $\omega_{\mathrm{row}}(H)$. The following ensures it construct a valid surgery gadget. 

\begin{lemma}\label{lemma: graphify-parity-check}
   Let $H \in \mathbb{F}^{m_H \times n_H}_2$ and $\mathcal{G}(\mathcal{V}, \mathcal{E}, \mathrm{v})$ with $\Phi = \{ \Phi_1, \Phi_0\}$
   be a graphification of $H$. There exists a tuple $\{M^L_0, M^L_1\}$ with $H M^L_1\Phi_1 $, for restriction map $M^L_1$, has even Hamming weight for every row, such that 
   \begin{align}
    H M^L_1\Phi_1  = \Phi_0 M^L_0 \partial_1.
   \end{align}
   Denote $\Phi^{(L)} = \{\Phi^{(L)}_1, \Phi^{(L)}_0 \} = \{ \Phi_1 M^L_1, \Phi_0 M^L_0\}$. Then $\omega_{\mathrm{row}}(\Phi^{(L)}_0)\leq \lfloor \omega_{\mathrm{row}}(H)/2 \rfloor$ for cycle graphification, and $\omega_{\mathrm{row}}(\Phi^{(L)}_0) \leq \omega_{\mathrm{row}}(H)-1$ for path graphification. Furthermore, the graph $\mathcal{G}$ has maximum vertex degree $2 \omega_{\mathrm{col}}(H)$. 
   \begin{proof}
        To see this, for each row index $a$, write $r := |\supp H_a|$ and pick a cycle of vertices on which $H_i$ is supported
        \begin{align}\label{expr: graphify-cycles}
            v_{a_1}-v_{a_1}-\cdots-v_{a_r} -v_{a_0}
        \end{align}
        where we identify the edges by $e_{a_i} = \{v_{a_i}, v_{a_{i+1}}\}$ for $i=1, \cdots r-11$ and $e_{r} = \{v_{a_1}, v_{a_l}\}$. Denote the edge sets $E_i$ and these unit vectors on $\mathbb{F}_2[\mathcal{E}]$ (in fact any $n_i-1$ many)
        \begin{align}
             (\partial_1)^T y_{e_{a_1}}, \cdots, (\partial_1)^Ty_{e_{a_{n_i-1}}}
        \end{align}
         are linearly independent which form a basis of the $(n_i-1)$-dimensional even Hamming-weight vector space on the cycle's vertices and note the equality 
         \begin{align}
            \sum_{j \geq 0}\binom{n_i}{2j} = 2^{n_i-1} = \left| \mathrm{span}\{  y_{e_{a_1}}, \cdots, y_{e_{a_{n_i-1}}}\} \right|. 
         \end{align}
        This implies that for every restriction $M^L_1$ for which $H_a M^L_1 \Phi_1$ has even Hamming weight, there exists a projection $M^L_0$ such that 
        \begin{align}
            y^{(L)}_{a} := (\Phi_0 M^L_0)_{a} 
        \end{align}
        so that $y^{(L)}_a = (H_a M^L_1 )^T$, by the dimension count above.  Hence, we could express that 
        \begin{align}
            H_a M^{(L)}_1= y^{(L)}_{a}.
        \end{align}
        Note that for path graphification~\eqref{expr: path-graphification-parity-check}, the row weight of $y^{(L)}_a$ is at most $\omega_{\mathrm{row}}(H)-1$, and for cycle graphification~\eqref{expr: cycle-graphification-parity-check}, the row weight of $y^{(L)}_a$ is at most $\lfloor \omega_{\mathrm{row}}(H)/2 \rfloor$.
        Furthermore, each vertex can only be touched by at most $2 \omega_{\mathrm{col}}(H)$ many edges (one cycle per row containing it, two edges each), so that the resultant graph $\mathcal{G}$ has vertex degree $\Delta \leq 2\omega_{\mathrm{col}}(H)$.
   \end{proof}
\end{lemma}

Lemma~\ref{lemma: graphify-parity-check} shows how the elementary connectivity map can be constructed from a given parity-check matrix $H$ of a classical code. In this section, we apply the formalism in building the various surgery gadgets discussed in Section~\ref{sec: main-graph-surgery}. We start with a set of seed surgery gadgets.

We first state a formal version of the surgery gadget desiderata (informally anticipated by Definition~\ref{def: main-surgery-gadgets-desiderata}) which will be used consistently in this section.

\begin{definition}[Formal surgery gadget(s) desiderata]\label{def: formal-surgery-gadgets-desiderata}
    Let $\mathfrak{X} = \{ \mathcal{S}^{(i)}[\mathcal{G}^{(i)}; (\Phi^{(i)}_1, \Phi^{(i)}_0)] \}_{i}$ be a set of graph-surgery gadget(s) to the data (CSS) code $\mathcal{Q}$.  We say that the graph-surgery gadget(s) satisfies graph-surgery gadget(s) desiderata with added degree $\delta$ and soundness $\rho_d$ if:
    \begin{enumerate}[label=(\roman*)]
        \item Their connectivity map satisfies Eq.~\eqref{eq: main-commutation_requirement}, are elementary according to Definition~\ref{def: elementary-connectivity-surgery}, and, for each $i$, $\Phi^{(i)}_1(\ker \partial^{(i)}_1) = \{0, L^{(i)}\}$, in symplectic notation, with $L^{(i)}$ a representative of some nontrivial logical operator $\bar{L}^{(i)}$ of $\mathcal{Q}$.

        \item The graph edge-vertex incidence matrix of each surgery gadget in $\mathfrak{X}$ is $(d, \rho_d)$-sound, with $\rho_d \geq \max_i \max(\omega_{\mathrm{col}}(\Phi^{(i,X)}_1), \omega_{\mathrm{col}}(\Phi^{(i,Z)}_1))$, over the defined components. 
       
        \item There exists some constant $\delta$ such that any graph-surgery gadget from $\mathfrak{X}$ has a merged code (with data code $\mathcal{Q}$) with added degree at most $\delta$. 
        \item Any pair of surgery gadgets from the above can be bridged with $|\mathcal{E}_b|$-many bridge qubits and $|\mathcal{C}_b|$-many bridge checks such that $|\mathcal{E}_b| \geq d$ and whose resulted graph has added (vertex) degree, cycle-basis congestion (the maximum number of cycle-basis elements containing any edge), and cycle length at most a constant $\delta_{\mathcal{G}}$. 
    \end{enumerate}
\end{definition}

Let $\mathfrak{X}$ be a set of surgery gadgets, which satisfies the surgery gadgets desiderata Definition~\ref{def: formal-surgery-gadgets-desiderata} with soundness $\rho_d$ and added degree $\delta$. In addition, we assume that their connectivity maps are $1$-elementary according to Definition~\ref{def: elementary-connectivity-surgery}. Then we have  

\begin{lemma}\label{lemma: standard-bridging-techniques}
  Any pair $\mathcal{S}^{(i)}, \mathcal{S}^{(i')}$ (chosen with replacement) of surgery gadgets from $\mathfrak{X}$, measuring $\bar{L}^{(i)}$ and $\bar{L}^{(i')}$ respectively, can be bridged with $|\mathcal{E}_b|$-many bridge qubits, for any $|\mathcal{E}_b| \geq d$ (each constituent graph implicitly has at least $d$ vertices), and $|\mathcal{C}_b|$-many bridge checks, $|\mathcal{C}_b| \geq |\mathcal{E}_b| - 1$ (the bridge cycle basis may be overcomplete), such that the bridged gadget measures the product $\bar{L}^{(i)}\bar{L}^{(i')}$, is $(d, \rho_d)$-sound, and its merged code has added degree $\delta+1$.
     \begin{proof}
        Let $\mathcal{G}^{(i)}$ and $\mathcal{G}^{(i')}$ be any two graphs from the set of surgery gadgets $\mathfrak{X}$. Without loss of generality, we assume that the bridge edges are added according to 
        \begin{align}
            \partial_1 := \left( \begin{array}{cc}
                \partial^{(i)}_1 & 0 \\ 
                0 & \partial^{(i')}_1 \\
                B^{(i)} & B^{(i')}
            \end{array} \right),
        \end{align}
        where $B^{(s)} \in \mathbb{F}_2^{|\mathcal{E}_b| \times |\mathcal{V}^{(s)}|}$, $s \in \{i, i'\}$, is the selector of the $|\mathcal{E}_b|$ bridged vertices on side $s$ (bridge edge $t$ reads the entry $u_t + v_t$), for $|\mathcal{E}_b| \geq d$. Let any vector $\begin{pmatrix} u \\ v \end{pmatrix}$, if $u \notin \ker \partial^{(i)}_1$ and $v \notin \ker \partial^{(i')}_1$, then we have that $|\partial_1 \begin{pmatrix} u \\ v \end{pmatrix}| \geq |\partial^{(i)}_1 u| + |\partial^{(i')}_1 v| \geq \rho_d \, \mathrm{dist}(u, \ker \partial^{(i)}_1) + \rho_d \, \mathrm{dist}(v, \ker \partial^{(i')}_1) \geq \rho_d \, \mathrm{dist}(\begin{pmatrix} u \\ v \end{pmatrix}, \ker \partial_1)$ in the \emph{aligned} case, where the two component distances are attained on matching kernel branches; the \emph{anti-aligned} case is treated at the end of the proof via the bridge term. Recall that $\partial^{(i)}_1$ and $\partial^{(i')}_1$ are edge-vertex incidence matrices of two connected graphs, their kernels spanned by the all-one vectors. Hence, it remains to consider the case whenever $u \in \ker \partial^{(i)}_1$ or $v \in \ker \partial^{(i')}_1$, and $\begin{pmatrix} u \\ v \end{pmatrix} \notin \ker \partial_1$. Without loss of generality, we assume that $u \in \ker \partial^{(i)}_1$ and $v \notin \ker \partial^{(i')}_1$, so that $u = 1_{\mathcal{V}^{(i)}}$ the all-one vector on the vertices of $\mathcal{G}^{(i)}$ (if also $v \in \ker \partial^{(i')}_1$, the third row alone gives $|\partial_1 \begin{pmatrix} u \\ v \end{pmatrix}| = |\mathcal{E}_b| \geq d$). Suppose that 
        \begin{align}
            |\partial_1 \begin{pmatrix} 1_{\mathcal{V}^{(i)}} \\ v \end{pmatrix}| = |\begin{pmatrix} 0 \\ \partial^{(i')}_1 v \\ B^{(i)}1_{\mathcal{V}^{(i)}} + B^{(i')}v \end{pmatrix}| <d.
        \end{align}
        Note that we have that, on the third row, 
        \begin{align}
            \begin{pmatrix}
                B^{(i)} & B^{(i')}
            \end{pmatrix}  \begin{pmatrix} 1_{\mathcal{V}^{(i)}} \\ v \end{pmatrix} = \begin{pmatrix}
                B^{(i)} & B^{(i')}
            \end{pmatrix} \begin{pmatrix} 0 \\ v + 1_{\mathcal{V}^{(i')}} \end{pmatrix}
        \end{align}
        as $\begin{pmatrix}
            1_{\mathcal{V}^{(i)}} \\ 1_{\mathcal{V}^{(i')}}
        \end{pmatrix}$ is the nonzero kernel vector of $\partial_1$. In fact the same shift applies to all of $\partial_1$: $\partial_1 \begin{pmatrix} 1_{\mathcal{V}^{(i)}} \\ v \end{pmatrix} = \partial_1 \begin{pmatrix} 0 \\ v + 1_{\mathcal{V}^{(i')}} \end{pmatrix}$. Write $z := v + 1_{\mathcal{V}^{(i')}}$. If $|z| \leq |v|$, then, combined with the (small-set) soundness of $\partial^{(i')}_1$,
        \begin{align}
              |\partial_1 \begin{pmatrix} 1_{\mathcal{V}^{(i)}} \\ v \end{pmatrix}| \geq |\partial^{(i')}_1 z| \geq \rho_d |z| \geq \rho_d\mathrm{dist}(\begin{pmatrix} 1_{\mathcal{V}^{(i)}} \\ v \end{pmatrix}, \ker \partial_1),
        \end{align}
        since $\mathrm{dist}(z, \ker \partial^{(i')}_1) = |z|$ and $\mathrm{dist}(\begin{pmatrix} 1_{\mathcal{V}^{(i)}} \\ v \end{pmatrix}, \ker \partial_1) \leq |z|$. Otherwise $|v| < |z|$: the third row gives $|B^{(i)} 1_{\mathcal{V}^{(i)}} + B^{(i')} v| \geq |\mathcal{E}_b| - |v|$, so that $|\partial_1 \begin{pmatrix} 1_{\mathcal{V}^{(i)}} \\ v \end{pmatrix}| \geq \rho_d |v| + |\mathcal{E}_b| - |v| \geq |\mathcal{E}_b| \geq d$ for $\rho_d \geq 1$, contradicting the supposition. The anti-aligned case of the generic situation is identical: shifting by the kernel vector so that $|u| \leq |u + 1_{\mathcal{V}^{(i)}}|$ while $|v + 1_{\mathcal{V}^{(i')}}| < |v|$, the bridge rows give $|B^{(i)} u + B^{(i')} v| \geq |\mathcal{E}_b| - |u| - |v + 1_{\mathcal{V}^{(i')}}|$, whence $|\partial_1 \begin{pmatrix} u \\ v \end{pmatrix}| \geq \rho_d (|u| + |v + 1_{\mathcal{V}^{(i')}}|) + |\mathcal{E}_b| - (|u| + |v + 1_{\mathcal{V}^{(i')}}|) \geq d$. Hence the bridged gadget is $(d, \rho_d)$-sound.
        On the other hand, the bridged (cycle) checks are given by 
        \begin{align}
            \partial_0 = \left( \begin{array}{ccc}
                \partial^{(i)}_0 & 0 & 0 \\
                0 & \partial^{(i')}_0 & 0 \\
                \Lambda^{(i)} & \Lambda^{(i')} & D
            \end{array} \right) 
        \end{align}
        where $D$ is given by the line graph/parity check to repetition on on $|\mathcal{E}_b|$-many bridge edges. In particular, we assume that these bridged checks are supported on four-cycles, so that $\Lambda^{(i)}$ and $\Lambda^{(i')}$ are diagonal matrices. By Lemma~\ref{lemma: composition-elementary-connectivity}, $\Phi^{(i)}_1 + \Phi^{(i')}_1$ are $1$-elementary and the column weight of $\Phi^{(i)}_0 + \Phi^{(i')}_0$ is added by at most $1$. Then the new merged parity-check matrix from Eq.~\eqref{eq: main-merged-parity-check} would have degree added by at most $1$ from these of its constituent surgery gadgets. 

     \end{proof}
\end{lemma}
\begin{remark}
    For target soundness $\rho_d > 1$ (say $\rho_d = 2$), taking $|\mathcal{E}_b| \geq \lceil \rho_d \rceil d$ bridge edges allows the anti-aligned case of the proof to be closed by the bridge term alone; with the combined estimate, $|\mathcal{E}_b| \geq d$ already suffices for every $\rho_d \geq 1$.
\end{remark}

\begin{remark}
    Technically, we cannot directly assume that there always exists four-cycles for all bridge checks. We assume this from a practical convenience and refer to a more rigorous treatment in Ref.~\cite{he2025extractorsqldpcarchitecturesefficient}.
\end{remark}

A key question is how to use bridges to construct a single $Y$-type logical operator from its constituent $X$- and $Z$-type logical operators. 

\begin{definition}[Graph automorphism]\label{def: graph-isomorphism}
    Let $\mathcal{G}(\mathcal{V}, \mathcal{E})$ be a graph. A graph \emph{automorphism} of $\mathcal{G}$ is a permutation $\sigma$ of $\mathcal{V}$ such that $\{u, v\} \in \mathcal{E}$ if and only if $\{\sigma(u), \sigma(v)\} \in \mathcal{E}$. It induces a permutation of edges, $\sigma(\{u, v\}) = \{\sigma(u), \sigma(v)\}$.
\end{definition}

We denote the induced operators acting on the vector space $\mathbb{F}_2[\mathcal{V}]$ by $P^{\mathcal{V}}_{\sigma}$ and $P^{\mathcal{E}}_{\sigma}$ acting on $\mathbb{F}_2[\mathcal{E}]$. Let $\partial_1$ be the edge-vertex incidence matrix of $\mathcal{G}$, then we have

\begin{align}\label{eq: graph-isomorphism-intertwining}
    P^{\mathcal{E}}_{\sigma} \partial_1 = \partial_1 P^{\mathcal{V}}_{\sigma}.
\end{align}

\begin{remark}\label{remark: permutation-intertwining}
    If $\mathcal{G}$ is connected, then for \emph{any} vertex permutation $\sigma$ --- not necessarily a graph automorphism --- there exists a linear map $P^{\mathcal{E}}_{\sigma}: \mathbb{F}_2[\mathcal{E}] \rightarrow \mathbb{F}_2[\mathcal{E}]$ satisfying $P^{\mathcal{E}}_{\sigma} \partial_1 = \partial_1 P^{\mathcal{V}}_{\sigma}$, overloading the notation of Eq.~\eqref{eq: graph-isomorphism-intertwining}: take row $e$ of $P^{\mathcal{E}}_{\sigma}$ to be the indicator of an edge path joining the two vertices of $\sigma^{-1}(\partial e)$, which exists by connectedness. In general $P^{\mathcal{E}}_{\sigma}$ is not a permutation, and its row weight is bounded by the corresponding path length; when $\sigma$ is a graph automorphism it can be taken to be the induced edge permutation, recovering Eq.~\eqref{eq: graph-isomorphism-intertwining}.
\end{remark}

We say that, for undirected graphs, an edge is invariant under $\sigma$ if $e = \sigma(e)$; that is, $\sigma$ only permutes the vertices of the edge. We show that we can properly construct single $Y$-type measurements from its bridged $X$- and $Z$-type surgery gadgets.

\begin{lemma}\label{lemma: graph-isomorphism-Y-measurement-more-degree}
    Let $\mathcal{S}^{(X)}[\mathcal{G}^{(X)}; (\Phi^{(X)}_1, \Phi^{(X)}_0)]$ and $\mathcal{S}^{(Z)}[\mathcal{G}^{(Z)}; (\Phi^{(Z)}_1, \Phi^{(Z)}_0)]$ be two surgery gadgets measuring a single $X$-type and $Z$-type logical operators $\bar{L}_X$ and $\bar{L}_Z$, respectively, such that their connectivity maps are elementary. Then there exists a graph isomorphism permutation matrices $P^{\mathcal{V}}_{\sigma}$ and $P^{\mathcal{E}}_{\sigma}$ and an addition of bridged edges which are invariant under $\sigma$, resulting in a bridged graph $\mathcal{G}$ and connectivity maps $\Phi = (\Phi_1, \Phi_0)$ such that the followings hold. 
    \begin{itemize}
        \item The connectivity map $\Phi_1$ is given by 
            \begin{align}
                \Phi_1 = (\Phi^{(X)}_1, \Phi^{(Z)}_1 P^{\mathcal{V}}_{\sigma}).
            \end{align}
        \item If $\mathcal{S}^{(X)}[\mathcal{G}^{(X)}; (\Phi^{(X)}_1, \Phi^{(X)}_0)]$ and $\mathcal{S}^{(Z)}[\mathcal{G}^{(Z)}; (\Phi^{(Z)}_1, \Phi^{(Z)}_0)]$ satisfy the surgery gadgets desiderata Definition~\ref{def: formal-surgery-gadgets-desiderata} with soundness $\rho_d$ and added degree $\delta$, then $\mathcal{S}[\mathcal{G}; (\Phi_1, \Phi_0)]$ satisfies the surgery gadgets desiderata Definition~\ref{def: formal-surgery-gadgets-desiderata} with soundness $\rho_d$ and added degree $\delta+s$.
    \end{itemize}
    \begin{proof}
        By construction, $\Phi^{(X)}_1, \Phi^{(Z)}_1$ are restriction maps to physical representatives of $\bar{L}_X$ and $\bar{L}_Z$, respectively. Since it measures a $Y$-type logical operator $\bar{L}_Y = \bar{L}_X \bar{L}_Z$, the supports of their images onto the data physical qubits must have nontrivial overlap, which we denote (with abuse of correctness) by $\supp \IM \Phi^{(X)}_1 \cap \supp \IM \Phi^{(Z)}_1$. Since $\Phi^{(X)}_1$ and $\Phi^{(Z)}_1$ are elementary, there exists equal-length subsets of vertices $\gamma_X \subseteq \mathcal{V}_X$ and $\gamma_Z \subseteq \mathcal{V}_Z$ such that, for $i=0, \cdots, s-1$ with $s = |\gamma_X| = |\gamma_Z|$, 
        \begin{align}
            \Phi^{(X)}_1(v_i) = \Phi^{(Z)}_1(v_{s+i})
        \end{align}
        where we index $v_0, \cdots, v_{s-1}$ for $\gamma_X$ and $v_s, \cdots, v_{2s-1}$ for $\gamma_Z$. Now, we define the joint graph $\mathcal{G}(\mathcal{V}, \mathcal{E}, \mathcal{C})$: 
        \begin{align}
            \mathcal{V} = \mathcal{V}_X \sqcup \mathcal{V}_Z, \quad \mathcal{E} = \mathcal{E}_X \sqcup \mathcal{E}_Z \sqcup \mathcal{E}_b , \quad \mathcal{C} = \mathcal{C}_X \sqcup \mathcal{C}_Z \sqcup \mathcal{C}_b
        \end{align}
        according to Lemma~\ref{lemma: standard-bridging-techniques}. Define the transpositions $\sigma = \prod^{s-1}_{i=0}(i \; s +i)$. The bridged edges are chosen with $e_i = (v_i, v_{s+i})$ for $i=0, \cdots, s-1$, and additionally edges which are untouched by $\sigma$ so that $|\mathcal{E}_b| \geq d$. Let the connectivity map be given by 
        \begin{align}
           \begin{aligned}
             \Phi_1 &= (\Phi^{(X)}_1, \Phi^{(Z)}_1 P^{\mathcal{V}}_{\sigma}) \\
             \Phi_0 &= (\Phi^{(X)}_0, \Phi^{(Z)}_0 P^{\mathcal{E}}_{\sigma}).
           \end{aligned}
        \end{align}
        Next, we show that these connectivity maps satisfy the commutation relations Eq.~\eqref{eq: main-commutation_requirement}. By construction and graph isomorphism property Eq.~\eqref{eq: graph-isomorphism-intertwining}, we have that 
        \begin{align}
            H_X \Phi^{(Z)}_1 P^{\mathcal{V}}_\sigma = \Phi^{(Z)}_0 \partial_1 P^{\mathcal{V}}_{\sigma} = \Phi^{(Z)}_0 P^{\mathcal{E}}_\sigma \partial_1.
        \end{align}
        Furthermore, by construction, we have that 
        \begin{align}
            \Phi^{(X)T}_1 \Phi^{(Z)}_1 P^{\mathcal{V}}_\sigma 
        \end{align}
        is diagonal; hence, symmetric. Hence, this defines a valid surgery gadget according to Definition~\ref{def: main-surgery-gadgets-desiderata} or Definition~\ref{def: surgery-cone-code}, measuring $\bar{L}_X \bar{L}_Z$ up to a global phase. By construction, suppose that the nonzero supports for $a$-th row of $\Phi^{(Z)}_0$ are given by edges $e_{a_1}, \cdots, e_{a_k}$, forming the vector on $\mathbb{F}_2[\mathcal{E}]$: 
        \begin{align}
            y = \sum_{i=1}^{k} y_{e_{a_i}}
        \end{align}
        for basis vector $y_{e_{a_i}} \in \mathbb{F}_2[\mathcal{E}]$. Then, by construction, the nonzero supports for $a$-th row of $\Phi^{(Z)}_0 P^{\mathcal{E}}_\sigma$ are given by edges $\sigma(e_{a_1}), \cdots, \sigma(e_{a_k})$. If $e_{a_i}$ is not invariant under $\sigma$, then we only consider the case without loss of generality $e_{a_i} = \{\mathrm{v}_0(e_{a_i}),\mathrm{v}_1(e_{a_i}) \} \mapsto \{\sigma(\mathrm{v}_0(e_{a_i})),\mathrm{v}_1(e_{a_i}) \} $, since added bridged edges are invariant and the two graphs are disjoint. Hence, over $\mathbb{F}_2[\mathcal{E}]$, we have that 
        \begin{align}
            y_{\sigma(e_{a_i})} = y_{e_{a_i}} + y_{e_b}
        \end{align}
        for some $e_b \in \mathcal{E}_b$ with $e_b = \{\sigma(\mathrm{v}_0(e_{a_i})), \mathrm{v}_0(e_{a_i})\} $. Since we always assume that the graph degree, cycle congestions, and weights are sufficiently small, we conclude that the new connectivity map, specifically for $\Phi^{(Z)}_0 P^{\mathcal{E}}_\sigma $ has added degree at most by $s$. Finally, if both $\mathcal{S}^{(X)}[\mathcal{G}^{(X)}; (\Phi^{(X)}_1, \Phi^{(X)}_0)]$ and $\mathcal{S}^{(Z)}[\mathcal{G}^{(Z)}; (\Phi^{(Z)}_1, \Phi^{(Z)}_0)]$ are at least $(d, \rho_d)$-sound, and the new graph chain is exact, then by Remark~\ref{remark: mixed-type-surgery} and Lemma~\ref{lemma: distance-preserving-soundness}, the new surgery gadget $\mathcal{S}[\mathcal{G}; (\Phi_1, \Phi_0)]$ is $(d, \rho_d)$-sound, and preserves the distance if $\rho_d \geq 1$. 
        \end{proof}
\end{lemma}

\begin{remark}
    The above construction works well for minimal physical overlaps between $X$-type and $Z$-type logical operators. Hence, it works well for structured codes or canonical CSS code basis, and for low logical-weight measurements. 
\end{remark}

Hence, it is desirable to introduce another method, which would fix the degree for arbitrary overlaps at the expense of introducing more bridge qubits.

\begin{remark}[Symmetrization of graph]
    To ensure sparsity of $\Phi_0$, a simple choice is to add edges so that $\sigma$ becomes a graph automorphism, so that $P^{\mathcal{E}}_{\sigma}$ is a permutation matrix.  By construction, suppose that the nonzero supports for $a$-th row of $\Phi^{(Z)}_0$ are given by edges $e_{a_1}, \cdots, e_{a_k}$, forming the vector on $\mathbb{F}_2[\mathcal{E}]$: 
        \begin{align}
            y = \sum_{i=1}^{k} y_{e_{a_i}}
        \end{align}
        for basis vector $y_{e_{a_i}} \in \mathbb{F}_2[\mathcal{E}]$. Then, by construction, the nonzero supports for $a$-th row of $\Phi^{(Z)}_0 P^{\mathcal{E}}_\sigma$ are given by edges $\sigma(e_{a_1}), \cdots, \sigma(e_{a_k})$. If $e_{a_i}$ is not invariant under $\sigma$, then we only consider the case without loss of generality $e_{a_i} = \{\mathrm{v}_0(e_{a_i}),\mathrm{v}_1(e_{a_i}) \} \mapsto \{\sigma(\mathrm{v}_0(e_{a_i})),\mathrm{v}_1(e_{a_i}) \} $, since added bridged edges are invariant and the two graphs are disjoint. For any such $e_{a_i}$, we add a new (bridge) edge $\sigma(e_{a_i})$. Since the connectivity maps are elementary, the number of bridge edges $|\mathcal{E}_b| \leq s \Delta$ and the resulted graph (vertex) degree is increased by at most $\Delta$.

\end{remark}

For the application of an extractor, we wish to construct a method that does not require a specalized graph pruning and that whose added degree does not scale with region of overlaps. It turns out that such a construction is possible, borrowing from graphifization of parity-check matrices in Lemma~\ref{lemma: graphify-parity-check}. 

\begin{lemma}\label{lemma: add-permutation-bridges-for-parity-checks}
    Let $H \in \mathbb{F}^{m_H \times n_H}_2$ and $\mathcal{G}$ be the graphification of $H$ according to Definition~\ref{def: graphification-parity-check}, such that we have canonical $\Phi = \{ \Phi_1, \Phi_0\}$. Let the permutation of vertices be given by $S^{\oplus l}_r$, which can be generated by (coxeters) adjacant transpositions $\tau^{(m)}_i = (i \; i+1)$ for $i=1, \cdots, r-1$, with $\tau^{(m)}_r = (r \; 1)$ for any $m$-th copy. Let the bridge edges be given by the set 
    \begin{align}
        \mathcal{E}_b = \bigoplus_{i=1}^{r} \left\{ e^b_{mi} = \{v_{m, i}, v_{m, i+1}\}, m=1, \cdots l   \right\}.
    \end{align}
    Then let $\sigma \in S^{\oplus l}_r$ and $P^{\mathcal{E}}_{\sigma}$ and $P^{\mathcal{V}}_{\sigma}$ be the induced matrices on $\mathbb{F}_2[\mathcal{E}]$ and $\mathbb{F}_2[\mathcal{V}]$, respectively. Then we have that $\Phi_0 P^{\mathcal{E}}_{\sigma}$ has the maximum row weight at most $2 \omega_{\mathrm{row}}(H)+1$. Consequently, for any such measured logical operator associated with the restriction map $M^L = \{ M^L_1, M^L_0\}$, the surgery gadget $\mathcal{S}[\mathcal{G}; (\Phi^{(L)}_1, \Phi^{(L)}_0)]$ for $(M^L_1\Phi_1, \Phi_0M^L_0)$ has maximum row weight at most $ \omega_{\mathrm{row}}(H)$.
    \begin{proof}
        From Lemma~\ref{lemma: graphify-parity-check}, we have that 
        \begin{align}
            (\Phi_0)_a = (y_{e_{a_1}}, \cdots, y_{e_{a_{n_i}}})
        \end{align}
        where the cycle can be ordered 
        \begin{align}
            \left(\mathrm{v}_0(e_{a_1}), e_{a_1}, \mathrm{v}_1(e_{a_1}), e_{a_2}, \mathrm{v}_1(e_{a_2}), \cdots, e_{a_{n_i-1}}, \mathrm{v}_1(e_{a_{n_i-1}}),e_{a_{n_i}} \right). 
        \end{align}
        Under transposition $\sigma$, we have the permuted path as 
\begin{equation}\label{diagram: path-cycle-permutation}
    \begin{tikzcd}
	{\mathrm{v}_0(e_{a_1})} && { \mathrm{v}_1(e_{a_1})} && {\mathrm{v}_1(e_{a_2})} & \cdots && \\
	{\sigma(\mathrm{v}_0(e_{a_1}))} && {\sigma(\mathrm{v}_1(e_{a_1}))} && {\sigma(\mathrm{v}_1(e_{a_2}))} & \cdots && {}
	\arrow["{e_{a_1}}"{description}, no head, from=1-1, to=1-3]
	\arrow[no head, from=1-1, to=2-1]
	\arrow["{e_{a_2}}"{description}, no head, from=1-3, to=1-5]
	\arrow[no head, from=1-3, to=2-3]
	\arrow[no head, from=1-5, to=1-6]
	\arrow[no head, from=1-5, to=2-5]
	\arrow["{\sigma(e_{a_1})}"{description}, no head, from=2-1, to=2-3]
	\arrow["{\sigma(e_{a_2})}"{description}, no head, from=2-3, to=2-5]
	\arrow[no head, from=2-5, to=2-6]
\end{tikzcd}
\end{equation}

where the vertical edges are the added bridge edges. Furthermore, for $\Phi^{(L)}_0$, the row weights are at most $\lfloor \omega_{row}(H)/2 \rfloor$. This implies that the row weights of $\Phi^{(L)}_0 P^{\mathcal{E}}_{\sigma}$ is added by at most $\omega_{\mathrm{row}}(H)$ many edges. Note that, in many cases, the number of vertices need to be permuted are at most $2$ per each row index $a$, which are also adjacent along the cycle. Then the added degree is at most $2$.
    \end{proof}
\end{lemma}

\begin{remark}
    This is precisely the case for canonical LP code with $\LP^{3 \times 5}_{33}$. Suppose we wish to measure, using the column-canonical extractor, the logical operator $\bar{L}_{1,0, m} = \bar{Z}_{1, 0, m}\bar{X}_{1, 0, m}$. $\Phi^{(X)}_1$ is constructed from the $ZX$-duality from $Z_{0, 1, m}$, which is isomorphic (under column-parallel structure) to $\bar{Z}_{0, 0, m}$. In this case, to ensure $\Phi^{(X)T}_1 \Phi^{(Z)}_1$ is symmetric (diagonal), we need to permute the vertices connecting to the information bits, say $v_{0m}$ and $v_{1m}$, which is precisely a graph isomorphism Eq.~\eqref{eq: graph-isomorphism-intertwining}: $(P^{\mathcal{V}}_{\sigma}, P^{\mathcal{E}}_{\sigma})$. By adding the bridge edges according Lemma~\ref{lemma: add-permutation-bridges-for-parity-checks} and the property of the parity-check matrix or ($A$) whose each row intersects at most 2 information bits along the fibre. Our goal is to examine in this case, what would be the row weights of permuted 
    \begin{align}
        \Phi^{(X)}_0 \mapsto \Phi^{(X)}_0 P^{\mathcal{E}}_{\sigma} 
    \end{align}
    which I claim is increased by at most $2$. This should also hold true for measuring arbitrary $Y$ products. 
\end{remark}

\subsubsection*{Seed surgery gadgets}

In this section, we prove that bridged seed surgery gadgets can measure any two-body logical operators, which generate the full Clifford group, and the transforming maps from the seed surgery gadgets Definition~\ref{def: main-seed-logical-operators-gadget} are given in Lemma~\ref{lemma: classical-chain-maps}. Let $\mathfrak{X}_{\mathrm{seed}}$ be a set of seed gadgets satisfying the surgery gadgets desiderata Definition~\ref{def: formal-surgery-gadgets-desiderata} with $\rho_d$ and a small added degree $\delta$, then its bridged pairs can measure fault-tolerantly any two-body logical operators, which generate the full Clifford group, which underpins Theorem~\ref{thm: main-seed-logical-canonical-LPAA}. 

\begin{itemize}
    \item ($XX$- and $ZZ$-type measurements). This is standard technique for bridging discussed in Lemma~\ref{lemma: standard-bridging-techniques}. Since we assume the surgery gadgets desiderata Definition~\ref{def: formal-surgery-gadgets-desiderata}, this is guaranteed to be $(d, \rho_d)$-sound and has added degree $\delta$.
    \item (Mixed $XZ$-type measurements). For the canonical basis, any two opposite-type basis operators with distinct indices have disjoint supports (Proposition~\ref{prop: main-property-canonical-basis}). Bridging the corresponding $X$- and $Z$-type seed gadgets by Lemma~\ref{lemma: standard-bridging-techniques} then measures the product: the vertex checks commute since $\Phi^{(X)T}_1 \Phi^{(Z)}_1 = 0$, and soundness and added degree are as in that Lemma. 
    \item ($Y$-type measurements). This can be constructed using Lemma~\ref{lemma: graph-isomorphism-Y-measurement-more-degree} or Lemma~\ref{lemma: graph-isomorphism-Y-measurement-more-bridge-edges}, between two seed surgery gadgets. 
\end{itemize}

Note that for joint $YY$-type measurements, we might require up to $4$ surgery gadgets. Generalizing the above, we have the following proposition.

\begin{proposition}[Restatement of Proposition~\ref{prop: bridge_get_full_Clifford}]\label{prop: bridge_get_full_Clifford}
     Let $\mathcal{Q}$ be a (CSS) code with distance $d$, with the set of seed logical operators in Eq.~\eqref{eq: main-seed-logical-LPAA-canonical}, and let $\mathfrak{X}_{\mathrm{seed}}$ be a set of seed surgery gadgets which satisfy the surgery desiderata Definition~\ref{def: formal-surgery-gadgets-desiderata}, with (small-set) soundness $\rho_d$ and added degree $\delta$. Let $|\mathcal{V}|$, $|\mathcal{E}|$, and $|\mathcal{C}|$ denote the maximum numbers of vertices, edges, and cycles, respectively, among all seed surgery gadgets, and let $|\mathcal{E}_b|$ and $|\mathcal{C}_b|$ denote the maximum numbers of edges and cycles, respectively, over all bridges between pairs of seed gadgets. Let $J^{(1)}$ and $J^{(2)}$ be arbitrary sets of indices of $Z$-type (resp.\ $X$-type) logical basis, and denote the corresponding logical operator 
   \begin{align}\label{eq: generic-logical-operator}
       \bar{L} = \prod_{i \in J^{(1)}} \bar{Z}_{i} \prod_{j \in J^{(2)}} \bar{X}_{j},
   \end{align}
    where $\mathrm{b}(\bar{L}) = |J^{(1)}| + |J^{(2)}|$. Then $\bar{L}$ can be fault-tolerantly measured using at most $\mathrm{b}(\bar{L})$ bridged seed surgery gadgets, with size at most $b |\mathcal{V}| + b |\mathcal{E}| + b|\mathcal{C}| + (b-1)(|\mathcal{E}_b| + |\mathcal{C}_b|)$. Hence, the resulted surgery gadget satisfies the surgery gadgets desiderata Definition~\ref{def: formal-surgery-gadgets-desiderata} with $(d, \rho_d)$-soundness and added degree $\delta +1$. 
\end{proposition}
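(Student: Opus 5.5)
Our strategy is induction on the body $b = \mathrm{b}(\bar{L})$. We first reduce every single-body factor to a rewired seed gadget, and then chain the $b$ rewired gadgets together with $b-1$ bridges using Lemma~\ref{lemma: standard-bridging-techniques}.

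\textbf{Step 1: one gadget per factor.} For each basis factor $\bar{Z}_i$ ($i \in J^{(1)}$) or $\bar{X}_j$ ($j \in J^{(2)}$), Theorem~\ref{thm: main-seed-logical-canonical-LPAA} gives a seed $\bar{Z}_{s,0,0}$ and a transforming map $T$ sending the seed to that factor. The map is a composite of three ingredients:
\begin{itemize}
\item a cyclic shift $\sigma_t$;
\item a column permutation $P_C$;
\item possibly the fold $\tau$.
\end{itemize}
Each ingredient is a chain map of the form required by Lemma~\ref{lemma: classical-chain-maps}, either same-type as in~\ref{local-chain-map-XX} or the $ZX$-duality of~\ref{local-chain-map-XZ}. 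Each also acts as a permutation on the qubits it touches, so $\omega_{\mathrm{col}}(\Gamma_1) = 1$. Lemma~\ref{lemma: classical-chain-maps} then shows that the rewired gadget measures the factor and stays $(d,\rho_d)$-sound. Because the connectivity map is only permuted, it remains $1$-elementary and the added degree stays $\delta$. Hence every factor has its own copy of a seed graph $\mathcal{G}^{(s)}$ satisfying Definition~\ref{def: formal-surgery-gadgets-desiderata}.

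\textbf{Step 2: bridge the copies one at a time.} Order the factors as $\bar{L}^{(1)}, \dots, \bar{L}^{(b)}$ and bridge them in a path. At stage $k$, the gadget already built for $\bar{L}^{(1)}\cdots\bar{L}^{(k)}$ is treated as one connected gadget and bridged to the copy for $\bar{L}^{(k+1)}$. We would apply the soundness argument of Lemma~\ref{lemma: standard-bridging-techniques} verbatim. It only uses two facts: each side is a connected graph whose kernel is spanned by the all-ones vector, and each side is $(d,\rho_d)$-sound. Induction preserves both. The kernel of the combined $\partial_1$ is again spanned by the all-ones vector, so $\Phi_1(\ker\partial_1)$ is spanned by $\sum_k L^{(k)}$, which represents $\bar{L}$.

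\textbf{Step 3: commutation.} Same-type factors need nothing further. For mixed factors with distinct indices, Proposition~\ref{prop: main-property-canonical-basis} gives disjoint supports, so $\Phi^{(X)T}_1\Phi^{(Z)}_1 = 0$. If some $\bar{Z}_\mu$ and $\bar{X}_\mu$ share an index (a $Y$ factor), we use Lemma~\ref{lemma: graph-isomorphism-Y-measurement-more-degree} for that pair before bridging. The overlap is a single qubit, so $s = 1$ and the added degree is still $\delta+1$.

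\textbf{Step 4: counting.} There are $b$ seed copies, each of size at most $|\mathcal{V}|+|\mathcal{E}|+|\mathcal{C}|$, and $b-1$ bridges, each of size at most $|\mathcal{E}_b|+|\mathcal{C}_b|$, which gives the stated size bound. For the degree, choose the bridge endpoints in a copy that is interior to the path to be disjoint from those in its other bridge. Lemma~\ref{lemma: composition-elementary-connectivity} then keeps $\Phi_1$ $1$-elementary, and each vertex gains at most one bridge edge, so the added degree is $\delta+1$ rather than growing with $b$.

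\textbf{Main obstacle.} The hardest step is the soundness induction in Step 2. Lemma~\ref{lemma: standard-bridging-techniques} handles exactly two components. With several components, a vector can be aligned with the kernel on some components and anti-aligned on others. The plan is to show that every mismatch between neighbouring components in the path forces about $|\mathcal{E}_b| - (\text{local weight})$ violated bridge edges, and that these penalties add up across the path. Since $|\mathcal{E}_b| \ge d$ and $\rho_d \ge 1$, this should close the argument exactly as in the two-component anti-aligned case.
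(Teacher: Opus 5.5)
Your route is the paper's. The paper justifies this proposition only by ``generalizing'' its pairwise bullets: rewire seeds through the symmetry chain maps of Lemma~\ref{lemma: classical-chain-maps}, bridge pairs with Lemma~\ref{lemma: standard-bridging-techniques}, use disjointness of non-conjugate canonical supports for mixed factors, and use Lemma~\ref{lemma: graph-isomorphism-Y-measurement-more-degree} for $Y$ factors. The soundness half of your argument is fine, and your ``main obstacle'' is not actually one. The conclusion of Lemma~\ref{lemma: standard-bridging-techniques} reproduces its own hypotheses: each side connected with kernel spanned by the all-ones vector, $(d,\rho_d)$-sound with $\rho_d\ge 1$, and each bridge a matching of at least $d$ edges between distinct vertices. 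So iterating it, with the already-built gadget as one side, never requires analysing several components at once, and no path-wide penalty summation is needed.

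The genuine gap is the degree count in Step 4, which tracks only the bridge edges. The $b$ rewired copies attach to the data code independently, because $\Phi_1$ and $\Phi_0$ of the bridged gadget are horizontal stacks of the individual maps. A data qubit lying in the supports of $s$ factors therefore receives $s$ vertex checks. Likewise, a data check meeting several factor supports is deformed by edges from every copy that touches it.

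Lemma~\ref{lemma: composition-elementary-connectivity} does not rescue this. It only bounds the column weight of $\Phi_1$, not the row weight $\le 1$ that Definition~\ref{def: elementary-connectivity-surgery} requires. Same-type canonical factors in one column do overlap. In $\LP^{2\times 4}_5$, the third entries of $u^{(0)}$, $x^3u^{(0)}$ and $xu^{(1)}$ are $1+x^3$, $x+x^3$ and $x^3$. Hence $\bar Z_{0,0,0}$, $\bar Z_{0,0,3}$ and $\bar Z_{1,0,1}$ all contain $q_L(2,0,3)$, and that qubit's $Z$-degree rises by $3$ however the bridges are placed.

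There is a second problem with the endpoint choice. Making the two bridge endpoint sets of an interior copy disjoint needs $|\mathcal{V}^{(s)}|\ge 2|\mathcal{E}_b|\ge 2d$. This is not guaranteed: the $\bar Z_{(0,0,0)}$ seed of $\LP^{3\times5}_{33}$ has $36$ vertices, which is fewer than $2d$ if $d=20$.

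What you can actually conclude is an added degree of $\delta$ plus a term controlled by the maximal number of factors sharing a data qubit or a data check (cf.\ the parameter $s$ in Lemma~\ref{lemma: distance-preserving-graph-parallel}). For canonical LP codes this term is bounded by the weights of the entries of $G^0_A$, hence constant, but it is not $1$. The alternative is to restrict the statement to factors with pairwise disjoint supports and disjoint anticommuting-check sets. The paper's one-line generalization has the same gap, so neither argument establishes the ``$\delta+1$'' in the statement.
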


As a simple corollary, using up to $4$ seed surgery gadgets, we can measure any two-body logical operators, which generate the full Clifford group.

Let us consider the linear extractor Definition~\ref{def: linear-extractor}, which paves the way to discuss the canonical extractor for LP codes. 

\begin{definition}[Linear extractor]\label{def: linear-extractor}
    Let $\mathcal{Q}$ be a data (CSS) code and a (hyper)graph chain: $\mathcal{G}: \mathbb{F}_2[\mathcal{V}] \xrightarrow{\partial_1} \mathbb{F}_2[\mathcal{E}] \xrightarrow{\partial_0} \mathbb{F}_2[\mathcal{C}]$. Let $\{\Phi^{(\mu)}_1, \Phi^{(\mu)}_0\}_{\mu}$ be a set of elementary connectivity maps. We say that $\mathcal{X}[\mathcal{G}; \{\Phi^{(\mu)}_1, \Phi^{(\mu)}_0\}_{\mu}]$ is a linear extractor if the followings hold. 
    \begin{enumerate}
        \item For each index $\mu$, the connectivity map satisfies the commutation relation Eq.~\eqref{eq: main-commutation_requirement}, and the surgery gadget $\mathcal{S}[\mathcal{G}; (\Phi^{(\mu)}_1, \Phi^{(\mu)}_0)]$ defines a (hyper)surgery gadget according to Definition~\ref{def: surgery-cone-code} (resp. Definition~\ref{def: hypergraph-surgery}). 
        \item  If any $\mathbb{F}_2$-linear combination from $\{\Phi^{(\mu)}_1, \Phi^{(\mu)}_0\}_{\mu}$ defines a (hyper)surgery gadget according to Definition~\ref{def: surgery-cone-code} (resp. Definition~\ref{def: hypergraph-surgery}). 
    \end{enumerate}
\end{definition}

In the next section, we show how the extractor is used; the notion encompasses seed gadgets and LP canonical extractors. We discuss two types of linear extractors: $(i)$ $Z$-type (resp. $X$-type) linear extractor, which measures any $Z$-type (resp. $X$-type) logical operators, and $(ii)$ Mixed-type linear extractor, which measures some selective patterns of mixed-type logical operators.

\begin{lemma}[$Z$-type linear extractor]
    Let $\mathcal{X}[\mathcal{G}; \{\Phi^{(\mu)}_1, \Phi^{(\mu)}_0\}_{\mu}]$ be a linear extractor according to Definition~\ref{def: linear-extractor}, with elementary connectivity maps, which are indexed with a set of $Z$-type logical basis operators. Then any $Z$-type logical operator $\bar{L}_Z$ can be measured using a linear combination of the connectivity maps $\{\Phi^{(\mu)}_1, \Phi^{(\mu)}_0\}_{\mu}$, which defines a valid surgery gadget according to Definition~\ref{def: surgery-cone-code} (resp. Definition~\ref{def: hypergraph-surgery}). 
    \begin{proof}
        Let a generic $Z$-type logical operator $\bar{L}_Z$ be given by
        \begin{align}
            \bar{L}_Z = \prod_{\mu \in J} \bar{Z}_{\mu}
        \end{align}
        as a product of $Z$-type logical basis operators $\bar{Z}_{\mu}$, where $J$ is a subset of indices of the logical basis. Working with the binary supports, with $\supp \bar{Z}_\mu = z_{\mu}$ and $\supp \bar{L}_Z = l_Z$, then we have that $\Phi^{(\mu)}_1(\ker \partial_1) = z_{\mu}$ so that 
        \begin{align}
            l_Z = \sum_{\mu \in J} z_{\mu} = \sum_{\mu \in J} \Phi^{(\mu)}_1(\ker \partial_1) = ( \sum_{\mu \in J} \Phi^{(\mu)}_1)(\ker \partial_1).
        \end{align}
        Note that this defines a proper surgery gadget according to Definition~\ref{def: surgery-cone-code} (resp. Definition~\ref{def: hypergraph-surgery}) since the linear combination of connectivity maps satisfies the commutation relations Eq.~\eqref{eq: main-commutation_requirement}.
    \end{proof}
\end{lemma}

\begin{remark}
    Note that, stated in generality, linear combination of connectivity maps $\{\Phi^{(\mu)}_1, \Phi^{(\mu)}_0\}_{\mu}$ may increase the weights of connectivity maps, especially for the column weights of $\Phi_1$ and $\Phi_0$. 
\end{remark}

A simple construction of a linear extractor which ensures low-degree and distance-preserving property can be given as follows, is to utilize elementary connectivity maps and Lemma~\ref{lemma: graphify-parity-check} on to the $X$-parity check $H_X$, which we will visit in Section~\ref{sec: LP-canonical-extractor}. Next, we discuss the mixed-type linear extractor, from the $ZX$-duality Lemma~\ref{lemma: classical-chain-maps}.

\begin{lemma}[Mixed-type linear extractor]
    Let $\mathcal{X}[\mathcal{G}; \{\Phi^{(Z_\mu)}_1, \Phi^{(Z_\mu)}_0\}_{\mu}]$ be a linear extractor according to Definition~\ref{def: linear-extractor}, with elementary connectivity maps, which are indexed with a set of $Z$-type logical basis operators. Suppose further that $\mathcal{Q}$ satisfies the canonical $ZX$-duality in Definition~\ref{def: canonical-ZX-duality}, and let, for every $\mu$, 
    \begin{align}
        \Phi^{(\mu)}_1 := (\Psi_1 \Phi^{(Z_\mu)}_1, \Phi^{(Z_\mu)}_1), \quad \Phi^{(\mu)}_0 := (\Psi_0 \Phi^{(Z_\mu)}_0, \Phi^{(Z_\mu)}_0). 
    \end{align}
    Then $\mathcal{X}[\mathcal{G}; \{\Phi^{(\mu)}_1, \Phi^{(\mu)}_0\}_{\mu}]$ defines a mixed-type linear extractor. 
    \begin{proof}
        Following from the preceding proof, it remains to show that the linear combination of connectivity maps $\{\Phi^{(\mu)}_1, \Phi^{(\mu)}_0\}_{\mu}$ satisfies the commutation relations Eq.~\eqref{eq: main-commutation_requirement}. In particular, for any $\mu$ and $\nu$, \begin{align}
            \begin{aligned}
                &(\Phi^{(X_\mu)}_1 + \Phi^{(X_\nu)}_1)^T (\Phi^{(Z_\mu)}_1 + \Phi^{(Z_\nu)}_1) \\
                &= \Phi^{(X_\mu)T}_1 \Phi^{(Z_\mu)}_1 + \Phi^{(X_\mu)T}_1 \Phi^{(Z_\nu)}_1 + \Phi^{(X_\nu)T}_1 \Phi^{(Z_\mu)}_1 + \Phi^{(X_\nu)T}_1 \Phi^{(Z_\nu)}_1. 
            \end{aligned}
        \end{align}
        For the cross-term, 
        \begin{align}
            \begin{aligned}
                &\Phi^{(X_\mu)T}_1 \Phi^{(Z_\nu)}_1 + \Phi^{(X_\nu)T}_1 \Phi^{(Z_\mu)}_1 \\
                &= \Phi^{(Z_\mu)T}_1 \Psi_1^T \Phi^{(Z_\nu)}_1 + \Phi^{(Z_\nu)T}_1 \Psi_1^T \Phi^{(Z_\mu)}_1 \\
            \end{aligned}
        \end{align}
        which is symmetric since $\Psi_1$ is a representation of an involution, hence, symmetric. 
    \end{proof}
\end{lemma}

\subsubsection*{Parallel measurement using disconnected set of graphs}
Next we discuss the parallel measurement using disconnected set of graphs.

\begin{lemma}[Restatement of Proposition~\ref{prop: main-parallel-low-rate}]\label{lemma: distance-preserving-graph-parallel}
    Let $\mathcal{Q}$ be a quantum code equipped with some set of graph surgery gadgets $\{ \mathcal{S}^{(i)}[\mathcal{G}^{(i)}; (\Phi^{(i)}_1, \Phi^{(i)}_0)] \}$ according to Definition~\ref{def: surgery-cone-code} and/or Definition~\ref{def: main-graph-surgery-gadget}, which satisfies the surgery gadgets desiderata Definition~\ref{def: formal-surgery-gadgets-desiderata}, with soundness $\rho_d$ and added degree $\delta$. Denote $\bar{L}^{(i)}$ to be the logical operator measured by $\mathcal{S}^{(i)}[\mathcal{G}^{(i)}; (\Phi^{(i)}_1, \Phi^{(i)}_0)]$, and $\mathcal{L} := \{ \bar{L}^{(i)}\}$ to be the set of logical operators measured by the surgery gadgets. Let $\mathcal{S}[\mathcal{G}; (\Phi_1, \Phi_0)]$ be defined as a surgery gadget with a disjoint union of the individual graphs, i.e.,
   \begin{equation}\label{eq: main-union-of-graphs}
  \begin{aligned}
      \mc{G} & = \sqcup_i \mc{G}^{(i)}; \\
      \Phi_1 & = \mathrm{hstack}\{\Phi_1^{(i)}\};\\
      \Phi_0 &= \mathrm{hstack}\{\Phi_0^{(i)}\}. 
  \end{aligned} 
  \end{equation}
  Let $s = \max(\omega_{\mathrm{row}}(\Phi_0), \omega_{\mathrm{row}}(\Phi_1)) $. Then $\mathcal{S}[\mathcal{G}; (\Phi_1, \Phi_0)]$ is a graph surgery gadget according to Definition~\ref{def: surgery-cone-code}, which measures the logical operators in $\mathcal{L}$, and is $(d, \rho_d)$-sound with added degree at most $\delta + s$, in the following three cases: 
  \begin{itemize}
    \item (Same-type gadgets). Without loss of generality, assume that each surgery gadget measures a $Z$-type logical operator $\bar{L}^{(i)}$. 
    \item (Symmetric mixed-type gadgets). Suppose in addition the $ZX$-duality Definition~\ref{def: canonical-ZX-duality} for $\mathcal{Q}$. Let 
    \begin{align}
        \Phi^{(i)}_1 = (\Psi_1 \Phi^{(Z_i)}_1, \Phi^{(Z_i)}_1), \quad \Phi^{(i)}_0 = (\Psi_0 \Phi^{(Z_i)}_0, \Phi^{(Z_i)}_0).
    \end{align}
    \item (Non-overlapping gadgets). Suppose that $s=0$, that is, the supports of the logical operators $\bar{L}^{(i)}$ are disjoint.
  \end{itemize}
  \begin{proof}
    We only need to prove for the symmetric mixed case and the rest are self-explanatory. To see that, we only need to show that the added vertex checks from $\Phi_1$ commute and by commutation relations Eq.~\eqref{eq: main-commutation_requirement}: 
    \begin{align}\label{expr: hstacked-Phi-X-Z-commutation}
        \begin{aligned}
           & \begin{pmatrix}
             \Phi^{(X_\mu) T}_1 \\
              \Phi^{(X_\nu)T}_1
        \end{pmatrix}  \begin{pmatrix}
                \Phi^{(Z_\mu)}_1 & \Phi^{(Z_\nu)}_1
        \end{pmatrix} \\
        &= \begin{pmatrix}
            \Phi^{(X_\mu) T}_1 \Phi^{(Z_\mu)}_1 & \Phi^{(X_\mu) T}_1 \Phi^{(Z_\nu)}_1 \\
            \Phi^{(X_\nu) T}_1 \Phi^{(Z_\mu)}_1 & \Phi^{(X_\nu) T}_1 \Phi^{(Z_\nu)}_1
        \end{pmatrix}
        \end{aligned}
    \end{align}
    Hence, by the $ZX$-duality Definition~\ref{def: canonical-ZX-duality}
    \begin{align}
        \begin{aligned}
            \Phi^{(X_\mu) T}_1 \Phi^{(Z_\nu)}_0 = \Phi^{(Z_\mu)T}_1 \Psi_1 \Phi^{(Z_\nu)}_0 =\Phi^{(Z_\mu) T}_1 \Psi_1 \Phi^{(Z_\nu)}_0. 
        \end{aligned}
    \end{align}
    since $\Psi_1$ is symmetric, expression~\ref{expr: hstacked-Phi-X-Z-commutation} is symmetric as well. We can generalize the above for horizontally stacked multiple surgery gadgets. 
  \end{proof} 
\end{lemma}

\subsection{Thickening of hypergraph surgery}\label{sec: surgery-thickening}

As observed from the proof, the argument (of distance-preserving) is symmetric in $X$- and $Z$-direction. Hence, without loss of generality, we only prove the soundness for a single-type Pauli operators, e.g., $Z$-type Pauli operators. In this regard, we will continue the abuse of notation by $\Phi_1 = \Phi^{(Z)}_1$, whenever the context is clear. A key question is this regard to boost the soundness, which is key in especially high-rate surgery where we use generally hypergraphs such that the (relative) Cheeger constant--the standard technique for ensuring distance-preserving in the context of graph surgery~\cite{he2025extractorsqldpcarchitecturesefficient, Ide_2025, cowtan2025fastfaulttolerantlogicalmeasurements}--is $0$. A standard technique is through \emph{thickening} or tensor-product ancilla chains. 

\begin{definition}[Thickening]\label{def: length-m-thickening}
     Let $\mathcal{S}[\mathcal{G}, (\Phi_1, \Phi_0)]$ be a (hypergraph) surgery gadget associated with some quantum code $Q$. Let $\mathcal{D}_{\bullet}: \mathcal{D}_1 \xrightarrow{D} \mathcal{D}_0$ be the parity check to the length-$m$ repetition code with row vector
    \begin{align}\label{eq: main-parity-check-repetition}
        D_{i} = e^{(i)}_{\mathcal{D}_1} + e^{(i+1)}_{\mathcal{D}_1}
    \end{align}
    for such two unit vectors and $i \in [m-1]$. Then the length-$m$ thickened surgery gadget of $\mathcal{S}[\tilde{\mathcal{G}_m }; (\tilde{\Phi}_1, \tilde{\Phi}_0)]$ is the product graph $\tilde{\mathcal{G}_m}(\tilde{\mathcal{V}}, \tilde{\mathcal{E}}, \tilde{\mathcal{C}})$ ($\tilde{\mathcal{G}}_1 = \mathcal{G}$) with 
    \begin{align}
   \begin{aligned}
   \tilde{\mathcal{V}} &= \mathcal{V} \otimes \mathcal{D}_1 \\
    \tilde{\mathcal{E}} &= \mathcal{E} \otimes \mathcal{D}_1 \oplus \mathcal{V} \otimes \mathcal{D}_0 \\
     \tilde{\mathcal{C}} &= \mathcal{C} \otimes \mathcal{D}_1 \oplus \mathcal{E} \otimes \mathcal{D}_0.
   \end{aligned}
    \end{align}
    Correspondingly, the boundary maps for $\tilde{\mathcal{G}}_m$ are given by 
    \begin{align}\label{eq: main-thickened-ancilla-boundary-maps}
        \tilde{\partial}_1 = \left(\begin{array}{c}
             \partial_1 \otimes I_{\mathcal{D}_1}  \\
              I_{\mathcal{V}} \otimes D
        \end{array}\right); \quad \tilde{\partial}^T_0 = \left(\begin{array}{cc}
             I_{\mathcal{E}} \otimes D^T & \partial^T_0 \otimes I_{\mathcal{D}_1} \\
              \partial^T_1 \otimes I_{\mathcal{D}_0} & 0
        \end{array}\right).
    \end{align}
    The thickened connectivity maps $\tilde{\Phi} = (\tilde{\Phi}_1, \tilde{\Phi}_0)$ are given by 
    \begin{align}
        \begin{aligned}
            \tilde{\Phi}_1 &= \Phi_1 \otimes e^{(i)}_{\mathcal{D}_1} \\
            \tilde{\Phi}_1 &= \Phi_0 \otimes e^{(i)}_{\mathcal{D}_1}
        \end{aligned}
    \end{align}
    for arbitrary choose $i=1, \cdots, m$ and, with a slight notational abuse, $e^{(i)}_{\mathcal{D}_1}$ as a unit (row) vector supported on $i$-th index. 
\end{definition}

\begin{lemma}[Soudness boost from thickening]\label{lemma: soundness-boost-thickening}
    Let $\partial_1$ be a map with $(t, \rho_t)$-soundness. Then $\tilde{\partial_1}$ has $(mt, \min(m\rho_t, 1))$-soundness relative to elementary connectivity map $\tilde{\Phi}_1$. 
    \begin{proof}
        Let $y = \sum^m_{i=1} y_{\mathcal{V}}^{(i)} \otimes e^{(i)}_{\mathcal{D}_1}$ with $\tilde{\partial}_1 y \neq 0$, which means that there exists at least one $i$ such that $y_{\mathcal{V}}^{(i)} \in \ker \partial_1$. 
        Recall from Eq.~\eqref{eq: main-thickened-ancilla-boundary-maps} we have
        \begin{align}\label{eq: tilde-partial-1-y-decomposition}
            \tilde{\partial}_1 y = \begin{pmatrix}
                \partial_1 y_{\mathcal{V}}^{(1)} \otimes e^{(1)}_{\mathcal{D}_1} \\
                \vdots \\
                \partial_1 y_{\mathcal{V}}^{(m)} \otimes e^{(m)}_{\mathcal{D}_1} \\
                y_{\mathcal{V}}^{(1)} \otimes D e^{(1)}_{\mathcal{D}_1} \\
                \vdots \\
                y_{\mathcal{V}}^{(m)} \otimes D e^{(m)}_{\mathcal{D}_1}
            \end{pmatrix}.
        \end{align}
        We can decompose $y = y' + y''$ for $y' \in \mathbb{F}_2^{|\mathcal{V}|} \otimes (\mathbb{F}_2^m \setminus \ker D)$ and $y'' \in \mathbb{F}_2^{|\mathcal{V}|} \otimes \ker D$: 
        \begin{align}
            y' := \sum^m_{i=1} x_{\mathcal{V}}^{(i)} \otimes e^{(i)}_{\mathcal{D}_1}
        \end{align}
        where vectors $x_{\mathcal{V}}^{(i)} \in \mathbb{F}_2^{|\mathcal{V}|}$ satisfy 
        \begin{align}\label{eq: disjoint-supports-from-soundness-boost}
            \bigcap^m_{i=1} \mathrm{supp}(x_{\mathcal{V}}^{(i)}) = \emptyset,
        \end{align}
        and $e^{(i)}_{\mathcal{D}_1} \in \mathbb{F}_2^m $ the unit vectors supported on $i$-th entry. Their shared support is given by 
        \begin{align}
            y'' := g_{\mathcal{V}} \otimes 1_{\mathcal{D}_1}
        \end{align}
        where $1_{\mathcal{D}_1} \in \ker D$ the all-one vector. For the latter-half of the matrix in Eq.~\eqref{eq: tilde-partial-1-y-decomposition} and $y''$, this implies that 
        \begin{align}
            \tilde{\partial}_1 y'' = \partial_1 \otimes I_{\mathcal{D}_1} (g_{\mathcal{V}} \otimes 1_{\mathcal{D}_1}) = \partial_1 g_{\mathcal{V}} \otimes 1_{\mathcal{D}_1}
        \end{align}
        and 
        \begin{align}
           |\tilde{\partial}_1 y| = m |\partial_1 g_{\mathcal{V}}| &\geq m\min(t, \rho_t d(g_{\mathcal{V}}, \ker \partial_1)). 
        \end{align}
        For other case with upper-half of the matrix Eq.~\eqref{eq: tilde-partial-1-y-decomposition} with $y'$: 
        \begin{align}
           \begin{aligned}
           (I_{\mathcal{V}} \otimes D) y' &= \sum^m_{i=1} x_{\mathcal{V}}^{(i)} \otimes D e^{(i)}_{\mathcal{D}_1} \\
           &= \sum^{m-1}_{i=1} (x^{(i)}_{\mathcal{V}} + x^{(i+1)}_{\mathcal{V}}) \otimes e^{(i)}_{\mathcal{D}_0},
           \end{aligned}
        \end{align}
        where we used the fact that $D$ given in Eq.~\eqref{eq: main-thickened-ancilla-boundary-maps}, the parity-check matrix to the length-$m$ repetition code. Since we assume that the there are no common supports of $x_{\mathcal{V}}^{(i)}$ for all $i$ in Eq.~\eqref{eq: disjoint-supports-from-soundness-boost}:        \begin{align}
        \sum^{m-1}_{i=1}|x_{\mathcal{V}}^{(i)} + x_{\mathcal{V}}^{(i+1)}| \geq |x^{(i)}_{\mathcal{V}}|
        \end{align}
        for any $i = 1, \cdots, m$. Suppose that $|x^{(i)}_{\mathcal{V}}| >  \sum^{m-1}_{j=1}|x^{(j)}_{\mathcal{V}} + x^{(j+1)}_{\mathcal{V}}|$, then there must exist an entry $a$ such that $|x^{(j)}_{\mathcal{V}}(a)| =1$ and $|x^{(j)}_{\mathcal{V}}(a) + x^{(j+1)}_{\mathcal{V}}(a)|=0$ for all $j=1, \cdots m -1$, which necessarily contradicts that they must noty have common supports. Hence, we have that 
        \begin{align}
            |\tilde{\partial}_1 y| \geq \max_i |x^{(i)}_{\mathcal{V}}| \geq \max_i d(x^{(i)}_{\mathcal{V}}, \ker \partial_1).
        \end{align}
        Combining two cases, we write $y = y' + y''$ for $y' \in \mathbb{F}_2^{|\mathcal{V}|} \otimes (\mathbb{F}_2^m \setminus \ker D)$ and $y'' \in \mathbb{F}_2^{|\mathcal{V}|} \otimes \ker D$: 
        \begin{align}
       \begin{aligned}
         \tilde{\partial}_1 y &\geq \max_i |x^{(i)}_{\mathcal{V}}| + m\min(t,    \rho_t d(g_{\mathcal{V}}, \ker \partial_1)).\\
         & \geq \min(mt,  \min(m\rho_t, 1)\max_i d(y^{(i)}_{\mathcal{V}}, \ker \partial_1)).
       \end{aligned}
        \end{align}
        In other words, $\tilde{y}$ has $(mt, \min(m\rho_t, 1))$-soundness, as desired. 
    \end{proof}
\end{lemma}

In the context of high-rate surgery such as the use of puncture in performing addressable measurement, it is not suffciient to prove the small-set soundness for $\partial_1$ but for any of its punctured versions. More precisely, there exists a partial ordering: 
\begin{lemma}[Soundness related to puncture]\label{lemma: soundness-puncture}
    Let $H: \mathbb{F}_2^n \rightarrow \mathbb{F}_2^m$ be a linear map and $I$ be its information set with $|I|=k$. Let $\gamma$ be a subset of column indices of $H$ which excludes only some information bits. Denote $H_\gamma$ to be its column restriction. Let $\rho(t)$ and $\rho_\gamma(t)$ be the $(t, \rho_t)$-soundness to $H$ and $H_\gamma$, respectively. Then we have that $\rho_\gamma(t) \leq \rho_t$. Furthermore, there is a partial ordering:
    \begin{align}
        \rho_t \geq \rho_{\gamma_1} \geq \cdots \geq \rho_k(t),
    \end{align}
    whenever $\gamma_1 \supset \cdots \supset \gamma_k$. 
\end{lemma}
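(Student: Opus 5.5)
The plan is to argue directly from the definition of small-set soundness, comparing the inequality that must hold for $H_\gamma$ with the one that holds for $H$. First I fix the convention: $\rho_t$ is the largest constant such that every $u\in\mathbb{F}_2^n$ with $|Hu|<t$ satisfies $|Hu|\geq \rho_t\,\mathrm{dist}(u,\ker H)$, and $\rho_\gamma(t)$ is defined in the same way for $H_\gamma:\mathbb{F}_2^{\gamma}\to\mathbb{F}_2^m$. The goal is to show that every pair $(u_\gamma, \text{witness})$ that constrains $\rho_\gamma(t)$ produces a pair that constrains $\rho_t$ at least as tightly. Iterating this then gives the chain for $\gamma_1\supset\cdots\supset\gamma_k$.

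\textbf{Embedding step.} For $u\in\mathbb{F}_2^{\gamma}$, let $\iota(u)\in\mathbb{F}_2^n$ be its extension by zeros on $\bar\gamma$. Then $H\iota(u)=H_\gamma u$, so the syndrome weight, and hence the small-set condition $|\cdot|<t$, is preserved. By the puncture lemma of Appendix~\ref{app-sec: classical-error-correcting-codes}, $\ker H_\gamma=\{c|_\gamma : c\in\ker H,\ \supp c\subseteq\gamma\}$. This holds because $\bar\gamma$ consists only of information bits, so $\iota(\ker H_\gamma)\subseteq\ker H$. The intended comparison is
\[
\mathrm{dist}(\iota(u),\ker H)\;\leq\;\mathrm{dist}(u,\ker H_\gamma),
\]
since any $v\in\ker H_\gamma$ gives $\iota(v)\in\ker H$ with $|\iota(u)+\iota(v)|=|u+v|$.

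\textbf{Where the step fails, and how I would fix it.} This inequality points the wrong way. It shows that the denominator for $H$ can only be smaller, so the embedded vectors are at most as hard for $H$ as they are for $H_\gamma$. That yields $\rho_\gamma(t)\le\rho_t$ only if the worst cases for $H_\gamma$ remain worst cases for $H$, i.e. if the two distances are equal. This is the main obstacle.

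I would first try to prove $\mathrm{dist}(\iota(u),\ker H)=\mathrm{dist}(u,\ker H_\gamma)$ when $\bar\gamma\subseteq I$. Take a nearest codeword $c\in\ker H$ to $\iota(u)$. If $c|_{\bar\gamma}\neq 0$, then $c$ differs from $\iota(u)$ on at least $|\supp c|_{\bar\gamma}|$ positions inside $\bar\gamma$. I would try to exchange $c$ for a codeword vanishing on $\bar\gamma$ at no greater distance, using the systematic generators attached to the information bits in $\bar\gamma$. If this exchange cannot be made rigorous, I would fall back to redefining $\rho_\gamma$ relative to the ambient distance $\mathrm{dist}(\iota(\cdot),\ker H)$. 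That is the quantity that actually enters Lemma~\ref{lemma: distance-preserving-soundness} when vertices are punctured, and with this definition the restriction-of-quantifiers argument gives $\rho_\gamma(t)\le\rho_t$ immediately.

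\textbf{Partial order.} Applying the single-step comparison to $H_{\gamma_j}$ with subset $\gamma_{j+1}$ gives the chain $\rho_t\geq\rho_{\gamma_1}(t)\ge\cdots$. For this I need that $\gamma_j\setminus\gamma_{j+1}$ still lies in an information set of $\ker H_{\gamma_j}$. That holds because $I\cap\gamma_j$ is an information set of the punctured code: its dimension is $k-|\bar\gamma_j|$, and restriction to $I\cap\gamma_j$ remains injective on it.
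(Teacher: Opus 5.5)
Your argument is aimed at the wrong inequality. Under your convention, $\rho_t$ is the infimum of $|Hw|/\mathrm{dist}(w,\ker H)$ over $w\notin\ker H$ with $|Hw|<t$, and $\rho_\gamma(t)$ is defined the same way for $H_\gamma$. To prove $\rho_\gamma(t)\le\rho_t$, you must show that every witness $w\in\mathbb{F}_2^n$ for $H$ is matched by some witness for $H_\gamma$ with no larger ratio. Your opening paragraph instead turns $H_\gamma$-witnesses into $H$-witnesses via $\iota$, and that can only ever prove $\rho_t\le\rho_\gamma(t)$.

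For the same reason, the distance equality $\mathrm{dist}(\iota(u),\ker H)=\mathrm{dist}(u,\ker H_\gamma)$ would, if true, give the opposite inequality. It is also false. Take the edge-vertex incidence matrix of a path on $n$ vertices, punctured at the end vertex $1$; this is an information set of $\ker H=\{0,\mathbf 1\}$. Then $u=\mathbf 1_\gamma$ has $\mathrm{dist}(u,\ker H_\gamma)=n-1$ but $\mathrm{dist}(\iota(u),\ker H)=1$.

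The fallback does not work either. Restricting the quantifier to $\iota(\mathbb{F}_2^\gamma)$ can only \emph{raise} the optimal constant, so it yields $\ge\rho_t$ (in fact equality, by the shift below), not $\le$. It also changes the statement: the punctured gadget in Lemma~\ref{lemma: distance-preserving-soundness} is measured against its own kernel $\ker H_\gamma$, not the ambient one.

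The missing idea is the real role of $\bar\gamma\subseteq I$. It is not what gives $\iota(\ker H_\gamma)\subseteq\ker H$; that holds for any column restriction. What it gives is that every coset of $\ker H$ meets $\iota(\mathbb{F}_2^\gamma)$. Restriction to $I$ is a bijection $\ker H\to\mathbb{F}_2^I$, so for any witness $w$ you can choose $c\in\ker H$ with $c|_{\bar\gamma}=w|_{\bar\gamma}$, and set $u:=(w+c)|_\gamma$.

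Then $H_\gamma u=Hw\neq 0$, so the small-set condition is preserved. Moreover $\mathrm{dist}(u,\ker H_\gamma)=\min_{v\in\ker H_\gamma}|w+c+\iota(v)|\ge\mathrm{dist}(w,\ker H)$, because $c+\iota(v)\in\ker H$. Hence $|H_\gamma u|/\mathrm{dist}(u,\ker H_\gamma)\le|Hw|/\mathrm{dist}(w,\ker H)$, which gives $\rho_\gamma(t)\le\rho_t$.

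Your embedding inequality was pointing the right way after all. You needed to apply it to this pulled-back $u$, not try to upgrade it to an equality. Your treatment of the chain is fine: $I\cap\gamma_j$ is an information set of $\ker H_{\gamma_j}$, so the one-step bound iterates.
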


We omit the proof to the above as it is straightforward to check. Note that the soundness is only a sufficient condition and is not expected to be necessary. In particular, we can consider another criterion, called the expansion. 

\begin{definition}[Minimum syndrome weights of linear maps]\label{def: minimum-syndrome-weights}
    Let $H: \mathbb{F}_2^n \rightarrow \mathbb{F}_2^m$ be a linear map. We denote $\lambda(H)$ to be the minimum weight of a non-zero syndrome, i.e.,
    \begin{align}
        \lambda(H) := \min_{x \notin \ker H} |Hx|.
    \end{align}
\end{definition}

Note that if $H$ is full-row rank, then $\lambda(H) = 1$, so to have any nontrivial minimum syndrome weight, $H$ must not be full-row rank. 

\begin{lemma}\label{lemma: minimum-weight-syndrome-expansion}
Under any puncture with $\gamma$ defined in Lemma~\ref{lemma: soundness-puncture}, we have that $\lambda(H) = \lambda(H_{\gamma})$. Furthermore, we have 
\begin{align}
    |\tilde{\partial}_1 y| \geq \min(|\tilde{\Phi}_1(y)|, m \lambda(\partial_1)). 
\end{align}
Equivalently, this implies that $\tilde{\partial}_1$ is at least $(m\lambda(\partial_1), 1)$-sound relative to $\tilde{\Phi}_1$.
\begin{proof}
    Since $\gamma$ only excludes bits that are in some information set of $H$. Let $\{g^{(i)}\}^k_{i=1}$ be the systematic basis to $\ker H$. Then suppose that there exists some $x_\gamma$ such that $0 < |H_\gamma x_{\gamma}| < \lambda(H)$, we can extend it to be 
    \begin{align}
        x := (x_{\bar{\gamma}}, x_{\gamma})^T = (0, x_{\gamma})^T,
    \end{align}
    where $x_{\bar{\gamma}} =0$ the trivial extension. This contradicts the minimality of $\lambda(H)$, hence $\lambda(H_\gamma) = \lambda(H)$. For the second part, write that 
    \begin{align}
        y = \sum^m_{i=1} y^{(i)}_{\mathcal{V}} \otimes e^{(i)}_{\mathcal{D}_1},
    \end{align}
    similarly for the proof of Lemma~\ref{lemma: soundness-boost-thickening}, and $y = y' + y''$ for $y' \in \mathbb{F}_2^{|\mathcal{V}|} \otimes (\mathbb{F}_2^m \setminus \ker D)$ and $y'' \in \mathbb{F}_2^{|\mathcal{V}|} \otimes \ker D$. Then we conclude that 
    \begin{align}
        |\tilde{\partial}_1 y'| \geq \max_{i \in [m]} |x^{(i)}_{\mathcal{V}}| \geq |\tilde{\Phi}_1(y')|
    \end{align}
    for elementary connectivity maps and 
    \begin{align}
        |\tilde{\partial}_1 y''| \geq m \lambda(\partial_1).
    \end{align}
    This implies, drawing from a similar technique from Lemma~\ref{lemma: soundness-boost-thickening} and utilizing the thickened $\partial_1$ in Eq.~\eqref{eq: tilde-partial-1-y-decomposition}, we arrive at the conclusion. 
    
\end{proof}
\end{lemma}
This allows us to conclude the main statement in the soundness boost. 

\begin{theorem}[Restatement of Theorem~\ref{thm: main-high-rate-surgery-distance-preserving}]\label{thm: formal-surgery-distance-preserving}
    Let $\mathcal{S}[\mathcal{G}; (\Phi_1, \Phi_0)]$ be a (hypergraph) surgery gadget to a data code $\mathcal{Q}$ with distance $d$ and $\Phi_1$ has column-weight at most $1$, with (hyper)graph chain $\mathcal{G}: \mathbb{F}_2[\mathcal{V}] \xrightarrow{\partial_1} \mathbb{F}_2[\mathcal{E}] \xrightarrow{\partial_0} \mathbb{F}_2[\mathcal{C}]$. Let the length-$m$ thickened hypergraph be $\tilde{\mathcal{G}}_m$, and length-$m$ thickened hypergraph surgery gadget $\mc{S}[\tilde{\mc{G}}_m; (\tilde{\Phi}_1, \tilde{\Phi}_0)]$, measuring $ \Phi_1(\ker \partial_1)$ in parallel. Then the following statements hold. 
    \begin{enumerate}[label=(\roman*)]
        \item The $m$-thickened surgery gadget $\mathcal{S}[\tilde{\mathcal{G}}_m; (\tilde{\Phi}_1, \tilde{\Phi}_0)]$ measures $ \Phi_1(\ker \partial_1)$ in parallel and is distance-preserving at least $d$ whenever $m \geq d$. 
        \item Suppose that there is no logical operator of type~\ref{main-surgery-logical-ancilla} and that $\mathcal{S}[\mathcal{G}; (\Phi_1, \Phi_0)]$ is $(t, \rho_t)$-sound for some $t$, then the $m$-thickened surgery gadget $\mathcal{S}[\tilde{\mathcal{G}}_m; (\tilde{\Phi}_1, \tilde{\Phi}_0)]$  and is distance-preserving whenever $mt \geq d$ and $m\rho_t \geq 1$.  
        
        \item Suppose there are no logical operators of type~\ref{surgery-logical-ancilla}.  
        Then $\mc{S}[\tilde{\mc{G}}_m; (\tilde{\Phi}_1, \tilde{\Phi}_0)]$ is distance-preserving if 
        \begin{align}
             m \lambda(\partial_1) + 1 \geq d
        \end{align}
        where $ \lambda(\partial_1) := \min_{u \notin \ker \partial_1} |\partial_1 u|$.
       
    \end{enumerate}
    \begin{proof}
    The second assertion follows from $\Phi_1$ being a simple connectivity map; hence, from Lemma~\ref{lemma: soundness-boost-thickening}, $\tilde{\partial}_1$ is $(mt, m \rho_t)$-sound relative to $\tilde{\Phi}_1$. Hence, the result follows from Lemma~\ref{lemma: distance-preserving-soundness}. For the third assertion, we only need to show that the added checks do not decrease the merged code distance. Let $l$ be the support vector of, say, a $Z$-type logical operator. Then the added (vertex) checks deform by: 
    \begin{align}
        |l + \tilde{\Phi}_1(y)| + |\tilde{\partial}_1 y|. 
    \end{align}
    By Lemma~\ref{lemma: minimum-weight-syndrome-expansion}, $|\tilde{\partial}_1 y| = \min(|\tilde{\Phi}_1(y)|, m |\lambda(\partial_1)|)$, which proves the result. If $\min(|\tilde{\Phi}_1(y)|, m |\lambda(\partial_1)|) = |\tilde{\Phi}_1(y)|$, then we have that 
    \begin{align}
        |l + \tilde{\Phi}_1(y)| + |\tilde{\partial}_1 y| \geq |l| - |\Phi_1(y)| + |\tilde{\Phi}_1(y)| \geq d.
    \end{align}
    For the other case, we have
    \begin{align}
         |l + \tilde{\Phi}_1(y)| + |\tilde{\partial}_1 y| \geq 1 + m \lambda(\partial_1)
    \end{align}
    as desired. We now, for completeness, give the first assertion, which holds even if there is a logical operator of type~\ref{surgery-logical-ancilla}. By the simple connectivity of $\Phi_1$, and the fact that we can always decompose $y = y' + y''$ for $y' \in \mathbb{F}_2^{|\mathcal{V}|} \otimes (\mathbb{F}_2^m \setminus \ker D)$ and $y'' \in \mathbb{F}_2^{|\mathcal{V}|} \otimes \ker D$, the similar technique holds to conclude that for any loical operator of type~\ref{surgery-logical-data}, the distance is preserved under action of added (vertex) checks. It remains to prove that any logical operator of type~\ref{surgery-logical-ancilla} also preserves the distance. This holds because, from Lemma~\ref{lemma: surgery-cone-code-logical-operators} and Lemma~\ref{lemma: distance-preserving-soundness}, any $Z$-type logical operator of type~\ref{surgery-logical-ancilla} must necessarily have weight at least $m \geq d$. 
    \end{proof}
\end{theorem}

The results presented in this section hold generally for (hyper)graph surgery for low-rate and high-rate logical Pauli-product measurements, discussed in the literature~\cite{he2025extractorsqldpcarchitecturesefficient, Ide_2025, cowtan2025fastfaulttolerantlogicalmeasurements, CowtanHeWilliamsonYoder2025ParallelCodeSurgery, zheng2025highratesurgeryconstantoverheadlogical, ZhangLi2025TimeEfficientLogicalOperations}, while a key innovation is the use of of a single ancilla to connect both for $Z$-type and $X$-type measurements. Specifically, with (slight) modification on the ancilla, we could measure similarly arbitrarily addressable, logically disjoint, Pauli-product measurements, where it preserves distance of the merged code, according to Theorem~\ref{thm: main-high-rate-surgery-distance-preserving} and Theorem~\ref{thm: formal-surgery-distance-preserving}.

\section{Explicit, Fault-tolerant surgery gadgets for canonical LP codes}\label{sec: canonical-LP-surgery-construction}

\subsection{Graph spectral criterion}\label{sec: spectral-voltage-graph-theory}

Let $\mathcal{G}$ be a graph with vertex set $\mathcal{V}$, edge set $\mathcal{E}$, and cycle set $\mathcal{C}$. Let $L_{\mathcal{G}}$ be the graph Laplacian of $\mathcal{G}$, and let $D_{\max}$ be the maximum degree of any vertices or the maximum diagonal entry to $D_{\mathcal{G}}$. The Cheeger constant $h$ is defined as
\begin{align}
    h := \min_{S \subset \mathcal{V}} \frac{|\partial S|}{\min(|S|, |\mathcal{V} \setminus S|)}
\end{align}
where $\partial S$ is the set of edges with one endpoint in $S$ and the other endpoint in $\mathcal{V} \setminus S$. 

\begin{proposition}[Spectral bounds for Cheeger constant]\label{prop: Cheeger-spectral-bounds}
    Denote the eigenvalues of $L_{\mathcal{G}}$, and for connected graph, the eigenvalues are: 
	$$
	\mu_1=0 < \mu_2 \leq \mu_3 \leq \cdots \leq \mu_{|\hat{\mathcal{V}}|}.
	$$
	Then, we have that the Cheeger bound: 
	$$
	\frac{\mu_2}{2} \leq h \leq \sqrt{2 D_{\max} \mu_2}
	$$
	for the maximum degree of any vertices or the maximum diagonal entry to $D_{\hat{\mathcal{G}}}$.
\end{proposition}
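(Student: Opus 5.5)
The plan is to prove the two inequalities separately by the classical Rayleigh-quotient argument: test vectors built from cuts give the lower bound, and a sweep (level-set) argument on a Fiedler vector gives the upper bound. Throughout, $L_{\mathcal G} = D_{\mathcal G} - A_{\mathcal G}$ with quadratic form $f^T L_{\mathcal G} f = \sum_{\{u,v\}\in\mathcal E}(f(u)-f(v))^2$, and $n := |\mathcal V|$. Since $\mathcal G$ is connected, $\mu_2 = \min_{f\perp \mathbf 1,\, f\neq 0} f^T L_{\mathcal G} f/\|f\|^2$ by Courant--Fischer.

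\emph{Lower bound $\mu_2/2 \le h$.} Fix any $S$ with $|S|\le n/2$; this costs no generality because the Cheeger ratio is symmetric in $S$ and $\mathcal V\setminus S$. Take the test vector $f = \mathbf 1_S - \tfrac{|S|}{n}\mathbf 1$, which is orthogonal to $\mathbf 1$.
\begin{itemize}
\item Every edge crossing the cut contributes $1$ to the quadratic form and every other edge contributes $0$, so $f^T L_{\mathcal G} f = |\partial S|$.
\item A direct computation gives $\|f\|^2 = |S|(n-|S|)/n \ge |S|/2$, using $n-|S|\ge n/2$.
\end{itemize}
Hence $\mu_2 \le 2|\partial S|/|S|$. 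Minimizing over $S$ yields $\mu_2 \le 2h$.

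\emph{Upper bound $h\le \sqrt{2D_{\max}\mu_2}$.} Let $g$ be an eigenvector with $L_{\mathcal G} g = \mu_2 g$.
\begin{enumerate}
\item Shift $g$ by its median $c$. This keeps the Laplacian relation $(L_{\mathcal G}(g-c\mathbf 1))(v) = \mu_2 g(v)$, because $L_{\mathcal G}\mathbf 1 = 0$.
\item Split $g - c\mathbf 1$ into positive and negative parts. By the choice of median, each part has support of size at most $n/2$. Choose the part $f$ (say the positive part) for which $c\, \sum_v f(v) \geq 0$ or, more simply, argue the inequality below for whichever part it holds.
\item Show that $f^T L_{\mathcal G} f \le \mu_2 \|f\|^2$. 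On $\mathrm{supp} f$ we have $(L_{\mathcal G} f)(v) \le (L_{\mathcal G}(g-c\mathbf 1))(v) = \mu_2 (f(v)+c)$, because truncation only raises the neighbor values that enter with a minus sign. Summing against $f$ gives $f^T L_{\mathcal G} f \le \mu_2\|f\|^2 + \mu_2 c\sum_v f(v)$.
\item Remove the stray term $\mu_2 c\sum_v f(v)$. Either pick the sign of the part so that this term is nonpositive, or take $c=0$ when $g$ itself can be chosen with a balanced sign split.
\end{enumerate}
This is the delicate step and the one I expect to be the main obstacle. If the median shift does not cooperate, I would fall back on the standard variant: use $g$ orthogonal to $\mathbf 1$ directly, and compare the positive and negative parts via $\|g_+\|^2$ and $\|g_-\|^2$, taking the part with the better Rayleigh quotient.

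With $f \geq 0$ supported on at most $n/2$ vertices and $f^T L_{\mathcal G} f \le \mu_2\|f\|^2$ in hand, the sweep argument runs as follows.
\begin{enumerate}
\item \emph{Cauchy--Schwarz.}
$$\sum_{\{u,v\}\in\mathcal E}|f(u)^2-f(v)^2| \le \Bigl(\sum_{\mathcal E}(f(u)-f(v))^2\Bigr)^{1/2}\Bigl(\sum_{\mathcal E}(f(u)+f(v))^2\Bigr)^{1/2}.$$
The second factor is at most $\sqrt{2D_{\max}}\,\|f\|$, since $(a+b)^2\le 2a^2+2b^2$ and each vertex lies in at most $D_{\max}$ edges. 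The first factor is at most $\sqrt{\mu_2}\,\|f\|$ by the previous step.
\item \emph{Coarea formula.} Write $S_t=\{v: f(v)^2>t\}$. Each edge contributes $|f(u)^2-f(v)^2|$ exactly as the length of the threshold interval on which it crosses the cut, so
$$\sum_{\mathcal E}|f(u)^2-f(v)^2| = \int_0^\infty |\partial S_t|\,dt.$$
\item \emph{Apply the Cheeger constant to each level set.} Every $S_t$ has $|S_t|\le n/2$, so $|\partial S_t|\ge h|S_t|$. Therefore
$$\int_0^\infty |\partial S_t|\,dt \ge h\int_0^\infty|S_t|\,dt = h\|f\|^2.$$
\end{enumerate}
Combining the three steps gives $h\|f\|^2 \le \sqrt{2D_{\max}\mu_2}\,\|f\|^2$, which is the claimed upper bound.
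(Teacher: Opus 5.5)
The paper gives no proof of this proposition: it is quoted as the standard Cheeger inequality for the unnormalized Laplacian, and even the spectral-soundness lemma that follows is deferred. So there is no paper route to compare against. Your argument is the standard textbook proof and is correct.
\begin{itemize}
\item The test vector $\mathbf 1_S-\tfrac{|S|}{n}\mathbf 1$ gives $\mu_2\le |\partial S|\,n/(|S|(n-|S|))\le 2|\partial S|/|S|$ for $|S|\le n/2$.
\item The Cauchy--Schwarz/coarea sweep is right as written. Every nonempty level set $S_t\subseteq\mathrm{supp}\,f$ is a proper subset with $|S_t|\le n/2$, and $\int_0^\infty|S_t|\,dt=\|f\|^2$.
\end{itemize}

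The step you flag as the main obstacle is not really one. As written, however, it is left hedged and partly mis-signed, so you should close it explicitly.
\begin{itemize}
\item For $f=(g-c\mathbf 1)_+$, your truncation computation gives the error term $+\mu_2 c\sum_v f(v)$. This is nonpositive exactly when $c\le 0$, so the condition ``$c\sum_v f(v)\ge 0$'' in your step 2 has the wrong sign for the positive part.
\item Now repeat the computation for $f=(c\mathbf 1-g)_+$. On its support, $(L_{\mathcal G}f)(v)\le (L_{\mathcal G}(c\mathbf 1-g))(v)=-\mu_2 g(v)=\mu_2(f(v)-c)$. Hence $f^TL_{\mathcal G}f\le\mu_2\|f\|^2-\mu_2c\sum_v f(v)$, and this part works exactly when $c\ge 0$.
\item One of the two cases always applies. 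The chosen part is nonzero: $g\perp\mathbf 1$ and $g\ne 0$ rule out $g\le c\le 0$ everywhere, and likewise $g\ge c\ge 0$ everywhere.
\end{itemize}

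Simpler still, you can drop the median shift altogether. With $c=0$, both $g_+$ and $g_-$ satisfy $f^TL_{\mathcal G}f\le\mu_2\|f\|^2$ with no error term. Since their supports are disjoint, one of them has support of size at most $n/2$, and that is the one to choose. Do not choose, as your fallback proposes, ``the part with the better Rayleigh quotient.'' That part may have support larger than $n/2$, and then the level-set bound $|\partial S_t|\ge h|S_t|$ in your sweep is unavailable; only $|\partial S_t|\ge h(n-|S_t|)$ holds.
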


\begin{lemma}[Spectral criterion of small-set soundness on graphs]\label{lemma: spectral-criterion-soundness}
    Let $\mu_2$ be the spectrum gap to the graph Laplacian $L_{\mathcal{G}}$. There exists a threshold value $m^* \leq \frac{1}{2}|\mathcal{V}|$ such that $\mu_2 \frac{m^*(|\mathcal{V}| - m^*)}{|\mathcal{V}|} < t$ with the followings: 
    \begin{itemize}
        \item The small-set soundness $\rho_t$ is lower bounded by 
        \begin{align}\label{expr: main-spectral-bound-small-set-soudness}
            \rho_t \geq \mu_2 \frac{|\mathcal{V}| -m^*}{|\mathcal{V}|}.
        \end{align}
        \item  Consequently, we can improve the above bound by enumerating the soundness conditioned on each vectors in $\mathbb{F}_2[\mathcal{V}]$ whose Hamming weight is $m$ for $m \leq m^*$.
    \end{itemize}
\end{lemma}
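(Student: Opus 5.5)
The plan is to reduce small-set soundness of the edge--vertex incidence matrix $\partial_1$ to a cut bound, and then get that cut bound from the variational (Rayleigh-quotient) characterization of $\mu_2$.

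\textbf{Reduction to cuts.} For a graph chain, $u\in\mathbb{F}_2[\mathcal{V}]$ is the indicator of a vertex set $S$. The syndrome $\partial_1 u$ is the indicator of the edge boundary $\partial S$, so $|\partial_1 u|=|\partial S|$. Since $\mathcal{G}$ is connected, $\ker\partial_1=\{0,\mathbf{1}_{\mathcal{V}}\}$, and therefore $\mathrm{dist}(u,\ker\partial_1)=\min(|S|,|\mathcal{V}|-|S|)=:m\le \tfrac12|\mathcal{V}|$. Replacing $u$ by $u+\mathbf{1}_{\mathcal{V}}$ does not change $\partial_1 u$, so we may assume $|S|=m$.

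\textbf{Spectral cut bound.} Write $n=|\mathcal{V}|$. Consider the real vector $x=\mathbf{1}_S-\tfrac{m}{n}\mathbf{1}_{\mathcal{V}}$, which is orthogonal to $\mathbf{1}_{\mathcal{V}}$, the $\mu_1=0$ eigenvector. We have
$$x^TL_{\mathcal{G}}x=\sum_{\{a,b\}\in\mathcal{E}}(x_a-x_b)^2=|\partial S|$$
(multi-edges are counted with multiplicity) and $\|x\|^2=m(n-m)/n$. The Courant--Fischer theorem then gives
$$|\partial S|\ \ge\ \mu_2\,\frac{m(n-m)}{n}.$$

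\textbf{Thresholding.} The function $f(m)=\mu_2 m(n-m)/n$ is increasing on $[0,n/2]$. Define $m^*$ as the largest integer $m\le n/2$ with $f(m)<t$. Now take any $u$ with $|\partial_1 u|<t$. The cut bound gives $f(m)\le|\partial S|<t$, and monotonicity then forces $m\le m^*$. For such $u$, dividing the cut bound by $m$ yields
$$|\partial_1 u|\ge \mu_2\,\frac{n-m}{n}\,m\ \ge\ \mu_2\,\frac{n-m^*}{n}\,\mathrm{dist}(u,\ker\partial_1).$$
The second inequality holds because $(n-m)/n$ is decreasing in $m$. This is exactly Eq.~\eqref{expr: main-spectral-bound-small-set-soudness}.

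For the second bullet, the constraint $m\le m^*$ means only vertex sets of size at most $m^*$ are relevant. One can compute $\min_{|S|=m}|\partial S|/m$ exactly by enumeration for each $m\le m^*$, or for small $m$ only, using the spectral bound for the remaining sizes. Taking the minimum over these cases gives a bound on $\rho_t$ that is at least as good as the spectral one.

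\textbf{Expected obstacle.} The main subtlety is bookkeeping rather than depth. One must ensure the threshold $m^*$ is well defined: if $f(1)\ge t$, no nonzero $u$ is admissible and the soundness condition holds vacuously. One must also handle the complement symmetry correctly, so that the distance to the kernel is $\min(|S|,n-|S|)$ rather than $|S|$. A further point to check is multi-edges and loops. A loop contributes nothing to $\partial_1 u$ over $\mathbb{F}_2$, but it also contributes nothing to $x^TLx$, so I would verify that the Laplacian is defined so that these two counts match.
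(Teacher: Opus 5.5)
Your proof is correct and is essentially the argument the paper intends. The paper defers its proof, calling it a straightforward generalization of the spectral Cheeger bound $\mu_2/2\le h$; your test vector $\mathbf{1}_S-\tfrac{|S|}{|\mathcal{V}|}\mathbf{1}_{\mathcal{V}}$ gives $|\partial S|\ge\mu_2|S|(|\mathcal{V}|-|S|)/|\mathcal{V}|$, and choosing $m^*$ maximal forces every admissible $u$ to satisfy $\mathrm{dist}(u,\ker\partial_1)\le m^*$, which is exactly that generalization.

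One small caveat concerns your second-bullet remark. Enumerating only small sizes and using the spectral bound for the rest cannot exceed $\mu_2(|\mathcal{V}|-m^*)/|\mathcal{V}|$, because that estimate is weakest at $m=m^*$; a genuine improvement requires enumerating the sizes near $m^*$, for instance all sizes up to $m^*$.
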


The proof is a straightforward generalization to Proposition~\ref{prop: Cheeger-spectral-bounds} and proof will be given in the next update.

\subsection{Further numerics for graph surgery gadgets for canonical LP codes}\label{sec: LP-canonical-seed-surgery}

Table~\ref{tab: full-seed-gadgets} collects the full resource menu of the certified seed surgery gadgets for the two code instances: the twelve $1$-body gadgets, at both soundness levels $\rho_d \geq 1$ and $\rho_d \geq 2$, and the seven bridged cross-pair $2$-body gadgets.

\begin{table*}[!t]
\centering
\footnotesize
\setlength{\tabcolsep}{3pt}
\begin{tabular}{@{}l ccc c cc c ccc ccc@{}}
\toprule
 & \multicolumn{7}{c}{Surgery graph}
 & \multicolumn{3}{c}{Merged size}
 & \multicolumn{3}{c}{Merged degree} \\
\cmidrule(lr){2-8} \cmidrule(lr){9-11} \cmidrule(l){12-14}
 & \multicolumn{3}{c}{Graph size} & Degree & \multicolumn{2}{c}{Cycles}
 & Soundness
 & \multirow{2}{*}{Qubits} & \multirow{2}{*}{$X$-checks} & \multirow{2}{*}{$Z$-checks}
 & \multirow{2}{*}{\shortstack{Stabilizer\\weight}}
 & \multirow{2}{*}{\shortstack{Qubit\\$X$-degree}}
 & \multirow{2}{*}{\shortstack{Qubit\\$Z$-degree}} \\
\cmidrule(lr){2-4} \cmidrule(lr){5-5} \cmidrule(lr){6-7} \cmidrule(lr){8-8}
Seed surgery gadget
 & $|\mathcal{V}|$ & $|\mathcal{E}|$ & $|\mathcal{C}|$ & $\Delta$
 & $\omega_{\mathrm{row}}(\mathcal{C})$ & $\omega_{\mathrm{col}}(\mathcal{C})$
 & $\rho_d$ & & & & & & \\
\midrule
\multicolumn{14}{@{}l}{\emph{$\mathrm{LP}^{3\times5}_{33} = [[1122, 148, \le 20]]$}} \\
\addlinespace[2pt]
\multirow{2}{*}{\quad $\bar{Z}_{(0,0,0)}$}
 & $36$ & $77$ & $42$ & $7$ & $5$ & $5$ & $\geq 1$
 & $1199$\,{\scriptsize$(+77)$} & $537$\,{\scriptsize$(+42)$} & $531$\,{\scriptsize$(+36)$}
 & $9$\,{\scriptsize$(+1)$} & $5$\,{\scriptsize$(+0)$} & $5$\,{\scriptsize$(+0)$} \\
 & $36$ & $95$ & $60$ & $7$ & $5$ & $6$ & $\geq 2$
 & $1217$\,{\scriptsize$(+95)$} & $555$\,{\scriptsize$(+60)$} & $531$\,{\scriptsize$(+36)$}
 & $9$\,{\scriptsize$(+1)$} & $6$\,{\scriptsize$(+1)$} & $5$\,{\scriptsize$(+0)$} \\
\addlinespace
\multirow{2}{*}{\quad $\bar{Z}_{(1,0,0)}$}
 & $44$ & $92$ & $49$ & $7$ & $6$ & $6$ & $\geq 1$
 & $1214$\,{\scriptsize$(+92)$} & $544$\,{\scriptsize$(+49)$} & $539$\,{\scriptsize$(+44)$}
 & $9$\,{\scriptsize$(+1)$} & $6$\,{\scriptsize$(+1)$} & $5$\,{\scriptsize$(+0)$} \\
 & $44$ & $112$ & $69$ & $7$ & $6$ & $7$ & $\geq 2$
 & $1234$\,{\scriptsize$(+112)$} & $564$\,{\scriptsize$(+69)$} & $539$\,{\scriptsize$(+44)$}
 & $9$\,{\scriptsize$(+1)$} & $8$\,{\scriptsize$(+3)$} & $5$\,{\scriptsize$(+0)$} \\
\addlinespace
\multicolumn{14}{@{}l}{\emph{\quad $2$-body seed surgery gadget}} \\
\addlinespace[3pt]
\quad $\bar{Z}_{(0,0,0)}\,\bar{Z}_{(1,1,0)}$
 & $80$ & $189$ & $110$ & $7$ & $9$ & $6$ & $\geq 1$
 & $1311$\,{\scriptsize$(+189)$} & $605$\,{\scriptsize$(+110)$} & $575$\,{\scriptsize$(+80)$}
 & $9$\,{\scriptsize$(+1)$} & $7$\,{\scriptsize$(+2)$} & $5$\,{\scriptsize$(+0)$} \\
\midrule
\multicolumn{14}{@{}l}{\emph{$\mathrm{LP}^{3\times7}_{75} = [[4350, 1224, \le 20]]$}} \\
\addlinespace[2pt]
\multirow{2}{*}{\quad $\bar{Z}_{(0,0,0)}$}
 & $104$ & $234$ & $131$ & $7$ & $8$ & $7$ & $\geq 1$
 & $4584$\,{\scriptsize$(+234)$} & $1706$\,{\scriptsize$(+131)$} & $1679$\,{\scriptsize$(+104)$}
 & $11$\,{\scriptsize$(+1)$} & $8$\,{\scriptsize$(+1)$} & $7$\,{\scriptsize$(+0)$} \\
 & $104$ & $293$ & $190$ & $7$ & $6$ & $7$ & $\geq 2$
 & $4643$\,{\scriptsize$(+293)$} & $1765$\,{\scriptsize$(+190)$} & $1679$\,{\scriptsize$(+104)$}
 & $11$\,{\scriptsize$(+1)$} & $8$\,{\scriptsize$(+1)$} & $7$\,{\scriptsize$(+0)$} \\
\addlinespace
\multirow{2}{*}{\quad $\bar{Z}_{(1,0,0)}$}
 & $100$ & $231$ & $132$ & $7$ & $7$ & $7$ & $\geq 1$
 & $4581$\,{\scriptsize$(+231)$} & $1707$\,{\scriptsize$(+132)$} & $1675$\,{\scriptsize$(+100)$}
 & $12$\,{\scriptsize$(+2)$} & $8$\,{\scriptsize$(+1)$} & $7$\,{\scriptsize$(+0)$} \\
 & $100$ & $281$ & $182$ & $7$ & $6$ & $7$ & $\geq 2$
 & $4631$\,{\scriptsize$(+281)$} & $1757$\,{\scriptsize$(+182)$} & $1675$\,{\scriptsize$(+100)$}
 & $12$\,{\scriptsize$(+2)$} & $8$\,{\scriptsize$(+1)$} & $7$\,{\scriptsize$(+0)$} \\
\addlinespace
\multirow{2}{*}{\quad $\bar{Z}_{(2,0,0)}$}
 & $92$ & $210$ & $119$ & $7$ & $7$ & $7$ & $\geq 1$
 & $4560$\,{\scriptsize$(+210)$} & $1694$\,{\scriptsize$(+119)$} & $1667$\,{\scriptsize$(+92)$}
 & $11$\,{\scriptsize$(+1)$} & $7$\,{\scriptsize$(+0)$} & $7$\,{\scriptsize$(+0)$} \\
 & $92$ & $255$ & $164$ & $7$ & $6$ & $7$ & $\geq 2$
 & $4605$\,{\scriptsize$(+255)$} & $1739$\,{\scriptsize$(+164)$} & $1667$\,{\scriptsize$(+92)$}
 & $11$\,{\scriptsize$(+1)$} & $8$\,{\scriptsize$(+1)$} & $7$\,{\scriptsize$(+0)$} \\
\addlinespace
\multirow{2}{*}{\quad $\bar{Z}_{(3,0,0)}$}
 & $102$ & $234$ & $133$ & $7$ & $8$ & $7$ & $\geq 1$
 & $4584$\,{\scriptsize$(+234)$} & $1708$\,{\scriptsize$(+133)$} & $1677$\,{\scriptsize$(+102)$}
 & $12$\,{\scriptsize$(+2)$} & $8$\,{\scriptsize$(+1)$} & $7$\,{\scriptsize$(+0)$} \\
 & $102$ & $288$ & $187$ & $7$ & $6$ & $7$ & $\geq 2$
 & $4638$\,{\scriptsize$(+288)$} & $1762$\,{\scriptsize$(+187)$} & $1677$\,{\scriptsize$(+102)$}
 & $12$\,{\scriptsize$(+2)$} & $8$\,{\scriptsize$(+1)$} & $7$\,{\scriptsize$(+0)$} \\
\addlinespace
\multicolumn{14}{@{}l}{\emph{\quad $2$-body seed surgery gadgets}} \\
\addlinespace[3pt]
\quad $\bar{Z}_{(0,0,0)}\,\bar{Z}_{(1,1,0)}$
 & $204$ & $485$ & $282$ & $7$ & $8$ & $8$ & $\geq 1$
 & $4835$\,{\scriptsize$(+485)$} & $1857$\,{\scriptsize$(+282)$} & $1779$\,{\scriptsize$(+204)$}
 & $12$\,{\scriptsize$(+2)$} & $9$\,{\scriptsize$(+2)$} & $7$\,{\scriptsize$(+0)$} \\
\addlinespace[2pt]
\quad $\bar{Z}_{(0,0,0)}\,\bar{Z}_{(2,1,0)}$
 & $196$ & $464$ & $269$ & $7$ & $9$ & $7$ & $\geq 1$
 & $4814$\,{\scriptsize$(+464)$} & $1844$\,{\scriptsize$(+269)$} & $1771$\,{\scriptsize$(+196)$}
 & $11$\,{\scriptsize$(+1)$} & $8$\,{\scriptsize$(+1)$} & $7$\,{\scriptsize$(+0)$} \\
\addlinespace[2pt]
\quad $\bar{Z}_{(0,0,0)}\,\bar{Z}_{(3,1,0)}$
 & $206$ & $488$ & $283$ & $7$ & $9$ & $7$ & $\geq 1$
 & $4838$\,{\scriptsize$(+488)$} & $1858$\,{\scriptsize$(+283)$} & $1781$\,{\scriptsize$(+206)$}
 & $12$\,{\scriptsize$(+2)$} & $8$\,{\scriptsize$(+1)$} & $7$\,{\scriptsize$(+0)$} \\
\addlinespace[2pt]
\quad $\bar{Z}_{(1,0,0)}\,\bar{Z}_{(2,1,0)}$
 & $192$ & $461$ & $270$ & $7$ & $8$ & $8$ & $\geq 1$
 & $4811$\,{\scriptsize$(+461)$} & $1845$\,{\scriptsize$(+270)$} & $1767$\,{\scriptsize$(+192)$}
 & $12$\,{\scriptsize$(+2)$} & $9$\,{\scriptsize$(+2)$} & $7$\,{\scriptsize$(+0)$} \\
\addlinespace[2pt]
\quad $\bar{Z}_{(1,0,0)}\,\bar{Z}_{(3,1,0)}$
 & $202$ & $485$ & $284$ & $7$ & $9$ & $8$ & $\geq 1$
 & $4835$\,{\scriptsize$(+485)$} & $1859$\,{\scriptsize$(+284)$} & $1777$\,{\scriptsize$(+202)$}
 & $12$\,{\scriptsize$(+2)$} & $9$\,{\scriptsize$(+2)$} & $7$\,{\scriptsize$(+0)$} \\
\addlinespace[2pt]
\quad $\bar{Z}_{(2,0,0)}\,\bar{Z}_{(3,1,0)}$
 & $194$ & $464$ & $271$ & $7$ & $8$ & $7$ & $\geq 1$
 & $4814$\,{\scriptsize$(+464)$} & $1846$\,{\scriptsize$(+271)$} & $1769$\,{\scriptsize$(+194)$}
 & $12$\,{\scriptsize$(+2)$} & $8$\,{\scriptsize$(+1)$} & $7$\,{\scriptsize$(+0)$} \\
\bottomrule
\end{tabular}
\caption{\textbf{Full resource tables for the minimal-weight seed surgery gadgets ($\LP^{3 \times 5}_{33}$ and $\LP^{3 \times 7}_{75}$):} The $1$-body and $2$-body seed surgery gadgets are certified to satisfy the soundness $\rho_d \geq 1$ and $\rho_d \geq 2$ for the two canonical LP codes, as an extension to Table~\ref{tab: seed-surgery-gadget-resources}.}
\label{tab: full-seed-gadgets}
\end{table*}

\begin{table*}[!t]
\centering
\footnotesize
\setlength{\tabcolsep}{1.5pt}
\begin{tabular}{@{}l ccc c cc c cc ccc ccc@{}}
\toprule
 & \multicolumn{7}{c}{Surgery graph}
 & \multicolumn{2}{c}{Connectivity maps}
 & \multicolumn{3}{c}{Merged size}
 & \multicolumn{3}{c}{Merged degree} \\
\cmidrule(lr){2-8} \cmidrule(lr){9-10} \cmidrule(lr){11-13} \cmidrule(l){14-16}
 & \multicolumn{3}{c}{Graph size} & Degree & \multicolumn{2}{c}{Cycles}
 & Soundness & \multicolumn{2}{c}{(row, col)}
 & \multirow{2}{*}{Qubits} & \multirow{2}{*}{$X$-checks}
 & \multirow{2}{*}{$Z$-checks}
 & \multirow{2}{*}{\shortstack{Stabilizer\\weight}}
 & \multirow{2}{*}{\shortstack{Qubit\\$X$-degree}}
 & \multirow{2}{*}{\shortstack{Qubit\\$Z$-degree}} \\
\cmidrule(lr){2-4} \cmidrule(lr){5-5} \cmidrule(lr){6-7} \cmidrule(lr){8-8}
\cmidrule(lr){9-10}
 & $|\mathcal{V}|$ & $|\mathcal{E}|$ & $|\mathcal{C}|$ & $\Delta$
 & $\omega_{\mathrm{row}}(\mathcal{C})$ & $\omega_{\mathrm{col}}(\mathcal{C})$
 & $\rho_d$ & $\omega(\Phi_0)$ & $\omega(\Phi_1)$ & & & & & & \\
\midrule
$\mathrm{LP}^{3\times5}_{33}$
 & $165$ & $495$ & $363$ & $6$ & $6$ & $5$ & $9/2$
 & $(3, 2)$ & $(1, 3)$
 & $1617$\,{\scriptsize$(+495)$} & $858$\,{\scriptsize$(+363)$}
 & $660$\,{\scriptsize$(+165)$}
 & $11$\,{\scriptsize$(+3)$} & $9$\,{\scriptsize$(+4)$}
 & $5$\,{\scriptsize$(+0)$} \\
$\mathrm{LP}^{3\times7}_{75}$
 & $525$ & $1800$ & $1276$ & $8$ & $7$ & $7$ & $\geq 9/2$
 & $(5, 6)$ & $(1, 5)$
 & $6150$\,{\scriptsize$(+1800)$} & $2851$\,{\scriptsize$(+1276)$}
 & $2100$\,{\scriptsize$(+525)$}
 & $17$\,{\scriptsize$(+7)$} & $17$\,{\scriptsize$(+10)$}
 & $7$\,{\scriptsize$(+0)$} \\
\bottomrule
\end{tabular}
\caption{\textbf{LP canonical extractors:} Constructed for the headline codes $\LP^{3 \times 5}_{33}= [[1122, 148, \leq 20]]$ (first row) and the larger variant $\LP^{3 \times 7}_{75} = [[4350, 1224, \leq 20]]$ (second row) for fault-tolerantly measuring \emph{arbitrary} logical operators through the $ZX$-duality re-wiring and graph isomorphism Lemma~\ref{lemma: add-permutation-bridges-for-parity-checks} with graph isomorphism, which satisfy the cyclic (LP) extractor desiderata with soundness $\rho_d = 9/2$ and, respectively, added degree $3$ and $7$. $\omega_{\mathrm{row}}(\mathcal{C})$ and $\omega_{\mathrm{col}}(\mathcal{C})$, respectively, denote the maximum cycle length and maximum congestion, the most common edge, for the provided cycle bases. The parenthetical entries indicate the maximum increase of the stabilizer weight and the qubit $X$-/$Z$-degrees in the merged codes relative to those of the data codes. The reported merged degrees are in the worst-case; for a single-type ($\bar{Z}$-only or $\bar{X}$-only) measurement they reduce to $(10, 8, 5)$ for $\LP^{3 \times 5}_{33}$ and $(13, 10, 7)$ for $\LP^{3 \times 7}_{75}$, i.e., added degree $2$ and $3$, respectively. The soundness is analytically certified using the spectral criterion in Lemma~\ref{lemma: spectral-criterion-soundness} which ensures sufficient merged code distance at least $d$ when connecting to multiple columns. In this regard, the use of (small-set) soundness is a strictly tighter condition than the use of Cheeger constant: both cyclic-lifted graphs constructed have certified Cheeger constant upper bounds less than $2$. Hence, LP canonical extractors stand as highly space-resource-optimal constructions with provable fault tolerance.}
\label{tab: LP-full-canonical-resources}
\end{table*}

\subsection{Parallel hypergraph surgery gadgets to canonical LP codes}\label{sec: LP-parallel-surgery-magic}

We now provide details to the high-rate, parallel surgery techniques and constructions presented in Section~\ref{sec: main-LP-parallel-logic}. 

First we show that, with a canonical LP code $\LP_l(A, A^*)$ such that $A$ consists of monomial entries, the LP intra-column/inter-column surgery gadgets present no additional logical operators of type~\ref{main-surgery-logical-ancilla}.

\begin{lemma}\label{lemma: monomial-matrix-no-ancilla-logical}
Let $A \in R^{m_A \times n_A}$ be a monomial matrix such that it is full-row rank except at $b=1+x$ in the sense of Lemma~\ref{lemma: cokernel-A-odd}. Let the hypergraph be $\mathcal{G}: \mathcal{V} \xrightarrow{\mbb{B}(A)} \mathcal{E}$. Then the (inter-column) surgery gadgets constructed from $\mathcal{G}$ do not support any additional logical operators of type~\ref{main-surgery-logical-ancilla}.
   \begin{proof}
    We can consider the following commutative diagram where for the convenince, we work over the ring $A$: 

\[\begin{tikzcd}
	& { \mathcal{V}} & {\mathcal{E}} \\
	{\mathcal{Q}_2} & {\mathcal{Q}_1} & {\mathcal{Q}_0}
	\arrow["A", from=1-2, to=1-3]
	\arrow["{\Phi_1}", from=1-2, to=2-2]
	\arrow["{\Phi_0}", from=1-3, to=2-3]
	\arrow["{\partial^M_2}"', from=2-1, to=2-2]
	\arrow["{\partial^M_1}"', from=2-2, to=2-3]
\end{tikzcd}\]
with $\partial^M_2$ and $\partial^M_1$ are the (conjugate transpose) of parity-check matrices defined in Eq.~~\eqref{eq: main-LP-boundary-maps-ring}. Recall from Lemma~\ref{lemma: surgery-cone-code-logical-operators}, an $X$-type candidate logical operator of type~\ref{surgery-logical-ancilla} admits a representative that is supported exclusively on $\mathcal{E}$. In this case, let such a logical operator be $(0, L^{\mathcal{E}}_X)$, whose binary supports lie in the $\ker_R A^*$. By our assumption that the field decomposiion for $A$ from Lemma~\ref{lemma: kernel-A-odd}, $\ker_R A^*$ is generated by 
\begin{align}
    \chi v^{(i)}_{\mathcal{A}_0}; \quad v^{(i)}_{\mathcal{A}_0} = e^{(i)}_{\mathcal{A}_0} + e^{(i+1)}_{\mathcal{A}_0}, \quad i \in [r_A-1],
\end{align}
where $\chi = 1 + x + \cdots + x^{l-1}$, and $e^{(i)}_{\mathcal{A}_0}$ is the unit vector. Note that in this case, we have that $\Phi_1$ and $\Phi_0$ are constructed as embedding map to $j$-th column of physical qubits and checks, respectively. There exists some checks (presented as in syndrome weights) $\chi v^{(i)}_{\mathcal{A}_0} \otimes e^{(j)}_{\mathcal{A}_1}$ such that $\Phi_0(\chi v^{(i)}_{\mathcal{A}_0}) = \chi v^{(i)}_{\mathcal{A}_0} \otimes e^{(j)}_{\mathcal{A}_1}$. Since in this case, we have (here we assume for the symmetric code, though the result holds generally)
\begin{align}
    H_X  = (A \otimes_R I_{n_A}, I_{m_A} \otimes_R A^*).
\end{align}
Since $A$ is a monomial matrix, $\chi A = \chi A(1)$, where $A(1)$ is the resulted matrix of $A$ evaluated at $x = 1$, which is all-one matrix.  Then we have that $\chi v^{(i)}_{\mathcal{A}_0} \otimes e^{(j)}_{\mathcal{A}_1} \in \ker_R H^*_X$. This shows any such $X$-type candidate logical operator of type~\ref{surgery-logical-ancilla} must be a stabilizer of the merged code. 
   \end{proof}
\end{lemma}

We can similarly show for the $Z$-type logical operator type~\ref{surgery-logical-ancilla} does not exist. We can state two simple corollaries to the above. 

\begin{remark}[No logical operators of type~\ref{main-surgery-logical-ancilla} under intra-column puncture and augmentation]
   For taking the intra-column puncture for $\mathbb{B}(A)$ discussed in Definition~\ref{def: main-LP-intracolumn-hyerpgraph-surgery}, since it is a column operation, the above Lemma~\ref{lemma: monomial-matrix-no-ancilla-logical} still holds. Similarly, for the augmentation by adding rows of weight at most $2$ supported on the information columns, the above Lemma~\ref{lemma: monomial-matrix-no-ancilla-logical} still holds whenever the added rows are mutually linearly independent over $\mathbb{F}_2$ and are linearly independent from the existing rows of $A$.
\end{remark}

However, if we augment by adding linearly dependent rows, as such the methods from parallel measurements of joint logical operators reported in Table~\ref{tab: high-rate-column-arbitrary-addressable}. Then this would create genuine logical operators of type~\ref{main-surgery-logical-ancilla} in the merged code. We need to add cycle checks to remove these logical operators.

\begin{lemma}[Expansion of binarized monomial quasi-cyclic matrices]\label{lemma: monomial-expansion-compute}
    Let $A \in R^{m_A \times n_A}$, $m_A \geq 2$, be a base matrix all of whose entries are nonzero monomials (in particular, every column of $H = \mathbb{B}(A)$ has weight $m_A \geq 2$). Then $\lambda(H) \geq 2$; that is, $|H u| \geq 2$ for every $u \notin \ker H$.
    \begin{proof}
    Write $A_{ij} = x^{a_{ij}}$ and let $u = (u_1, \ldots, u_{n_A})^T \in R^{n_A}$, so that the $i$-th block of the syndrome is $(Hu)_i = \sum_{j} x^{a_{ij}} u_j \in R$. Evaluating a polynomial $s \in R$ at $x=1$ returns its weight parity, $s(1) = |s| \bmod 2$, and every monomial evaluates to $1$; hence
    \begin{align}\label{eq: monomial-expansion-parity}
        (Hu)_i(1) = \sum_{j} u_j(1) \quad \text{for every } i \in [m_A],
    \end{align}
    i.e., the $m_A$ blocks of any syndrome carry one common weight parity. A syndrome of weight $1$ has odd weight on exactly one block and weight zero on the remaining $m_A - 1 \geq 1$ blocks, contradicting Eq.~\eqref{eq: monomial-expansion-parity}; hence $|Hu| \geq 2$ for every $u \notin \ker H$.
    \end{proof}
\end{lemma}

\subsection{Parallel magic-state injection on canonical LP codes}

We now give a detailed analysis on the parallel magic-state injection on canonical LP codes, presented in Section~\ref{sec: main-LP-parallel-logic}. As an application, we will discuss the partial transversal, high-rate magic state injection using disjoint unions of surface codes that support $|\bar{T}\rangle$. 
As shown in Figure~\ref{fig: magic-state-injection}, let $\mathcal{Q}'', \mathcal{Q}', \mathcal{Q}$ be given by $\LP_l(D, D^*)$, $\LP_l(A, D^*)$, and $\LP_l(A, B)$, where $D$ is the parity check for the length-$m$ repetition code given by Eq.~\eqref{eq: main-parity-check-repetition} and $D \in R^{d_s-1 \times d_s}$ ,$A \in R^{m_A \times n_A}$, $B \in R^{m_B \times n_B}$ are assumed throughout. For simplicity, we assume that the data LP canonical code $\LP_l(A, B)$ has purely canonical logical operators: $k_c = k$ so that $\mathbb{B}(A)$ has full row-rank. For our application, this is a harmless assumption since we can always treat the residual logical operators as gauge logical operators. We can have a systematic understanding of the code parameters of $\mathcal{Q}'$ and $\mathcal{Q}''$:

\begin{lemma}[Code parameters of $\mathcal{Q}^{'}$ and $\mathcal{Q}^{''}$]\label{lemma: main-code-parameter-Q'-Q''}
    $\mathcal{Q}''= \LP_l(D, D^*)$ and  $\mathcal{Q}'= \LP_l(A, D^*)$ admit the code parameters, respectively, $[[(2d_s^2-2d_s+1)l, l, d_s]]$ and $[[ l(n_Ad_s + m_A(d_s-1)), lr_A, d_X=d_s, d_Z=d_A]]$, where
    \begin{align}
        d_A:=\min_{0\neq u\in\ker_{\mathbb{F}_2}\mathbb{B}(A)}|u|.
    \end{align}
    Furthermore, $\mathcal{Q}'$ and $\mathcal{Q}''$ are canonical LP codes admitting, respectively, minimal-weight canonical logical basis sets. 
    \begin{proof}
        The parameters of $\mathcal{Q}''$ follow by identifying it with a stack of $l$ surface codes of distance $d_s$. We now prove the parameters of $\mathcal{Q}'$. Since $\LP_l(A,B)$ is assumed to be canonical, $\LP_l(A,D^*)$ admits the following $X$-type and $Z$-type logical basis supports over $R$:
        \begin{align}
            \supp \bar{X}^{\mathcal{Q}'}_{im}
                &=x^m e^{(i)}_{\mathcal{A}_1}\otimes_R\mathbf{1}_{\mathcal{D}_1}, \\
            \supp \bar{Z}^{\mathcal{Q}'}_{im}
                &=x^m u^{(i)}_{\mathcal{A}_1}\otimes_R e^{(0)}_{\mathcal{D}_1}.
        \end{align}
        Here $i\in[r_A]$, $m\in[l]$, $\mathbf{1}_{\mathcal{D}_1}$ is the all-one vector, and $u^{(i)}_{\mathcal{A}_1}\in\ker_R A$ has a unit entry on the $i$-th information index, as in Definition~\ref{def: main-canonical-basis-LP}. Here and below, the weight of an $R$-valued vector means the binary Hamming weight after applying $\mathbb{B}$.

        We first prove that $d_X=d_s$. Since the displayed $X$-type operators form a basis, every nontrivial $X$-type logical operator has a representative $a_{\mathcal{A}_1}\otimes_R\mathbf{1}_{\mathcal{D}_1}$ for some $a_{\mathcal{A}_1}\notin\IM_R A^*$. An arbitrary $X$-type stabilizer has support $(\partial^M_1)^T h$, where
        \begin{align}
            h=\sum_{j=0}^{d_s-1}h_j\otimes_R e^{(j)}_{\mathcal{D}_1},
            \qquad h_j\in\mathcal{A}_0.
        \end{align}
        After adding this stabilizer, the component on the $j$-th index of $\mathcal{D}_1$ is
        \begin{align}
            a_{\mathcal{A}_1}+A^*h_j.
        \end{align}
        This component cannot vanish: otherwise, $a_{\mathcal{A}_1}$ would lie in $\IM_R A^*$ and the logical operator would be trivial. Thus, each of the $d_s$ components has binary weight at least one. The other component of $(\partial^M_1)^Th$, supported on $\mathcal{A}_0\otimes_R\mathcal{D}_0$, can only add weight. Therefore, every nontrivial $X$-type logical operator has weight at least $d_s$. The displayed representative $\supp\bar{X}^{\mathcal{Q}'}_{im}$ has weight exactly $d_s$, since it contains one monomial on each of the $d_s$ indices of $\mathcal{D}_1$. Hence, $d_X=d_s$.

        We next prove that $d_Z=d_A$. Let $0\neq u\in\ker_R A$. A corresponding $Z$-type logical operator has representative $u\otimes_R e^{(0)}_{\mathcal{D}_1}$. An arbitrary $Z$-type stabilizer has support $\partial^M_2 h$, where
        \begin{align}
            h=\sum_{j=0}^{d_s-2}h_j\otimes_R e^{(j)}_{\mathcal{D}_0},
            \qquad h_j\in\mathcal{A}_1.
        \end{align}
        After adding this stabilizer, denote the components on the $d_s$ indices of $\mathcal{D}_1$ by $z_0,\ldots,z_{d_s-1}$. Since the first component of $\partial^M_2h$ is $(I_{\mathcal{A}_1}\otimes_R D^*)h$, these components satisfy
        \begin{align}
            z_0&=u+h_0, \\
            z_j&=h_{j-1}+h_j \quad (1\leq j\leq d_s-2), \\
            z_{d_s-1}&=h_{d_s-2}.
        \end{align}
        Every $h_j$ appears twice in their sum and therefore cancels over $\mathbb{F}_2$, giving $\sum_{j=0}^{d_s-1}z_j=u$. The triangle inequality for binary Hamming weight then gives
        \begin{align}
            \sum_{j=0}^{d_s-1}|z_j|\geq\left|\sum_{j=0}^{d_s-1}z_j\right|=|u|\geq d_A.
        \end{align}
        The other component of $\partial^M_2h$, supported on $\mathcal{A}_0\otimes_R\mathcal{D}_0$, can only add weight. Hence, every nontrivial $Z$-type logical operator has weight at least $d_A$. Taking $u$ to be a minimum-weight nonzero vector in $\ker_R A$, the representative $u\otimes_R e^{(0)}_{\mathcal{D}_1}$ has weight exactly $d_A$. Therefore, $d_Z=d_A$.
    \end{proof}
\end{lemma}

Hence, in what follows, we denote $\mathcal{Q}''= \LP_l(D, D^*) = [[(2d_s^2-2d_s+1)l, l, d_s]]$, $\mathcal{Q}'= \LP_l(A, D^*) = [[ l(n_Ad_s + m_A(d_s-1)), lr_A, d_X=d_s, d_Z=d_A]]$, and $\mathcal{Q} = \LP_l(A, B) =  [[ l(n_Ad_s + m_A(d_s-1)), lr_Ar^*_B, d_X=d_s, d_Z=d_A]]$, where $k:= lr_Ar^*_B$. For the magic state injection, we wish to teleport in a fully parallel fashion, which motivates the definition of partial transversal surgery, with which we build two separate gadgets. 
\begin{enumerate}
    \item \label{magic-state-gadgets-XX} ($M_{\bar{X}\bar{X}}$ partial transversal gadget between $\mathcal{Q}''$ and $\mathcal{Q}'$).  Let $\mathcal{G}^{(i)}_X(\mathcal{V}^{(i)}_X, \mathcal{E}^{(i)}_X, \mathcal{C}^{(i)}_X)$ be constructed such that $\mathcal{V}^{(i)}_X \cong \mathcal{D}_1$, $\mathcal{E}_X \cong \mathcal{D}_0$, $\partial_1 = D$, and $\mathcal{C}_X=0$, $\partial_0=0$. For any row index $i \in [r_A]$, let 
    \begin{align}
        \Phi^{(i)}_1 = \mathbb{B}\left[\left( \begin{array}{cc}
             e^{(0)}_{\mathcal{D}_1} \otimes_R I_{\mathcal{D}_1} \\
             e^{(i)}_{\mathcal{D}_1} \otimes_R I_{\mathcal{D}_1} 
        \end{array} \right)\right]; \quad \Phi^{(i)}_0 = \mathbb{B}\left[\left( \begin{array}{cc}
             e^{(0)}_{\mathcal{D}_1} \otimes_R I_{\mathcal{D}_0} \\
             e^{(i)}_{\mathcal{D}_1} \otimes_R I_{\mathcal{D}_0} 
        \end{array} \right)\right],
    \end{align}
    where $e^{(i)}_{\mathcal{D}_1}$ denotes the unit vector on $\mathcal{D}_1$ supported on the $i$th entry. We denote the surface-to-transistor \emph{fibre-transversal} surgery gadget by $\mathcal{S}[\mathcal{G}^{(i)}, (\Phi^{(i)}_1, \Phi^{(i)}_0)]_{\mathcal{Q}'' \rightarrow \mathcal{Q}'}$. Let $\mathcal{G}_X := \oplus_{i\in [r_A]}\mathcal{G}^{(i)}_X$ and $\Phi_1 := \mathrm{hstack}(\Phi^{(0)}_1, \cdots, \Phi^{(r_A-1)}_1)$ and $\Phi_0 := \mathrm{hstack}(\Phi^{(0)}_0, \cdots, \Phi^{(r_A-1)}_0)$. Then we denote the surface-to-transistor \emph{row-transversal} surgery gadget by $\mathcal{S}[\mathcal{G}_X, (\Phi^{(X)}_1, \Phi^{(X)}_0)]_{\mathcal{Q}'' \rightarrow \mathcal{Q}'}$.

    \item \label{magic-state-gadgets-ZZ} ($M_{\bar{Z}\bar{Z}}$ partial transversal gadget between $\mathcal{Q}'$ and $\mathcal{Q}$).  Let $\mathcal{G}_Z(\mathcal{V}_Z, \mathcal{E}_Z, \mathcal{C}_Z)$ be constructed such that $\mathcal{V}_Z \cong \mathcal{A}_1$, $\mathcal{E}_Z \cong \mathcal{A}_0$, $\partial_1 = A$, and $\mathcal{C}_Z=0$, $\partial_0=0$. For any column index $j \in [r_A]$, let 
    \begin{align}
        \Phi^{(Z)}_1 = \mathbb{B}\left[\left( \begin{array}{cc}
            I_{\mathcal{A}_1} \otimes_R e^{(0)}_{\mathcal{D}_1} \\
	 I_{\mathcal{A}_1} \otimes_R e^{(j)}_{\mathcal{A}_1}
        \end{array} \right)\right]; \quad \Phi^{(Z)}_0 = \mathbb{B}\left[\left( \begin{array}{cc}
            I_{\mathcal{A}_0} \otimes_R e^{(0)}_{\mathcal{D}_1} \\
	I_{\mathcal{A}_0} \otimes_R e^{(j)}_{\mathcal{A}_1}
        \end{array} \right)\right],
    \end{align}
    where $e^{(j)}_{\mathcal{A}_1}$ denotes the unit vector on $\mathcal{A}_1$ supported on the $j$th entry. We denote the transistor-to-data \emph{column-transversal} surgery gadget by $\mathcal{S}[\mathcal{G}_Z, (\Phi^{(Z)}_1, \Phi^{(Z)}_0)]_{\mathcal{Q}' \rightarrow \mathcal{Q}}$.
\end{enumerate}

Notably, these two surgery gadgets are distance-preserving. 
  
\begin{lemma}
    The merged-code distance for the fibre-transversal surgery gadget and, consequently, the row-transversal surgery gadget $\mathcal{S}[\mathcal{G}_X, (\Phi^{(X)}_1, \Phi^{(X)}_0)]_{\mathcal{Q}'' \rightarrow \mathcal{Q}'}$ is at least $\min(d_s, d_A)$. The merged-code distance for the column-transversal surgery gadget $\mathcal{S}[\mathcal{G}_Z, (\Phi^{(Z)}_1, \Phi^{(Z)}_0)]_{\mathcal{Q}' \rightarrow \mathcal{Q}}$ is at least $\min(d_s, d_A, d) = \min(d_s, d)$, where $d$ is the distance of the data canonical LP code.
    \begin{proof}
        We use the same argument for both gadgets. Denote their two input
        codes by $\mathcal{Q}_1$ and $\mathcal{Q}_2$, and let $L$ be a
        nontrivial $X$- or $Z$-type logical operator of the merged code. It
        is enough to consider these two types separately because the merged
        code is CSS. On each pair of fibres identified by the relevant
        connectivity map, write the two parts of the binary support of $L$
        as $(a,b)$. Define the split support $\pi(L)$ by
        replacing $(a,b)$ with $(a+b,0)$, leaving every unpaired data
        coordinate unchanged, and deleting the support on the added qubits.
        For any $u$ used to multiply $L$ by an added stabilizer, the paired
        support becomes $(a+u,b+u)$ and therefore
        \begin{align}
            |(a+u)+(b+u)|=|a+b|
            \leq |a+u|+|b+u|.
        \end{align}
        Summing this inequality over all paired fibres gives
        $|\pi(L)|\leq |L|$.

        The chain-map identities for $(\Phi_1,\Phi_0)$ imply that $\pi$
        maps every merged-code stabilizer to a stabilizer of the split code
        $\mathcal{Q}_1\oplus\mathcal{Q}_2$. Moreover, by
        Lemma~\ref{lemma: surgery-cone-code-logical-operators}, the only
        logical operators mapped to zero are the measured joint operators
        and logical operators supported only on the added qubits. The former
        are merged-code stabilizers. The latter do not occur: this follows
        from the full row rank of $\mathbb{B}(D)$ for the chain defined by
        $D$, and follows from
        Lemma~\ref{lemma: monomial-matrix-no-ancilla-logical} for the chain
        defined by $A$. Hence, $\pi(L)$ is a nontrivial logical operator of
        $\mathcal{Q}_1\oplus\mathcal{Q}_2$, and
        \begin{align}
            |L|\geq|\pi(L)|
            \geq \min\{d(\mathcal{Q}_1),d(\mathcal{Q}_2)\}.
        \end{align}

        Taking $(\mathcal{Q}_1,\mathcal{Q}_2)=(\mathcal{Q}'',\mathcal{Q}')$
        proves the bound $\min\{d_s,d_A\}$ for one fibre-transversal gadget.
        Since the row-transversal gadget is their direct sum, the same map
        $\pi$ applies to all paired fibres at once and gives the same bound.
        Taking $(\mathcal{Q}_1,\mathcal{Q}_2)=(\mathcal{Q}',\mathcal{Q})$
        gives $\min\{d_s,d_A,d\}=\min\{d_s,d\}$. The last equality holds
        because $\mathcal{Q}$ has a canonical $Z$ logical operator of weight
        $d_A$, and therefore $d\leq d_A$.
    \end{proof}
\end{lemma}

Since the magic-state injection consists of only transversal single-qubit measurements and the above two joint measurement gadgets. If we measure up to $\min(d, d_s)$ rounds of syndrome extractions per logical cycle, we arrive at Theorem~\ref{thm: main-magic-injection-distance-preserving} in Section~\ref{sec: main-LP-parallel-logic}.

\bibliographystyle{apsrev4-2}
\bibliography{references}

\end{document}